\theoremstyle{definition}
\newtheorem{theorem}{Theorem}
\newtheorem*{theorem*}{Theorem}
\newtheorem{lemma}[theorem]{Lemma}
\newtheorem{corollary}[theorem]{Corollary}
\newtheorem*{corollary*}{Corollary}
\newtheorem{proposition}[theorem]{Proposition}
\newtheorem*{proposition*}{Proposition}
\newtheorem{definition}[theorem]{Definition}
\newtheorem*{definition*}{Definition}
\newtheorem{example}[theorem]{Example}
\def\g{\mathfrak{g}}
\def\gc{\mathfrak{c}}
\def\N{\mathbb{N}}
\def\C{\mathbb{C}}
\def\R{\mathbb{R}}
\newcommand{\Np}[1]{\N^{#1}_{\geq 0}}
\newcommand{\no}[1]{\mathcal{N}(#1)}
\newcommand{\mdeg}{\mathrm{mdeg}}
\newcommand{\abs}[1]{|#1|}
\newcommand{\gspan}{\mathrm{span}}
\newcommand{\vspan}[1]{\mathrm{span}\{ #1 \}{}}
\newcommand{\lie}[1]{\langle #1 \rangle_{\mathrm{Lie}}{}}
\newcommand{\inner}[2]{\langle #1 | #2 \rangle{}}
\newcommand{\gens}{\text{gens}}
\newcommand{\PP}[2]{\mathcal{P}_{#2}(#1)}
\newcommand{\ad}[1]{(a^\dag)^{#1}}
\newcommand{\bd}[1]{(b^\dag)^{#1}}
\newcommand{\oneshalf}{\iota}
\newcommand{\ones}{\tau}
\newcommand{\multiset}[1]{\langle #1 \rangle}
\newcommand{\multisetnarrow}[1]{\multiset{#1}}
\newcommand{\emptymultiset}{\multiset{\,}}
\newcommand{\multy}[2]{{#1}^{\times #2}}
\newcommand{\up}{\simeq}
\newcommand{\upto}[1]{\simeq_{#1}}
\definecolor{lime}{HTML}{A6CE39}
\DeclareRobustCommand{\orcidicon}{%
	\begin{tikzpicture}
	\draw[lime, fill=lime] (0,0) 
	circle [radius=0.16] 
	node[white] {{\fontfamily{qag}\selectfont \tiny ID}};
	\draw[white, fill=white] (-0.0625,0.095) 
	circle [radius=0.007];
	\end{tikzpicture}
	\hspace{-2mm}
}
\xdef\csname orcid\x\endcsname{\noexpand\href{https://orcid.org/\csname orcidauthor\x\endcsname}{\noexpand\orcidicon}}
\begin{document}

\title{Deciding finiteness of bosonic dynamics with tunable interactions}

\author{David Edward Bruschi\,\orcidB{}}
\email{david.edward.bruschi@posteo.net}	
\email{d.e.bruschi@fz-juelich.de}	
\affiliation{Quantum Computing Analytics (PGI-12), Forschungszentrum J\"ulich, 52425 J\"ulich, Germany}
\affiliation{Theoretical Physics, Universit\"at des Saarlandes, 66123 Saarbr\"ucken, Germany}
\author{Andr\'{e} Xuereb\,\orcidC{}}
\email{andre.xuereb@um.edu.mt}
\affiliation{Department of Physics, University of Malta, Msida MSD 2080, Malta}
\author{Robert Zeier\,\orcidA{}}
\email{r.zeier@fz-juelich.de}
\affiliation{Forschungszentrum Jülich GmbH, Peter Grünberg Institute, Quantum Control (PGI-8), 52425 Jülich, Germany}

\begin{abstract}
In this work we are motivated by factorization
of bosonic quantum dynamics and
we study the corresponding Lie algebras,
which can potentially be infinite dimensional.
To characterize such factorization,
we identify conditions for these Lie algebras
to be finite dimensional.
We consider cases where each free Hamiltonian term
is itself an element of the generated Lie algebra.
In our approach, we develop new tools to systematically divide skew-hermitian bosonic operators
into appropriate subspaces, and construct specific sequences of
skew-hermitian operators that are used to gauge the dimensionality of the Lie algebras themselves.
The significance of our result relies on conditions that constrain only the 
independently controlled generators
in a particular Hamiltonian, thereby providing an effective algorithm for verifying the finiteness of the generated
Lie algebra.
In addition, our results are tightly connected to 
mathematical work where the polynomials of creation and annihilation operators
are known as the Weyl algebra.
Our work paves the way for better understanding
factorization of bosonic dynamics relevant to quantum control and quantum technology.
\end{abstract}


\date{February 14, 2024}

\maketitle

\section*{Introduction}
\emph{When are elementary operations sufficient to exactly reproduce the dynamics of complex physical systems?}
This fundamental question is at the core of most areas of physical sciences, since the state of a physical system encodes all
the information of interest at any point in time. Ideally, given an arbitrary set of (differential) equations that govern the time evolution
of the state of the system, an analytical solution can be found and therefore full control over the dynamics is achieved.
Complex many-body systems,
however, are in general expected to be impervious to analytical approaches with a degree of impermeability that increases with complexity
and size. It follows that specific and often ad-hoc methods need to be developed to successfully tackle the dynamics, at least partially.
This includes the use of numerical methods conjunctly with, or instead of, analytical ones, depending on the
specific case at hand. Therefore, it is only natural to conclude that
a priori knowledge on the ability to completely control or predict the dynamics of a physical system, namely when
can an exact solution be obtained
or when must it be approximated, is ultimately of fundamental importance.

The rise of quantum information theory and the quest for quantum technologies have sparked the development of many approaches to
study and characterize quantum systems and their dynamics. Areas of research
where this endeavor is important range from
quantum computing \cite{Ladd:Jelezko:2010} and quantum simulation
\cite{Buluta:Nori:2009,Bloch:2012ty,Georgescu:Ashhab:2013} to quantum annealing
\cite{Apolloni:Carvalho:1989,Kadowaki:Nishimori:1998,Ohzeki:Nishimori:2011,Hauke:Katzgraber:2020} and quantum control
\cite{JS:1972,Huang:Tarn:1983,Altafini:2002,Dong:Petersen:2010,Chen:Wang:2021,SZCM:2023,Elliott:2009,
Glaser:2015tm,schulte2018,Koch2022,dalessandro:2022},
including foundational studies of the theory itself
\cite{Amaku:Coutinho:Coutinho:2017}. A common
underlying issue can be found in all of these areas: the ability to compute relevant quantities at any time,
such as the eigenstates, energy eigenvalues, or entropy of a reduced subsystem to name a few, dramatically depends
on the ability to obtain the evolution of the state as a function of time. Ideally, an analytical expression for the
state can be obtained, and therefore all pertinent information can be extracted. In practice, however,
the full and exact time evolution of a quantum system is impossible to obtain except for a limited set of cases.
The main reason is that the Hamiltonian, whether
time-dependent or not, usually contains many terms that do not commute, therefore requiring the use of ad-hoc or
approximate methods, such as
perturbative expansions \cite{Trotter:1959,Baye:Goldstein:2003,Goldstein:Baye:2004,Chee:2012}, the
Campbell-Baker-Hausdorff formula
\cite{Poincare:1899,Magnus:1954,Jacobson:1966}, as well as other approaches
\cite{Vinitsky:Gerdt:2006,Agostini:Gross:2021}. Prominent examples of complex Hamiltonians that have
important applications and require specific methodologies are
Bose-Hubbard Hamiltonians \cite{Hubbard:Flowers:1963,Dutta:Gajda:2015}, the full nonlinear phononic Hamiltonian
of a Bose-Einstein Condensate (BEC) \cite{Dalfovo:Giorgini:1999,Combescot:Combescot:2017}, or Hamiltonians
with cross-Kerr interactions \cite{Clausen:Knoell:2002,Kounalakis:Dickel:2018}. 

One solution to the problem of analytically evolving a state in time is given by the process of \emph{factorization}.
Factorization ideally provides an alternative \emph{equivalent} expression for the time-evolution operator that is
composed of a sequence of operations each one simpler than the full
time-evolution operator itself. Once such a factorization sequence
has been obtained, it is then easier to apply each term sequentially to the initial state,
thus dramatically reducing the complexity of the problem. Furthermore, knowledge of which terms appear in the factorization also informs
on the physical interactions that de-facto play a role in the process of interest, as well as the magnitude of their effects. The number of
terms that appear also plays a crucial role. If it is possible to know a priori that a factorization with finite terms is to be expected, this might
encourage the quest for a factorization of the evolution. The problem, therefore, shifts from attempting to employ
the full time-evolution operators to finding
which sequences of operators guarantee a successful factorization, and when are such sequences finite. It is
already well-known that Hamiltonians that are
quadratic in the creation and annihilation operators of the system always admit a finite factorization
\cite{Bruanstein:2005,Adesso:Ragy:2014}. Nevertheless, the vast majority of (bosonic) Hamiltonians of 
interest will contain \emph{nonlinear} terms, where nonlinearity means that they are composed of combinations of operators of
which at least one is at least cubic in the creation and annihilation operators
\cite{Bose:Jacobs:1997,Bruschi:Xuereb:2017,Qvarfort:Serafini:2019,Bruschi:2019,Bruschi:2020,Qvarfort:Serafini:2020}.
Brute-force methods can be employed in such situations,
but obtaining meaningful results is highly case-dependent. An approach that can tackle this problem more systematically
is therefore desirable.

Seminal work has proposed a strategy based on Lie algebras to tackle factorization of quantum dynamics
\cite{Wei:Norman:1963,Wei:Norman:1964} in the case of finite-dimensional systems, which has inspired subsequent
efforts to expand the program 
for particular infinite-dimensional cases such as \cite{Bose:Jacobs:1997,Ibarra-Sierra:Sandoval-Santana:2015}.
The core idea is to use methods from symplectic geometry,
Lie groups, and Lie algebras \cite{Jacobson:1966,hall2015,Bump:2013}
to determine elementary operators that must appear in the factorized expression for the time-evolution
operator. This expression is then complemented with 
a set of coupled, first-order, nonlinear differential equations whose solution provides the time-dependent coefficients
that drive each term in the
factorization \cite{Wei:Norman:1963,Wei:Norman:1964}. The whole approach is straightforward and
therefore in principle very powerful. However, this approach has its challenges that are related to 
infinite-dimensional Lie algebras and unbounded operators
when it is extended to infinite-dimensional
(bosonic) systems \cite{Huang:Tarn:1983,KZSH:2014,KEGGZ:2023,Keyl:2019},

To date, only partial results have been obtained for particular cases of infinite-dimensional systems (i.e., bosons) with
finite-dimensional associated Lie algebras. Prominent examples are: linear
(i.e., quadratic in the creation and annihilation operators) Hamiltonians of $N$ harmonic oscillators, with the finite-dimensional
skew-hermitian Lie algebra $\mathfrak{sp}(2N,\mathbb{R})$ known as the \emph{symplectic algebra} \cite{Adesso:Ragy:2014};
\emph{optomechanics}, namely the field that studies light interacting with matter via radiation-pressure
\cite{Aspelmeyer:Kippenberg:2014}, which enjoy a finite Lie algebra with nonlinear terms
\cite{Bose:Jacobs:1997,Bruschi:Lee:2013,Bruschi:Xuereb:2017,Qvarfort:Serafini:2019,Bruschi:2019,Bruschi:2020,Qvarfort:Serafini:2020}.
Furthermore, Hamiltonians with a finite number of terms that all commute also enjoy finite abelian Lie algebras, and thus a finite factorization.
One option to improve the current state of the art is to construct a classification of Lie algebras from an algebraic perspective that can provide
intuition on how to systematically tackle the problem. Ideally, one would obtain an algorithm that could
take the Hamiltonian as the input and then provide an answer to the fundamental question:
\emph{does the Hamiltonian of interest generate dynamics associated to a finite-dimensional Lie algebra?}
Obtaining such algorithm would therefore dramatically simplify the problems at hand. Regardless of all efforts to
date, the question posed here has remained unanswered.

Bosonic Lie algebras appear as part of the infinite-dimensional Weyl algebra $A_n$
\cite{Dixmier:1968,Dixmier:1977,Bjoerk:1979,Woit:2017,Coutinho:1995,Goodearl:Warfield:2004}
(\emph{vide infra}) which encompasses all complex polynomials in creation and annihilation operators of $n$ modes.
For the one-mode case, all finite-dimensional (complex) Lie subalgebras of the Weyl algebra have been recently
classified \cite{TST:2006} while building on a large body of earlier work
motivated by pioneering work of Dixmier \cite{Dixmier:1968} as well 
as Gelfand and Kirillov \cite{GK:1966}.
Properties of commutators and how
the degree of the commutator changes from its elements have also already been studied for the Weyl algebra 
\cite{Dixmier:1968,Igusa:1981a,Joseph:1974a}. This, in part, lead to conditions for two elements in the Weyl algebra
to generate an infinite Lie algebra \cite{Igusa:1981a} which is closely related to our work.
Finite-dimensional Lie algebras realized as part of the Weyl algebra are
widely studied in the mathematics community \cite{Dixmier:1968,SZ:1969,Joseph:1970,Joseph:1972,Joseph:1974b,Joseph:1974c}.
Moreover, bosonic realizations of Lie algebras are considered even in nuclear physics \cite{KM:1991},
and they are closely related to filter functions in control theory based on independent work of Brockett and Mitter 
\cite{HM:1982}. Motivated by hermitian Hamiltonians, we will restrict ourselves to skew-hermitian
elements of the Weyl algebra which we denote as the \emph{skew-hermitian Weyl algebra} $\hat{A}_n$
in the case of $n$ modes.
Most interestingly, compact semisimple Lie algebras \cite{hall2015,Bump:2013} are mostly ruled out as 
finite-dimensional Lie subalgebras of the skew-hermitian Weyl algebra \cite{Joseph:1972} which can lead to 
interesting noncompact Lie algebras \cite{hall2015,Bump:2013}.

In this work we provide a first partial answer to the question of finite factorization of quantum dynamics.
We approach the problem with the aim of finding conditions on the Hamiltonian for a finite associated Lie algebra.
To achieve our goal we develop
new techniques that allow us divide the vector space of the skew-hermitian Weyl algebra into meaningful subspaces.
We then consider the commutator of any two elements of the Weyl algebra and study its degree in relation to that of the
two initial elements. The degree of a commutator has a finite value depending on the elements that are are being commuted
and their degrees.
We find a fundamental properties the two elements in the commutator that, when satisfied,
guarantees that their commutator has the maximally possible degree. This, in turn, is used to construct sequences of nested
commutators called \emph{commutator chains} obtained starting from two initially given \emph{seed} elements. 
We show which conditions must be satisfied by the seed elements in order for each element
in the chain to be nonvanishing with an monotonically increasing degree. 
Finally, we use these tools to obtain conditions that need to be satisfied by a set of generators in order for the
generated Lie algebra to be finite. 

The present work relies on a key assumption: all operators that
appear in the absence of interaction for each harmonic oscillator (i.e., quantized bosonic degrees of freedom), are
themselves independently controlled generators of the Lie algebra. Such a situation occurs physically when the frequencies of each
harmonic oscillator present in the system can be varied independently as a function time. In the language of
quantum control \cite{Huang:Tarn:1983,Wu:Tarn:2006,Bagarello:2007,Dong:Petersen:2010,Bliss:Burgarth:2014,KZSH:2014,ABFH:2018,KEGGZ:2023}, we
can equivalently say that we are studying the case of Hamiltonians without \textit{drift} since each term in the Hamiltonian is driven by
a time-dependent coefficient that can be experimentally modulated, or tuned.
Examples of systems or physical processes with
these properties, i.e., Hamiltonians without drift, include spontaneous parametric down-conversion modelled as a nonlinear process on
optical fields \cite{Couteau:2018}, quenches of electromagnetic traps \cite{Rajabpour:Sotiriadis:2014},
modulated optomechanics \cite{Qvarfort:Serafini:2020,Qvarfort:Plato:2021,Partanen:Tulkki:2022}, or networks of
mechanical oscillators \cite{Xuereb:Imparato:2015} to name a few.
The complementary case occurs with a drift term that can be formed, for example, by linear combinations of 
operators that appear in the absence of interaction for each harmonic oscillator.
This case is not settled in the present work.
Our main result provides a first step into the categorization of Lie algebras in terms of their dimensionality, with a
specific focus on providing a simple algorithm for verifying
the finiteness of the dimension with generators of the whole algebra as sole input.

This work is organized as follows. In Section~\ref{work:scope:appendix} we discuss the scope 
and motivation of the present work.
Section~\ref{algebras:section} recalls the Weyl algebra and introduces the skew-hermitian Weyl algebra
consisting of skew-hermitian elements in the Weyl algebra. In
Section~\ref{algebra:properties:section} key properties of the elements of Weyl algebra and skew-hermitian
Weyl algebra are described. We introduce the concept of a commutator chain and
develop its basic properties for two important cases in Section~\ref{commutator:chain:section}.
In Section~\ref{main:results:section} we present our main result. Finally, before concluding,
we discuss our results in
Section~\ref{discussion:outlook:section} and outline future directions.

\section{Scope of the work}\label{work:scope:appendix}

This work is motivated by the question of how to factor the 
dynamics given by Hamiltonians with tunable interactions in the context of coupled bosonic quantum systems. 
Bosonic systems are infinite-dimensional systems \cite{Messiah:1999}, and the Lie algebras generated by a 
bosonic Hamiltonian have in general also  an infinite dimension. Paramount bosonic systems are quantum harmonic
oscillators, e.g., cavity field modes \cite{Walther:Varcoe:2006} or the modes of
free quantum fields \cite{Srednicki:2007}. Recall that in the case of fermions, such as two-level systems
or other $d$-dimensional systems, the Lie algebras have always a finite dimension as we have the finite-dimensional
Hilbert space. The question of factoring quantum dynamics in this framework is
quite involved as extensively discussed in the literature, in particular from the point of view of quantum control
\cite{Huang:Tarn:1983,Wu:Tarn:2006,Bagarello:2007,Bliss:Burgarth:2014,KZSH:2014,ABFH:2018}.
To set up and motivate our work, we start by shortly recalling  the well-understood
finite-dimensional unitary case.

\subsection{Factorization in the finite dimensions\label{sub:finite}}

In finite dimensions, the concept of factoring a unitary operator is quite straightforward \cite{Messiah:1999}.
Let $U(t)$ be a one-parameter family of unitary operators such that $\frac{d}{dt}U(t)= -i\,H(t)\,U(t)$,
where $H(t)=\sum_{j\in\mathcal{J}} A_j(t)$ is the \textit{Hamiltonian} and the $A_j(t)$ form a set of hermitian operators.
Here, an operator $A$ is hermitian if $A^\dag=A$, it is skew-hermitian if $A^\dag=-A$,
$U(t)$ denotes the time-evolution operator, $H(t)$ is a generally time-dependent
Hamiltonian, and $\mathcal{J}$ is an appropriate finite subset of
$\mathbb{N}$. Henceforth we assume $\hbar=1$ for notational convenience.
We can write
\begin{equation}\label{time:evolution:operator}
U(t)=\overset{\leftarrow}{\mathcal{T}}e^{-i\int_0^t dt' H(t')}
\end{equation}
as the solution of the differential equation, where $\overset{\leftarrow}{\mathcal{T}}$ is known as the time-ordering operator.
We consider a decomposition $-iH(t)=\sum_{j\in\mathcal{J}} u_j(t) G_j$ into skew-hermitian
generators $G_j$ that do not depend on the time $t$
and $u_j(t)$ are tunable, time-dependent, and
real \textit{control pulses} \cite{Elliott:2009,dalessandro:2022}. We then collect the generators
$G_j$ into the finite set $\mathcal{G}:=\{G_j : j\in\mathcal{J}\}$.
A finite \textit{factorization} is then given by
\begin{equation}\label{operator:factorization}
U(t)=\prod_{k\in\mathcal{K}} e^{f_k(t) G_{j_k}} \;\text{ for certain }\; j_k\in \mathcal{J} \;\text{ and real values }\;  f_k(t),
\end{equation}
where $\mathcal{K}$ is an appropriate finite subset of $\mathbb{N}$. Note that, in general,
$\mathcal{K}\neq\mathcal{J}$ and it is often the case that $|\mathcal{J}|<|\mathcal{K}|$, i.e., there are (many)
more linearly-independent terms in the factorization \eqref{operator:factorization} than in the original Hamiltonian $H(t)$.
An example is the single-qubit Hamiltonian $H=\omega\sigma_{z}+g\sigma_{x}$, where $\omega$ is the
energy, $g$ is a coupling constant, and $\sigma_j$ are the standard Pauli matrices that enjoy the commutators
$[\sigma_j,\sigma_k]=i\epsilon_{jkl}\sigma_l$, see  \cite{Messiah:1999}. Here, $\epsilon_{jkl}$ is the totally
antisymmetric Levi-Civita symbol and we use the Einstein summation convention on repeated indices. We can define $G_0:=i\sigma_{z}$ and $G_1:=i\sigma_{x}$ and it is
well known that $\{G_0,G_1\}$ generate the Lie algebra $\mathfrak{su}(2)$, which is $3$-dimensional.

In the present case, the Lie algebra $\g=\langle \mathcal{G}\rangle_{\text{Lie}}$ associated to the Hamiltonian of
finite-dimensional systems has always a finite dimension and it can therefore be used to obtain the
factorization \eqref{operator:factorization} explicitly \cite{Wei:Norman:1963,Wei:Norman:1964}. We wish to
emphasize the key observation here that there is a direct connection between the factorization \eqref{operator:factorization}
and the Lie algebra associated to the Hamiltonian \cite{Wei:Norman:1963,Wei:Norman:1964,Wu:Tarn:2006}.

\subsection{Skew-hermitian bosonic operators and constraints from physics}
In this work we are interested in finite factorizations of $U(t)$ as introduced for the finite-dimensional
case \cite{Wei:Norman:1963,Wei:Norman:1964},
while going beyond such finite-dimensional setting. Before discussing the infinite-dimensional case, let us
make a few remarks below.

We consider an ordered set $\mathcal{G}:=\{G_j : 0\leq j \leq p\}$ of skew-hermitian operators $G_j$ that are built from bosonic
creation and annihilation operators $a_k^{\dagger}$ and $a_k$ (which we discuss more in detail in
Section~\ref{subsection:weyl} and Section~\ref{subsection:skew:weyl}). Here $\mathcal{J}:=\{1,\ldots,p\}$.
The time-dependent Hamiltonian is then given by
\begin{equation*}
H(t) = H = H_0 + H_{\text{I}}(t) \;\text{ with }\; iH_0 = G_{0} \;\text{ and }\;
iH_{\text{I}}(t)=iH_{\text{I}}=\sum_{k=1}^{p} u_k(t) G_{k}
\end{equation*}
for $G_k \in \mathcal{G}$ and real \textit{control functions} $u_k(t)$.
Henceforth, $H_0$ will denote the so-called \emph{drift} Hamiltonian which
collects the fixed or not controllable terms,
while $H_{\text{I}}(t)$ collects the terms that can be externally tuned
in a time-dependent fashion
and denotes the \emph{interaction} (or \emph{control}) part
of the Hamiltonian \cite{Elliott:2009,dalessandro:2022,Wu:Tarn:2006,Dong:Petersen:2010,Chen:Wang:2021}. 
Note that the Hamiltonian is hermitian (i.e., $H^\dag(t)=H(t)$), hence the factor of $i$ in the expression above. 
The skew-hermitian $G_j$ generate a Lie algebra $\g$ of possibly infinite dimension.
We here give particular emphasis to the following notation:

\begin{definition*}
The linear combination $\sum_k \omega_k a_k^\dag a_k$ is called the \textit{free Hamiltonian}.
The real coefficients $\omega_k$, also known in the literature as the \textit{frequencies}, can be time dependent. 
\end{definition*}
It should be clear that,
when the coefficients $\omega_k$ are time independent, then the free Hamiltonian constitutes part of the drift.
It can very well be the case that the free Hamiltonian \textit{coincides} with the drift. This occurs when the
frequencies $\omega_k$ are the only time-independent coefficients in the Hamiltonian.

To analyze a concrete example for the drift, one can consider the case of $H_0=\sum_k \omega_k a_k^\dag a_k$
for real and constant \textit{frequencies} $\omega_k$ of the harmonic oscillators.
Therefore, the drift Hamiltonian $H_0$ is composed of linearly independent operators of the form $a^\dag_k a_k$ that appear in $H_0$ 
as a fixed linear combination, and it coincides with the free Hamiltonian.
Thus, if $iH_0$ is an element of a specific Lie algebra $\g$, 
the individual $ia_k^\dag a_k$ will not necessarily be contained in $\g$.
In the case without drift, \text{all} coefficients of the generators $G_k$ are time-dependent, tunable controls $u_k(t)$ and
therefore the different operators $a^\dag_k a_k$ can be combined to show that each
one is also an element of the generated Lie algebra. An example would be if at 
$t=t_0$ one has $H_{\text{a}}=\omega_1 a^\dag_1 a_1 +\omega_2 a^\dag_2 a _2$,
while at $t=t_1$ one has $H_{\text{b}}=\omega_1 a^\dag_1 a_1 +5\omega_2 a^\dag_2 a _2$. It is clear that one can obtain
$a^\dag_1 a_1$ and $a^\dag_2 a_2$ as $a^\dag_1 a_1=\frac{1}{2}(H_{\text{a}}{+}H_{\text{b}})$
and $a^\dag_2 a_2=\frac{1}{4}(H_{\text{a}}{-}H_{\text{b}})$, which implies that if
$iH_{\text{a}},iH_{\text{b}}\in\g$ then also $ia^\dag_1 a_1,ia^\dag_2 a_2\in\g$. 
The following examples highlight that coefficients 
in the Hamiltonian either can be fixed and, in particular, constant
or they can be externally tuned in a time-dependent fashion,
which also means that the same operators will be either part of the drift or contained in the control part,
and this dramatically changes the outlook on the possibility of factorizing the quantum dynamics. 

\begin{tcolorbox}[colback=orange!3!white,colframe=orange!85!black,title=Example: Hamiltonians with and without drift]
\label{example:one}
Two examples to differentiate between Hamiltonians with different drifts.

\begin{example}[Hamiltonian with drift]
The first Hamiltonian that we consider is $H=\omega_1 a^\dag_1 a_1 +\omega_2 a^\dag_2 a _2+\omega_3 a^\dag_3 
a _3+g(t)(a^\dag_1a^\dag_2 a_3+a_1 a_2 a^\dag_3)$, where $\omega_1$ and $\omega_2$
are constant and $\omega_3\neq\omega_1+\omega_2$, while $g(t)$ is time dependent
and externally tunable.
In this case, the drift is $H_0=\omega_1 a^\dag_1 a_1 +\omega_2 a^\dag_2 a _2+\omega_3 a^\dag_3 a _3$
while the interaction Hamiltonian is $H_\text{I}=g(a^\dag_1a^\dag_2 a_3+a_1 a_2 a^\dag_3)$. 
It is immediate to verify that: (i) $[H_0,H_\text{I}]=0$ when $\omega_3=\omega_1+\omega_2$; (ii) $[H_0,H_\text{I}]\neq0$ otherwise.
\end{example}

\begin{example}[Hamiltonian without drift]
The second Hamiltonian is $H(t)=\omega_1(t) a^\dag_1 a_1 +\omega_2(t) a^\dag_2 a _2+\omega_3(t) a^\dag_3
a _3+g(a^\dag_1a^\dag_2 a_3+a_1 a_2 a^\dag_3)$, where the frequencies $\omega_k(t)$ are time dependent
and externally tunable
and there is no drift term.
Let $H_{\text{I}}=\omega_1(t) a^\dag_1 a_1 +\omega_2(t) a^\dag_2 a _2+\omega_3(t) a^\dag_3
a _3$ while $H_{\text{J}}=g(a^\dag_1a^\dag_2 a_3+a_1 a_2 a^\dag_3)$.
It is immediate to verify that $[H_{\text{I}},H_{\text{J}}]=0$ only for times $t_*$ for which
$\omega_3(t_*)=\omega_1(t_*)+\omega_2(t_*)$.
\end{example}
\end{tcolorbox}

We reiterate here that our formal results are applicable to cases without drift, which concretely leads us to assume in our
main result that each $ia^\dag_k a_k$ is an element
of the generated Lie algebra. Extensions to cases with drift, and other relevant considerations, are presented and discussed at the end of this work.

\subsection{Factorization in the infinite-dimensional case}
We now move to the case of  infinite-dimensional systems. Representative examples are quantum
harmonic oscillators, which are characterized by creation and annihilation operators $a_j^{\dagger}$ and
$a_j$ that observe the canonical commutation relations $[a_i, a_j^{\dagger}] = \delta_{ij}$ while
all others vanish \cite{Messiah:1999}.

It is well known that infinite-dimensional systems pose additional challenges compared to their finite-dimensional
counterparts. For our purposes of illustration, we can recall one major issue, which is the presence of
\textit{unbounded operators} \cite{Messiah:1999,meise1997,hall2013quantum,schmuedgen2012}.
Given a Hilbert space $\mathcal{H}$, we say that $A$ is a
bounded operator if and only if the exists $M\geq0$ such that $||A v||\leq M||v||$ for all $v\in\mathcal{H}$. In the
case of finite-dimensional systems all operators are bounded. However, the situation
changes dramatically when moving to infinite-dimensional Hilbert spaces. In this case, one immediately obtains
unbounded operators \cite{Huang:Tarn:1983,Lan:2005,KZSH:2014,Keyl:2019,ABFH:2018,KEGGZ:2023}, such as the annihilation and
creation operators $a,a^\dag$ for a harmonic oscillator \cite{Bagarello:2007,Bloch:Brockett:2010,Bliss:Burgarth:2014},
or its free Hamiltonian $\omega a^\dag a$. Let us discuss a brief example here:

\begin{tcolorbox}[colback=orange!3!white,colframe=orange!85!black,title=Example: Unbounded operators ]
\label{example:unbounded}
\begin{example}
We consider the Hilbert space $\mathcal{H}$  of a single quantum harmonic oscillator with annihilation and
creation operators $a,a^\dag$ that satisfy $[a,a^\dag]=1$. A convenient orthonormal and complete basis
$\{|n\rangle\}$ is defined by $a^\dag a|n\rangle=n|n\rangle$ for all $n\geq0$. This allows us to write the
free Hamiltonian as $\omega a^\dag a=\omega \sum_{n=0}^\infty n |n\rangle\langle n|$. 
Let us choose the state $|\psi\rangle=\sum_{n=1}^\infty n^{-3/2}|n\rangle\in\mathcal{H}$, which is properly
normalized to $\langle\psi|\psi\rangle=\zeta(3)<\infty$ where $\zeta(m)$ is the Riemann zeta-function.
However, $|\tilde{\psi}\rangle:=\omega a^\dag a|\psi\rangle=\omega\sum_{n=1}^\infty n^{-1/2}|n\rangle$ and thus
$||\omega a^\dag a|\psi\rangle||\equiv \langle\tilde{\psi}|\tilde{\psi}\rangle=\omega^2\sum_{n=1}^\infty 1/n$, which diverges.
\end{example}
\end{tcolorbox}

For completeness we note that the specific problem outlined here is not unsurmountable: 
One can start by restricting the domain of interest from the full Hilbert space to an appropriate dense
subset $\mathcal{D}(\mathcal{H})\subseteq\mathcal{H}$. This can be, for example, combined 
with analyticity assumptions and assuming a finite-dimensional Lie algebra
\cite{Huang:Tarn:1983,Lan:2005,Wu:Tarn:2006}
which however leads only to a restricted  and finite controllability.
One complementary approach, for example, relies on an abelian symmetry to treat all but one of the 
infinitesimal generators \cite{KZSH:2014}, while a relatively simple argument using the remaining symmetry-braking 
infinitesimal generator leads to a general class of controllability.

The issue mention above, as well as others, prevent us from using the important Lie-algebraic results obtained in
the case of finite-dimensional systems  \cite{Wei:Norman:1963,Wei:Norman:1964}, and therefore to export them 
directly to the infinite-dimensional case that is of interest to us. Nevertheless, it would still be of great significance
if it were possible to find a way to extend these ideas beyond the finite-dimensional framework.
Therefore, regardless of the concrete (yet important) differences between the finite- and infinite-dimensional systems,
we conclude this brief part by making a few crucial comments that contribute to motivate our present efforts. 

First, there is a vast literature in the context of quantum control that tackles very related questions that are directly
relevant \cite{Brockett:1972,JS:1972,Hirschorn:1975,Kunita:1979}. Important results have been proven for finite-dimensional
systems, where conditions for controllability (read finite-factorization in our case) in the case of no drift present have
been explicitly give (see Theorem 7.1 and 7.2 in \cite{JS:1972}).
Second, the role of the drift is clear for unitary dynamics (see Theorem 7.1 in \cite{JS:1972})
and can be also resolved in certain finite-dimensional cases beyond that (see Theorem 7.3 in \cite{JS:1972}).
It has already been shown that the presence of the drift can have
significant implications for the controllability of the dynamics \cite{Wu:Tarn:2006}.
Our work mostly does not include this case, which remains extant.
Finally, it is not necessary at present to have a direct link between Lie algebras and the ability to factorize the
time-evolution operator of bosonic Hamiltonians in order to study the Lie algebras associated to the dynamics.
In this sense, we can remove ourselves form the problem of determining which Hamiltonians can be factorized
and the set of states in the Hilbert space for which this is a well-defined operation, and we can just focus on
studying Lie algebras that are derived from the Hamiltonians of coupled quantum harmonic oscillators. It will be
part of future efforts to complete the link between the two. This is the spirit with which we approach the problem at hand
as outlined in the next section.

\subsection{Aim}
This work tackles a crucial property of skew-hermitian Lie algebras $\g$ that has implications both for
mathematics and physics. Namely, we are interested in the determining the finiteness of a Lie algebra
focusing only on the properties of its generators, and their commutators. In this work we restrict ourselves
to the set of Lie algebras $\mathfrak{g}$ that naturally arise in physics when considering the dynamics of
interacting bosonic systems.

From the perspective of understanding the properties of Lie algebras, our work aims at answering the
following question of fundamental importance:

\textbf{Q1}: \emph{Which skew-hermitian Lie algebras are finite-dimensional? In particular which conditions need to be satisfied by the
set of generators $\mathcal{G}$ in order to verify if the generated algebra $\g=\langle\mathcal{G}\rangle_{\text{Lie}}$ is finite?}

The question posed above, which is given in a purely mathematical context, can be rephrased from the point of view of physics. In
physics, the study of the dynamics of a physical system, whether classical or quantum, is the main task to be undertaken in order to obtain
information about a physical process at all times. In general, we are interested in exploiting the full extent of quantum mechanics, which is
based on noncommutative operators. The dynamics of any quantum system, given by a unitary
operator $U(t)$ with Hamiltonian $H$, is therefore
determined by the commutation properties of the operators within the Hamiltonian under consideration. In this context, it is natural to ask
the following question:

\textbf{Q2}: \emph{Which conditions need to be satisfied by the operators that determine
the Hamiltonian to guarantee that the time evolution of the system can be obtained through a finite set of controllable operations?}

The concrete ambition, therefore, is to provide a set of conditions that uniquely determine the classes of Hamiltonians
that have finite Lie algebras. In practice, given the Hamiltonian $H(t)=\sum_{j\in\mathcal{J}} u_j(t) G_j$ of the system,
with $\mathcal{G}=\{G_j\}$ for the finite index set $\mathcal{J}$, one should be required to perform a minimal number
of operations on the generators $G_j\in\mathcal{G}$ in order to conclude that the full Hamiltonian algebra
$\g=\langle\mathcal{G}\rangle_{\text{Lie}}$ is finite-dimensional. If these criteria can also be used in
this context to show that the dynamics does admit a finite factorization, this then would have a direct impact
on the ability to fully control the evolution of the quantum state, or to design algorithms for simulating complex
physical processes. We anticipate that we answer the questions posed above in the case of Hamiltonians without
drift. A pictorial representation of the discriminating process is given in Figure~\ref{figure:factorization:process}

\begin{figure}[t]
\includegraphics[width=0.9\linewidth]{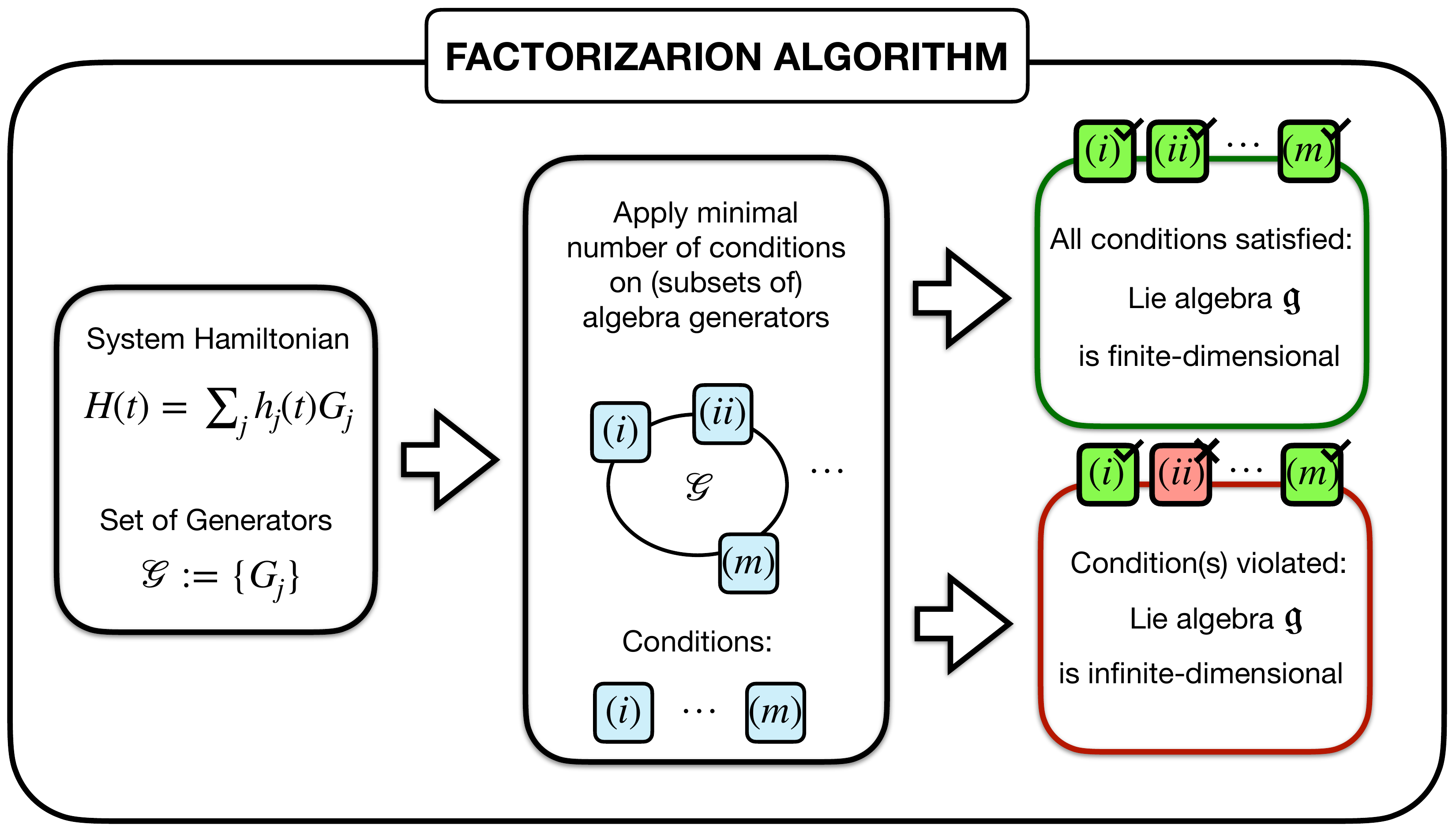}
\caption{\textbf{Factorization algorithm}: A pictorial representation of the general idea behind the approach to verifying
whether a given Hamiltonian $H(t)$ will lead to a finite factorization of the quantum dynamics. First, the individual components
of the Hamiltonian must be identified. These components will form the set $\mathcal{G}$ of \emph{generators} of the
Lie algebra $\g$.  Then, a set of conditions is applied to the generators $\mathcal{G}$, and to a subset of
their commutators $[\mathcal{G},\mathcal{G}]$. If all conditions are verified, the Lie algebra $\g$ has a finite dimension.
Violation of any condition guarantees an infinite-dimensional Lie algebra $\g$.}\label{figure:factorization:process}
\end{figure}

This parallel vision underlies our study of key properties of skew-hermitian Lie algebras. Our approach concretely
differentiates between the two questions, de facto making the answers to the first as prop{\ae}deutic tools to
obtaining the answers of the second.
In particular:
\begin{enumerate}
	\item[I.] We provide a useful decomposition and classification of the elements of the skew-hermitian
	Weyl algebra. This classification,
	which can seem arbitrary at first, divides the full vector space of operators into five complementary subvector spaces. We
	then show that there exist at least two different types of sequences of operators, called \emph{commutator chains}, 
	that are generated by two \textit{seed} elements by means of nested commutators. This uniquely associates (infinite)
	sequences of elements of the algebra to pairs of appropriately chosen seed elements.
	\item[II.] We show that, when the Hamiltonian contains no drift and it is determined by its independent
	\emph{generators} as discussed below,
	all generators of the Lie algebra except for those of the form $a^\dag_k a_k$ must either: (a) satisfy specific conditions for the
	algebra to be finite, or; (b) they can be used to construct infinite commutator chains of linearly independent operators that
	all belong to the Lie algebra. This latter case implies an infinite-dimensional Lie algebra.
\end{enumerate}
We can see from this approach that we \emph{first} develop tools to classify elements of the skew-hermitian
Weyl algebra and determine their properties, and only \emph{subsequently} apply this classification to Hamiltonians without drift.

\section{Introduction to the Weyl algebra and the skew-hermitian Weyl algebra}\label{algebras:section}
We describe now the general setting while recalling the the Weyl algebra (see Subsection~\ref{subsection:weyl}) and
introducing its skew-hermitian part as the skew-hermitian Weyl algebra (see Subsection~\ref{subsection:skew:weyl}).
Subsection~\ref{sec:decomp:weyl} describes the general structure of the skew-hermitian Weyl algebra
by highlighting subspaces which will be important in characterizing certain finite-dimensional Lie subalgebras
of the skew-hermitian Weyl algebra.

\subsection{The Weyl algebra\label{subsection:weyl}}
We consider an $n$-mode bosonic quantum system with creation and annihilation operators $a_j^{\dagger}$ 
and $a_j$ that observe the canonical commutation relations $[a_i, a_j^{\dagger}] = \delta_{ij}$ while
all others vanish.
These operators generate an infinite associative algebra $A_n$ over $\C$ with an identity element $1$, 
and $A_n$ is known as $n$th Weyl algebra
\cite{Dixmier:1968,Dixmier:1977,Bjoerk:1979,Woit:2017,Coutinho:1995,Goodearl:Warfield:2004}.
In the following, we use $\alpha,\beta \in \Np{n}$ and $\gamma = (\alpha,\beta) \in \Np{2n}$ as
multi-indices, where $\Np{n}$ denotes the set of nonnegative integer vectors of length $n$. We also introduce, for later convenience,
the vector $\ones_p \in \Np{2n}$ which has only two nonzero entries at the positions $p$ and $n+p$, where both entries are equal to $1$. 
The notation $a^{\gamma} = a^{(\alpha,\beta)}$ denotes the normal-ordered monomial
$a^{\gamma}=(a_1^{\dagger})^{\alpha_1} \cdots (a_n^{\dagger})^{\alpha_n} (a_1)^{\beta_1} \cdots (a_n)^{\beta_n}$,
where the \emph{normal order} is obtained in a product of creation and annihilation operators if all creation operators are to the left
of all annihilation operators \cite{Blasiak:2007}.

For a general (but possibly not normal-ordered) monomial
$m$ of creation and annihilation operators, its multi-degree $\mdeg(m) := \gamma = (\alpha,\beta) \in \Np{2n}$ 
specifies the frequencies $\alpha_j$ and $\beta_j$ of $a_j^{\dagger}$ and  $a_j$ in $m$.
In particular, $\mdeg(a^{\gamma})=\gamma$ and an element of this vector is denoted by $\gamma_j$.
The degree of a monomial $m$ is given by $\abs{\mdeg(m)}=|\gamma|$ where $\abs{\gamma}:=\sum_{j=1}^{2n} \gamma_j$.
The normal-ordered monomials $a^{\gamma}$ form a complex vector-space basis of the Weyl algebra $A_n$, see \cite[p.~9]{Coutinho:1995}.

Any finite element $g\in A_n$ is called a (non-commutative) polynomial and it
can be uniquely expanded into its normal-ordered (or canonical) form
$g=\no{g}:=\sum_{\gamma} c_{\gamma} a^{\gamma}$ with $c_{\gamma} \in \C$, while $\no{0}=0$. 
The degree $\deg(g)$ of any nonzero 
polynomial $g \in A_n$ is defined as the largest value of $\abs{\gamma}$ for any normal-ordered monomial 
$a^\gamma$ with $c_\gamma\neq 0$ appearing in 
$\no{g}=\sum_{\gamma} c_{\gamma} a^{\gamma}$, see \cite[p.~14]{Coutinho:1995}. We set $\deg(0):=-\infty$ for convention as also done in the literature. 
For any monomial $m$, the notions  $\abs{\mdeg(m)}$ and $\deg(m)$ are equal which follows from 
the canonical commutation relations. 

Moreover we introduce the linear anti-automorphism 
$(\cdot)^\dagger: g \mapsto g^{\dagger}$ for
polynomials $g\in A_n$ by setting $1^{\dagger}:=1$, $(a_j^{\dagger})^\dagger:=a_j$,  $(a_j)^{\dagger}:=a_j^{\dagger}$,
and $(m_1 m_2)^\dagger := (m_2)^\dagger (m_1)^\dagger $ for any monomials $m_1,m_2 \in A_n$. In particular,
$(a^{(\alpha,\beta)})^\dagger= a^{(\beta,\alpha)}$. Let us also define 
$\gamma^{\dagger}=(\alpha,\beta)^{\dagger}:=(\beta,\alpha)$.

\subsection{The skew-hermitian Weyl algebra\label{subsection:skew:weyl}}

Let us introduce the skew-hermitian Weyl algebra $\hat{A}_n$
as the real subalgebra of the Weyl algebra $A_n$ that consists of
all polynomials $g\in A_n$ such that $g^\dagger = -g$. We use the notation $\gamma = (\alpha,\beta)$.
A basis of $\hat{A}_n$ is given by
all \emph{basis elements} of the form $g_+^{\gamma} := g_+(a^{\gamma})= i ( (a^\gamma)^\dagger {+} a^\gamma)=
i (a^{(\beta,\alpha)} {+} a^{(\alpha,\beta)})$ with $\alpha \ge \beta$
[i.e., $(\alpha,\beta) \ge (\beta,\alpha) = (\alpha,\beta)^{\dagger}$]
and $g_-^{\gamma} := g_-(a^{\gamma}) = (a^\gamma)^\dagger - a^\gamma = a^{(\beta,\alpha)} - a^{(\alpha,\beta)}$
with $\alpha > \beta$,
where the notation $a^{\gamma}=a^{(\alpha,\beta)}$ has been defined in Section~\ref{subsection:weyl} and the relation $>$ 
is the usual order on vectors (where $\alpha > \beta$ if $\alpha_k > \beta_k$
with $k$ being the smallest index $j$ such that $\alpha_j \neq \beta_j$).
More generally, we can introduce the elements $g_+(m) := i ( m^\dagger {+} m)$ and $g_-(m) := m^\dagger {-} m$
for any monomial $m \in A_n$.  Moreover, any polynomial $g \in \hat{A}_n$
can be expanded as $g = \sum_\gamma c_{\gamma,\sigma} g_\sigma^\gamma$ with $\sigma \in \{+, - \}$ and
$c_{\gamma,\sigma}\in \R$. As stated above, we generally assume that 
$\alpha \ge \beta$ for $g_{+}^\gamma$ and $\alpha > \beta$ for $g_{-}^\gamma$ in order to ensure
that no $g_\sigma^\gamma$ is zero or equal up to a sign to a different element 
$g_\sigma^{\hat{\gamma}}$ with $\hat{\gamma} \neq \gamma$.
For example, note that $g_+(a_k) = g_+(a_k^\dagger)$.
The skew-hermitian Weyl algebra $\hat{A}_n$ is closed under commutators
as $([g_1, g_2])^\dagger = - [g_1, g_2]$ and $[g_1, g_2] \in \hat{A}_n$
for any polynomials $g_1, g_2 \in \hat{A}_n$.
We denote the operators $g_+^{(\alpha,\alpha)}=2i a^{(\alpha,\alpha)}$ 
as diagonal operators; $g_-^{(\alpha,\alpha)}=0$.
Using the notation $\ones_p$, we obtain in particular the free operators
$g_+^{\ones_p}=2i a^{\ones_p} = 2i a_p^\dagger a_p \in \hat{A}_n$.
Moreover, we find it convenient to define the scalar product $\inner{\,\cdot\,}{\,\cdot\,}$ through the expression 
$\inner{g_{\sigma}^{(\alpha,\beta)}}{g_{\tilde{\sigma}}^{(\tilde{\alpha},\tilde{\beta})}} \neq 0$ if 
$\sigma = \tilde{\sigma} \in \{+,-\}$, $\alpha = \tilde{\alpha}$, and $\beta = \tilde{\beta}$, while
$\inner{g_{\sigma}^{(\alpha,\beta)}}{g_{\tilde{\sigma}}^{(\tilde{\alpha},\tilde{\beta})}} = 0$ otherwise.

\subsection{Decomposition of the skew-hermitian Weyl algebra\label{sec:decomp:weyl}}

We discuss now the decomposition into meaningful subspaces of the
skew-hermitian Weyl algebra $\hat{A}_n$. The importance of these subspaces will become evident once the main claim and its proof are presented. 

\begin{definition}\label{definition:vector:spaces:A}
The vector spaces $\hat{A}_n^K$ for $K\in \{0,1,2,{=},\text{om}\}$ are defined as
\begin{align*}
\hat{A}_n^{0} &:=\vspan{2i \text{ and } 2ia_{k}^{\dagger} a_k \text{ for } 1\leq k \leq n}
= \vspan{g_{+}^{\tau_k}, g_{-}^{\tau_k}
\text{ such that } \alpha = \beta
\text{ and } \abs{\gamma}=|\alpha|+|\beta| \in \{0,2\}},\\
\hat{A}_n^{1} &:=\vspan{g_{+}(a^{(\alpha,\beta)}), g_{-}(a^{(\alpha,\beta)})
\text{ such that } \abs{\gamma}=|\alpha|+|\beta|=1},\\
\hat{A}_n^{2} &:= \vspan{g_{+}(a^{(\alpha,\beta)}), g_{-}(a^{(\alpha,\beta)}) \text{ such that }
\abs{\gamma}=|\alpha|+|\beta|=2 \text{ and } \alpha_k \neq \beta_k \text{ for one index $k$ or for two indices } k,k'},\\
\hat{A}_n^{=} &:=\vspan{g_{+}(a^{(\alpha,\beta)}), g_{-}(a^{(\alpha,\beta)})
\text{ such that } \alpha = \beta
\text{ and } \abs{\gamma}=\abs{\alpha}+\abs{\beta} \geq 4},\\
\hat{A}_n^{\text{om}} &:=\gspan\{g_{+}(a^{(\alpha,\beta)}), g_{-}(a^{(\alpha,\beta)}) \text{ such that }
\abs{\gamma}=|\alpha|+|\beta| \geq 3,\,
\alpha_k + \beta_k = 1 \text{ for a unique index $k$}, \alpha_j = \beta_j \text{ for } j\neq k\}.
\end{align*}
\end{definition}
This enables us to decompose $\hat{A}_n=\hat{A}_n^{0} \oplus \hat{A}_n^{1} \oplus \hat{A}_n^{2} \oplus \hat{A}_n^{=} \oplus
\hat{A}_n^{\text{om}}  \oplus \hat{A}_n^{\perp}$ into a direct sum of vector spaces. Here,
$\hat{A}_n^{\perp}$ denotes the vector spaces complement of
$\hat{A}_n^{\text{core}} := \hat{A}_n^{0} \oplus \hat{A}_n^{1} \oplus \hat{A}_n^{2} \oplus \hat{A}_n^{=} \oplus
\hat{A}_n^{\text{om}}$ with respect to the scalar product $\inner{\,\cdot\,}{\,\cdot\,}$ introduced above. We note here that the label \textit{om} stands for \textit{optomechanics}, and the reason becomes clearer when perusing the discussion in Section~\ref{discussion:outlook:section} after the main result. We say that an element $v \in \hat{A}_n$
(and similarly for any element of a Hilbert space)
has nonzero support on a vector subspace $W \subseteq \hat{A}_n$ if there exists an element $w\in W$ such that $\inner{v}{w} \neq 0$.
We extend these definitions to Lie subalgebras $\g$ of $\hat{A}_n$ by defining
$\g_0 := \PP{\g}{\hat{A}_n^{0}}$, $\g_1 := \PP{\g}{\hat{A}_n^{1}}$, $\g_2 := \PP{\g}{\hat{A}_n^{2}}$,
$\g_= := \PP{\g}{\hat{A}_n^{=}}$
$\g_{\text{om}} := \PP{\g}{\hat{A}_n^{\text{om}}}$
and  where $\PP{a}{V}$ is the projection of $a$ onto the vector space $V$
(we will use this notation in particular in Theorem~\ref{main:theorem_final}(iii)).
With respect to these projections, we obtain a
vector-space decomposition
$\g= \g_{\text{core}} \oplus \g_{\perp}$ where $\g_{\perp}$ is the vector-space complement
of $\g_{\text{core}}:=\g_{0} \oplus \g_{1} \oplus \g_{2} \oplus \g_{=} \oplus
\g_{\text{om}} $ with respect to the introduced scalar product.
We denote $\gc$ as the center of $\g$, i.e., the subalgebra of all elements that commute with all elements of $\g$. 
Here we find it convenient to explicitly prove a set of equivalent properties of elements in $\hat{A}_n^{\perp}$.

\begin{lemma}\label{lemma:A_purp}
Let $\gamma = (\alpha,\beta) \in \Np{2n}$ with $\alpha \geq \beta$, $\sigma \in \{+,-\}$, and
$\alpha > \beta$ for $\sigma = -$. 
Thus $g_{\sigma}^{\gamma} \neq 0$, and $g_{\sigma}^{\gamma} \in \hat{A}_n^{\perp}$
implies $\abs{\gamma} \geq 3$.
We consider the conditions\\
(A) $g_{\sigma}^{\gamma} \in \hat{A}_n^{\perp}$;\\
(B) $\alpha \neq \beta$;\\
(C) there is no index $k$ with $\alpha_k {+} \beta_k = 1$ and $\alpha_j = \beta_j$ for all $j\neq k$;\\
(D) $\alpha_k {+} \beta_k \ge 2$ and $\alpha_k \neq \beta_k$
for at least one index $k$;\\
(E) $\alpha_k {+} \beta_k = 1$ for indices $k \in \mathcal{I}$ with $\abs{\mathcal{I}} \geq 2$;\\
(F) $\alpha_k {+} \beta_k = 1$ for indices $k \in \mathcal{I}$ with $\abs{\mathcal{I}} \geq 2$
and $\alpha_j = \beta_j$ for $j \notin \mathcal{I}$.\\
Assuming $\abs{\gamma} \geq 3$, we obtain that
\begin{equation*}
\text{ (A) holds } \Leftrightarrow
\text{ (B) and (C) holds } \Leftrightarrow
\text{ (D) or (E) holds } \Leftrightarrow
\text{ (D) or (F) holds}.
\end{equation*}
\end{lemma}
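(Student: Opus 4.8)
The plan is to reduce everything to combinatorial bookkeeping on the multi-index $\gamma = (\alpha,\beta)$, exploiting that the basis elements $g_\sigma^\gamma$ are mutually orthogonal under the scalar product $\inner{\,\cdot\,}{\,\cdot\,}$. Because of this orthogonality, membership in $\hat{A}_n^\perp$ is simply non-membership in the core, i.e. $g_\sigma^\gamma \in \hat{A}_n^\perp \Leftrightarrow g_\sigma^\gamma \notin \hat{A}_n^{\text{core}}$, and the remaining conditions (B)--(F) are purely set-theoretic statements about where $\alpha$ and $\beta$ differ.

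First I would dispatch the preliminary claim. If $\abs{\gamma}\le 2$, then $g_\sigma^\gamma$ already lies in the core: $\abs{\gamma}\in\{0,2\}$ with $\alpha=\beta$ lands in $\hat{A}_n^0$, the case $\abs{\gamma}=1$ lands in $\hat{A}_n^1$, and $\abs{\gamma}=2$ with $\alpha\neq\beta$ lands in $\hat{A}_n^2$ (for total degree $2$ the difference is forced to occur at one or two indices). Hence $g_\sigma^\gamma\in\hat{A}_n^\perp$ forces $\abs{\gamma}\ge 3$, and from here on I assume $\abs{\gamma}\ge 3$. For the equivalence (A) $\Leftrightarrow$ (B) and (C): since $\abs{\gamma}\ge 3$, the degree-restricted pieces $\hat{A}_n^0,\hat{A}_n^1,\hat{A}_n^2$ cannot contain $g_\sigma^\gamma$, so $g_\sigma^\gamma\in\hat{A}_n^{\text{core}}$ reduces to $g_\sigma^\gamma\in\hat{A}_n^=$ or $g_\sigma^\gamma\in\hat{A}_n^{\text{om}}$. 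Reading off Definition~\ref{definition:vector:spaces:A}, $g_\sigma^\gamma\in\hat{A}_n^=$ holds exactly when $\alpha=\beta$ (the side-constraint $\abs{\gamma}\ge 4$ is automatic, as $\alpha=\beta$ makes $\abs{\gamma}$ even), whose negation is precisely (B); and $g_\sigma^\gamma\in\hat{A}_n^{\text{om}}$ holds exactly when there is an index $k$ with $\alpha_k+\beta_k=1$ and $\alpha_j=\beta_j$ for $j\neq k$ (uniqueness of $k$ being automatic once the single differing index is singled out), whose negation is precisely (C). Taking the conjunction of the two negations gives (A) $\Leftrightarrow$ (B) and (C).

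For the two remaining equivalences I would introduce the bookkeeping sets $S:=\{k:\alpha_k\neq\beta_k\}$, $T_1:=\{k:\alpha_k+\beta_k=1\}$, and $T_2:=\{k:\alpha_k+\beta_k\ge 2 \text{ and } \alpha_k\neq\beta_k\}$, and record the disjoint decomposition $S=T_1\sqcup T_2$. Then (B) reads $S\neq\emptyset$, (D) reads $T_2\neq\emptyset$, and (E) reads $\abs{T_1}\ge 2$. A short computation shows that (C) is equivalent to ``$\abs{T_1}\neq 1$ or $T_2\neq\emptyset$'', since its negation is exactly $S=\{k\}$ with $k\in T_1$, i.e. $\abs{T_1}=1$ and $T_2=\emptyset$; and that (F) is equivalent to ``$\abs{T_1}\ge 2$ and $T_2=\emptyset$'', because the conditions in (F) force the index set $\mathcal{I}$ to satisfy $\mathcal{I}\subseteq T_1\subseteq S\subseteq\mathcal{I}$, hence $\mathcal{I}=T_1=S$ and $T_2=\emptyset$.

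With these translations in hand, a case split on $\abs{T_1}\in\{0,1,\ge 2\}$ finishes the argument: all three compound conditions ``(B) and (C)'', ``(D) or (E)'', and ``(D) or (F)'' collapse to the same Boolean function of $(T_1,T_2)$, namely ``$T_2\neq\emptyset$'' when $\abs{T_1}\le 1$ and ``true'' when $\abs{T_1}\ge 2$. I expect the only real subtlety to be keeping the automatic uniqueness in the $\hat{A}_n^{\text{om}}$/condition (C) and the forced identity $\mathcal{I}=S=T_1$ in (F) straight; once those are pinned down the chain of equivalences is a routine truth-table verification.
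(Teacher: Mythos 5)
Your proof is correct, and for the heart of the lemma it takes a genuinely different route from the paper. The paper dismisses $(A) \Leftrightarrow (B)\wedge(C)$ in one line (``follows from the definitions'') and then closes a cycle of implications: it separately proves $(D)\Rightarrow(A)$, $(E)\Rightarrow(A)$, and $(F)\Rightarrow(A)$, each time checking subspace by subspace that $g_\sigma^\gamma$ is excluded from $\hat{A}_n^0$, $\hat{A}_n^=$, $\hat{A}_n^{\text{om}}$, $\hat{A}_n^2$; then it proves $(B)\wedge(C) \Rightarrow (D)\vee(E)$ by an index-by-index case split (with a degenerate single-mode case ruled out via $\abs{\gamma}\geq 3$), and finally $(D)\vee(E) \Rightarrow (D)\vee(F)$ by contradiction. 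You instead spend the subspace analysis exactly once, in the first equivalence --- where $\abs{\gamma}\geq 3$ is genuinely needed to exclude $\hat{A}_n^1$ and $\hat{A}_n^2$, and where your observations that uniqueness of $k$ is automatic for the $\hat{A}_n^{\text{om}}$ condition and that $\abs{\gamma}\geq 4$ is automatic for $\hat{A}_n^=$ are both correct --- and then translate (B)--(F) into statements about the partition $S = T_1 \sqcup T_2$, so that the remaining two equivalences become a Boolean identity checked by a case split on $\abs{T_1}$. Your translations are all accurate (in particular $\neg(C) \Leftrightarrow (\abs{T_1}=1 \text{ and } T_2=\emptyset)$ and $(F) \Leftrightarrow (\abs{T_1}\geq 2 \text{ and } T_2 = \emptyset)$, using $T_1 \subseteq S$). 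What your approach buys: it replaces three separate ``implies (A)'' subspace arguments by a single membership characterization, it makes the forced identity $\mathcal{I}=T_1=S$ in (F) transparent rather than implicit, and it exposes that the equivalences among $(B)\wedge(C)$, $(D)\vee(E)$, $(D)\vee(F)$ are purely combinatorial and hold independently of $\abs{\gamma}\geq 3$ (the paper invokes that hypothesis in its case (III), though that case is already excluded by (C) alone). What the paper's structure buys in exchange is that it only needs one-directional exclusion arguments rather than a full ``if and only if'' description of core membership, at the price of more case analysis.
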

\begin{proof}
It follows from the definitions that (A) is equivalent to the condition that (B) and (C) holds.

We assume (D). Clearly, $g_{\sigma}^{\gamma}$ cannot be in
$\hat{A}_n^0$ or $\hat{A}_n^=$ as otherwise $\alpha_p=\beta_p$ for all $p$. It cannot be in
$\hat{A}_n^{\text{om}}$ as there would be a unique $p$ such that $\alpha_p+\beta_p=1$ while
$\alpha_q=\beta_q$ for all $q\neq p$ which conflicts with (D).
It cannot be in $\hat{A}_n^2$ due to our assumption
$\abs{\gamma}\geq3$ (as all elements of $\hat{A}_n^2$ have degree 2). Thus,
$g_{\sigma}^{\gamma}\in\hat{A}_n^{\perp}$ and (D) implies (A).

Let us assume (E). Again, $g_{\sigma}^{\gamma}$ cannot be in
$\hat{A}_n^0$ or $\hat{A}_n^=$ as otherwise $\alpha_p=\beta_p$ for all $p$. It cannot be in
$\hat{A}_n^{\text{om}}$ (which has a unique $p$ such that $\alpha_p+\beta_p=1$ while
$\alpha_q=\beta_q$ for all $q\neq p$) as there cannot be two indices $k,j$ with
$\alpha_k {+} \beta_k=\alpha_j {+} \beta_j=1$. It cannot be in $\hat{A}_n^2$ since
$\abs{\gamma}\geq3$.
Thus, $g_{\sigma}^{\gamma}\in\hat{A}_n^{\perp}$
and (E) implies (A). Similarly, (F) implies (A).

Let us assume that (B) and (C) holds. Thus 
(C) holds and there exists an index $k$ such that 
either
(i) $\alpha_k\neq \beta_k$ and $\alpha_k+ \beta_k=1$ holds or 
(ii) $\alpha_k\neq \beta_k$ and $\alpha_k+ \beta_k\geq 2$ holds.
Since (ii) implies (D) and (D) implies (C), we obtain that either (D) holds or 
(C) is observed while there exists an index $k$ such that  $\alpha_k\neq \beta_k$ and $\alpha_k+ \beta_k=1$.
The second possibility is equivalent to the existence of an index $k$
such that $\alpha_k+ \beta_k=1$ and not all $j\neq k$ observe $\alpha_j = \beta_j$.
We obtain three cases: (I) There is an index $j\neq k$
with $\alpha_j \neq \beta_j$ and $\alpha_j + \beta_j\geq 2$ (which implies (D)).
(II) There is an index $j\neq k$
with $\alpha_j + \beta_j= 1$ (which implies (E)).
(III) $n=1$, $k=1$, and $\alpha_1 + \beta_1= 1$ 
(which conflicts with $\abs{\gamma}\geq 3$).
Thus (D) or (E) holds.

Let us assume that (D) or (E) holds and we aim to prove that
(D) or (F) holds. To prove this by contradiction, we assume that (D) is not fulfilled, (E) holds
(i.e., $\alpha_k {+} \beta_k = 1$ for indices $k \in \mathcal{I}$ with $\abs{\mathcal{I}} \geq 2$)
and there exists an index $j \not\in \mathcal{I}$ with 
$\alpha_j \neq \beta_j$ and $\alpha_j + \beta_j\neq 1$. This implies
that $\alpha_j \neq \beta_j$ and $\alpha_j + \beta_j\geq 2$ which is impossible 
as (D) is not fulfilled.
\end{proof}

We now need to define an equivalence relation between the operators in the algebra, which will be useful later when proving important properties of specific sequences of operators for which only the highest degree terms in their expansions in terms of basis elements are important. The idea here will be that if $x,x'\in\hat{A}_n$, we will often need to compare them considering only components in their expansion above a certain degree, or we will often retain the components in the commutator $[x,x']$ only above a desired degree. Thus, the necessity to introduce the following:
\begin{definition}\label{equivalence:relation}
The equivalence relation $x \upto{d} y$ for $d \in \Np{}$ holds if
$\deg(x{-}y) < d$.
\end{definition}
Concretely,  $x \upto{d} y$ says that 
$x$ is equal to $y$ up to terms of degree less than $d$. Our main application of this notation
is to expand $x\in\hat{A}_n$ into its highest-degree terms as
$x \upto{d} \sum_{\gamma,\sigma} c_{\sigma}^{\gamma}\, g_{\sigma}^{\gamma}$
with $c_{\sigma}^{\gamma} \in \R$ and $\abs{\gamma}=d$. 
Moreover, $x=x_1+x_2\upto{d} x_1$ implies $[x,x']\upto{d} [x_1,x']$.
We sometimes also use the related equivalence 
$\up$ given by $x\up y$ iff  $\deg(x{-}y) < \deg(x) = \deg(y)$
or $x=y=0$, which does not depend on a particular degree but is impractical to apply in proofs.
Let us briefly illustrate the equivalence relation presented in Definition~\ref{equivalence:relation} with two examples.

\begin{tcolorbox}[colback=orange!3!white,colframe=orange!85!black,title=Example: equivalence up to degree $d$]\label{example:one:zero:one}
\begin{example}[Single elements]
Let $x=(a^\dag b c d-a b^\dag c^\dag d^\dag)+(a^\dag b^\dag-a b)$ and $x'=(a^\dag b c d-a b^\dag c^\dag d^\dag)+(a^\dag b^\dag-a b)+(a^\dag -a )$.
Then, $x\up x'$, $x\upto{4}x'$, $x\upto{3}x'$ and $x\upto{2}x'$. We also have that $x'-x\up a^\dag -a$.
\end{example}
\begin{example}[Commutator]
Let $x=((a^\dag)^3-a^3)+((a^\dag)^2-a^2)$ and $x'=a^\dag -a$.
Then, $[x,x']\upto{2}[(a^\dag)^3-a^3,x']$.
\end{example}
\end{tcolorbox}

We here provide explicit general expressions for the elements in the sets from Definition~\ref{definition:vector:spaces:A}:
$\hat{A}_n^0$, $\hat{A}_n^1$, $\hat{A}_n^2$, $\hat{A}_n^{=}$, and $\hat{A}_n^\text{om}$.
We start with $\hat{A}_n^0$. Among all elements in the skew-hermitian Weyl algebra $\hat{A}_n$ we find that those of the
form $i a^\dag a$ play a special role. Therefore, we choose to collect all $g_+(a^\dag_k a_k):=2ia^\dag_k a_k$ 
together with $2i$ into this
particular real subspace $\hat{A}_n^0$ of $\hat{A}_n$. 
Furthermore, given that the generators $g_\sigma^\gamma$ for $\sigma=\pm$ come always in pairs, it is natural to
ask if there is a map $\Phi:\hat{A}_n\rightarrow\hat{A}_n$ such that
$\Phi:g_\sigma^\gamma\mapsto \kappa_\sigma g_{-\sigma}^\gamma$ for appropriate constants $\kappa_\sigma$.
As discussed in Subsection~\ref{subsection:complementary}, we can use $i a_k^\dag a_k$ to define the desired maps $\Phi_k$ such that
$\Phi_k(g_\sigma^\gamma):=[i a_k^\dag a_k,g_\sigma^\gamma]=\sigma(\alpha_k{-}\beta_k)g_{-\sigma}^\gamma$. Note that
these maps are nothing else than the Lie derivative, i.e., $\Phi_k(g_\sigma^\gamma)=\mathcal{L}_{i a_k^\dag a_k}(g_\sigma^\gamma)$.
For $\hat{A}_n^1$, in contrast, an element $g_\sigma^\gamma\in\hat{A}_n^1$ is characterized by $|\gamma|=1$. 
These elements always have the general forms 
$g_+^{1}(k):=i(a^\dag_k{+}a_k)$ and $g_-^{1}(k):=a^\dag_k{-}a_k$.

For $\hat{A}_n^2$, 
an element $g_\sigma^\gamma\in\hat{A}_n^2$ is characterized by $|\gamma|=2$ and $\alpha\neq\beta$.
The elements of this set are well known in quantum optics as those that model linear optical
transformations \cite{Gerry:Knight:2004}, and they contain single-mode squeezing, two-mode squeezing,
and mode-mixing (or beam splitting): For single-mode squeezing of mode $k$, we introduce the operators
$g^{\textrm{S}}_{+}(k):=i[(a_k^\dag)^2 {+} a_k^2]$ and
$g^{\textrm{S}}_{-}(k):=(a_k^\dag)^2 {-} a_k^2$.
For two-mode squeezing with respect to the 
modes $k\neq j$, we have the operators
$g^{\textrm{T}}_{+}(k,j):=i(a_k^\dag a_j^\dag{+}a_k a_j)$ and
$g^{\textrm{T}}_{-}(k,j):=a_k^\dag a_j^\dag{-}a_k a_j$.
For two-mode mixing of the modes $k\neq j$, we get 
$g^{\textrm{B}}_{+}(k,j):=i(a_k^\dag a_j{+}a_k a_j^\dag)$ and
$g^{\textrm{B}}_{-}(k,j):=a_k^\dag a_j{-}a_k a_j^\dag$.

For $\hat{A}_n^{=}$, an element $g_\sigma^{\tilde{\gamma}}\in\hat{A}_n^{=}$, with
$\tilde{\gamma}=(\tilde{\alpha},\tilde{\beta})$ is characterized by $\tilde{\alpha}=\tilde{\beta}$ and
$|\tilde{\gamma}|\geq4$. Thus, we have that the elements always have the generic form
$g_+^{\tilde{\gamma}}=2ia^{\tilde{\gamma}}$ and
$g_-^{\tilde{\gamma}}=0$ due to Proposition~\ref{diagonal:element:form}. 
For $\hat{A}_n^\text{om}$, an element $g_\sigma^\gamma\in\hat{A}_n^\text{om}$, with
$\gamma=(\alpha,\beta)$, is characterized with one $j$ such that
$\alpha_j+\beta_j=1$ while $\alpha_k=\beta_k$ for all $k\neq j$. We can therefore write
$g_+^\gamma=i a^{\tilde{\gamma}}(a_j^\dag{+}a_j)$ and $g_-^\gamma= a^{\tilde{\gamma}}(a_j^\dag{-}a_j)$, with
$\tilde{\gamma}=(\tilde{\alpha},\tilde{\alpha})$, $\tilde{\alpha}_j=0$, and $|\tilde{\gamma}|=|\gamma|-1$.
Note that, from now on, we use the tilde-notation only for multidegree vectors $\tilde{\gamma}$ of the form $\tilde{\gamma}=(\tilde{\alpha},\tilde{\alpha})$. We do not use this notation for multidegree vectors $\gamma$ of the form $\gamma=(\alpha,\beta)$, with $\alpha\neq\beta$. Thus, all elements of $\hat{A}_n^{=}$ have multidegree $\tilde{\gamma}$.

We now make an important observation. It is immediate to see that the operators with odd degree cannot have $\alpha=\beta$: in fact, the degree of an
element of the algebra is given by $|\gamma|=|\alpha|+|\beta|$. Thus, if $\alpha=\beta$ we have $|\gamma|=2|\alpha|$, which
cannot be odd. This implies that elements of $\hat{A}^0_n$, $\hat{A}_n^{=}$ and $\hat{A}^2_n$ cannot have odd degree. It is also
immediate to verify elements in $\hat{A}^{\text{om}}_n$ have always odd degree. Thus, elements with odd degree can
only belong to $\hat{A}^{1}_n$, $\hat{A}^{\text{om}}_n$ or $\hat{A}^\perp_n$.

We present now two simple examples in the skew-hermitian Weyl algebra $\hat{A}_2$
to focus the following discussion and 
to elucidate how the decomposition of the vector space $\hat{A}_n$ proposed above is directly related to the
dimensionality of a Lie algebra generated by a given set $\mathcal{G}$ of generators.
The intuitive idea will be, as seen in the main result towards the end of this work, that the generators must
obey simple constrains (in terms of having support in certain subvector spaces) for the algebra generated to be finite.
The first example considers a finite-dimensional Lie algebra, while the second example considers an infinite-dimensional Lie algebra.

\begin{tcolorbox}[colback=orange!3!white,colframe=orange!85!black,title=Example: Finite-dimensional Lie algebra]\label{example:one:one:a}
\begin{example}
\label{exA}
The Lie algebra $\g=\lie{ \gens }\subseteq \hat{A}_2$ is generated from the generators
$\gens :=
[ ia_1^\dag a_1, ia_2^\dag a_2, i(a_2^{\dag})^2+ia_2^2, (a_2^{\dag})^2-a_2^2, 
ia_1^\dag a_1(a_2^\dag+a_2), a_1^\dag a_1(a_2^\dag- a_2), i(a_1^\dag a_1)^2 ]$ where $\gens_j$ denotes the $j$th generator.
One can easily verify by direct computations that $\g$ has a finite dimension.
We have $\gens_1, \gens_2 \in  \hat{A}_2^0$, $\gens_3, \gens_4 \in  \hat{A}_2^2$, and $\gens_5, \gens_6 \in  \hat{A}_2^{\text{om}}$.
Moreover, 
$\gens_7$ has nonzero support on $\hat{A}_2^=$ but commutes with all generators. Note that $[\gens_5, \gens_6]
= 2i (a_1^{\dagger})^2 a_1^2 + 2i a_1^{\dagger} a_1 \in \hat{A}_2^{0} \oplus \hat{A}_2^=$
has zero support in $\hat{A}_2^\perp$.
\end{example}
\end{tcolorbox}

\begin{tcolorbox}[colback=orange!3!white,colframe=orange!85!black,title=Example: Infinite-dimensional Lie algebra]\label{example:one:one:b}
\begin{example}
\label{exB}
The Lie algebra $\g=\lie{ \gens }\subseteq \hat{A}_2$ is generated from the set of generators
$\gens := [ i a_1^\dag a_1, i a_2^\dag a_2, i a_1^\dag a_1(a_2^\dag{+}a_2), i a_2^\dag a_2(a_1^\dag{+} a_1) ]$ 
where $\gens_j$ denotes the $j$th generator. 
We anticipate that $\g$ is infinite.
We have $\gens_1, \gens_2 \in  \hat{A}_2^0$ and $\gens_3, \gens_4 \in  \hat{A}^{\text{om}}_2$.
Moreover, $[\gens_3, \gens_4 ]= 
a_2^{\dagger} a_1 a_2^2 + (a_2^{\dagger})^2 a_1 a_2 - a_1^{\dagger} a_1^2 a_2 - a_1^{\dagger} a_2^{\dagger}  a_2^2 + 
a_1^{\dagger}  a_2^{\dagger} a_1^2 - a_1^{\dagger}  (a_2^{\dagger})^2 a_2 - (a_1^{\dagger})^2 a_1 a_2 + (a_1^{\dagger})^2 a_2^{\dagger} a_1 + 
a_2^{\dagger} a_1 - a_1^{\dagger} a_2$
has nonzero support in $\hat{A}_2^\perp$. The condition $\PP{[\gens_3, \gens_4]}{\hat{A}_n^\perp}\neq0$ for $\gens_3, \gens_4 \in  \hat{A}^{\text{om}}_2$ will be sufficient for the algebra to be infinite.
\end{example}
\end{tcolorbox}

We now briefly provide a few comments on the commutation properties of the vector spaces that we have introduced here.
It is not guaranteed that the commutators of two arbitrary elements of a vector space $\hat{A}^V_n$ are themselves elements of the vector
space, i.e., $[\hat{A}^V_n,\hat{A}^V_n]\subseteq\hat{A}^V_n$ is not always true. 
Here we list a series of properties that provide intuition about the results presented in this work.
They can be verified by direct inspection using the properties bounding the maximum degree of a
commutator, as well as other theorems listed in the following sections. In particular,
\begin{subequations}
\begin{align}
	[\hat{A}^0_n,\hat{A}^0_n] &= [\hat{A}^{=}_n,\hat{A}^{=}_n] = [\hat{A}^0_n,\hat{A}^{=}_n]=0;\\
	[\hat{A}^0_n,\hat{A}^K_n] &=\hat{A}^K_n, \text{ where }
	K\in \{1,2,\text{om},\perp\} \text{ (see Lemma~\ref{commutator:no:change:vector:space});}\label{3:b}\\
	[\hat{A}^1_n,\hat{A}^1_n] &\subseteq\hat{A}^0_n
	\text{ (see Lemma~\ref{degree:commutator:conjugate:basis:operators});}\\
	[\hat{A}^2_n,\hat{A}^2_n] &=\hat{A}^2_n, \text{ where } 
	\hat{A}^2_n \text{ is the Lie algebra } \mathfrak{sp}(2n,\mathbb{R}) \text{ of the symplectic group } \mathrm{Sp}(2n,\mathbb{R})\; \text{\cite{Adesso:Ragy:2014}};\\
	[\hat{A}^1_n,\hat{A}^2_n] &=\hat{A}^1_n
	\text{ (see Lemma~\ref{degree:commutator:conjugate:basis:operators}).}
\end{align}
\end{subequations}
Note that \eqref{3:b} is true since the commutator of an element in $\hat{A}^0_n$ with an element $g_\sigma^\gamma\in \hat{A}^K_n$ gives either $0$ or $g_{-\sigma}^\gamma$, both of which are in $\hat{A}^K_n$. The same is true for linear combinations.
 
This provides a first insight into the Lie-algebra structure in $\hat{A}_n$, and helps to better visualize why certain results 
in Sections~\ref{algebra:properties:section}-\ref{main:results:section} do apply.

We have introduced the decomposition of the algebra $\hat{A}_n$ in the vector spaces $\hat{A}_n^{\text{K}}$
for $K\in \{0,1,2,{=},\text{om},\perp\}$.
Let us exploit this decomposition to introduce the following intersections between a set of Lie-algebra generators
$\mathcal{G}$ and the spaces $\hat{A}_n^{\text{K}}$. We have:
\begin{equation*}
\mathcal{G}^{\text{K}}:=\mathcal{G}\cap \hat{A}_n^{\text{K}}.
\end{equation*}
In addition, we also have that $\mathcal{G}^\perp:=\mathcal{G}\cap \hat{A}_n^\perp$ to
complement the definitions provided here. Notice that these sets might inherit some of the properties of the algebras from
which they are constructed. For example, one has that $[\mathcal{G}^2,\mathcal{G}^2]\subseteq\hat{A}_n^2$, i.e.,
the commutator of two elements of $\mathcal{G}^2$
will always be in $\hat{A}_n^2$, but might not be in $\mathcal{G}^2$.

\section{Properties of the Weyl algebra and the skew-hermitian Weyl algebra}\label{algebra:properties:section}
We now continue studying the properties of monomials, basis elements, as well as their commutators. 
Here we will show that there exists a specific condition on the multi-degrees of two monomials,
or two basis elements, that allows us to predict the highest possible degree
of their commutator. This condition plays an important role in this work.

\subsection{Basic properties of commutators of algebra elements: monomials}
We start with providing useful properties of elements of the skew-hermitian Lie-algebra $\hat{A}_n$ using normal-ordered monomials.

\begin{proposition} \label{app:polynomials}
(a) Consider two polynomials $g$ and $\hat{g}$ in the 
 $n$th Weyl algebra $A_n$.
(a1) We have a unique expansion $g= \sum_{\gamma} c_{\gamma} a^{\gamma}$ with multi-indices $\gamma  \in \Np{2n}$ and $c_{\gamma} \neq 0$. 
(a2) $\deg(g{+}\hat{g}) \leq \max\{\deg(g), \deg(\hat{g})\}$, (a3)  $\deg(g \hat{g}) = \deg(g) + \deg(\hat{g})$,
(a4) $\deg([g,\hat{g}]) \leq \deg(g) + \deg(\hat{g}) -2$. 
(b) Given a  monomial $m \in A_n$ 
with multi-degree $\gamma= (\alpha,\beta) \in \Np{2n}$.
We have a unique expansion $m= \sum_{\hat{\gamma}} c_{m,\hat{\gamma}} a^{\hat{\gamma}}$ with multi-indices
$\hat{\gamma}  \in \Np{2n}$ and $c_{m,\hat{\gamma}} \neq 0$. The 
parities of $\abs{\hat{\gamma}}$ and $\deg(m)$ agree
and $\hat{\gamma} = (\hat{\alpha},\hat{\beta}) = (\alpha {-} \oneshalf,\beta {-} \oneshalf)$ for nonnegative integer vectors $\oneshalf \in \Np{n}$.
There is a single $\hat{\gamma}_*$ such that  $\abs{\hat{\gamma}_*} = \deg(m)$, and $c_{m,\hat{\gamma}_*} = 1$ and
$\hat{\gamma}_*=(\alpha,\beta)$ for this particular $\hat{\gamma}_*$.
\end{proposition}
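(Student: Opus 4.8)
The plan is to order the five claims by their logical dependence: (a1) and (a2) are immediate from the basis property, part~(b) is the technical core, and (a3)--(a4) then follow from (b) by bilinearity. For (a1) I would simply invoke the fact, recalled in Section~\ref{subsection:weyl}, that the normal-ordered monomials $a^\gamma$ form a $\C$-basis of $A_n$; existence and uniqueness of the expansion are then automatic once we drop the vanishing coefficients. For (a2), writing $g=\sum_\gamma c_\gamma a^\gamma$ and $\hat g=\sum_\gamma \hat c_\gamma a^\gamma$ gives $g+\hat g=\sum_\gamma(c_\gamma+\hat c_\gamma)a^\gamma$, and since addition can only annihilate coefficients rather than produce monomials of higher degree, $\deg(g+\hat g)\le\max\{\deg(g),\deg(\hat g)\}$.

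The heart of the argument is (b), which I would prove by induction on the number $N(m)$ of \emph{inversions} of $m$, that is, the number of pairs in which an annihilation operator stands to the left of a creation operator. In the base case $N(m)=0$ the monomial already has every creation operator left of every annihilation operator, and reordering the creation operators (respectively the annihilation operators) among themselves is free of corrections because $[a_i^\dagger,a_j^\dagger]=[a_i,a_j]=0$; hence $m=a^{(\alpha,\beta)}$ exactly, and all assertions hold with $\oneshalf=0$. If $N(m)>0$ I pick an adjacent inverted pair and rewrite it via $a_i a_j^\dagger=a_j^\dagger a_i+\delta_{ij}$, so that $m=m'+\delta_{ij}\,m''$: here $m'$ has those two operators transposed, so $N(m')=N(m)-1$ and $\mdeg(m')=(\alpha,\beta)$, while the correction term occurs only for $i=j$ and yields $m''$ with one fewer $a_i^\dagger$ and one fewer $a_i$, hence $\deg(m'')=\abs{\gamma}-2$ and $N(m'')<N(m)$. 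Feeding the induction hypothesis for $m'$ and $m''$ into $m=m'+\delta_{ij}m''$ delivers the four sub-claims simultaneously: every surviving multi-index has the form $(\alpha-\oneshalf,\beta-\oneshalf)$ because each correction deletes exactly one $a_i^\dagger$ together with one $a_i$; the parity of $\abs{\hat\gamma}$ coincides with that of $\abs{\gamma}=\deg(m)$ because each correction lowers the degree by exactly two; and the top-degree term is the unique monomial $a^{(\alpha,\beta)}$ with coefficient $1$, since $m''$ contributes only in degree $\abs{\gamma}-2$ and the induction supplies the unit leading coefficient for $m'$.

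Claims (a3) and (a4) then reduce to (b) by bilinearity. For (a3), the inequality $\deg(g\hat g)\le\deg(g)+\deg(\hat g)$ is clear, and the point is non-cancellation of the top part. By the leading-term statement of (b), the degree-$(\deg(g)+\deg(\hat g))$ component of $g\hat g=\sum c_{\gamma_1}\hat c_{\gamma_2}a^{\gamma_1}a^{\gamma_2}$ equals $\sum_{\abs{\gamma_1}=\deg(g),\,\abs{\gamma_2}=\deg(\hat g)}c_{\gamma_1}\hat c_{\gamma_2}\,a^{\gamma_1+\gamma_2}$; under the identification $a^{(\alpha,\beta)}\leftrightarrow x^\alpha y^\beta$ this is precisely the product of the nonzero leading forms of $g$ and $\hat g$ in the commutative polynomial ring $\C[x_1,\dots,x_n,y_1,\dots,y_n]$, which is an integral domain, so the product is nonzero and $\deg(g\hat g)=\deg(g)+\deg(\hat g)$. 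For (a4) it suffices to bound $\deg[a^{\gamma_1},a^{\gamma_2}]$: by (b) both $a^{\gamma_1}a^{\gamma_2}$ and $a^{\gamma_2}a^{\gamma_1}$ have the same leading term $a^{\gamma_1+\gamma_2}$ with coefficient $1$, which cancels in the commutator, and since every remaining term shares the parity of $\abs{\gamma_1}+\abs{\gamma_2}$ the next admissible degree is $\abs{\gamma_1}+\abs{\gamma_2}-2$; summing over $\gamma_1,\gamma_2$ and applying (a2) gives $\deg[g,\hat g]\le\deg(g)+\deg(\hat g)-2$.

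I expect the main obstacle to be the coupled bookkeeping in (b): the induction must propagate the contraction structure $(\alpha-\oneshalf,\beta-\oneshalf)$, the parity statement, and the unit leading coefficient all at once, and care is needed to verify that deleting the commuted pair strictly lowers $N$ so that the induction on both $m'$ and $m''$ is well founded. The second delicate point is the non-cancellation in (a3), where a bare degree count is insufficient and one genuinely needs the integral-domain property of the symbol ring to rule out that the many cross terms $a^{\gamma_1+\gamma_2}$ conspire to cancel.
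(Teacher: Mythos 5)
Your proposal is correct, but it is structured quite differently from the paper's own proof, which is almost entirely a citation: the paper gets (a1) from the basis property of normal-ordered monomials \cite[p.~9]{Coutinho:1995}, imports (a2)--(a4) wholesale from Theorem~1.1 on p.~14 of \cite{Coutinho:1995}, and dispatches (b) with the single remark that the canonical commutation relations imply it. You instead make the argument self-contained, and you invert the logical dependence: where the paper treats (a3)--(a4) as independently known facts, you prove (b) first (your induction on the number of inversions, using $a_i a_j^\dagger = a_j^\dagger a_i + \delta_{ij}$, is exactly the fleshed-out content of the paper's one-line appeal to the CCR, and it correctly propagates the contraction structure $(\alpha{-}\oneshalf,\beta{-}\oneshalf)$, the parity, and the unit leading coefficient) and then derive (a3) and (a4) from it. Your non-cancellation argument for (a3) via the leading forms in the commutative ring $\C[x_1,\ldots,x_n,y_1,\ldots,y_n]$ is the standard associated-graded/symbol argument and is sound; your (a4) correctly exploits that $a^{\gamma_1}a^{\gamma_2}$ and $a^{\gamma_2}a^{\gamma_1}$ share the leading term $a^{\gamma_1+\gamma_2}$ with coefficient $1$, which cancels, after which the contraction structure from (b) forces a drop of at least $2$. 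What the paper's route buys is brevity and reliance on a standard reference; what yours buys is a self-contained treatment that makes explicit the contraction and parity bookkeeping which the paper then uses repeatedly later (e.g., in Lemma~\ref{app:lemma:powers} and Theorem~\ref{app:theorem:normal}), at the cost of the longer double induction in (b) that you yourself flag as the delicate point.
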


\begin{proof}
The unique expansion in
part (a1) follows as the normal-ordered monomials form a basis of $A_n$, see \cite[p.~9]{Coutinho:1995}.  Theorem~1.1 on p.~14
of \cite{Coutinho:1995} proves (a2), (a3), and (a4).
After applying (a1),
the canonical commutation relations
imply the rest of (b). 
\end{proof}

\begin{proposition} \label{app:polynomials:same:order} 
Consider a monomial $m \in A_n$ 
with multi-degree $(\alpha,\beta) \in \Np{2n}$. We have $m=m^\dag$
if and only if $\alpha=\beta$.
\end{proposition}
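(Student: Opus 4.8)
The plan is to prove both implications by passing to the unique normal-ordered expansion of $m$ and invoking Proposition~\ref{app:polynomials}. The starting observation is that the anti-automorphism $(\cdot)^\dagger$ sends a monomial of multi-degree $(\alpha,\beta)$ to one of multi-degree $(\beta,\alpha)$: it reverses the product and swaps each $a_j \leftrightarrow a_j^\dagger$, so that $\mdeg(m^\dagger) = (\beta,\alpha)$, consistently with $(a^{(\alpha,\beta)})^\dagger = a^{(\beta,\alpha)}$. Accordingly I will compute $m^\dagger$ on the level of normal-ordered forms by applying $\dagger$ term-by-term to $\no{m} = \sum_{\hat\gamma} c_{m,\hat\gamma}\, a^{\hat\gamma}$. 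The (real, indeed nonnegative-integer) coefficients $c_{m,\hat\gamma}$ pass through the linear map $\dagger$ unchanged, and $(a^{\hat\gamma})^\dagger = a^{(\hat\gamma)^\dagger}$ is again a normal-ordered monomial, so $m^\dagger = \sum_{\hat\gamma} c_{m,\hat\gamma}\, a^{(\hat\gamma)^\dagger}$ is already in normal-ordered form.

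For the forward direction, I assume $m = m^\dagger$ and deduce $\alpha = \beta$ by comparing leading terms. Both $m$ and $m^\dagger$ are monomials of degree $\abs{\alpha}+\abs{\beta}$, and Proposition~\ref{app:polynomials}(b) guarantees that the top-degree part of $\no{m}$ is the single monomial $a^{(\alpha,\beta)}$ with coefficient $1$, while that of $\no{m^\dagger}$ is the single monomial $a^{(\beta,\alpha)}$. Since the normal-ordered expansion is unique (Proposition~\ref{app:polynomials}(a1)), the equality $m = m^\dagger$ forces equality of these unique leading monomials, hence $(\alpha,\beta) = (\beta,\alpha)$, i.e. $\alpha = \beta$.

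For the converse I assume $\alpha = \beta$. The key structural input is the last clause of Proposition~\ref{app:polynomials}(b): every index $\hat\gamma = (\hat\alpha,\hat\beta)$ occurring in $\no{m}$ has the form $(\alpha {-} \oneshalf, \beta {-} \oneshalf)$ for some $\oneshalf \in \Np{n}$, so that $\hat\alpha - \hat\beta = \alpha - \beta$ holds for all terms simultaneously. Under the hypothesis $\alpha = \beta$ this yields $\hat\alpha = \hat\beta$ for every term, i.e. every normal-ordered monomial appearing in $m$ is diagonal and hence satisfies $(\hat\gamma)^\dagger = \hat\gamma$. Substituting into $m^\dagger = \sum_{\hat\gamma} c_{m,\hat\gamma}\, a^{(\hat\gamma)^\dagger}$ gives $m^\dagger = \sum_{\hat\gamma} c_{m,\hat\gamma}\, a^{\hat\gamma} = \no{m} = m$.

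I expect the converse to be the main obstacle: it is not enough to match leading terms, because $\dagger$ generically permutes the lower-order terms of the expansion as well. What makes the argument go through is precisely that Proposition~\ref{app:polynomials}(b) controls all terms at once — the common shift $\oneshalf$ forces every monomial in $\no{m}$ to be diagonal once $\alpha = \beta$ — so that $\dagger$ acts as the identity on each term. I would take care to flag that this ``all terms diagonal'' phenomenon is special to monomials with $\alpha = \beta$ and does not persist for a general self-adjoint polynomial.
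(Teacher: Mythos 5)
Your proof is correct and follows essentially the same route as the paper's: both directions rest on Proposition~\ref{app:polynomials}(b), applying $(\cdot)^\dagger$ term-by-term to the normal-ordered expansion $m=\sum_{\oneshalf} c_{m,(\alpha-\oneshalf,\beta-\oneshalf)}a^{(\alpha-\oneshalf,\beta-\oneshalf)}$, with the converse following because the common shift $\oneshalf$ makes every term diagonal when $\alpha=\beta$, and the forward direction following by comparing the unique leading monomials $a^{(\alpha,\beta)}$ and $a^{(\beta,\alpha)}$. The only cosmetic difference is that you justify the coefficients passing through $\dagger$ by their being real, whereas the relevant fact is simply that $\dagger$ is defined as a linear (not conjugate-linear) anti-automorphism.
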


\begin{proof}
We have $m=\sum_{\oneshalf} c_{m,(\alpha {-} \oneshalf,\beta {-} \oneshalf)} a^{(\alpha {-} \oneshalf,\beta {-} \oneshalf)}$
for nonnegative integer vectors $\oneshalf$ due to Proposition~\ref{app:polynomials}(b) and we obtain that 
$m^{\dagger}=\sum_{\oneshalf}
c_{m,(\alpha {-} \oneshalf,\beta {-} \oneshalf)} a^{(\beta {-} \oneshalf,\alpha {-} \oneshalf)}$. The if case follows immediately.
For the other direction, we assume that $m=m^\dag$. We consider the part of the 
normal-ordered expansions with $\oneshalf$ equal to the zero vector (or equivalently with
the normal-ordered monomial of highest-degree). This implies that $a^{(\alpha,\beta)}
=a^{(\beta,\alpha)}$, which completes the proof.
\end{proof}

\begin{proposition}\label{diagonal:element:form}
Consider the element $g_+(\widetilde{m})\in\hat{A}_n$
for a given monomial $\tilde{m} \in A_n$ with $\mdeg(\widetilde{m})=(\tilde{\alpha},\tilde{\alpha})$.
Then, $g_+(\widetilde{m})=2i\widetilde{m}$.
\end{proposition}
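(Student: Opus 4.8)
The plan is to reduce the claim directly to Proposition~\ref{app:polynomials:same:order}. By the definition of $g_+$ for a general monomial, we have $g_+(\widetilde{m}) = i(\widetilde{m}^\dagger + \widetilde{m})$, so proving the proposition amounts to establishing the operator identity $\widetilde{m}^\dagger = \widetilde{m}$; once this is in hand, the result is immediate, since $i(\widetilde{m} + \widetilde{m}) = 2i\widetilde{m}$.

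First I would observe that $\widetilde{m}$ is a monomial whose multi-degree is $\mdeg(\widetilde{m}) = (\tilde{\alpha},\tilde{\alpha})$, so that its creation and annihilation multi-indices coincide, i.e. $\alpha = \beta = \tilde{\alpha}$. Then I would invoke Proposition~\ref{app:polynomials:same:order}, which states that a monomial $m \in A_n$ with multi-degree $(\alpha,\beta)$ satisfies $m = m^\dagger$ if and only if $\alpha = \beta$. Applying the ``if'' direction with $\alpha = \beta = \tilde{\alpha}$ yields $\widetilde{m}^\dagger = \widetilde{m}$, and substituting into the expression above gives $g_+(\widetilde{m}) = i(\widetilde{m}^\dagger + \widetilde{m}) = 2i\widetilde{m}$, as required.

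There is essentially no obstacle here: the entire content of the proposition is carried by Proposition~\ref{app:polynomials:same:order}, and the only thing to check is that the hypothesis $\mdeg(\widetilde{m}) = (\tilde{\alpha},\tilde{\alpha})$ is precisely the condition $\alpha = \beta$ appearing in that proposition. The one point worth a clarifying remark is that $\widetilde{m}$ need not be normal-ordered, so $\widetilde{m} = \widetilde{m}^\dagger$ must be read as a genuine operator identity in $A_n$ rather than a statement about leading terms only; however, Proposition~\ref{app:polynomials:same:order} is already phrased for arbitrary monomials, so this subtlety is handled automatically and no additional normal-ordering argument is needed.
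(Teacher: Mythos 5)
Your proof is correct and follows exactly the paper's own argument: both reduce the claim to Proposition~\ref{app:polynomials:same:order}, using the hypothesis $\mdeg(\widetilde{m})=(\tilde{\alpha},\tilde{\alpha})$ to conclude $\widetilde{m}^\dagger=\widetilde{m}$ and hence $g_+(\widetilde{m})=i(\widetilde{m}^\dagger+\widetilde{m})=2i\widetilde{m}$. Your added remark that Proposition~\ref{app:polynomials:same:order} applies to arbitrary (not necessarily normal-ordered) monomials is a sound clarification, though the paper takes this for granted.
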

\begin{proof}
The claim is an immediate consequence of Proposition~\ref{app:polynomials:same:order}. In fact, we have
$\mdeg(\widetilde{m})=(\tilde{\alpha},\tilde{\alpha})$,
which implies $\widetilde{m}=\widetilde{m}^\dag$, and $g_+(\widetilde{m})=2i\widetilde{m}$.
\end{proof}

We here bring the attention to the fact that all elements in $\hat{A}_n^0$ and $\hat{A}_n^{=}$ have the generic form presented in
Proposition~\ref{diagonal:element:form}, as has also been discussed above. In the remaining part of this work, we retain the
convention that a monomial $\tilde{m}$ or a basis element $g^{\tilde{\gamma}}_{\tilde{\sigma}}$ that are elements of either
$\hat{A}_n^0$ and $\hat{A}_n^{=}$ are connected to objects marked by a \emph{tilde}. 

\begin{lemma} \label{app:lemma:powers}
(a) Let $a^{\dagger}$ and  $a$ denote the single-mode creation and annihilation operators. Assuming that $r$ and
$s$ are nonnegative integers, we obtain:
(a1) $[a^r, (a^{\dagger})^s] = \sum_{k=1}^{\min(r,s)} k! \binom{r}{k} \binom{s}{k} (a^{\dagger})^{s-k} a^{r-k}$;
(a2) $[a^r, (a^{\dagger})^s] = rs (a^{\dagger})^{s-1} a^{r-1} + A$ where $\deg(A) \leq r+s-4$;
(a3) $(a^{\dagger})^r a^r = \sum_{k=0}^r s(r,k) (a^{\dagger} a)^k$ where $s(r,k)$ are the Stirling numbers of the first kind
with $s(r,0) =0$ if $r>0$ and $s(r,r)=1$ if $r\geq 0$; 
(a4) $[(a^{\dagger})^r a^r, (a^{\dagger})^s a^s] = 0$ for all $r,s\geq0$.
(b) Given two monomials $\tilde{m}, \tilde{m}' \in A_1$ 
with multi-degrees  $\mdeg(\tilde{m}) = (\tilde{r},\tilde{r})$ and $\mdeg(\tilde{m}') = (\tilde{r}',\tilde{r}')$ for
$0 \leq  \tilde{r},\tilde{r}' \in \N$, we have $[\tilde{m},\tilde{m}']=0$.
(c)~Given two monomials $\tilde{m}, \tilde{m}' \in A_n$ 
with multi-degrees 
$\mdeg(\tilde{m}) = (\tilde{\alpha},\tilde{\alpha})$ and $\mdeg(\tilde{m}') = (\tilde{\alpha}',\tilde{\alpha}')$
for  $\tilde{\alpha}, \tilde{\alpha}' \in \Np{n}$, we have $[\tilde{m},\tilde{m}']=0$.
\end{lemma}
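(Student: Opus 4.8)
The plan is to establish the two single-mode combinatorial identities (a1) and (a3) first, since (a2) and (a4) are immediate corollaries of these and the multimode statements (b) and (c) both reduce to (a3) together with Proposition~\ref{app:polynomials}(b). Throughout, write $N := a^\dagger a$ for the single-mode number operator.

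For (a1), I would first prove the normal-ordering identity $a^r (a^\dagger)^s = \sum_{k=0}^{\min(r,s)} k!\binom{r}{k}\binom{s}{k} (a^\dagger)^{s-k} a^{r-k}$ by induction on $r$, using the one-line fact $[a,(a^\dagger)^s] = s (a^\dagger)^{s-1}$ (itself a short induction on $s$) to push a single annihilation operator through at each step and collecting the resulting Pascal-type recurrence for the coefficients. The commutator (a1) is then obtained by subtracting off the $k=0$ term, which is exactly the reverse-ordered product $(a^\dagger)^s a^r$: since $[a^r,(a^\dagger)^s] = a^r(a^\dagger)^s - (a^\dagger)^s a^r$ and the $k=0$ summand of the normal ordering equals $(a^\dagger)^s a^r$, the commutator is precisely the sum over $k\geq 1$. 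For (a2) I would isolate the $k=1$ summand of (a1), which is $1!\binom{r}{1}\binom{s}{1}(a^\dagger)^{s-1}a^{r-1} = rs\,(a^\dagger)^{s-1}a^{r-1}$, and collect the remaining $k\geq 2$ terms into $A$; each such term has degree $(s-k)+(r-k) = r+s-2k \leq r+s-4$, so $\deg(A)\leq r+s-4$ as claimed.

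For (a3) the key observation is that $(a^\dagger)^r a^r$ is a polynomial in $N$. I would prove $(a^\dagger)^r a^r = N(N-1)\cdots(N-r+1)$, the falling factorial $(N)_r$, by induction on $r$: writing $(a^\dagger)^{r+1}a^{r+1} = a^\dagger\,\big((a^\dagger)^r a^r\big)\,a$, applying the inductive hypothesis, and using the intertwining relation $f(N)\,a = a\,f(N-1)$ (which follows from $aa^\dagger = N+1$) to move the annihilation operator to the right and reassemble $N\prod_{j=1}^{r}(N-j) = \prod_{j=0}^{r}(N-j)$. The Stirling-number expansion (a3) is then just the defining identity $(x)_r = \sum_{k=0}^{r} s(r,k)\,x^k$ for the signed Stirling numbers of the first kind, with the stated boundary values $s(r,0)=0$ for $r>0$ and $s(r,r)=1$. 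Statement (a4) is then immediate, since by (a3) both $(a^\dagger)^r a^r$ and $(a^\dagger)^s a^s$ are polynomials in the single operator $N$, and polynomials in one operator commute.

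Finally, (b) and (c) follow from (a3) combined with Proposition~\ref{app:polynomials}(b). Since $\tilde m$ has multidegree $(\tilde r,\tilde r)$, Proposition~\ref{app:polynomials}(b) forces every normal-ordered monomial in its expansion to have the form $a^{(\tilde r-\oneshalf,\tilde r-\oneshalf)} = (a^\dagger)^{\tilde r - \ell}a^{\tilde r - \ell}$, because the correction vector $\oneshalf$ shifts both indices equally; by (a3) each such term, hence $\tilde m$ itself, is a polynomial in $N$, and likewise $\tilde m'$, so $[\tilde m,\tilde m']=0$, proving (b). For (c) I would run the identical argument modewise: Proposition~\ref{app:polynomials}(b) makes $\tilde m$ a linear combination of $\prod_k (a_k^\dagger)^{\tilde\alpha_k-\ell_k}(a_k)^{\tilde\alpha_k-\ell_k}$, each factor a polynomial in $N_k := a_k^\dagger a_k$ by (a3); since operators of distinct modes commute, $\tilde m$ and $\tilde m'$ are both polynomials in the mutually commuting $N_1,\dots,N_n$ and therefore commute. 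The main obstacle is the bookkeeping in the induction for (a1)—keeping the binomial coefficients aligned through the Pascal recurrence—after which everything else is routine.
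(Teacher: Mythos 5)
Your proposal is correct, and its logical skeleton is identical to the paper's: (a2) is read off from the $k\geq 2$ tail of (a1), (a4) follows from (a3) because polynomials in the single operator $a^{\dagger}a$ commute, and (b), (c) are obtained by expanding $\tilde{m}, \tilde{m}'$ via Proposition~\ref{app:polynomials}(b) into normal-ordered monomials of the form $(a^{\dagger})^{\tilde{r}-\ell}a^{\tilde{r}-\ell}$ (modewise for (c)) and then invoking (a3)/(a4). The only real difference is at the base of the argument: the paper does not prove (a1) and (a3) at all but cites them from the literature (\cite{BJ:1925} and \cite{MS:2016}), whereas you supply self-contained proofs — (a1) by induction on $r$ from $[a,(a^{\dagger})^s]=s(a^{\dagger})^{s-1}$ via the normal-ordering identity, and (a3) by identifying $(a^{\dagger})^r a^r$ with the falling factorial $N(N{-}1)\cdots(N{-}r{+}1)$ of the number operator $N=a^{\dagger}a$, after which the Stirling expansion is the defining identity of the signed Stirling numbers of the first kind. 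The paper's version buys brevity; your version buys self-containedness, and the falling-factorial identification is arguably the more illuminating route, since it makes (a4) transparent (both sides are polynomials in the same operator $N$) and turns the Stirling-number claim from a quoted formula into a one-line consequence of the intertwining relation $f(N)a=af(N{-}1)$. Both are sound; your bookkeeping concerns in the induction for (a1) are real but routine, and nothing in your outline would fail.
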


\begin{proof}
For the statement of (a1), refer to \cite[p.~873]{BJ:1925} or \cite[Eq.~(6.2)]{MS:2016} and (a2) is a direct consequence of (a1).
For (a3), we refer to \cite[Eq.~(6.10)]{MS:2016} and (a4) follows directly from
(a3). We use Proposition~\ref{app:polynomials}(b) to expand $\tilde{m}$ and $\tilde{m}'$ from (b) as
$\tilde{m} = \sum_{0 \leq \tilde{\oneshalf} \in \N}  c_{\tilde{m},\tilde{\oneshalf}} a^{(\tilde{r}-\tilde{\oneshalf},\tilde{r}-\tilde{\oneshalf})}$ and
$\tilde{m}' = \sum_{0 \leq \tilde{\oneshalf}' \in \N}  c_{\tilde{m}',\tilde{\oneshalf}'} a^{(\tilde{r}'-\tilde{\oneshalf}',\tilde{r}'-\tilde{\oneshalf}')}$.
All elements of the form $a^{(\tilde{s},\tilde{s})}$ can be expanded using (a3).
The statement  $[\tilde{m},\tilde{m}']=0$ of (b) then follows by applying (a4). The proof of (c) is similar.
\end{proof}

We have provided useful properties regarding the decomposition of monomials into normal-ordered elements of the  Weyl algebra.
Here we investigate properties of the degrees of two monomials in order for
them to commute. This will also affect the maximum achievable degree
of the commutator of two monomials.
Before we start, let us introduce the vector $\kappa_p(r,s) \in \Np{2n}$ which has only zero entries except for
the entry $r$ at position $p$ and the entry $s$ at position $n+p$. Also, in this notation we have that $\ones_p:=\kappa_p(1,1)$. Note that one-dimensional vectors $\alpha$ and $\beta$ are denoted by positive natural numbers $r$ and $s$ respectively to highlight this particular dimensionality.

\begin{corollary}\label{theorem:commutator:monomials:one:diagonal}
(a) Consider two monomials $m, \hat{m} \in A_1$ with
multi-degrees $ (r,s),
 (\hat{r},\hat{s}) \in \Np{2}$. (a1) If $[m,\hat{m}]=0$, then
$s\hat{r}=r\hat{s}$.
(a2) If $\hat{r}=\hat{s}\neq 0$ and $[m,\hat{m}]=0$ holds, then
 $r=s$.
(b) Consider two monomials $m, \hat{m} \in A_n$ 
with multi-degrees $ (\alpha,\beta),
 (\hat{\alpha},\hat{\beta}) \in \Np{2n}$.
(b1) If $[m,\hat{m}]= 0$, then $\beta_j \hat{\alpha}_j=\alpha_j \hat{\beta}_j$ for all $j \in \{1,\ldots,n\}$.
(b2) If  $\hat{\alpha}=\hat{\beta}$ and $[m,\hat{m}]=0$ holds, then
$\alpha_j=\beta_j$ for each index $j$ such that $\hat{\alpha}_j\neq0$.
(c) Given two monomials $m, \tilde{m} \in A_n$ 
with multi-degrees $ (\alpha,\beta),
 (\tilde{\alpha},\tilde{\alpha}) \in \Np{2n}$,
then $[m,\tilde{m}]= 0$ if and only if $\alpha_j=\beta_j$ for each index $j$ such that $\tilde{\alpha}_j\neq0$.
\end{corollary}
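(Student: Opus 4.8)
The plan is to reduce every part to a single computation: the top-degree part of the commutator of two monomials. By Proposition~\ref{app:polynomials}(b), a monomial $m$ with $\mdeg(m)=(\alpha,\beta)=:\gamma$ satisfies $m \upto{|\gamma|} a^{\gamma}$, so up to lower-degree terms $m$ is the single normal-ordered monomial $a^{\gamma}$; likewise $\hat m \upto{|\hat\gamma|} a^{\hat\gamma}$ with $\hat\gamma=(\hat\alpha,\hat\beta)$. Since $\deg([m,\hat m])\le |\gamma|+|\hat\gamma|-2$ by Proposition~\ref{app:polynomials}(a4), and, by the stated compatibility of $\upto{d}$ with commutators, the cross terms involving a lower-degree summand of $m$ or $\hat m$ contribute degree at most $|\gamma|+|\hat\gamma|-4$, the degree-$(|\gamma|+|\hat\gamma|-2)$ part of $[m,\hat m]$ equals that of $[a^{\gamma},a^{\hat\gamma}]$. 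I would therefore first establish the explicit leading-term formula for $[a^{\gamma},a^{\hat\gamma}]$ and then read off the claims.

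For that formula I would exploit that operators on distinct modes commute, so $a^{\gamma}a^{\hat\gamma}=\prod_k (a_k^\dagger)^{\alpha_k}a_k^{\beta_k}(a_k^\dagger)^{\hat\alpha_k}a_k^{\hat\beta_k}$. Reordering each mode factor with Lemma~\ref{app:lemma:powers}(a2) gives $(a_k^\dagger)^{\alpha_k}a_k^{\beta_k}(a_k^\dagger)^{\hat\alpha_k}a_k^{\hat\beta_k}=(a_k^\dagger)^{\alpha_k+\hat\alpha_k}a_k^{\beta_k+\hat\beta_k}+\beta_k\hat\alpha_k (a_k^\dagger)^{\alpha_k+\hat\alpha_k-1}a_k^{\beta_k+\hat\beta_k-1}+A_k$ with $\deg(A_k)$ strictly smaller, and multiplying over all modes yields
\begin{equation*}
a^{\gamma}a^{\hat\gamma}\;\upto{|\gamma|+|\hat\gamma|-2}\; a^{\gamma+\hat\gamma}+\sum_j \beta_j\hat\alpha_j\, a^{\gamma+\hat\gamma-\ones_j}.
\end{equation*}
Swapping the roles of $\gamma$ and $\hat\gamma$ and subtracting, the term $a^{\gamma+\hat\gamma}$ cancels and I obtain
\begin{equation*}
[m,\hat m]\;\upto{|\gamma|+|\hat\gamma|-2}\;\sum_j\bigl(\beta_j\hat\alpha_j-\alpha_j\hat\beta_j\bigr)\,a^{\gamma+\hat\gamma-\ones_j}.
\end{equation*}
The monomials $a^{\gamma+\hat\gamma-\ones_j}$ are pairwise distinct basis elements (they differ in the $j$th creation-operator count), and whenever a coefficient $\beta_j\hat\alpha_j-\alpha_j\hat\beta_j$ is nonzero the corresponding multi-index is genuinely nonnegative; hence the nonvanishing summands are linearly independent. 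Part (b1) is then immediate: $[m,\hat m]=0$ forces its degree-$(|\gamma|+|\hat\gamma|-2)$ part to vanish, so each coefficient $\beta_j\hat\alpha_j-\alpha_j\hat\beta_j=0$.

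The remaining forward statements follow by specialization. Part (a1) is (b1) for $n=1$ with $(\alpha,\beta)=(r,s)$. Part (a2) follows from (a1): if $\hat r=\hat s\neq0$ then $s\hat r=r\hat s$ gives $s=r$. For (b2), setting $\hat\alpha=\hat\beta$ in (b1) yields $\hat\alpha_j(\beta_j-\alpha_j)=0$, so $\alpha_j=\beta_j$ whenever $\hat\alpha_j\neq0$; this is exactly the forward ($\Rightarrow$) direction of (c) with $\hat\gamma=(\tilde\alpha,\tilde\alpha)$.

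It remains to prove the converse ($\Leftarrow$) of (c), and this is the step I expect to require the most care, since it asserts an exact identity rather than a leading-term statement. Assuming $\alpha_j=\beta_j$ for every $j$ with $\tilde\alpha_j\neq0$, I would group the operators of $m$ and of $\tilde m$ by mode—legitimate because distinct modes commute—writing $m=\prod_k m_k$ and $\tilde m=\prod_k \tilde m_k$ with $\mdeg(m_k)=(\alpha_k,\beta_k)$ and $\mdeg(\tilde m_k)=(\tilde\alpha_k,\tilde\alpha_k)$. Since factors of different modes commute, $[m,\tilde m]=\prod_k m_k\tilde m_k-\prod_k\tilde m_k m_k$, which vanishes once $[m_k,\tilde m_k]=0$ for every $k$. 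For $k$ with $\tilde\alpha_k=0$ the factor $\tilde m_k$ is the identity and the commutator is trivially zero; for $k$ with $\tilde\alpha_k\neq0$ the hypothesis gives $\alpha_k=\beta_k$, so both $m_k$ and $\tilde m_k$ are single-mode monomials with diagonal multi-degree and commute by Lemma~\ref{app:lemma:powers}(b). The main obstacle is thus the bookkeeping of this mode-factorization, whereas the leading-term analysis behind (b1)—in particular verifying the precise first-order coefficient $\beta_j\hat\alpha_j-\alpha_j\hat\beta_j$ and the linear independence of the surviving monomials—is the genuine computational heart of the argument.
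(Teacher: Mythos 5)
Your proposal is correct and follows essentially the same route as the paper: both arguments reduce to the leading-term formula $[m,\hat m]\upto{|\gamma|+|\hat\gamma|-2}\sum_j(\beta_j\hat{\alpha}_j-\alpha_j\hat{\beta}_j)\,a^{\gamma+\hat{\gamma}-\ones_j}$ obtained from Proposition~\ref{app:polynomials}(b) and Lemma~\ref{app:lemma:powers}(a2) together with linear independence of the $a^{\gamma+\hat{\gamma}-\ones_j}$, and both settle the converse of (c) by exploiting that distinct modes commute and that diagonal monomials commute (Lemma~\ref{app:lemma:powers}(b)/(c)). The only deviations are organizational—you compute $a^{\gamma}a^{\hat{\gamma}}$ in both orders and subtract where the paper expands the commutator by induction over modes, and you apply the single-mode diagonal lemma per mode where the paper strips off the modes with $\tilde{\alpha}_j=0$ and invokes the multi-mode version—so no substantive difference remains.
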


\begin{proof}
(a1) Using Proposition~\ref{app:polynomials}(b),
we expand $m=\sum_{k\geq 0} c_{m,k} (a^{\dagger})^{r-k} a^{s-k}$ with $c_{m,0}=1$ 
and $\hat{m}=\sum_{\ell\geq 0} c_{\hat{m},\ell} (a^{\dagger})^{\hat{r}-\ell} a^{\hat{s}-\ell}$ with
$c_{\hat{m},0}=1$. It follows from $[m,\hat{m}]=0$
that $0=(a^{\dagger})^{r} a^{s} (a^{\dagger})^{\hat{r}} a^{\hat{s}} - (a^{\dagger})^{\hat{r}} a^{\hat{s}} (a^{\dagger})^{r} a^{s}$.
Recall from Lemma~\ref{app:lemma:powers}(a2) that 
$a^v (a^\dagger)^w = (a^\dagger)^w a^v + v w (a^\dagger)^{w-1} a^{v-1} +A$
with $\deg(A) \leq v+w-4$. We obtain that 
$0= s\hat{r} (a^\dagger)^{r+\hat{r}-1} a^{s+\hat{s}-1} 
- r\hat{s} (a^\dagger)^{r+\hat{r}-1} a^{s+\hat{s}-1}$ which implies 
the condition $s\hat{r}=r\hat{s}$ of (a1). The proof of (a2) is an immediate consequence of (a1).
For the multi-mode case in (b), we consider for $m$ and $\hat{m}$
the normal-ordered monomials 
 $\prod_{p=1}^n m_p$  and 
$ \prod_{\hat{p}=1}^n \hat{m}_{\hat{p}}$ of highest order
where $m_p:=a^{\kappa_p(\alpha_p,\beta_p)}$ and 
$\hat{m}_{p}:=a^{\kappa_p(\hat{\alpha}_p,\hat{\beta}_p)}$. We
apply the formula $[AB,CD]=A[B,C]D+[A,C]BD+CA[B,D]+C[A,D]B$ for 
$A=m_1$, $B=\prod_{p=2}^n m_p$, $C= \hat{m}_1$, and $D=\prod_{\hat{p}=2}^n \hat{m}_{\hat{p}}$. 
It follows that $[\prod_{p=1}^n m_p,\prod_{\hat{p}=1}^n \hat{m}_{\hat{p}}]= [m_1, \hat{m}_1]  
\prod_{q=2}^n m_q \hat{m}_q + \hat{m}_1 m_1 
[  \prod_{p=2}^n m_p, \prod_{\hat{p}=2}^n \hat{m}_{\hat{p}} ]$. By induction, we obtain
$[\prod_{p=1}^n m_p,\prod_{\hat{p}=1}^n \hat{m}_{\hat{p}}]= \sum_{q=1}^n (\prod_{p=1}^{q-1} \hat{m}_p m_p ) 
[m_q, \hat{m}_q] (\prod_{\hat{p}=q+1}^n m_{\hat{p}} \hat{m}_{\hat{p}})$. 
The contribution of highest order is then given by $
\sum_{q=1}^n (\beta_q \hat{\alpha}_q - \alpha_q \hat{\beta}_q)  
a^{\gamma+\hat{\gamma} - \ones_q}$. This completes the proof of (b1) as $[m,\hat{m}]= 0$
implies that $\beta_j \hat{\alpha}_j = \alpha_j \hat{\beta}_j$ for all $j$ since all $a^{\gamma+\hat{\gamma} - \ones_q}$
for different $q$ are linearly independent operators. The proof of (b2) follows directly.
The only-if case of (c) is a consequence of (b2). To prove the if case of (c), we assume that $\alpha_j=\beta_j$
or $\tilde{\alpha}_j=0$ holds for each $j$. Assume that there exists a $j$ such that $\tilde{\alpha}_j=0$, 
then $[m,\hat{m}]=0$ if and only if $m_j[m_{\neg j},\tilde{m}]=0$, where $m_j$ contains the contributions of the $j$th mode, and $m_{\neg j}$ has been obtained from $m$ by
removing all contributions of the $j$th mode. This reduces the proof to the case where all 
$\tilde{\alpha}_j\neq 0$, which is resolved by 
Lemma~\ref{app:lemma:powers}(c).
\end{proof}

\begin{lemma}\label{A:one:commutative}
The vector space $\hat{A}_n^0\oplus\hat{A}_n^{=}$ together with the commutator operation is an abelian subalgebra of $\hat{A}_n$.
\end{lemma}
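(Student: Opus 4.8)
The plan is to reduce the entire statement to the single fact that diagonal monomials commute. First I would observe that both summands consist \emph{entirely} of diagonal operators. By Definition~\ref{definition:vector:spaces:A}, every basis element spanning $\hat{A}_n^0$ or $\hat{A}_n^{=}$ is of the form $g_\sigma^{\tilde{\gamma}}$ with $\tilde{\gamma}=(\tilde{\alpha},\tilde{\alpha})$ (i.e.\ $\alpha=\beta$); by Proposition~\ref{diagonal:element:form} one has $g_+^{\tilde{\gamma}}=2i\,a^{\tilde{\gamma}}$ while $g_-^{\tilde{\gamma}}=0$. Hence, after discarding the vanishing $g_-$ contributions, every element of $\hat{A}_n^0 \oplus \hat{A}_n^{=}$ is a real linear combination of operators $2i\,a^{(\tilde{\alpha},\tilde{\alpha})}$ for $\tilde{\alpha}\in\Np{n}$, where $\tilde{\alpha}=0$ and $\abs{\tilde{\alpha}}=1$ account for $\hat{A}_n^0$ and $\abs{\tilde{\alpha}}\geq 2$ accounts for $\hat{A}_n^{=}$. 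Since this is given as a direct sum of vector spaces, it is automatically a real subspace, so the only content left is closure under the commutator and the abelian property.

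Next, because the commutator is bilinear, it suffices to check it on the spanning operators. For any $\tilde{\alpha},\tilde{\alpha}'\in\Np{n}$ I would compute
\begin{equation*}
[\,2i\,a^{(\tilde{\alpha},\tilde{\alpha})},\, 2i\,a^{(\tilde{\alpha}',\tilde{\alpha}')}\,] = -4\,[\,a^{(\tilde{\alpha},\tilde{\alpha})},\, a^{(\tilde{\alpha}',\tilde{\alpha}')}\,] = 0,
\end{equation*}
where the last equality is exactly Lemma~\ref{app:lemma:powers}(c), which states that two monomials with diagonal multi-degrees commute (equivalently, Corollary~\ref{theorem:commutator:monomials:one:diagonal}(c) specialized to $\alpha=\beta$). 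By bilinearity this extends to arbitrary elements: the commutator of any two elements of $\hat{A}_n^0 \oplus \hat{A}_n^{=}$ is zero. In particular every such commutator lies trivially in the space, establishing closure and hence that $\hat{A}_n^0 \oplus \hat{A}_n^{=}$ is a subalgebra, and the vanishing of all commutators is precisely the statement that it is abelian.

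I do not expect a genuine obstacle in this argument; the only point requiring care is the \textbf{bookkeeping} step that identifies both defining bases as consisting solely of diagonal operators, so that the $g_-$ basis elements drop out via Proposition~\ref{diagonal:element:form}. Once that identification is made, the abelian property is an immediate corollary of the commutation of diagonal monomials, and no degree estimates or case analysis are needed.
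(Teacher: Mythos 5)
Your proposal is correct and follows essentially the same route as the paper: reduce to the spanning diagonal elements $g_+^{\tilde{\gamma}}=2i\,a^{(\tilde{\alpha},\tilde{\alpha})}$, invoke the fact that diagonal monomials commute (you cite Lemma~\ref{app:lemma:powers}(c), the paper cites Corollary~\ref{theorem:commutator:monomials:one:diagonal}(c), which is derived from that same lemma), and conclude by bilinearity of the commutator. Your explicit remark that the $g_-^{\tilde{\gamma}}$ contributions vanish via Proposition~\ref{diagonal:element:form} is a small bookkeeping point the paper leaves implicit, but it is not a different argument.
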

\begin{proof}
It is an immediate consequence of Corollary~\ref{theorem:commutator:monomials:one:diagonal}(c) that two elements
$g_+^{\tilde{\gamma}}$ and $g_+^{\tilde{\gamma}'}$ with $\gamma=(\alpha,\alpha)$ and $\tilde{\gamma}=(\tilde{\alpha},\tilde{\alpha})$ commute if they
belong to $\hat{A}_n^0\oplus\hat{A}_n^{=}$. Thus, any two
linear combinations $x=\sum_p C_p g_+^{\tilde{\gamma}_p}$ and
$x'=\sum_p C_p' g_+^{\tilde{\gamma}_p'}$ in $\hat{A}_n^0\oplus\hat{A}_n^{=}$, with $\tilde{\gamma}_p=(\tilde{\alpha}^{(p)},\tilde{\alpha}^{(p)})$ and $\tilde{\gamma}_p'=(\tilde{\alpha}^{\prime(p)},\tilde{\alpha}^{\prime(p)})$, also commute due to bi-linearity of the commutator.
\end{proof}

We continue by proving one central result in this work. As mentioned before, the commutator and the degree of an element
of the algebra are the key objects of this work. We here provide a condition that the multi-degrees of two monomials need
to satisfy in order for the commutator to have the highest degree possible (see Proposition~\ref{app:polynomials}).

\begin{theorem} \label{app:theorem:normal} 
Given two normal-ordered monomials $a^{\gamma}= (a^{\dagger})^r a^s$ and $a^{\hat{\gamma}} = (a^{\dagger})^{\hat{r}} a^{\hat{s}}$ with
$\gamma=(r,s), \hat{\gamma}=(\hat{r},\hat{s}) \in \Np{2}$ in the Weyl algebra $A_1$ of single-mode creation and annihilation operators. 
Assume that $[a^{\gamma}, a^{\hat{\gamma}}] \neq 0$. Then:
$\deg([a^{\gamma}, a^{\hat{\gamma}}]) =  \abs{\gamma} +  \abs{\hat{\gamma}} -4$ if $\hat{r}s=r\hat{s}$; or 
$\deg([a^{\gamma}, a^{\hat{\gamma}}]) =  \abs{\gamma} +  \abs{\hat{\gamma}} -2$ otherwise.
\end{theorem}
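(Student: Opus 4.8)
The plan is to normal-order the two products that make up the commutator and read off its top-degree coefficients directly. Using Lemma~\ref{app:lemma:powers}(a1) to move each block of annihilation operators past the creation operators, I would write
\[
a^\gamma a^{\hat\gamma} = (a^\dagger)^r a^s (a^\dagger)^{\hat r} a^{\hat s} = \sum_{k\ge 0} k!\binom{s}{k}\binom{\hat r}{k}(a^\dagger)^{r+\hat r - k} a^{s+\hat s - k},
\]
and symmetrically $a^{\hat\gamma}a^\gamma = \sum_{k\ge 0}k!\binom{\hat s}{k}\binom{r}{k}(a^\dagger)^{r+\hat r - k}a^{s+\hat s-k}$, with binomials understood to vanish outside their range. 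Subtracting, the $k=0$ terms cancel and
\[
[a^\gamma, a^{\hat\gamma}] = \sum_{k\ge 1} C_k\,(a^\dagger)^{r+\hat r -k} a^{s+\hat s - k}, \qquad C_k := k!\Big[\tbinom{s}{k}\tbinom{\hat r}{k} - \tbinom{\hat s}{k}\tbinom{r}{k}\Big].
\]
Since the surviving monomials have pairwise distinct multi-degrees $(r+\hat r-k,\,s+\hat s-k)$, they are linearly independent basis elements of $A_1$, so $\deg([a^\gamma,a^{\hat\gamma}]) = \abs{\gamma}+\abs{\hat\gamma} - 2k_0$, where $k_0$ is the least $k\ge 1$ with $C_k\neq 0$. (When $C_k\neq0$, at least one product of binomials is nonzero, and one checks the exponents $r+\hat r-k$ and $s+\hat s - k$ are then automatically nonnegative.)

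The first case is immediate: $C_1 = s\hat r - r\hat s$, so if $\hat r s \neq r\hat s$ then $k_0 = 1$ and the degree equals $\abs{\gamma}+\abs{\hat\gamma}-2$, which saturates the upper bound of Proposition~\ref{app:polynomials}(a4); this is exactly the leading coefficient already produced by Lemma~\ref{app:lemma:powers}(a2).

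The substantive case is $\hat r s = r\hat s$, where $C_1=0$ and I must show $k_0 = 2$, i.e.\ $C_2\neq0$, to conclude the degree is exactly $\abs{\gamma}+\abs{\hat\gamma}-4$ rather than something smaller. A short computation gives $C_2 = \tfrac12\big[s(s-1)\hat r(\hat r-1) - \hat s(\hat s-1)r(r-1)\big]$; simplifying with $s\hat r = r\hat s$ (so that the quartic term $s^2\hat r^2 - r^2\hat s^2$ and the linear term $s\hat r - r\hat s$ both cancel) factors this as $C_2 = \tfrac12\, r\hat s\big[(r-s)-(\hat r-\hat s)\big]$. I would then argue by contradiction: if $C_2=0$ then $r\hat s = 0$ or $(r-s)=(\hat r-\hat s)$, and in each branch, combined with $\hat r s = r\hat s$, the hypothesis $[a^\gamma,a^{\hat\gamma}]\neq0$ fails. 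Concretely, $r=0$ or $\hat s=0$ forces (via $C_1=0$) that both operators reduce to powers of $a$ alone, powers of $a^\dagger$ alone, or the identity, which pairwise commute; while $(r-s)=(\hat r-\hat s)$ together with $s\hat r = r\hat s$ forces either $\gamma=\hat\gamma$ or $r=s$ and $\hat r=\hat s$, the latter being two diagonal monomials that commute by Lemma~\ref{app:lemma:powers}(a4). The main obstacle is precisely this last step: ruling out the degenerate vanishings of $C_2$ cleanly. This is where the nonvanishing assumption on the commutator is essential, and where the parallel-multi-degree condition $\hat r s = r\hat s$ must be fed back into the case analysis to exclude each degeneracy.
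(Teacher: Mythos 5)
Your proof is correct, and I verified the two places where it could have gone wrong: the factorization $C_2=\tfrac12\,r\hat s\,[(r-s)-(\hat r-\hat s)]$ under $s\hat r=r\hat s$ is right, and both degenerate branches do contradict $[a^{\gamma},a^{\hat\gamma}]\neq 0$ (in the branch $r\hat s=0$, the constraint $s\hat r=r\hat s=0$ forces each monomial to be a power of $a$, a power of $a^{\dagger}$, or $1$, in combinations that commute; in the branch $r-s=\hat r-\hat s=:d$, substitution into $s\hat r=r\hat s$ gives $d(s-\hat s)=0$, hence either $\gamma=\hat\gamma$ or two diagonal monomials, which commute by Lemma~\ref{app:lemma:powers}(a4)). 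Your organization differs from the paper's in a way worth noting. The paper first establishes the upper bounds via Lemma~\ref{app:lemma:powers}(a2), and in the degenerate case $\hat rs=r\hat s$ it rules out zero exponents, assumes $\abs{\gamma}\leq\abs{\hat\gamma}$ without loss of generality, and parametrizes $\hat r=pr$, $\hat s=ps$ by a rational $p>1$, computing the subleading coefficient as $2\binom{pr}{2}\binom{s}{2}-2\binom{ps}{2}\binom{r}{2}=(s-r)(p-1)prs/2$. You instead write the commutator exactly as $\sum_{k\geq 1}C_k\,(a^{\dagger})^{r+\hat r-k}a^{s+\hat s-k}$, reduce the degree question to the least $k$ with $C_k\neq 0$, and factor $C_2$ directly over $\Np{2}$. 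The two computations produce the same number: substituting $\hat r=pr$, $\hat s=ps$ into your expression gives $\tfrac12\,prs\,(1-p)(r-s)$, identical to the paper's coefficient. What your route buys is the elimination of the WLOG step and the rational parametrization, plus a stronger byproduct (the full normal-ordered expansion of the commutator, with all monomials linearly independent, so the argument needs no error-term bookkeeping with $A$ and $\hat A$); what the paper's parametrization buys is that nonvanishing of the coefficient is visible at a glance as a product of explicitly nonzero factors, once $r,s,\hat r,\hat s>0$ and $r\neq s$ are in hand, whereas your version concentrates the work in the case analysis you flagged.
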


\begin{proof}
We have
$[a^{\gamma}, a^{\hat{\gamma}}] = (a^{\dagger})^r a^s (a^{\dagger})^{\hat{r}} a^{\hat{s}} -
(a^{\dagger})^{\hat{r}} a^{\hat{s}} (a^{\dagger})^r a^s
= (a^{\dagger})^r [ a^s, (a^{\dagger})^{\hat{r}}] a^{\hat{s}} - (a^{\dagger})^{\hat{r}}     [ a^{\hat{s}}, (a^{\dagger})^{r}]  a^s$.
We apply Lemma~\ref{app:lemma:powers}(a2)
and obtain  $[a^{\gamma}, a^{\hat{\gamma}}] =  s \hat{r} (a^{\dagger})^r  (a^{\dagger})^{\hat{r}-1} a^{s-1}  a^{\hat{s}} - 
r \hat{s}  (a^{\dagger})^{\hat{r}}  (a^{\dagger})^{r-1} a^{\hat{s}-1}   a^s + 
A = (s \hat{r}-r \hat{s})  (a^{\dagger})^{r+\hat{r}-1}  a^{s+\hat{s}-1} + A$ with $\deg(A) \leq \abs{\gamma} + \abs{\gamma} -4$.
This verifies that $\deg([a^{\gamma}, a^{\hat{\gamma}}]) \leq  \abs{\gamma} +  \abs{\hat{\gamma}} -4$ if $\hat{r}s=r\hat{s}$ and 
$\deg([a^{\gamma}, a^{\hat{\gamma}}]) =  \abs{\gamma} +  \abs{\hat{\gamma}} -2$ otherwise. It remains to prove that 
$\deg([a^{\gamma}, a^{\hat{\gamma}}]) =  \abs{\gamma} +  \abs{\hat{\gamma}} -4$ if $\hat{r}s=r\hat{s}$
(under our assumption that $[a^{\gamma}, a^{\hat{\gamma}}] \neq 0$).
Neither $r$, $s$, $\hat{r}$, or $\hat{s}$ can be zero, as $\hat{r}s=r\hat{s}$ would imply that $[a^{\gamma}, a^{\hat{\gamma}}] =0$.
In the following, we assume that $\abs{\gamma} \leq \abs{\hat{\gamma}}$ (and the other case is similar). From $\hat{r}s=r\hat{s}$, we obtain
$\hat{r}(r{+}s)=r(\hat{r}+\hat{s}) \geq r (r{+}s)$ which implies that $\hat{r} \geq r$. Similarly, we obtain that $\hat{s} \geq s$.
The equation $\hat{r}s=r\hat{s}$
yields $\hat{r}/r = \hat{s}/s$ which means that there is a rational number $p>1$ such that $\hat{r} = p r$ and $\hat{s} = p s$.
Note that $p=1$ would imply that $[a^{\gamma}, a^{\hat{\gamma}}]=0$.
We determine now the degree of  $[a^{\gamma}, a^{\hat{\gamma}}] = [a^{(r,s)}, a^{(pr,ps)}] = 
[(a^{\dagger})^r a^s, (a^{\dagger})^{pr} a^{ps}]= (a^{\dagger})^r [ a^s, (a^{\dagger})^{pr}] a^{ps} - (a^{\dagger})^{pr}     [ a^{ps}, (a^{\dagger})^{r}]  a^s$.
We apply Lemma~\ref{app:lemma:powers}(a1) and get
{\small
\begin{align*}
 [a^{\gamma}, a^{\hat{\gamma}}] = (a^{\dagger})^r
\left\{ prs (a^{\dagger})^{pr-1} a^{s-1} + 2 \binom{pr}{2} \binom{s}{2} (a^{\dagger})^{pr-2} a^{s-2} \right\}
 a^{ps} - (a^{\dagger})^{pr}    
\left\{ prs (a^{\dagger})^{r-1} a^{ps-1} + 2 \binom{ps}{2} \binom{r}{2} (a^{\dagger})^{r-2} a^{ps-2} \right\}
 a^s + \hat{A}
\end{align*}
}
with $\deg(\hat{A}) \leq  \abs{\gamma} + \abs{\gamma} -6$. We obtain
 $[a^{\gamma}, a^{\hat{\gamma}}] = \{ 2 \binom{pr}{2} \binom{s}{2}  - 2 \binom{ps}{2} \binom{r}{2} \}
 (a^{\dagger})^{r+pr-2}
 a^{s+ps-2}
 + \hat{A}$. Note that $2 \binom{pr}{2} \binom{s}{2}  - 2 \binom{ps}{2} \binom{r}{2} = 
 \{pr (pr{-}1) s (s{-}1) - ps (ps{-}1)  r(r{-}1)\}/2 = \{ (pr{-}1) (s{-}1) -(ps{-}1) (r{-}1)  \} prs/2 = (s{-}r) (p{-}1) p  rs /2$
 which is a nonzero integer for $r\neq s$. But $r=s$ would imply via Lemma~\ref{app:lemma:powers}(a4) that
 $[a^{\gamma}, a^{\hat{\gamma}}] =0$. This proves that
 $\deg([a^{\gamma}, a^{\hat{\gamma}}]) =  \abs{\gamma} +  \abs{\hat{\gamma}} -4$ for $\hat{r}s=r\hat{s}$. 
\end{proof}

We move on to prove the final result of this subsection. Here, we extend the result of Theorem~\ref{app:theorem:normal} to
arbitrary monomials that do not commute. We anticipate that the constraint found before will still play a role. Before
providing the claim and proof, we list a few examples to help visualize the idea behind the theorem.

\begin{tcolorbox}[colback=orange!3!white,colframe=orange!85!black,title=Example: relation between the degree of two
monomials and the degree of their commutator]\label{example:two}
We present two examples that elucidate the instances of Theorem~\ref{theorem:commutator:monomials}. The idea is that the
degree of the commutator of two monomials cannot be controlled a priori if the degree is not maximal. 

\begin{example}
Consider the elements $a^{\gamma}= (a^{\dagger})^r a^s$ with $\gamma=(2,6)$ and $r=2$ and $s=6$, as well as
$a^{\hat{\gamma}} = (a^{\dagger})^{\hat{r}} a^{\hat{s}}$ with $\hat{\gamma}=(3,9)$ and $\hat{r}=3$ and $\hat{s}=9$
in the Weyl algebra $A_1$. We have $[a^{\gamma}, a^{\hat{\gamma}}] \upto{16} 18 (a^{\dagger})^3  a^{13}$ and $\deg([a^{\gamma}, a^{\hat{\gamma}}] )=
\deg(a^{\gamma}) + \deg(a^{\hat{\gamma}}) -4$. We observe $\hat{r}s=r\hat{s}$ and there is a rational number
$p =  \hat{r}/r = \hat{s}/s=3/2$ (but $p$ is in general not an integer).
\end{example}

\begin{example}
The two monomials $m=a^{\dagger} a a^{\dagger} a a^{\dagger}=(a^{\dagger})^3 a^2 + 3 (a^{\dagger})^2 a + a^{\dagger}$
and $\hat{m}=a (a^{\dagger})^3 a =(a^{\dagger})^3 a^2 + 3 (a^{\dagger})^2 a$ have a commutator of
$[m,\hat{m}] = -2 (a^{\dagger})^3 a - 3 a^{\dagger}$.
We observe that $\deg([m,\hat{m}]) = \deg(m) + \deg(\hat{m}) -6$. Consequently,
the degree of a nonzero commutator $[m,\hat{m}] $ of \emph{monomials} can drop by more than $4$ when compared with $\deg(m) + \deg(\hat{m})$.
\end{example}
\end{tcolorbox}

We are now ready to proceed with the theorem.

\begin{theorem}\label{theorem:commutator:monomials}
Consider two monomials $m, \hat{m} \in A_n$ 
with multi-degrees $ (\alpha,\beta),
 (\hat{\alpha},\hat{\beta}) \in \Np{2n}$.
Assume that $[m,\hat{m}]\neq 0$ and 
$\deg(m)\leq \deg(\hat{m})$. If 
$ \hat{\alpha}_j \beta_j = \alpha_j \hat{\beta}_j$ holds
for all $1\leq j \leq n$, then $\deg([m,\hat{m}]) \leq \deg(m) + \deg(\hat{m}) -4$.
Otherwise, $\deg([m,\hat{m}]) = \deg(m) + \deg(\hat{m}) -2$.
\end{theorem}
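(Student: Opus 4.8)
The plan is to reduce the statement to the behaviour of the highest-degree normal-ordered parts of $m$ and $\hat m$, and then to combine a parity argument with the explicit leading-order formula already obtained in the proof of Corollary~\ref{theorem:commutator:monomials:one:diagonal}(b). Throughout I write $\gamma=(\alpha,\beta)$, $\hat\gamma=(\hat\alpha,\hat\beta)$, $d:=\deg(m)=\abs{\gamma}$ and $\hat d:=\deg(\hat m)=\abs{\hat\gamma}$; note that the hypothesis $\hat\alpha_j\beta_j=\alpha_j\hat\beta_j$ for all $j$ is exactly the vanishing of the coefficients that will appear below.

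First I would record a parity fact. Since $m\hat m$ and $\hat m m$ are monomials of multi-degree $\gamma+\hat\gamma$, Proposition~\ref{app:polynomials}(b) guarantees that every normal-ordered monomial occurring in either product has degree congruent to $d+\hat d$ modulo $2$. The same is therefore true of their difference, so that $\deg([m,\hat m])\equiv d+\hat d \pmod 2$ whenever $[m,\hat m]\neq 0$. This single observation is what lets me discard the intermediate degree $d+\hat d-3$ without any computation.

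Next I would pass to normal order. Using Proposition~\ref{app:polynomials}(b) I write $m=a^\gamma+m'$ and $\hat m=a^{\hat\gamma}+\hat m'$, where the leading terms $a^\gamma,a^{\hat\gamma}$ are normal-ordered of degree $d,\hat d$ and the remainders satisfy $\deg(m')\le d-2$ and $\deg(\hat m')\le \hat d-2$ (lower terms drop by even amounts). Expanding $[m,\hat m]$ by bilinearity and bounding the three mixed commutators with the estimate $\deg([\,\cdot\,,\,\cdot\,])\le \deg+\deg-2$ of Proposition~\ref{app:polynomials}(a4) shows that $[m,\hat m]-[a^\gamma,a^{\hat\gamma}]$ has degree at most $d+\hat d-4$. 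Hence the degrees $d+\hat d-2$ and $d+\hat d-3$ of $[m,\hat m]$ are governed entirely by $[a^\gamma,a^{\hat\gamma}]$, whose degree-$(d+\hat d-2)$ part was computed in Corollary~\ref{theorem:commutator:monomials:one:diagonal}(b) to equal $\sum_{q=1}^n(\beta_q\hat\alpha_q-\alpha_q\hat\beta_q)\,a^{\gamma+\hat\gamma-\ones_q}$, a combination of linearly independent normal-ordered monomials.

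It then remains to split into cases. If $\hat\alpha_j\beta_j=\alpha_j\hat\beta_j$ for all $j$, every coefficient of the leading sum vanishes, so $\deg([a^\gamma,a^{\hat\gamma}])\le d+\hat d-3$; the parity fact upgrades this to $\le d+\hat d-4$, and together with the reduction step I conclude $\deg([m,\hat m])\le d+\hat d-4$. Otherwise some $\beta_j\hat\alpha_j-\alpha_j\hat\beta_j\neq 0$, and a nonzero coefficient forces the entries $\alpha_j+\hat\alpha_j$ and $\beta_j+\hat\beta_j$ to be positive, so $a^{\gamma+\hat\gamma-\ones_j}$ is a genuine monomial and the leading sum is a nonzero element of degree $d+\hat d-2$; by the reduction step this is also the top part of $[m,\hat m]$, giving $\deg([m,\hat m])=d+\hat d-2$. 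I expect the only delicate point to be the bookkeeping in the reduction step, namely verifying that the mixed commutators really cannot contribute at degree $d+\hat d-2$; once that is settled, the parity observation does all the remaining work, and the sole content beyond Corollary~\ref{theorem:commutator:monomials:one:diagonal} and Theorem~\ref{app:theorem:normal} is the passage from normal-ordered to arbitrary monomials.
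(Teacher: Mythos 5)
Your proof is correct: the reduction step works because the subleading normal-ordered terms of a monomial drop in degree by even amounts (Proposition~\ref{app:polynomials}(b)), so all three mixed commutators are bounded by $\deg(m)+\deg(\hat m)-4$, and your parity observation plus the leading-order formula then settle both cases. Your route is genuinely different from the paper's. The paper first proves the single-mode case by the same normal-ordering reduction followed by Theorem~\ref{app:theorem:normal}, whose degenerate case $\hat r s = r\hat s$ requires a second-order computation (expanding via Lemma~\ref{app:lemma:powers}(a1) and checking that the coefficient $(s{-}r)(p{-}1)prs/2$ is nonzero) to pin the degree at exactly $\abs{\gamma}+\abs{\hat\gamma}-4$; it then lifts to $n$ modes via the $[AB,CD]$ product expansion and induction. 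You instead work directly in the multi-mode setting, reuse the leading-order formula $\sum_q(\beta_q\hat\alpha_q-\alpha_q\hat\beta_q)\,a^{\gamma+\hat\gamma-\ones_q}$ --- which is exactly what the proof of Corollary~\ref{theorem:commutator:monomials:one:diagonal}(b1) establishes --- and replace the second-order computation with the parity argument that every normal-ordered constituent of $[m,\hat m]$ has degree congruent to $\deg(m)+\deg(\hat m)$ modulo $2$. Because the theorem claims only an upper bound in the degenerate case, parity is all you need, so your argument is more economical and never invokes the equality half of Theorem~\ref{app:theorem:normal}; what the paper's route buys in exchange is the sharper fact that for single-mode normal-ordered monomials the drop is exactly $4$. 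Two small points: you correctly observe that a nonzero coefficient forces $a^{\gamma+\hat\gamma-\ones_j}$ to have nonnegative exponents (this is needed and easy to miss); and the leading-order formula you cite lives inside the \emph{proof} of Corollary~\ref{theorem:commutator:monomials:one:diagonal}, not in its statement, so in a final write-up you should either re-derive it (two lines from Lemma~\ref{app:lemma:powers}(a2) and the product expansion) or state explicitly that you are importing that derivation.
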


\begin{proof}
Let us first prove the single-mode case with $m, \hat{m} \in A_1$. We set
$\mdeg(m) = (r,s), \mdeg(\hat{m}) = (\hat{r},\hat{s}) \in \Np{2}$. It follows from Proposition~\ref{app:polynomials}(b) that
$m = \sum_{0 \leq \oneshalf \in \N}  c_{m,\oneshalf} a^{(r-\oneshalf,s-\oneshalf)}$ and
$\hat{m} = \sum_{0 \leq \hat{\oneshalf} \in \N}  c_{\hat{m},\hat{\oneshalf}} a^{(\hat{r}-\hat{\oneshalf},\hat{s}-\hat{\oneshalf})}$ with
$c_{m,0} = c_{\hat{m},0} = 1$. We apply Proposition~\ref{app:polynomials}(a4) and obtain that 
\begin{align*}
[m, \hat{m}] = [a^{(r,s)}, a^{(\hat{r},\hat{s})}]+ c_{\hat{m},1} [a^{(r,s)}, a^{(\hat{r}-1,\hat{s}-1)}] + c_{m,1} [a^{(r-1,s-1)}, a^{(\hat{r},\hat{s})}] + \hat{A}
\end{align*}
with 
$\deg([a^{(r,s)}, a^{(\hat{r},\hat{s})}]) \leq \deg(m) +\deg(\hat{m}) -2$, $\deg([a^{(r,s)}, a^{(\hat{r}-1,\hat{s}-1)}]) \leq \deg(m) +\deg(\hat{m}) -4$, 
$\deg([a^{(r-1,s-1)}, a^{(\hat{r},\hat{s})}]) \leq \deg(m) +\deg(\hat{m}) -4$, and 
$\deg(\hat{A}) \leq \deg(m) +\deg(\hat{m}) -6$. Now the single-mode case follows by applying Theorem~\ref{app:theorem:normal} to
$[a^{(r,s)}, a^{(\hat{r},\hat{s})}]$.
For the multi-mode case, we set $m = \prod_{p=1}^n m_p$  and $\hat{m} = \prod_{\hat{p}=1}^n \hat{m}_{\hat{p}}$ where
$m_p$ and $\hat{p}$ denote the contribution form
the $p$th mode. We
apply the formula $[AB,CD]=A[B,C]D+[A,C]BD+CA[B,D]+C[A,D]B$ for $A=m_1$, $B=\prod_{p=2}^n m_p$, $C= \hat{m}_1$, and
$D=\prod_{\hat{p}=2}^n \hat{m}_{\hat{p}}$. It follows that
$[m,\hat{m}]= [m_1, \hat{m}_1]  \prod_{q=2}^n m_q \hat{m}_q + \hat{m}_1 m_1 [  \prod_{p=2}^n m_p, \prod_{\hat{p}=2}^n \hat{m}_{\hat{p}} ]$.
By induction, we obtain
\begin{align*}
[m,\hat{m}]= \sum_{q=1}^n \left(\prod_{p=1}^{q-1} \hat{m}_p m_p \right) [m_q, \hat{m}_q]
\left(\prod_{\hat{p}=q+1}^n m_{\hat{p}} \hat{m}_{\hat{p}}\right).
\end{align*}
The multi-mode case now follows by 
simultaneously applying the single-mode statement to $ [m_q, \hat{m}_q]$ for all $q$.
\end{proof}

\subsection{Commutators with elements of $\hat{A}_n^0$}

We have already noted when defining the subspaces of the whole algebra $\hat{A}_n$ that there exists a special class of
operators conveniently collected in the set $\hat{A}_n^0$. We also noted that generators $g_\sigma^\gamma$ naturally come
in pairs labelled by $\sigma=\pm$. We now use the elements of $\hat{A}_n^0$ to construct natural relations between $g_+^\gamma$ and
$g_-^\gamma$ for arbitrary $\gamma=(\alpha,\beta)$, which is of relevance when $\alpha\neq\beta$. In this sense, what
we discuss below is of particular importance to all elements of $\hat{A}_n$ excluding those in $\hat{A}_n^{=}$, since
$[\hat{A}_n^0,\hat{A}_n^{=}]=0$ applies as an immediate consequence of Corollary~\ref{theorem:commutator:monomials:one:diagonal}.
We hereby proceed to prove the necessary results for this case.

\begin{lemma}\label{commutator:no:change:vector:space}
Let $\hat{A}_n^K$ be any of the subspaces listed in Definition~\ref{definition:vector:spaces:A}. Then,
$[ia_s^\dag a_s,\hat{A}_n^K]\in\hat{A}_n^K$ for all $s\in \{1,\ldots,n\}$ and all $K\in \{0,1,2,{=},\text{om},\perp\}$.
\end{lemma}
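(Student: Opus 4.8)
The plan is to reduce the claim to the action of $[ia_s^\dagger a_s,\cdot]$ on individual basis elements $g_\sigma^\gamma$ and then to exploit the fact that each subspace in Definition~\ref{definition:vector:spaces:A} is carved out by a condition on the multi-degree $\gamma=(\alpha,\beta)$ alone. The central tool is the identity already recorded for the maps $\Phi_s$, namely
\begin{equation*}
[ia_s^\dagger a_s, g_\sigma^\gamma] = \sigma(\alpha_s-\beta_s)\,g_{-\sigma}^\gamma,
\end{equation*}
which follows at once from $[a_s^\dagger a_s, a^{(\alpha,\beta)}] = (\alpha_s-\beta_s)\,a^{(\alpha,\beta)}$ and the definitions of $g_{\pm}^\gamma$. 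The decisive feature is that the commutator returns a scalar multiple of the \emph{partner} element $g_{-\sigma}^\gamma$, which shares the same multi-degree $\gamma$ and differs only in the sign label $\sigma$.

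First I would note that for each $K\in\{0,1,2,{=},\text{om}\}$ the spanning set of $\hat{A}_n^K$ consists of all basis elements $g_\sigma^\gamma$ whose multi-degree satisfies a $\gamma$-only criterion, so that $g_\sigma^\gamma\in\hat{A}_n^K$ forces $g_{-\sigma}^\gamma\in\hat{A}_n^K$ as well (and $g_{-\sigma}^\gamma=0\in\hat{A}_n^K$ in the degenerate instances). Together with the identity above, this gives $[ia_s^\dagger a_s, g_\sigma^\gamma]\in\hat{A}_n^K$ on each spanning element, and bilinearity of the commutator propagates the conclusion to all of $\hat{A}_n^K$, so $[ia_s^\dagger a_s,\cdot]$ maps $\hat{A}_n^K$ into itself. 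The cases $K\in\{0,{=}\}$ are immediate: there $\alpha=\beta$ forces $\alpha_s-\beta_s=0$, so the commutator vanishes outright.

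For $K=\perp$ the same mechanism works once I record that $\hat{A}_n^\perp$ is again spanned by basis elements singled out by a $\gamma$-only criterion. Since the scalar product $\inner{\cdot}{\cdot}$ is diagonal in the basis $\{g_\sigma^\gamma\}$ and $\hat{A}_n^{\text{core}}$ is the span of the basis elements meeting one of the core conditions, its orthogonal complement $\hat{A}_n^\perp$ is spanned by precisely the remaining basis elements; and Lemma~\ref{lemma:A_purp} identifies the defining criterion explicitly as conditions (B) and (C), both stated purely in terms of $\gamma$. Hence the spanning set of $\hat{A}_n^\perp$ is closed under the partner operation $\sigma\mapsto-\sigma$, and the argument of the previous paragraph applies verbatim.

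I do not expect a real obstacle: the whole content sits in the single commutator identity, and everything else is the bookkeeping check that each subspace of Definition~\ref{definition:vector:spaces:A} is determined by its multi-degrees alone. The one spot deserving a moment of care is $K=\perp$, whose multi-degree characterization is not read off the definition directly but is routed through Lemma~\ref{lemma:A_purp} (equivalently, through the diagonality of the scalar product on the basis).
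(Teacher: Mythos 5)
Your proposal is correct and follows essentially the same route as the paper's own proof: both rest on the single identity $[ia_s^\dag a_s, g_\sigma^\gamma]=\sigma(\alpha_s{-}\beta_s)\,g_{-\sigma}^\gamma$ together with the observation that membership of a basis element in each subspace of Definition~\ref{definition:vector:spaces:A} is decided by the multi-degree $\gamma$ alone, so the partner $g_{-\sigma}^\gamma$ (or $0$) lies in the same subspace. The only difference is organizational: the paper verifies this case by case for $K\in\{0,1,2,{=},\text{om},\perp\}$, whereas you fold all cases into one uniform argument, with a slightly more explicit justification for $K=\perp$ via the diagonality of the scalar product and the $\gamma$-only criterion of Lemma~\ref{lemma:A_purp}.
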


\begin{proof}
It is immediate to see that $[ia_s^\dag a_s,\hat{A}_n^{0}]=0\in \hat{A}_n^{0}$, $ [ia_s^\dag a_s,\hat{A}_n^{=}]=0\in \hat{A}_n^{=}$ and $[ia_s^\dag a_s,\hat{A}_n^{1}]\in \hat{A}_n^{1}$.
Also, let $g_\sigma^\gamma\in\hat{A}_n^{2}$ with $\gamma=(\alpha,\beta)$. Then,
$[ia_s^\dag a_s,g_\sigma^\gamma]=0$ or
$\deg([ia_s^\dag a_s,g_\sigma^\gamma])=2$,
i.e.\ $[ia_s^\dag a_s,g_\sigma^\gamma]=\sigma (\alpha_s{-}\beta_s)g_{-\sigma}^\gamma\neq0$, 
and this means that $[ia_s^\dag a_s,g_\sigma^\gamma]\in\hat{A}_n^{2}$. This is true for all $s$ and all
$g_\sigma^\gamma\in\hat{A}_n^{2}$. Let $g_\sigma^\gamma\in\hat{A}_n^{\text{om}}$ and let
there be a unique $j$ such that $\alpha_j+\beta_j=1$ while $\alpha_k=\beta_k$ for all $k\neq j$. Then, it is immediate
to verify that $[ia_s^\dag a_s,g_\sigma^\gamma]=\sigma g_{-\sigma}^\gamma\,\delta_{sj}\in\hat{A}_n^{\text{om}}$
for all $s$ and  $g_\sigma^\gamma\in\hat{A}_n^{\text{om}}$. Finally, let $g_\sigma^\gamma\in\hat{A}^\perp_n$.
Then, it is immediate to verify that there is an $s$ such that 
$\mdeg([ia_s^\dag a_s,g_\sigma^\gamma])=\mdeg(g_\sigma^\gamma)$ if $[ia_s^\dag a_s,g_\sigma^\gamma]\neq0$.
This means that $[ia_s^\dag a_s,g_\sigma^\gamma]$ must be in $\hat{A}^\perp_n$.
\end{proof}

\begin{lemma}\label{helper:lemma}
We observe that  (a) $[a_k^\dagger a_k, m] = (\alpha_k {-} \beta_k) m$
and (b) $[a_k^\dagger a_k, m^\dagger] = (\beta_k {-} \alpha_k) m^\dagger$
for any monomial $m \in A_n$ with $\mdeg(m)=\gamma=(\alpha,\beta)$.
\end{lemma}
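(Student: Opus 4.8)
The plan is to exploit the fact that $a_k^\dagger a_k$ acts as a \emph{number} (or counting) operator whose adjoint action is a derivation that simply tallies the powers of $a_k^\dagger$ and $a_k$. First I would record the elementary commutators obtained directly from the canonical commutation relations $[a_i,a_j^\dagger]=\delta_{ij}$: using $[AB,C]=A[B,C]+[A,C]B$ one gets
\begin{equation*}
[a_k^\dagger a_k, a_k^\dagger] = a_k^\dagger, \qquad [a_k^\dagger a_k, a_k] = -a_k,
\end{equation*}
while $[a_k^\dagger a_k, a_j^\dagger]=[a_k^\dagger a_k,a_j]=0$ for every $j\neq k$. Thus each elementary factor is an eigenvector of the linear map $X\mapsto[a_k^\dagger a_k,X]$, with eigenvalue $+1$ for $a_k^\dagger$, eigenvalue $-1$ for $a_k$, and eigenvalue $0$ for any factor belonging to a mode $j\neq k$.

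Next I would extend this to an arbitrary monomial $m=m_1 m_2\cdots m_\ell$, where each $m_i\in\{a_1^\dagger,\dots,a_n^\dagger,a_1,\dots,a_n\}$ is a single creation or annihilation operator written in whatever order $m$ happens to have (recall that $m$ need not be normal-ordered). Since $X\mapsto[a_k^\dagger a_k,X]$ is a derivation, the Leibniz rule gives
\begin{equation*}
[a_k^\dagger a_k, m] = \sum_{i=1}^{\ell} m_1\cdots m_{i-1}\,[a_k^\dagger a_k, m_i]\,m_{i+1}\cdots m_\ell .
\end{equation*}
The crucial observation is that each $[a_k^\dagger a_k,m_i]$ is a \emph{scalar} multiple of the very same factor $m_i$, so the $i$th summand collapses back to a scalar multiple of the whole monomial $m$. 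Summing the eigenvalues, the $\alpha_k$ occurrences of $a_k^\dagger$ each contribute $+1$ and the $\beta_k$ occurrences of $a_k$ each contribute $-1$, independently of the ordering; this yields $[a_k^\dagger a_k,m]=(\alpha_k-\beta_k)\,m$, which proves (a).

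Finally, (b) follows from (a) with no extra work. Since $(\cdot)^\dagger$ is a linear anti-automorphism and $(a_k^\dagger a_k)^\dagger=a_k^\dagger a_k$, applying the dagger to part (a) gives $[a_k^\dagger a_k,m]^\dagger=-[a_k^\dagger a_k,m^\dagger]$, while the left-hand side equals $(\alpha_k-\beta_k)m^\dagger$; rearranging produces $[a_k^\dagger a_k,m^\dagger]=(\beta_k-\alpha_k)m^\dagger$. Equivalently, one may simply apply (a) to $m^\dagger$, whose multi-degree is $(\beta,\alpha)$ by the definition of the anti-automorphism. There is no genuine obstacle here; the only point requiring a moment's care is that $m$ is not assumed normal-ordered, so one must confirm — as the derivation argument does automatically — that the count $(\alpha_k-\beta_k)$ is insensitive to the order in which the factors appear.
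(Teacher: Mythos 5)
Your proof is correct, but it takes a genuinely different route from the paper's. You treat the adjoint map $X\mapsto[a_k^\dagger a_k,X]$ as a derivation and apply the Leibniz rule factor-by-factor to the raw word $m=m_1\cdots m_\ell$, observing that each elementary factor is an eigenvector of this map (eigenvalue $+1$ for $a_k^\dagger$, $-1$ for $a_k$, $0$ for other modes), so the eigenvalues simply add up to $\alpha_k-\beta_k$ regardless of ordering. The paper instead first establishes the eigenvalue identity for \emph{normal-ordered} monomials, computing $[a^\dagger a,(a^\dagger)^r a^s]=(r{-}s)(a^\dagger)^r a^s$ from the power-commutator formulas of Lemma~\ref{app:lemma:powers}(a1), and then extends to general monomials via the normal-ordered expansion $m=\sum_{\oneshalf}c_{m,\oneshalf}a^{(\alpha-\oneshalf,\beta-\oneshalf)}$ of Proposition~\ref{app:polynomials}(b), noting that every term in that expansion carries the same exponent difference $\alpha_k-\beta_k$ and hence the same eigenvalue. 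Your argument is more elementary and self-contained: it needs neither the expansion proposition nor the power formulas, and it makes transparent why the count is insensitive to the ordering of factors — a point the paper only gets indirectly through the structure of the normal-ordered expansion. The paper's route, on the other hand, reuses machinery it has already built and keeps the computation inside the normal-ordered basis framework employed throughout the rest of the paper. For part (b) your dagger/anti-automorphism argument is a small variation; your alternative remark (apply (a) to $m^\dagger$, whose multi-degree is $(\beta,\alpha)$) is exactly what the paper does.
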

\begin{proof}
Recall from Lemma~\ref{app:lemma:powers}(a1)
that $[a, (a^\dagger)^r] = r (a^\dagger)^{r-1}$ and $[a^s, a^\dagger] = s a^{s-1}$.
We obtain $[a^{(1,1)},a^{(r,s)}]=a^\dagger a (a^\dagger)^r a^s - (a^\dagger)^r a^s a^\dagger a
= a^\dagger [a, (a^\dagger)^r] a^s - (a^\dagger)^r [a^s, a^\dagger] a = (r{-}s) a^{(r,s)}$ for any normal-ordered monomial $a^{(r,s)}$
and $n=1$. Essentially the same proof applies to any normal-ordered monomial and arbitrary $n$. Following Proposition~\ref{app:polynomials},
any monomial can be written as $m= \sum_{\oneshalf} c_{m,\oneshalf} a^{(\alpha {-} \oneshalf,\beta {-} \oneshalf)}$
 for nonnegative integer vectors $\oneshalf \in \Np{n}$.
The proof of (a) follows now as 
\begin{align*}
[a_k^\dagger a_k, m] = [a_k^\dagger a_k, \sum_{\oneshalf} c_{m,\oneshalf} a^{(\alpha {-} \oneshalf,\beta {-} \oneshalf)}]
= \sum_{\oneshalf} (\alpha_k-\oneshalf_k {-} \beta_k+\oneshalf_k) c_{m,\oneshalf} a^{(\alpha {-} \oneshalf,\beta {-} \oneshalf)}=
\sum_{\oneshalf} (\alpha_k {-} \beta_k) c_{m,\oneshalf} a^{(\alpha {-} \oneshalf,\beta {-} \oneshalf)} = (\alpha_k {-} \beta_k) m. 
\end{align*}
It is clear that
$\mdeg(m^\dagger)=(\beta,\alpha)$ and (b) is immediate from (a).
\end{proof}

We extend the results from Lemma~\ref{helper:lemma} to the skew-hermitian Weyl algebra:

\begin{theorem}\label{important:one:element:and:diagonal:implies:other:two:mode:new}
Consider a Lie subalgebra $\mathfrak{g}$ of the skew-hermitian Weyl algebra $\hat{A}_n$ which contains
$g_{\sigma}(m)$ for some monomial $m \in A_n$,
with $\mdeg(m)=\gamma=(\alpha,\beta)$, as well as
$i a^\dag_k a_k\in\mathfrak{g}$ for all $k\in \{1,\ldots,n\}$. 
Then: (a1)~$[i a^\dag_k a_k, g_\sigma(m)]=\sigma(\alpha_k {-} \beta_k) g_{-\sigma}(m)$.
(a2)~$[i a^\dag_k a_k,[i a^\dag_k a_k, g_\sigma(m)]]=-(\alpha_k {-} \beta_k)^2 g_\sigma(m)$.
(b)~If there is a $k$ such that $\alpha_k \neq \beta_k$, then $[i a^\dag_k a_k, g_\sigma(m)]\neq 0$ and $g_{-\sigma}(m)\in\mathfrak{g}$.
\end{theorem}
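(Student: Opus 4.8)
The plan is to prove (a1) by a direct computation and then bootstrap (a2) and (b) from it. For (a1), I would unfold the definitions $g_+(m) = i(m^\dagger {+} m)$ and $g_-(m) = m^\dagger {-} m$ and feed in the two identities of Lemma~\ref{helper:lemma}, namely $[a_k^\dagger a_k, m] = (\alpha_k {-} \beta_k)m$ and $[a_k^\dagger a_k, m^\dagger] = (\beta_k {-} \alpha_k)m^\dagger$. Treating $\sigma = +$ and $\sigma = -$ separately, bilinearity of the bracket reduces each case to these two identities. For $\sigma = +$ the two factors of $i$ combine to an overall sign and one obtains $[ia_k^\dagger a_k, i(m^\dagger {+} m)] = (\alpha_k {-} \beta_k)(m^\dagger {-} m) = (\alpha_k {-} \beta_k)g_-(m)$; for $\sigma = -$ one obtains $[ia_k^\dagger a_k, m^\dagger {-} m] = -(\alpha_k {-} \beta_k)\,i(m^\dagger {+} m) = -(\alpha_k {-} \beta_k)g_+(m)$. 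Both cases are captured by the single expression $\sigma(\alpha_k {-} \beta_k)g_{-\sigma}(m)$, which is exactly (a1); conceptually, bracketing against $ia_k^\dagger a_k$ toggles the label $\sigma \mapsto -\sigma$ and scales by $\sigma(\alpha_k {-} \beta_k)$, matching the Lie-derivative maps $\Phi_k$ introduced earlier.

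For (a2), I would simply iterate (a1). Applying it twice, the second time with $-\sigma$ in place of $\sigma$, gives $[ia_k^\dagger a_k,[ia_k^\dagger a_k, g_\sigma(m)]] = \sigma(\alpha_k {-} \beta_k)\,(-\sigma)(\alpha_k {-} \beta_k)\, g_\sigma(m)$, and since $\sigma(-\sigma) = -\sigma^2 = -1$ this collapses to $-(\alpha_k {-} \beta_k)^2 g_\sigma(m)$, as claimed.

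For (b), assuming $\alpha_k \neq \beta_k$ for some index $k$, the scalar $\sigma(\alpha_k {-} \beta_k)$ appearing in (a1) is a nonzero real number. I would also record that $g_{-\sigma}(m) \neq 0$: when $-\sigma = +$ this is immediate since $m \neq 0$, and when $-\sigma = -$ the hypothesis $\alpha_k \neq \beta_k$ forces $\alpha \neq \beta$, so $m \neq m^\dagger$ by Proposition~\ref{app:polynomials:same:order} and hence $g_-(m) = m^\dagger {-} m \neq 0$. Therefore $[ia_k^\dagger a_k, g_\sigma(m)] \neq 0$; since $\mathfrak{g}$ is a Lie subalgebra it contains this commutator, and dividing by the nonzero real scalar $\sigma(\alpha_k {-} \beta_k)$ yields $g_{-\sigma}(m) \in \mathfrak{g}$.

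I expect no genuine obstacle: the whole argument is bookkeeping built on top of Lemma~\ref{helper:lemma}. The only place demanding care is the sign-and-$i$ accounting in (a1)---keeping track of how the explicit factor of $i$ in $g_+$ meets the factor of $i$ in $ia_k^\dagger a_k$ together with the opposite signs that Lemma~\ref{helper:lemma} assigns to $m$ versus $m^\dagger$---while the single conceptual (rather than computational) step is the rescaling in (b), which relies on $\mathfrak{g}$ being a \emph{real} Lie algebra so that division by $\sigma(\alpha_k {-} \beta_k) \in \mathbb{R} \setminus \{0\}$ stays inside $\mathfrak{g}$.
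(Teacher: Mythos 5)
Your proof is correct and follows essentially the same route as the paper's: (a1) by direct expansion via Lemma~\ref{helper:lemma}, (a2) by iterating (a1), and (b) by closure of $\mathfrak{g}$ under commutators plus division by the nonzero real scalar $\sigma(\alpha_k{-}\beta_k)$. Your explicit verification that $g_{-\sigma}(m)\neq 0$ (via Proposition~\ref{app:polynomials:same:order}) is a small refinement the paper leaves implicit, but it does not change the argument.
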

\begin{proof}
Applying Lemma~\ref{helper:lemma}, we obtain 
\begin{align*}
[i a^\dag_k a_k, g_+(m)]
&= [i a^\dag_k a_k, i(m^\dagger {+} m)]
= i^2 ( [ a^\dag_k a_k, m^\dagger] + [ a^\dag_k a_k, m ])
= (\alpha_k {-} \beta_k) m^\dagger  + (\beta_k {-} \alpha_k) m \\
&= (\alpha_k {-} \beta_k) (m^\dagger {-}m)
=  (\alpha_k {-} \beta_k) g_-(m).
\end{align*}
This verifies the case of $\sigma=+$ and the case of $\sigma=-$ 
can be immediately repeated as well giving $[i a^\dag_k a_k, g_-(m)]=-(\alpha_k {-} \beta_k) g_+(m)$. This proves (a1). The claim (a2) can be immediately obtained by applying twice the result from (a1) as
\begin{align*}
[i a^\dag_k a_k,[i a^\dag_k a_k, g_\sigma(m)]]=\sigma(\alpha_k {-} \beta_k)[i a^\dag_k a_k, g_{-\sigma}(m)]=-\sigma^2(\alpha_k {-} \beta_k)^2g_\sigma(m)=-(\alpha_k {-} \beta_k)^2g_\sigma(m).
\end{align*}
Finally, it is clear that we can compute
$[i a^\dag_k a_k, g_\sigma(m)]=\sigma(\alpha_k {-} \beta_k) g_{-\sigma}(m)$. Since the commutator of two elements
of an algebra is also in the algebra, if $\alpha_k \neq \beta_k$ we then conclude that
$[i a^\dag_k a_k, g_\sigma(m)]\neq0$ and therefore $g_{-\sigma}(m)\in\mathfrak{g}$.
\end{proof}

\subsection{Complementary operators\label{subsection:complementary}}
At this point we introduce the following notion of \emph{complementary operators}:

\begin{definition}\label{definition:complementary:elements}
Let $x_+,x_-$ be two elements of $\hat{A}_n$ not in $\hat{A}_n^0\oplus\hat{A}_n^=$.
We say that $x_+,x_-$ are \emph{complementary} if for each $k$ either:
(a)  $[i a^\dag_ka_k,x_\sigma]=0$ or (b) $[i a^\dag_ka_k,x_\sigma]=\sigma\,\mu_k\, x_{-\sigma}$
for appropriate $\mu_k\neq0$, where $\sigma\in\{+,-\}$, and there is at least one index $k$
that satisfies (b). The coefficients $\mu_k$ are called the \emph{reciprocity constants}.
\end{definition}

Note that for any arbitrary two $x,x'\notin \hat{A}_n^0\oplus\hat{A}_n^=$,
Theorem~\ref{important:one:element:and:diagonal:implies:other:two:mode:new}
guarantees that there are always exist $k_1,k_1'$ such that $[i a^\dag_{k_1} a_{k_1},x]\neq0$
and $[i a^\dag_{k_1'} a_{k_1'},x']\neq0$. Furthermore, there can be additional finite index sets
$\mathcal{J}:=\{k_2,k_3,\ldots\}$ and $\mathcal{J}':=\{k_2',k_3',\ldots\}$ such that $[i a^\dag_{k_p} a_{k_p},x]\neq0$ and
$[i a^\dag_{k_q'} a_{k_q'},x']\neq0$ for all $k_p\in\mathcal{J}$ and $k_q'\in\mathcal{J}'$. However it can occur that
$\mathcal{J}\cap\mathcal{J}'=\emptyset$, and thus there is no $k$ for which $[i a^\dag_k a_k,x]\neq0$
and $[i a^\dag_{k} a_{k},x']\neq0$ simultaneously. This is one of the required conditions for $x,x'$ to be complementary.

For complementary operators, 
Theorem~\ref{important:one:element:and:diagonal:implies:other:two:mode:new}
immediately implies that all basis elements $g_+(m)$ and $g_-(m)$
are always complementary if they are not in
$\hat{A}_n^0\oplus\hat{A}_n^=$.
This special case leads to an important observation.
Theorem~\ref{important:one:element:and:diagonal:implies:other:two:mode:new}
also tells us that $[i a^\dag_k a_k,[i a^\dag_k a_k, g_\sigma(m)]]=-(\alpha_k{-}\beta_k)^2g_\sigma(m)$.
It can therefore be immediately verified that the following holds:
\begin{lemma}
Let $x_+,x_-$ be two complementary operators in $\hat{A}_n$. Then, for each $k$, either:
(a) $[i a^\dag_k a_k, x_\sigma]=0$, or (b) $[i a^\dag_k a_k,[i a^\dag_k a_k, x_\sigma]]=-\mu_k^2\,x_\sigma$
for appropriate $\mu_k\neq0$, and there is at least one $k$ that satisfies (b).
\end{lemma}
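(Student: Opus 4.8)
The plan is to derive both alternatives of the claim by applying the defining relation of Definition~\ref{definition:complementary:elements} twice, in direct analogy to the passage from part (a1) to part (a2) of Theorem~\ref{important:one:element:and:diagonal:implies:other:two:mode:new}. For a fixed mode index $k$, the definition of complementary operators furnishes a dichotomy that is imposed uniformly in $\sigma\in\{+,-\}$: either (a) $[i a^\dag_k a_k, x_\sigma]=0$, or (b) $[i a^\dag_k a_k, x_\sigma]=\sigma\,\mu_k\,x_{-\sigma}$ with one and the same reciprocity constant $\mu_k\neq0$ for both signs. In case (a) there is nothing to prove, since the first alternative of the lemma then holds immediately.

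In case (b) I would simply commute with $i a^\dag_k a_k$ a second time, invoking the relation for $x_\sigma$ in the first step and the relation for $x_{-\sigma}$ in the second:
\[
[i a^\dag_k a_k,[i a^\dag_k a_k, x_\sigma]] = [i a^\dag_k a_k,\sigma\,\mu_k\,x_{-\sigma}] = \sigma\,\mu_k\,[i a^\dag_k a_k, x_{-\sigma}] = \sigma\,\mu_k\,(-\sigma\,\mu_k\,x_\sigma) = -\mu_k^2\,x_\sigma,
\]
where the final step uses $\sigma^2=1$. This establishes the second alternative of the lemma for every $k$ in case (b).

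For the closing clause I would appeal to the fact that Definition~\ref{definition:complementary:elements} already guarantees at least one index $k$ realizing case (b); for such a $k$ the computation above gives $-\mu_k^2\,x_\sigma\neq0$, because $\mu_k\neq0$ and $x_\sigma\neq0$ (the latter since $x_\sigma\notin\hat{A}_n^0\oplus\hat{A}_n^=$ by hypothesis). Hence at least one $k$ satisfies the double-commutator version of (b) as well, completing the argument.

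The calculation presents no genuine obstacle; the single point deserving care is that the constant $\mu_k$ appearing in the two applications of (b)—first for $x_\sigma$, then for $x_{-\sigma}$—must be identical. This is exactly what Definition~\ref{definition:complementary:elements} builds in, where the alternative is stipulated for both $\sigma=\pm$ with a single $\mu_k$, so that the two successive commutations close back onto $x_\sigma$ with coefficient $-\mu_k^2$ rather than producing two unrelated constants. I would emphasize this consistency explicitly, since it is the structural feature that makes the second commutation step collapse cleanly.
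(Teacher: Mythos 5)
Your proof is correct and is essentially the paper's own argument: the paper's proof consists of a one-line remark that the claim follows by "direct verification using the two commutators that appear in Definition~\ref{definition:complementary:elements}," which is exactly the double application of the defining relation (with the same reciprocity constant $\mu_k$ for both signs $\sigma$) that you spell out. Your explicit attention to the consistency of $\mu_k$ across $\sigma=\pm$ and to $x_\sigma\neq 0$ makes the verification complete, but does not constitute a different route.
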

\begin{proof}
The proof is a direct verification of the claim by using the two commutators that appear in Definition~\ref{definition:complementary:elements}. 
\end{proof}

We now proceed to study the commutator of complementary operators in general
and, in particular, for complementary elements $g_+(m)$ and $g_-(m)$. 

\begin{lemma}\label{pre:chain:lemma}
Consider the complementary operators $x_+,x_- \in \hat{A}_n$. Then (a)~$[i a^\dag_k a_k, [x_+,x_-]]=0$ for all
$k$ and (b)~$[x_+,x_-]=\sum_p C_p g_+^{(\tilde{\alpha}_p,\tilde{\alpha}_p)}$ for appropriate coefficients
$C_p$ and basis operators $g_+^{(\tilde{\alpha}_p,\tilde{\alpha}_p)}$. 
\end{lemma}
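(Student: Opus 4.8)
The plan is to obtain part~(a) directly from the Jacobi identity together with the defining dichotomy of complementary operators, and then to bootstrap part~(b) from part~(a) using the explicit action of $ia^\dag_k a_k$ on basis elements recorded in Theorem~\ref{important:one:element:and:diagonal:implies:other:two:mode:new}.

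For part~(a), I would fix $k$ and write the Jacobi identity
\begin{equation*}
[i a^\dag_k a_k, [x_+, x_-]] = [[i a^\dag_k a_k, x_+], x_-] + [x_+, [i a^\dag_k a_k, x_-]],
\end{equation*}
then split on the two cases of Definition~\ref{definition:complementary:elements}. In case~(a) both inner commutators vanish, so the right-hand side is zero. In case~(b) the relations $[i a^\dag_k a_k, x_+] = \mu_k x_-$ and $[i a^\dag_k a_k, x_-] = -\mu_k x_+$ turn the two terms into $\mu_k [x_-, x_-]$ and $-\mu_k [x_+, x_+]$, each vanishing because every element commutes with itself. Since one of the two cases holds for each $k$, this establishes $[i a^\dag_k a_k, [x_+, x_-]] = 0$ for all $k$.

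For part~(b), I would first note that $[x_+, x_-] \in \hat{A}_n$, since the skew-hermitian Weyl algebra is closed under commutators, and expand it in the basis as $[x_+, x_-] = \sum_{\gamma, \sigma} c_{\gamma, \sigma}\, g_\sigma^\gamma$. Applying $[i a^\dag_k a_k, \cdot]$ and invoking Theorem~\ref{important:one:element:and:diagonal:implies:other:two:mode:new}(a1), which gives $[i a^\dag_k a_k, g_\sigma^\gamma] = \sigma(\alpha_k - \beta_k) g_{-\sigma}^\gamma$, together with part~(a) to kill the left-hand side, yields
\begin{equation*}
0 = \sum_{\gamma, \sigma} c_{\gamma, \sigma}\, \sigma\, (\alpha_k - \beta_k)\, g_{-\sigma}^\gamma \quad \text{for all } k.
\end{equation*}
By linear independence of the basis elements, for every $\gamma = (\alpha, \beta)$ with $\alpha \neq \beta$ I can choose an index $k$ with $\alpha_k \neq \beta_k$; the coefficient of $g_-^\gamma$ then reads $c_{\gamma, +}(\alpha_k - \beta_k)$ and that of $g_+^\gamma$ reads $-c_{\gamma, -}(\alpha_k - \beta_k)$, forcing $c_{\gamma, +} = c_{\gamma, -} = 0$. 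Hence only terms with $\alpha = \beta$ survive; recalling that $g_-^{(\tilde{\alpha}, \tilde{\alpha})} = 0$ for diagonal multidegrees (Proposition~\ref{app:polynomials:same:order}), the expansion collapses to $[x_+, x_-] = \sum_p C_p\, g_+^{(\tilde{\alpha}_p, \tilde{\alpha}_p)}$, as claimed.

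The computation is short, and I do not anticipate a genuine obstacle. The one place requiring care is the sign-and-index bookkeeping in part~(b): one must correctly track which basis element $g_{-\sigma}^\gamma$ collects which coefficient under the $ia^\dag_k a_k$-action, and verify that the linear-independence argument eliminates exactly the off-diagonal ($\alpha \neq \beta$) components while leaving every diagonal one free.
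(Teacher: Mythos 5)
Your proposal is correct and takes essentially the same route as the paper's own proof: part~(a) via the Jacobi identity combined with the complementarity dichotomy of Definition~\ref{definition:complementary:elements}, and part~(b) via a basis expansion of $[x_+,x_-]$, the action $[i a^\dag_k a_k, g_\sigma^\gamma]=\sigma(\alpha_k{-}\beta_k)g_{-\sigma}^\gamma$ from Theorem~\ref{important:one:element:and:diagonal:implies:other:two:mode:new}, and linear independence forcing $\alpha=\beta$, with the diagonal $g_-$ elements vanishing identically. The only cosmetic difference is that you spell out the trivial case where both inner commutators vanish, which the paper leaves implicit.
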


\begin{proof}
By definition of complementarity we have at least one $k$ such that
$[i a^\dag_k a_k,x_\sigma]=\sigma\,\mu_k\,x_{-\sigma}$ for appropriate $\mu_k\neq0$.
Then, we  use Jacobi and compute 
\begin{align*}
[i a^\dag_k a_k, [x_+,x_-]]=-[x_+, [x_-,i a^\dag_k a_k]]-[x_-, [i a^\dag_k a_k,x_+]]=-\mu_k[x_+,x_+]-\mu_k[x_-,x_-]\equiv0.
\end{align*}
This proves (a).
We then note that we can always write $[x_+,x_-]=\sum_{p,\sigma_p} C_{p,\sigma_p} g_{\sigma_p}^{(\alpha_p,\beta_p)}$
for appropriate coefficients $C_{p,\sigma_p}$, where the elements $g_{\sigma_p}^{(\alpha_p,\beta_p)}$ in the expansion are
all linearly independent. However, we have 
\begin{align*}
0=[i a^\dag_k a_k, [x_+,x_-]]=\sum_{p,\sigma_p} C_{p,\sigma_p} \bigl[i a^\dag_k a_k, g_{\sigma_p}^{(\alpha_p,\beta_p)}\bigr]
=\sum_{p,\sigma_p} C_{p,\sigma_p} \sigma_p (\alpha^{(p)}_k-\beta^{(p)}_k) g_{-\sigma_p}^{(\alpha_p,\beta_p)},
\end{align*}
where we have used the bi-linearity of the commutator and also
Theorem~\ref{important:one:element:and:diagonal:implies:other:two:mode:new}.
Since all of the $g_{-\sigma_p}^{(\alpha_p,\beta_p)}$ are linearly independent it follows that $\alpha^{(p)}_k=\beta^{(p)}_k$
must be satisfied for all $p,\sigma_p$. However, this is true for all $k$, and thus the only elements allowed in the expansion of
the commutator are of the form $g_+^{(\tilde{\alpha}_p,\tilde{\alpha}_p)}$. It is worth recalling that that
$g_-^{(\tilde{\alpha}_p,\tilde{\alpha}_p)}=0$ identically. This proves (b).
\end{proof}

Notice that, since the operators $g_+(m),g_-(m)$ with $\mdeg(m)=\gamma=(\alpha,\beta)$ are complementary for $\alpha\neq\beta$ it follows that Lemma~\ref{pre:chain:lemma} applies and therefore (a) $[i a^\dag_k a_k, [g_{+}(m),g_{-}(m)]]=0$ for all
$k$; and (b) $[g_{+}(m),g_{-}(m)]=\sum_p C_p g_+^{(\tilde{\alpha}_p,\tilde{\alpha}_p)}$ for appropriate coefficients
$C_p$ and basis operators $g_+^{(\tilde{\alpha}_p,\tilde{\alpha}_p)}$.

\begin{lemma}\label{degree:commutator:conjugate:basis:operators}
Consider the polynomials $g_{+}(m), g_{-}(m) \in \hat{A}_n$ for a monomial $m \in  A_n$
with $\mdeg(m)=(\alpha,\beta)$. Then (i)~$g_{+}(m)$ and $g_{-}(m)$ commute if and only if $\alpha=\beta$; 
(ii)~If $g_{+}(m)$ and $g_{-}(m)$ do not commute, then $\deg([g_{+}(m),g_{-}(m)])=\deg(g_{+}(m))+\deg(g_{-}(m))-2=:d$;
(iii)~$[g_{+}(m),g_{-}(m)]\upto{d} -2i\sum_k(\alpha_k^2{-}\beta_k^2)a^{(\alpha+\beta-\oneshalf_k,\alpha+\beta-\oneshalf_k)}$.
\end{lemma}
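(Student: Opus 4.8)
The plan is to reduce all three parts to a single identity. Using bilinearity together with $[m,m]=[m^\dagger,m^\dagger]=0$ and $[m^\dagger,m]=-[m,m^\dagger]$, one computes
\begin{equation*}
[g_+(m),g_-(m)] = [\,i(m^\dagger{+}m),\,m^\dagger{-}m\,] = 2i\,[m,m^\dagger].
\end{equation*}
Since $\mdeg(m^\dagger)=(\beta,\alpha)$, everything is then governed by the commutator of the two monomials $m$ and $m^\dagger$, whose multi-degrees are $\gamma=(\alpha,\beta)$ and $\hat\gamma=(\hat\alpha,\hat\beta)=(\beta,\alpha)$.

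For parts (i) and (ii) I would feed this pair into Corollary~\ref{theorem:commutator:monomials:one:diagonal}(b1) and Theorem~\ref{theorem:commutator:monomials}. With $\hat\alpha=\beta$ and $\hat\beta=\alpha$, the governing condition $\hat\alpha_j\beta_j=\alpha_j\hat\beta_j$ collapses to $\beta_j^2=\alpha_j^2$, i.e.\ to $\alpha_j=\beta_j$ (both being nonnegative integers). Hence Corollary~\ref{theorem:commutator:monomials:one:diagonal}(b1) gives $[m,m^\dagger]=0\Rightarrow\alpha=\beta$, while the converse $\alpha=\beta\Rightarrow m=m^\dagger\Rightarrow[m,m^\dagger]=0$ is Proposition~\ref{app:polynomials:same:order}; this proves (i). For (ii) I assume $\alpha\neq\beta$, so $[m,m^\dagger]\neq0$ and the condition of Theorem~\ref{theorem:commutator:monomials} \emph{fails}, forcing the generic drop of exactly $2$. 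Since $\deg(g_\pm(m))=\deg(m)=\abs{\alpha}+\abs{\beta}$ (the leading monomials $a^{(\alpha,\beta)}$ and $a^{(\beta,\alpha)}$ are distinct, so no cancellation occurs in $m^\dagger\pm m$), we obtain $\deg([g_+(m),g_-(m)])=\deg([m,m^\dagger])=2(\abs{\alpha}+\abs{\beta})-2=d$.

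For part (iii) I would isolate the leading terms. By Proposition~\ref{app:polynomials}(b), $m\upto{\abs{\gamma}}a^{(\alpha,\beta)}$ and $m^\dagger\upto{\abs{\gamma}}a^{(\beta,\alpha)}$ with unit leading coefficients, and the remainders sit at degree $\leq\abs{\alpha}+\abs{\beta}-2$. Expanding $[m,m^\dagger]$ bilinearly and bounding each mixed term via Proposition~\ref{app:polynomials}(a4) shows every contribution other than $[a^{(\alpha,\beta)},a^{(\beta,\alpha)}]$ has degree strictly below $d$, so $[m,m^\dagger]\upto{d}[a^{(\alpha,\beta)},a^{(\beta,\alpha)}]$. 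The highest-order part of this last commutator is precisely the expression $\sum_{q}(\beta_q\hat\alpha_q-\alpha_q\hat\beta_q)\,a^{\gamma+\hat\gamma-\ones_q}$ computed inside the proof of Corollary~\ref{theorem:commutator:monomials:one:diagonal}(b1); substituting $\hat\alpha=\beta$, $\hat\beta=\alpha$ and $\gamma+\hat\gamma-\ones_q=(\alpha+\beta-\oneshalf_q,\alpha+\beta-\oneshalf_q)$ gives coefficient $\beta_q^2-\alpha_q^2$, and multiplying by the overall $2i$ yields $-2i\sum_k(\alpha_k^2-\beta_k^2)\,a^{(\alpha+\beta-\oneshalf_k,\alpha+\beta-\oneshalf_k)}$, as claimed.

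The main obstacle I anticipate is part (iii): the claim references the explicit \emph{leading coefficient} of a monomial commutator, which is not recorded in the statement of Corollary~\ref{theorem:commutator:monomials:one:diagonal} but only appears transiently within its proof. I must therefore either cite that line directly or re-derive the coefficient mode-by-mode from Lemma~\ref{app:lemma:powers}(a2). The accompanying bookkeeping --- confirming that the sub-leading parts of $m$ and $m^\dagger$ cannot push a term back up to degree $d$ --- is routine once one notes (Proposition~\ref{app:polynomials}(b)) that those remainders drop in degree by multiples of two and combines this with the degree bound of Proposition~\ref{app:polynomials}(a4).
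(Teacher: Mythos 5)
Your proposal is correct and follows essentially the same route as the paper's proof: reduce everything to $2i[m,m^\dagger]=-2i[m^\dagger,m]$ by bilinearity, obtain (i) from Proposition~\ref{app:polynomials:same:order} together with Corollary~\ref{theorem:commutator:monomials:one:diagonal}(b1), obtain (ii) from the ``otherwise'' branch of Theorem~\ref{theorem:commutator:monomials}, and obtain (iii) by passing to leading normal-ordered terms and reading off the coefficient $\beta_k^2-\alpha_k^2$ from the monomial commutator. Your extra bookkeeping (no cancellation of leading terms in $m^\dagger\pm m$, and the degree bound killing the mixed terms) only makes explicit steps the paper leaves implicit.
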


\begin{proof}
If $\alpha=\beta$, then Proposition~\ref{app:polynomials:same:order} implies that $m = m^\dagger$, $g_{-\sigma}(m)=0$, and 
$[g_{\sigma}(m), g_{-\sigma}(m)]=0$. For the converse, we compute $0=[g_{+}(m), g_{-}(m)]=i [m^\dagger {+} m, m^\dagger {-} m] = -2i [m^\dagger, m]$.
Corollary~\ref{theorem:commutator:monomials:one:diagonal}(b1) shows that $\beta_j^2 = \alpha_j^2$ for all $j \in \{1, \ldots, n\}$.
Since all values of $\alpha$ and $\beta$ are non-negative integers, we obtain $\alpha=\beta$ which proves (i).
To prove (ii), we know that $[m^\dagger, m]\neq 0$. We conclude that $\alpha\neq \beta$
via (i) and, in particular, $m^\dagger \neq m$ via Proposition~\ref{app:polynomials:same:order}.
Theorem~\ref{theorem:commutator:monomials} implies that $\deg([m^\dagger,m])=\deg(m^\dagger)+\deg(m)-2$.
This immediately verifies (ii).
We obtain (iii) via  
	\begin{align*}
	[g_{+}(m), g_{-}(m)]=-2i [m^\dagger, m]\upto{d} 2i[a^{(\alpha,\beta)},a^{(\beta,\alpha)}]
	\upto{d} -2i\sum_k(\alpha_k^2-\beta_k^2)a^{(\alpha+\beta-\oneshalf_k,\alpha+\beta-\oneshalf_k)}.
	\end{align*}
\end{proof}

\begin{corollary}\label{commutator:comutator:corollary}
Given the polynomials $g_{+}(m), g_{-}(m) \in \hat{A}_n$ for a monomial $m \in  A_n$
with $\mdeg(m)=(\alpha,\beta)$, we assume that
$\alpha\neq\beta$. It follows that the commutator $[g_{+}(m),g_{-}(m)]$ has degree 
$\deg(g_{+}(m))+\deg(g_{-}(m))-2$ and its expansion into basis elements contains one basis element of this degree with a nonzero coefficient.
\end{corollary}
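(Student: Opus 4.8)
The plan is to read this corollary off directly from Lemma~\ref{degree:commutator:conjugate:basis:operators}, whose three parts already supply everything needed. The hypothesis $\alpha\neq\beta$ is precisely the non-commuting case: by part~(i) the operators $g_+(m)$ and $g_-(m)$ commute if and only if $\alpha=\beta$, so under our assumption $[g_+(m),g_-(m)]\neq 0$. (Note also that $g_-(m)=m^\dagger-m\neq 0$ here, since $\alpha\neq\beta$ forces $m\neq m^\dagger$ by Proposition~\ref{app:polynomials:same:order}, so both degrees are well defined.) Part~(ii) then gives the first assertion verbatim, namely $\deg([g_+(m),g_-(m)])=\deg(g_+(m))+\deg(g_-(m))-2=:d$.

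For the statement about basis elements I would start from the leading-order expression furnished by part~(iii),
\[
[g_+(m),g_-(m)]\upto{d}-2i\sum_k(\alpha_k^2-\beta_k^2)\,a^{(\alpha+\beta-\oneshalf_k,\alpha+\beta-\oneshalf_k)},
\]
and observe that every monomial on the right is diagonal, i.e.\ of the form $a^{(\tilde\gamma_k,\tilde\gamma_k)}$ with $\tilde\gamma_k:=\alpha+\beta-\oneshalf_k$. By Proposition~\ref{diagonal:element:form} each such monomial equals $\tfrac{1}{2i}g_+^{(\tilde\gamma_k,\tilde\gamma_k)}$, so the degree-$d$ part of $[g_+(m),g_-(m)]$ is already written in the $\{g_\sigma^\gamma\}$ basis of $\hat{A}_n$, the coefficient of $g_+^{(\tilde\gamma_k,\tilde\gamma_k)}$ being proportional to $\alpha_k^2-\beta_k^2$.

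It then remains only to check that this leading part is genuinely nonzero as a basis expansion, i.e.\ that the coefficients neither all vanish nor cancel. No cancellation can occur, because the multi-indices $\tilde\gamma_k=\alpha+\beta-\oneshalf_k$ are pairwise distinct for distinct $k$ (they differ by subtracting $1$ at position $k$), so the $g_+^{(\tilde\gamma_k,\tilde\gamma_k)}$ are linearly independent. Since $\alpha\neq\beta$ there is an index $k$ with $\alpha_k\neq\beta_k$; for that index $\alpha_k^2-\beta_k^2=(\alpha_k-\beta_k)(\alpha_k+\beta_k)\neq 0$ because $\alpha_k,\beta_k$ are nonnegative integers, and moreover $\alpha_k+\beta_k\geq 1$ guarantees that $\tilde\gamma_k$ is a legitimate multi-index. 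Hence that basis element appears with nonzero coefficient, which is the second assertion.

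I expect no real obstacle: the corollary is essentially a repackaging of Lemma~\ref{degree:commutator:conjugate:basis:operators}, and the single genuine verification — that the degree-$d$ contribution is nonzero as an element of the $\{g_\sigma^\gamma\}$ basis — is already forced by part~(ii) and made transparent by the distinctness of the diagonal multi-indices appearing in part~(iii).
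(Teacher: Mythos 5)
Your proof is correct and takes essentially the same route as the paper: both read the corollary directly off Lemma~\ref{degree:commutator:conjugate:basis:operators}, using part~(i) to get $[g_+(m),g_-(m)]\neq 0$ from $\alpha\neq\beta$ and part~(ii) for the degree statement. The only difference is that you spell out the basis-element claim explicitly via part~(iii), Proposition~\ref{diagonal:element:form}, and the linear independence of the diagonal terms $g_+^{(\alpha+\beta-\oneshalf_k,\alpha+\beta-\oneshalf_k)}$, whereas the paper leaves this implicit in part~(ii), since an element of degree exactly $d$ must by definition contain a degree-$d$ basis term with nonzero coefficient.
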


\begin{proof}
Lemma~\ref{degree:commutator:conjugate:basis:operators}(i)
shows that $[g_{+}(m), g_{-}(m)]=0$ if and only if $\alpha=\beta$. The condition of the corollary thus implies that $[g_{+}(m), g_{-}(m)]\neq0$. 
Then Lemma~\ref{degree:commutator:conjugate:basis:operators}(ii) concludes the proof.
\end{proof}

\subsection{Generators that commute with all other generators}
We are interested in characterizing 
a subset of the generators that commute with all other generators 
and are consequently contained in the center of the generated Lie algebra.

\begin{definition}\label{center:basis:definition}
Consider a finite set 
$\mathcal{G}:=\{ g_{\sigma_p}^{\gamma_p} \text{ for } p \in \mathcal{R} \text{ with } \abs{\mathcal{R}} < \infty \text{ and } \sigma_p\in \{+,-\} \}
\subseteq\hat{A}_n$ of generators and define 
$\mathfrak{g}:=\lie{\mathcal{G}}$ with $\gc$ its centre.
We define $\gc_{\mathcal{G}}$ as the maximal subset of $\mathcal{G}$
with $[\gc_{\mathcal{G}},\mathcal{G}]=0$.
\end{definition}

We can now proceed, and prove the following:

\begin{lemma}\label{center:basis:elements}
Let $\mathfrak{g}:=\lie{\mathcal{G}}$ be an algebra generated by the finite set 
$\mathcal{G}:=\{ g_{\sigma_p}^{\gamma_p} \text{ for } p \in \mathcal{R} \text{ with } \abs{\mathcal{R}} < \infty \text{ and } \sigma_p\in \{+,-\} \}
\subseteq\hat{A}_n$ of generators, and $\gc$ is the centre of $\g$.
 Then, $\gc_{\mathcal{G}}\subseteq\gc$.
\end{lemma}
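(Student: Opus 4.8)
The plan is to reduce the statement to the elementary but crucial fact that the \emph{centralizer of a single element is a Lie subalgebra}, and then invoke the universal property that $\g=\lie{\mathcal{G}}$ is the smallest Lie subalgebra containing $\mathcal{G}$. So first I would fix an arbitrary $c\in\gc_{\mathcal{G}}$ and form its centralizer
\[
C:=\{\,y\in\hat{A}_n : [c,y]=0\,\}.
\]
By bilinearity of the commutator, $C$ is a real linear subspace of $\hat{A}_n$, and since $\hat{A}_n$ is closed under commutators (as recalled in Subsection~\ref{subsection:skew:weyl}), $C$ sits inside $\hat{A}_n$.

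The one substantive step is to show $C$ is closed under the commutator, and this is where the Jacobi identity enters. For $y_1,y_2\in C$ I would compute
\[
[c,[y_1,y_2]]=[[c,y_1],y_2]+[y_1,[c,y_2]]=0,
\]
using $[c,y_1]=[c,y_2]=0$. Hence $[y_1,y_2]\in C$, so $C$ is a Lie subalgebra of $\hat{A}_n$. I expect no real obstacle here; the only thing to be careful about is to use this ``smallest subalgebra'' characterization of $\lie{\mathcal{G}}$ rather than attempting an explicit induction over iterated brackets, which would also work but is more cumbersome and bookkeeping-heavy.

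Next I would use the defining property of $\gc_{\mathcal{G}}$. Since $[\gc_{\mathcal{G}},\mathcal{G}]=0$ (Definition~\ref{center:basis:definition}), every generator $G\in\mathcal{G}$ satisfies $[c,G]=0$, i.e.\ $\mathcal{G}\subseteq C$. As $C$ is a Lie subalgebra containing $\mathcal{G}$ and $\g=\lie{\mathcal{G}}$ is by definition the smallest such subalgebra, it follows that $\g\subseteq C$.

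Finally, $\g\subseteq C$ means precisely that $[c,y]=0$ for every $y\in\g$. Together with $c\in\mathcal{G}\subseteq\g$, this shows that $c$ lies in the center $\gc$ of $\g$. Since $c\in\gc_{\mathcal{G}}$ was arbitrary, I conclude $\gc_{\mathcal{G}}\subseteq\gc$, as claimed.
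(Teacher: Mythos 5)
Your proof is correct. It rests on the same key computation as the paper's proof, namely the Jacobi identity
\[
[c,[y_1,y_2]]=[[c,y_1],y_2]+[y_1,[c,y_2]],
\]
but you package it differently. The paper applies this identity directly to commutators of generators, $[\hat{g},[g_{\sigma_p}^{\gamma_p},g_{\sigma_{p'}}^{\gamma_{p'}}]]=0$, and then appeals to an induction over nested brackets (which it only sketches: ``By induction, we can prove that $\gc_{\mathcal{G}}$ commutes with any element in $\g$''). You instead observe that the centralizer $C$ of a fixed $c\in\gc_{\mathcal{G}}$ is a Lie subalgebra of $\hat{A}_n$, note $\mathcal{G}\subseteq C$, and invoke the characterization of $\g=\lie{\mathcal{G}}$ as the smallest Lie subalgebra containing $\mathcal{G}$ to get $\g\subseteq C$ in one stroke. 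What your route buys is that the induction disappears entirely: the universal property absorbs all the bookkeeping over iterated brackets and linear combinations, so every step is fully rigorous as written. What the paper's route buys is concreteness — it works entirely at the level of the basis elements $g_{\sigma_p}^{\gamma_p}$ and never needs the (standard, but unstated in the paper) ``smallest subalgebra'' characterization of $\lie{\cdot}$. One small point to keep in mind: your final step needs $c\in\g$ as well as $[c,\g]=0$, which you correctly supply via $c\in\gc_{\mathcal{G}}\subseteq\mathcal{G}\subseteq\g$; without that remark the argument would only show $c$ centralizes $\g$, not that it lies in the center of $\g$.
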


\begin{proof}
We know that each element $\hat{g}_{\hat{\sigma}_k}^{\hat{\gamma}_k}\in \gc_{\mathcal{G}}$ commutes with
all elements in $\mathcal{G}$, and therefore with any linear combination of these elements. We can use the Jacobi identity to compute
$[\hat{g}_{\hat{\sigma}_k}^{\hat{\gamma}_k},
[g_{\sigma_p}^{\gamma_p},g_{\sigma_{p'}}^{\gamma_{p'}}]]=
-[g_{\sigma_p}^{\gamma_p},[g_{\sigma_{p'}}^{\gamma_{p'}},
\hat{g}_{\hat{\sigma}_k}^{\hat{\gamma}_k}]]-[g_{\sigma_{p'}}^{\gamma_{p'}},
[\hat{g}_{\hat{\sigma}_k}^{\hat{\gamma}_k},g_{\sigma_p}^{\gamma_p}]]=0$.
Therefore, $\hat{g}_{\hat{\sigma}_k}^{\hat{\gamma}_k}$ commutes with any commutator
of operators $g_{\sigma_p}^{\gamma_p},g_{\sigma_{p'}}^{\gamma_{p'}}\in\mathcal{G}$, which is itself an element of $\g$.
By induction, we can prove that $\gc_{\mathcal{G}}$ commutes with any element in $\g$, i.e., it is contained in $\gc$.
\end{proof}

The usefulness of the concept of $\gc_{\mathcal{G}}$ will be come evident in the proof of our main result
(see Theorem~\ref{main:theorem_final}). The idea is
that this set will allow us to provide one of the criteria for the verification of the finiteness of the subalgebra that can be verified
using the elements of the basis alone. 

\begin{lemma}\label{algebra:center:elements}
Let $\mathcal{G}:=\{ i{a}^\dag_k {a}_k   \text{ for } k\in \{1,\ldots,n\} \text{ and } g_{\sigma_p}^{(\alpha_p,\beta_p)} \text{ for } p \in \mathcal{R}
\text{ with } \abs{\mathcal{R}} < \infty \text{ and } \sigma_p\in \{+,-\} \}$ generate the Lie subalgebra $\mathfrak{g}=\lie{ \mathcal{G}}$ of
$\hat{A}_n$. If $g_{\sigma_p}^{(\alpha_p,\beta_p)}$ is contained in the center $\gc$ of $\g$, then $\alpha_p=\beta_p$, and therefore $g_{\sigma_p}^{(\alpha_p,\beta_p)}\in\hat{A}_n^=\cup\hat{A}^0_n$ with 
$g_{\sigma_p}^{(\alpha_p,\beta_p)}= g_+^{(\tilde{\alpha}_p,\tilde{\alpha}_p)}$.
\end{lemma}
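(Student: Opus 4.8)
The plan is to use the fact that a central element must, in particular, commute with each of the number operators $i a_k^\dagger a_k$, which are assumed to belong to $\mathcal{G}$, and then to read off the resulting constraint on the multi-degree directly from Theorem~\ref{important:one:element:and:diagonal:implies:other:two:mode:new}. The result should then fall out as an almost immediate corollary of that theorem once centrality has been invoked.

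First I would observe that if $g_{\sigma_p}^{(\alpha_p,\beta_p)} \in \gc$, then by definition of the center it commutes with every element of $\g$; in particular $[i a_k^\dagger a_k, g_{\sigma_p}^{(\alpha_p,\beta_p)}] = 0$ for all $k \in \{1,\ldots,n\}$, since each $i a_k^\dagger a_k$ is a generator in $\mathcal{G} \subseteq \g$. By Theorem~\ref{important:one:element:and:diagonal:implies:other:two:mode:new}(a1), this commutator equals $\sigma_p\,\bigl((\alpha_p)_k {-} (\beta_p)_k\bigr)\, g_{-\sigma_p}^{(\alpha_p,\beta_p)}$ for every $k$.

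Next I would argue by contradiction, assuming $\alpha_p \neq \beta_p$. Then there is at least one index $k$ with $(\alpha_p)_k \neq (\beta_p)_k$, and Theorem~\ref{important:one:element:and:diagonal:implies:other:two:mode:new}(b) guarantees that $[i a_k^\dagger a_k, g_{\sigma_p}^{(\alpha_p,\beta_p)}] \neq 0$, contradicting centrality. Hence $\alpha_p = \beta_p$. The one point that must be handled with care here is the bookkeeping of when a basis element vanishes: one has to invoke the ordering conventions ($\alpha_p \geq \beta_p$ for $\sigma_p = +$ and $\alpha_p > \beta_p$ for $\sigma_p = -$) to be certain that the partner element $g_{-\sigma_p}^{(\alpha_p,\beta_p)}$ on the right-hand side is genuinely nonvanishing whenever $\alpha_p \neq \beta_p$, which is exactly what makes part (b) applicable. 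This is the only delicate step, and it is not deep.

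Finally, from $\alpha_p = \beta_p$ I would extract the structural conclusion. Since $g_-^{(\alpha_p,\alpha_p)} = 0$ identically, a nonzero central generator must have $\sigma_p = +$, and Proposition~\ref{diagonal:element:form} gives $g_+^{(\tilde{\alpha}_p,\tilde{\alpha}_p)} = 2i\, a^{(\tilde{\alpha}_p,\tilde{\alpha}_p)}$. The balanced multi-degree then places this diagonal operator in $\hat{A}_n^0$ when $\abs{\gamma} \in \{0,2\}$ and in $\hat{A}_n^=$ when $\abs{\gamma} \geq 4$, by Definition~\ref{definition:vector:spaces:A}, so that $g_{\sigma_p}^{(\alpha_p,\beta_p)} = g_+^{(\tilde{\alpha}_p,\tilde{\alpha}_p)} \in \hat{A}_n^= \cup \hat{A}_n^0$, as claimed.
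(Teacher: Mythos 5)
Your proposal is correct and follows essentially the same route as the paper's proof: centrality forces $[i a_k^\dagger a_k, g_{\sigma_p}^{(\alpha_p,\beta_p)}]=0$ for all $k$, Theorem~\ref{important:one:element:and:diagonal:implies:other:two:mode:new} turns this into $(\alpha_p)_k=(\beta_p)_k$ for every $k$, and the diagonal form then follows from Proposition~\ref{diagonal:element:form}. Your explicit appeal to part~(b) of that theorem to rule out the degenerate case $g_{-\sigma_p}^{(\alpha_p,\beta_p)}=0$ is a small but welcome tightening of a step the paper leaves implicit.
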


\begin{proof}
Let us assume that $g_{\sigma_p}^{(\alpha_p,\beta_p)}\in\mathcal{G}$ is contained in the center $\gc$ of $\g$. In particular we must have 
$[g_{\sigma_p}^{(\alpha_p,\beta_p)},i{a}^\dag_k{a}_k]=0$ for all $k$.
An explicit computation from Theorem~\ref{important:one:element:and:diagonal:implies:other:two:mode:new} shows then
that 
$$[g_{\sigma_p}(a^{(\alpha_p,\beta_p)}),i{a}^\dag_k{a}_k]=((\alpha_p)_k {-} (\beta_p)_k)g_{-\sigma_p}(a^{(\alpha_p,\beta_p)})=0.$$
Consequently, $(\alpha_p)_k=(\beta_p)_k$ for all $k$ and $\alpha_p=\beta_p$, and thus $g_{\sigma_p}^{(\alpha_p,\beta_p)}\in\hat{A}_n^=\cup\hat{A}^0_n$. This means that 
$g_{\sigma_p}^{(\alpha_p,\beta_p)}= g_+^{(\tilde{\alpha}_p,\tilde{\alpha}_p)}$ by definition.
\end{proof}

\subsection{General commutators of generators}
We move to the final expressions for this section, which tackle the generic form of the commutator of two basis elements $g_\sigma^\gamma$ and $g_{\sigma'}^{\gamma'}$. They can be used as a omnibus lemma for calculations in the subsequent parts of this work.
Let us first introduce useful definitions that will be crucial for the following section. We recall that $\gamma^{\dagger} \equiv (\alpha,\beta)^{\dagger} := (\beta,\alpha)$.
\begin{definition}\label{definition:useful:notation}
We introduce the functions
\begin{gather*}
\Theta(\gamma) := 
\begin{cases}
\gamma & \text{if $\gamma \geq \gamma^{\dagger}$}\\
\gamma^{\dagger} & \text{otherwise}
\end{cases}
\qquad
\text{ and }
\qquad
E(\gamma) := 
\begin{cases}
+1 & \text{if $\gamma \geq \gamma^{\dagger}$}\\
-1 & \text{otherwise}
\end{cases}
\end{gather*}
which enable us to uniformly treat a few special cases.
Furthermore, we introduce the multisets
$S=
\multiset{\multy{1}{S_1}, \ldots, \multy{n}{S_n}}$
containing elements $k\in\{1,\ldots,n\}$ that will describe certain combinatoric aspects,
where $\multy{k}{S_k}$ denotes that $k$ appears with multiplicity $S_k$ in $S$ and
$k = \multy{k}{1}$. Moreover, $\abs{S}=\sum_{k=1}^n S_k$.
One example is
$\multiset{\multy{1}{3}, 2, \multy{3}{2}}$
which contains
$1$ three times, $2$ once, and $3$ twice, and all other elements do not
appear in $S$ (i.e.\ have multiplicity zero). The empty multiset is denoted by $\emptymultiset$
and, e.g., $\multiset{k,k}=\multiset{\multy{k}{2}}$. The union $S{\cup}R$ of two multisets $S$ and $R$
is specified via $(S{\cup}R)_k=S_k+R_k$.
We extend the notation $\ones_k$ to
multisets $S$ such that $\ones_S = \ones(S) := \sum_{k=1}^n S_k\, \ones_k$.
\end{definition}

This allows us to immediately prove the following:

\begin{lemma}\label{lem:omnibus}
Let $\gamma = (\alpha,\beta), \lambda = (\mu,\nu) \in \Np{2n}$ with $\alpha \geq \beta$ and $\mu \geq \nu$,
and $\sigma \in \{+,-\}$.
Moreover, $u,v \in \{1,\ldots,n\}$, $R$ and $S$ are multisets,
$d=\abs{\gamma}+\abs{\lambda}-2$, $e=\abs{\gamma}+\abs{\lambda}-2\abs{R}-2\abs{S}-2$,
and $f=2\abs{\gamma}-2\abs{R}-2\abs{S}-2$.
All sums $\sum_v c_v g_{\sigma}^{\lambda(v)}$ below are restricted to terms with  $\lambda(v) \in \Np{2n}$.
We obtain the following relations:
\begingroup
\allowdisplaybreaks
\begin{flalign*}
\begin{alignedat}{10}
& \text{(a)}\; && [g_{+}^{\gamma}, g_{-}^{\lambda}] && \upto{d}\; && 
+
[
\sum\nolimits_{v} (\alpha_v \nu_v {-} \beta_v \mu_v)\, 
g_{+}^{\gamma+\lambda-\ones_v}
]
+
[
\sum\nolimits_{v} (\beta_v \nu_v {-} \alpha_v \mu_v)\,
g_{+}^{\Theta(\gamma+\lambda^{\dagger})-\ones_v}
]; &&
\\
& \text{(b)}\; && [g_{+}^{\gamma}, g_{+}^{\lambda}] && \upto{d}\; &&
- [ \sum\nolimits_{v} (\alpha_v \nu_v {-} \beta_v \mu_v)\,
g_{-}^{\gamma+\lambda-\ones_v}
]
+
[
\sum\nolimits_{v} (\beta_v \nu_v {-} \alpha_v \mu_v)\,
E(\gamma{+}\lambda^{\dagger})\,
g_{-}^{\Theta(\gamma+\lambda^{\dagger})-\ones_v}
]; &&
\\
& \text{(c)}\; && [g_{-}^{\gamma}, g_{+}^{\lambda}] && \upto{d}\; &&
+
[
\sum\nolimits_{v} (\alpha_v \nu_v {-} \beta_v \mu_v)\,
g_{+}^{\gamma+\lambda-\ones_v}
]
-
[
\sum\nolimits_{v} (\beta_v \nu_v {-} \alpha_v \mu_v)\,
g_{+}^{\Theta(\gamma+\lambda^{\dagger})-\ones_v}
]; &&
\\
& \text{(d)}\; && [g_{-}^{\gamma}, g_{-}^{\lambda}] && \upto{d}\; &&
+ [ \sum\nolimits_{v} (\alpha_v \nu_v {-} \beta_v \mu_v)\,
g_{-}^{\gamma+\lambda-\ones_v}
]
+
[
\sum\nolimits_{v} (\beta_v \nu_v {-} \alpha_v \mu_v)\,
E(\gamma{+}\lambda^{\dagger})\,
g_{-}^{\Theta(\gamma+\lambda^{\dagger})-\ones_v}
]; &&
\\
& \text{(e)}\; && [g_{+}^{\gamma-\ones(R)}, g_{-}^{\lambda-\ones(S)}] && \upto{e}\; &&
+
\{
\sum\nolimits_{v} [(\alpha_v{-}R_v) (\nu_v{-}S_v) {-} (\beta_v{-}R_v) (\mu_v{-}S_v)]\,
g_{+}^{\gamma+\lambda-\ones(R\cup S \cup \multisetnarrow{v})}
\} &&
\\[-1mm]
&&&&&&&
+
\{
\sum\nolimits_{v} [(\beta_v{-}R_v) (\nu_v{-}S_v) {-} (\alpha_v{-}R_v) (\mu_v{-}S_v)]\,
g_{+}^{\Theta(\gamma+\lambda^{\dagger})-\ones(R\cup S \cup \multisetnarrow{v})}
\};
\\
& \text{(f)}\; && [g_{+}^{\gamma-\ones(R)}, g_{+}^{\lambda-\ones(S)}] && \upto{e}\; &&
-
\{
\sum\nolimits_{v} [(\alpha_v{-}R_v) (\nu_v{-}S_v) {-} (\beta_v{-}R_v) (\mu_v{-}S_v)]\,
g_{-}^{\gamma+\lambda-\ones(R\cup S \cup \multisetnarrow{v})}
\} &&
\\[-1mm]
&&&&&&&
+
\{
\sum\nolimits_{v} [(\beta_v{-}R_v) (\nu_v{-}S_v) {-} (\alpha_v{-}R_v) (\mu_v{-}S_v)]\,
E(\gamma{+}\lambda^{\dagger})\,
g_{-}^{\Theta(\gamma+\lambda^{\dagger})-\ones(R\cup S \cup \multisetnarrow{v})}
\}; &&
\\
& \text{(g)}\; && [g_{-}^{\gamma-\ones(R)}, g_{+}^{\lambda-\ones(S)}] && \upto{e}\; &&
+
\{
\sum\nolimits_{v} [(\alpha_v{-}R_v) (\nu_v{-}S_v) {-} (\beta_v{-}R_v) (\mu_v{-}S_v)]\,
g_{+}^{\gamma+\lambda-\ones(R\cup S \cup \multisetnarrow{v})}
\} &&
\\[-1mm]
&&&&&&&
-
\{
\sum\nolimits_{v} [(\beta_v{-}R_v) (\nu_v{-}S_v) {-} (\alpha_v{-}R_v) (\mu_v{-}S_v)]\,
g_{+}^{\Theta(\gamma+\lambda^{\dagger})-\ones(R\cup S \cup \multisetnarrow{v})}
\}; &&
\\
& \text{(h)}\; && [g_{+}^{\gamma-\ones(R)}, g_{-}^{\gamma-\ones(S)}] && \upto{f}\; &&
+
\{
\sum\nolimits_{v} [(\alpha_v{-}R_v) (\beta_v{-}S_v) {-} (\beta_v{-}R_v) (\alpha_v{-}S_v)]\,
g_{+}^{2\gamma-\ones(R\cup S \cup \multisetnarrow{v})}
\} &&
\\[-1mm]
&&&&&&&
+
\{
\sum\nolimits_{v} [(\beta_v{-}R_v) (\beta_v{-}S_v) {-} (\alpha_v{-}R_v) (\alpha_v{-}S_v)]\,
g_{+}^{\gamma+\gamma^{\dagger}-\ones(R\cup S \cup \multisetnarrow{v})}
\}; &&
\\
& \text{(i)}\; && [g_{-}^{\gamma-\ones(R)}, g_{+}^{\gamma-\ones(S)}] && \upto{f}\; &&
+
\{
\sum\nolimits_{v} [(\alpha_v{-}R_v) (\beta_v{-}S_v) {-} (\beta_v{-}R_v) (\alpha_v{-}S_v)]\,
g_{+}^{2\gamma-\ones(R\cup S \cup \multisetnarrow{v})}
\} &&
\\[-1mm]
&&&&&&&
-
\{
\sum\nolimits_{v} [(\beta_v{-}R_v) (\beta_v{-}S_v) {-} (\alpha_v{-}R_v) (\alpha_v{-}S_v)]\,
g_{+}^{\gamma+\gamma^{\dagger}-\ones(R\cup S \cup \multisetnarrow{v})}
\}.
\end{alignedat}
&&
\end{flalign*}
\endgroup
\end{lemma}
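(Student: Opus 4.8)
The plan is to reduce the entire lemma to a single leading-order formula for commutators of normal-ordered monomials, and then to track only the symmetry of $g_+$ against the antisymmetry of $g_-$. The workhorse, extracted from the highest-order computation in the proof of Corollary~\ref{theorem:commutator:monomials:one:diagonal}(b1) together with the degree bound of Theorem~\ref{theorem:commutator:monomials}, is that for any $\gamma=(\alpha,\beta)$ and $\hat\gamma=(\hat\alpha,\hat\beta)$ one has $[a^{\gamma},a^{\hat\gamma}]\upto{\abs{\gamma}+\abs{\hat\gamma}-2}\sum_{v}(\beta_v\hat\alpha_v-\alpha_v\hat\beta_v)\,a^{\gamma+\hat\gamma-\ones_v}$. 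This is an equality modulo degree below $d=\abs{\gamma}+\abs{\hat\gamma}-2$, and it stays correct even when the right-hand side vanishes, in which case the commutator simply has strictly smaller degree. All $\Theta$- and $E$-free content of the lemma is encoded in this one identity.

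First I would prove the base cases (a)--(d). Writing $g_+^{\gamma}=i(a^{\gamma^{\dagger}}+a^{\gamma})$ and $g_-^{\gamma}=a^{\gamma^{\dagger}}-a^{\gamma}$ with $\gamma^{\dagger}=(\beta,\alpha)$, bilinearity expands each $[g_\sigma^{\gamma},g_{\sigma'}^{\lambda}]$ into four monomial commutators among $a^{\gamma},a^{\gamma^{\dagger}},a^{\lambda},a^{\lambda^{\dagger}}$. Applying the workhorse to each and collecting, the four contributions organize into two conjugate pairs: one of multidegree $\gamma+\lambda$ (up to $-\ones_v$), the other of multidegree $\gamma+\lambda^{\dagger}$. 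The parity of the number of $g_+$ inputs decides the output type: a mixed commutator $[g_+,g_-]$ reassembles each pair into a symmetric $i(a^{\delta^{\dagger}}+a^{\delta})=g_+^{\delta}$, yielding (a) and (c), whereas $[g_+,g_+]$ and $[g_-,g_-]$ reassemble into an antisymmetric $a^{\delta^{\dagger}}-a^{\delta}=g_-^{\delta}$, yielding (b) and (d). The first pair has multidegree $\gamma+\lambda$, which automatically obeys the basis convention $\delta\ge\delta^{\dagger}$ because $\alpha\ge\beta$ and $\mu\ge\nu$, so no reordering is needed; only the second pair, of multidegree $\gamma+\lambda^{\dagger}=(\alpha+\nu,\beta+\mu)$, may violate the convention and must be rewritten through $\Theta$. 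For the symmetric $g_+$ outputs this reordering costs nothing, which is why (a) and (c) carry $\Theta$ but no $E$; for the antisymmetric $g_-$ outputs a swap $\delta\leftrightarrow\delta^{\dagger}$ flips a sign, producing exactly the factor $E(\gamma+\lambda^{\dagger})$ in (b) and (d).

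The remaining parts then follow formally. Parts (e)--(g) arise from (a)--(c) by the substitution $\gamma\mapsto\gamma-\ones(R)$, $\lambda\mapsto\lambda-\ones(S)$: the degree bookkeeping is consistent since $\abs{\ones(R)}=2\abs{R}$, so $d$ becomes $e=\abs{\gamma}+\abs{\lambda}-2\abs{R}-2\abs{S}-2$; the exponents combine through $\ones(R)+\ones(S)+\ones_v=\ones(R\cup S\cup\multiset{v})$; and, crucially, $\ones(R)$ and $\ones(S)$ are symmetric vectors, so they commute with $(\cdot)^{\dagger}$ and leave both $\Theta$ and $E$ untouched (each depends only on $\delta-\delta^{\dagger}$, which is unchanged by subtracting a symmetric vector). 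The stated restriction to $\lambda(v)\in\Np{2n}$ merely discards exponents that would turn negative. Finally, (h) and (i) are the specialization $\lambda=\gamma$ (so $\mu=\alpha$, $\nu=\beta$) of (e) and (g); here $\gamma+\lambda^{\dagger}=\gamma+\gamma^{\dagger}=(\alpha+\beta,\alpha+\beta)$ is itself symmetric, so $\Theta$ acts trivially and $E(\gamma+\gamma^{\dagger})=+1$, which is why neither survives in those two formulas. I expect the main obstacle to be precisely the sign-and-convention bookkeeping of the base cases: verifying that the antisymmetry of $g_-$ under conjugation reproduces exactly $E(\gamma+\lambda^{\dagger})$ while $g_+$ needs none, and confirming that every discarded term genuinely lies below the stated cutoffs $d$, $e$, and $f$.
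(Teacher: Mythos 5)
Your proposal is correct and follows essentially the same route as the paper's proof: expand each $[g_{\sigma}^{\gamma},g_{\sigma'}^{\lambda}]$ into four monomial commutators, apply the leading-order formula $[a^{\gamma},a^{\hat\gamma}]\upto{\abs{\gamma}+\abs{\hat\gamma}-2}\sum_{v}(\beta_v\hat\alpha_v-\alpha_v\hat\beta_v)\,a^{\gamma+\hat\gamma-\ones_v}$, and reassemble the conjugate pairs into $g_{\pm}$ basis elements, with $\Theta$ and $E$ arising exactly from the symmetry of $g_+$ versus the antisymmetry of $g_-$ under $(\cdot)^{\dagger}$ (the paper carries this out explicitly for (b) and declares the rest similar or immediate). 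Your additional justification of (e)--(i) by the substitution $\gamma\mapsto\gamma-\ones(R)$, $\lambda\mapsto\lambda-\ones(S)$ --- noting that $\ones(R)$, $\ones(S)$ are symmetric so $\Theta$, $E$, the basis convention, and the degree counts are all preserved --- is precisely the content of the paper's "immediate adaptations or consequences," spelled out correctly.
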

\begin{proof}
Here we prove (b) explicitly, while (a), (c), and (d)  are similar. All other cases are immediate adaptations or consequences of (a)-(c) presented.
Let us compute $[g_{+}^{\gamma}, g_{+}^{\lambda}]$ and the proof of (b) is completed by
\begin{align*}
 & [g_{+}^{\gamma}, g_{+}^{\lambda}]  =[g_{+}^{\gamma}g_{+}^{\lambda}-g_{+}^{\lambda}g_{+}^{\gamma}]
 =-[a^{\gamma^\dag},a^{\lambda^\dag}]+[a^{\gamma^\dag},a^{\lambda^\dag}]^\dag
 +[a^{\gamma^\dag},a^{\lambda}]-[a^{\gamma^\dag},a^{\lambda}]^\dag\\
&\upto{d}-[\sum\nolimits_{v}(\alpha_v\nu_v{-}\beta_v\mu_v)a^{\gamma^\dag+\lambda^\dag-\tau_v}
-\sum\nolimits_{v}(\alpha_v\nu_v{-}\beta_v\mu_v)a^{\gamma+\lambda-\tau_v}]
 -[\sum\nolimits_{v}(\alpha_v\mu_v{-}\beta_v\nu_v)a^{\gamma^\dag+\lambda-\tau_v}
 -\sum\nolimits_{v}(\alpha_v\mu_v{-}\beta_v\nu_v)a^{\gamma+\lambda^\dag-\tau_v}]\\
& \upto{d} -\sum\nolimits_{v}(\alpha_v\nu_v{-}\beta_v\mu_v)g_-^{\gamma+\lambda-\tau_v}
+E(\gamma{+}\lambda^\dag) \sum\nolimits_{v}(\beta_v\nu_v{-}\alpha_v\mu_v)g_{-}^{\Theta(\gamma+\lambda^\dag)-\tau_v}. \qedhere
\end{align*}
\end{proof}

\section{Commutator Chains in the skew-hermitian Weyl algebra\label{commutator:chain:section}}
The skew-hermitian Weyl algebra $\hat{A}_n$ is infinite dimensional.
We characterize $\hat{A}_n$ and its infinitely many elements by studying
sequences of 
elements.
We introduce two important sequences of elements in the skew-hermitian 
Weyl algebra that exemplify its structure via nested commutators given by
\emph{Commutator Chains of Type I  and II}. Both sequences
are constructed starting from two initial elements that are related in a meaningful way, 
which we call \emph{seed elements}, and subsequent elements are obtained via nested
commutators. In this sense, elements in these chains are
concatenated with previous prior ones which motives
the terminology ``chain.''  

The degree of the elements in a chain plays a key role in our work.
These chains will allow us to determine if the generated Lie algebra is finite-dimensional or not.
First, chains build Lie-algebra elements from Lie-algebra elements via nested commutators.
Therefore, if the seed
elements belong to a Lie algebra $\g$, the whole chain belongs to $\g$. We will focus on distinguishing
between whether
the elements of a chain have an ever-increasing degree or the degree is
bounded by a finite value for all elements in the chain. The key aspect will be to obtain a
set of conditions distinguish these cases.
We anticipate that each chain element admits a natural complementary one as defined in
Definition~\ref{definition:complementary:elements}. Then, 
any chain is naturally associated to a complementary one.

\subsection{Commutator Chains of Type I}
We start by introducing the first chain of elements. 

\begin{definition}\label{Commutator:Chain:Type:I}
Consider the (seed)
elements $g_\sigma^{\gamma}, g_+^{\tilde{\gamma}} \in \hat{A}_n$ 
with $\gamma=(\alpha,\beta)$, $\tilde{\gamma}=(\tilde{\alpha},\tilde{\alpha})$, and $\sigma \in\{+,-\}$.
The elements
$$C_\sigma^I(\gamma,\tilde{\gamma}):=\{z^{(\ell)}_\sigma\in\hat{A}_n:\,\,z^{(\ell+1)}_\sigma
:=[g_+^{\tilde{\gamma}},z^{(\ell)}_\sigma]\,\,\text{for all integers}\,\,\ell\geq0,\,\,\text{with}\,\,z^{(0)}_\sigma:=g_\sigma^{\gamma}\}$$
form a \emph{Commutator Chain of Type I} generated by $g_\sigma^{\gamma}$ and $g_+^{\tilde{\gamma}}$.
\end{definition}
In order to visualize how the Chain $C_\sigma^I(\gamma,\tilde{\gamma})$ is constructed we refer to the pictorial representation 
in Figure~\ref{Figure:Simple_Commutator_Chain}.
We now proceed to collect and prove important properties of this chain. Recall that we are particularly interested in obtaining information
about the degree of its elements, and this will be a priority in the following discussion.

\begin{figure}[t]
\includegraphics[width=0.9\linewidth]{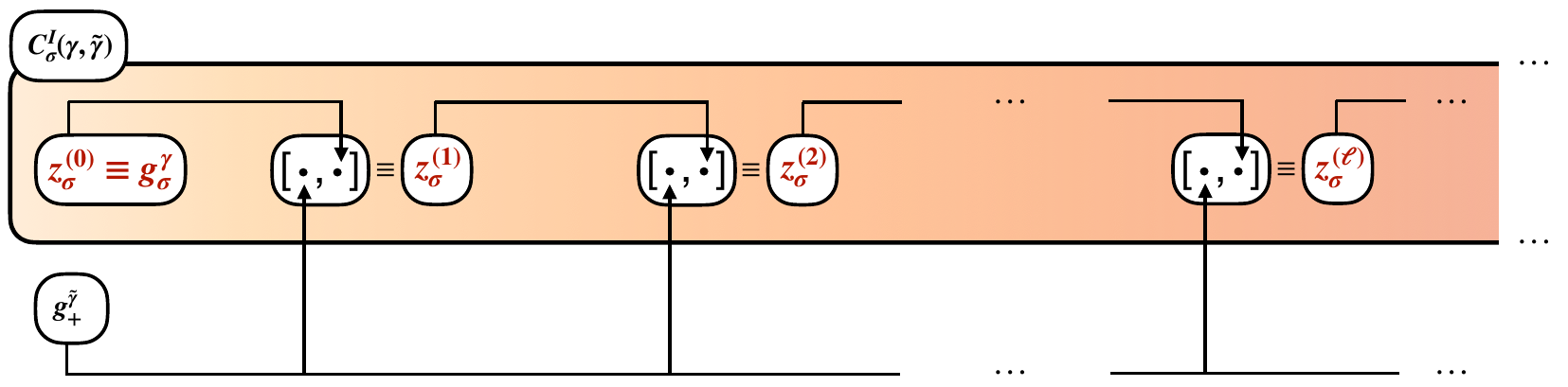}
\caption{\textbf{Commutator Chains of Type I}: Visualization of how a Commutator Chain of
Type I, which is denoted by $C_\sigma^I(\gamma,\tilde{\gamma})$ and has elements 
$z^{(\ell)}_\sigma$ (highlighted in red), is constructed starting from $z^{(0)}_\sigma:=g_\sigma^{\gamma}$ and
$g_+^{\tilde{\gamma}}$ with $\gamma=(\alpha,\beta)$,
$\tilde{\gamma}=(\tilde{\alpha},\tilde{\alpha})$,
and $\sigma \in\{+,-\}$.
The arrows indicate the location of the respective element in a commutator.
\label{Figure:Simple_Commutator_Chain}}
\end{figure}

\begin{lemma}\label{max:length:chain:lemma}
Let $C_\sigma^I(\gamma,\tilde{\gamma})$ be a
Commutator Chain of Type I with elements $z^{(\ell)}_\sigma$ and generated by $g_\sigma^{\gamma}, g_+^{\tilde{\gamma}} \in \hat{A}_n$, where $\gamma=(\alpha,\beta)$,
$\tilde{\gamma}=(\tilde{\alpha},\tilde{\alpha})$,
and $\sigma \in\{+,-\}$.
Then,
$\deg(z^{(\ell)}_\sigma)\leq2 \ell(|\tilde{\alpha}|{-}1)+|\alpha|+|\beta|$.
\end{lemma}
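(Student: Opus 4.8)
The plan is to prove the bound by induction on the chain index $\ell$, using only the commutator degree estimate from Proposition~\ref{app:polynomials}(a4) together with the fact that the diagonal seed $g_+^{\tilde{\gamma}} = 2i\, a^{\tilde{\gamma}}$ has a fixed degree. First I would record the two degrees that drive the recursion: since $\tilde{\gamma} = (\tilde{\alpha},\tilde{\alpha})$ we have $\deg(g_+^{\tilde{\gamma}}) = |\tilde{\gamma}| = 2|\tilde{\alpha}|$, while the initial element satisfies $\deg(z_\sigma^{(0)}) = \deg(g_\sigma^{\gamma}) = |\gamma| = |\alpha| + |\beta|$. This immediately settles the base case $\ell = 0$, where the claimed bound reads $|\alpha| + |\beta|$ and is in fact attained with equality.

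For the inductive step, I would assume $\deg(z_\sigma^{(\ell)}) \leq 2\ell(|\tilde{\alpha}|{-}1) + |\alpha| + |\beta|$. Since $z_\sigma^{(\ell+1)} = [g_+^{\tilde{\gamma}}, z_\sigma^{(\ell)}]$ by construction, Proposition~\ref{app:polynomials}(a4) gives $\deg(z_\sigma^{(\ell+1)}) \leq \deg(g_+^{\tilde{\gamma}}) + \deg(z_\sigma^{(\ell)}) - 2 = 2|\tilde{\alpha}| + \deg(z_\sigma^{(\ell)}) - 2$. Substituting the inductive hypothesis and regrouping $2|\tilde{\alpha}| - 2 = 2(|\tilde{\alpha}|{-}1)$, so that $2(|\tilde{\alpha}|{-}1) + 2\ell(|\tilde{\alpha}|{-}1) = 2(\ell{+}1)(|\tilde{\alpha}|{-}1)$, yields $\deg(z_\sigma^{(\ell+1)}) \leq 2(\ell{+}1)(|\tilde{\alpha}|{-}1) + |\alpha| + |\beta|$, which is precisely the asserted bound at $\ell + 1$. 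This closes the induction.

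The argument is essentially a one-line recursion, so I do not anticipate a substantial obstacle; the only point requiring a word of care is the degenerate case in which some chain element vanishes. If $z_\sigma^{(\ell)} = 0$ for some $\ell$, then every subsequent element vanishes as well, and by the convention $\deg(0) := -\infty$ the inequality holds trivially from that index onward, leaving the induction unaffected. I would also stress that the estimate is genuinely only an upper bound: as the monomial examples preceding Theorem~\ref{theorem:commutator:monomials} illustrate, the degree of a commutator can drop by more than $2$, which is exactly why equality need not persist beyond $\ell = 0$ and why the statement is phrased as an inequality rather than an exact formula.
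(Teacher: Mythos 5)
Your proof is correct and takes essentially the same route as the paper's: induction on $\ell$ using the degree bound $\deg([x,x'])\leq\deg(x)+\deg(x')-2$, with the identical regrouping $2|\tilde{\alpha}|-2 = 2(|\tilde{\alpha}|{-}1)$ to close the step. The only differences are cosmetic: you cite Proposition~\ref{app:polynomials}(a4) directly (arguably the cleaner reference, since chain elements are general polynomials rather than monomials, whereas the paper invokes Theorem~\ref{theorem:commutator:monomials}), you start the induction at $\ell=0$ instead of $\ell=1$, and you add the harmless but careful remark that vanishing chain elements cause no trouble under the convention $\deg(0)=-\infty$.
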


\begin{proof}
We start by noting, following Theorem~\ref{theorem:commutator:monomials}, that the commutator of two skew-hermitian operators
$x,x'\in\hat{A}_n$ always enjoys the property $\deg([x,x'])\leq\deg(x)+\deg(x')-2$. Therefore it is immediate to see that
$\deg(z^{(1)}_\sigma)\leq2\abs{\tilde{\alpha}}+\abs{\alpha}+\abs{\beta}-2$, since it is obtained via a single commutator. However,
$2\abs{\tilde{\alpha}}+\abs{\alpha}+\abs{\beta}-2=2\cdot1\cdot(\abs{\tilde{\alpha}}{-}1)+\abs{\alpha}+\abs{\beta}$,
which supports the claim for the case of $\ell=1$.
We now apply an induction argument with respect to the generic index $\ell$. We have already proven that the claim is true for $\ell=1$.
We then assume that
$\deg(z^{(\ell)}_\sigma)\leq2 \ell(\abs{\tilde{\alpha}}{-}1)+\abs{\alpha}+\abs{\beta}$. Therefore,
$\deg(z^{(\ell+1)}_\sigma)=\deg([g_+^{\tilde{\gamma}},z^{(\ell)}_\sigma])\leq2 \ell(\abs{\tilde{\alpha}}{-}1)
+\abs{\alpha}+\abs{\beta}+2\abs{\tilde{\alpha}}-2$,
and thus $\deg(z^{(\ell+1)}_\sigma)\leq2 (\ell{+}1)(\abs{\tilde{\alpha}}{-}1)+\abs{\alpha}+\abs{\beta}$.
\end{proof}

\begin{lemma}\label{small:algebra:property}
Let $C_\sigma^I(\gamma,\tilde{\gamma})$ be a
Commutator Chain of Type I with elements $z^{(\ell)}_\sigma$ and generated by $g_\sigma^{\gamma}, g_+^{\tilde{\gamma}} \in \hat{A}_n$, where $\gamma=(\alpha,\beta)$,
$\tilde{\gamma}=(\tilde{\alpha},\tilde{\alpha})$,
and $\sigma \in\{+,-\}$. 
Then: (a) $[i a^\dag_k a_k,z^{(\ell)}_\sigma]=\sigma(\alpha_k{-}\beta_k)z^{(\ell)}_{-\sigma}$ for all $k$; (b) If there exists at least one $k$ such that $\alpha_k\neq\beta_k$ then   $z^{(\ell)}_+$ and $z^{(\ell)}_-$ are complementary operators for all $\ell$.
\end{lemma}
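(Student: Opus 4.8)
The plan is to prove (a) by induction on the chain index $\ell$ and then obtain (b) as an essentially immediate consequence of (a) together with Definition~\ref{definition:complementary:elements}.

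For (a), the base case $\ell=0$ is handled directly by Theorem~\ref{important:one:element:and:diagonal:implies:other:two:mode:new}(a1): since $z^{(0)}_\sigma = g_\sigma^\gamma = g_\sigma(a^\gamma)$ with $\mdeg(a^\gamma)=(\alpha,\beta)$, we get $[ia^\dag_k a_k, z^{(0)}_\sigma] = \sigma(\alpha_k{-}\beta_k)\,g_{-\sigma}(a^\gamma) = \sigma(\alpha_k{-}\beta_k)\,z^{(0)}_{-\sigma}$. For the inductive step I would write $z^{(\ell+1)}_\sigma = [g_+^{\tilde{\gamma}}, z^{(\ell)}_\sigma]$ and apply the derivation form of the Jacobi identity, $[ia^\dag_k a_k,[g_+^{\tilde{\gamma}}, z^{(\ell)}_\sigma]] = [[ia^\dag_k a_k, g_+^{\tilde{\gamma}}], z^{(\ell)}_\sigma] + [g_+^{\tilde{\gamma}}, [ia^\dag_k a_k, z^{(\ell)}_\sigma]]$. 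The first term vanishes because $g_+^{\tilde{\gamma}}$ has multidegree $(\tilde{\alpha},\tilde{\alpha})$, so that $[ia^\dag_k a_k, g_+^{\tilde{\gamma}}]=\sigma(\tilde{\alpha}_k{-}\tilde{\alpha}_k)g_-^{\tilde{\gamma}}=0$ by Theorem~\ref{important:one:element:and:diagonal:implies:other:two:mode:new}(a1) (equivalently, $g_+^{\tilde{\gamma}}\in\hat{A}_n^0\oplus\hat{A}_n^=$ and Lemma~\ref{commutator:no:change:vector:space} applies). The second term, by the induction hypothesis and bilinearity, equals $\sigma(\alpha_k{-}\beta_k)[g_+^{\tilde{\gamma}}, z^{(\ell)}_{-\sigma}] = \sigma(\alpha_k{-}\beta_k)\,z^{(\ell+1)}_{-\sigma}$, which closes the induction since the chain for $-\sigma$ runs in parallel.

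For (b), part (a) shows that for every index $k$ the commutator $[ia^\dag_k a_k, z^{(\ell)}_\sigma]$ is either zero (when $\alpha_k=\beta_k$, realizing case (a) of Definition~\ref{definition:complementary:elements}) or of the form $\sigma\,\mu_k\, z^{(\ell)}_{-\sigma}$ with reciprocity constant $\mu_k := \alpha_k{-}\beta_k \neq 0$ (when $\alpha_k \neq \beta_k$, realizing case (b)). The hypothesis that at least one $k$ satisfies $\alpha_k\neq\beta_k$ then guarantees that case (b) is genuinely attained, which is exactly the structural content required of complementary operators.

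The one point I expect to require care — the main obstacle — is the standing requirement in Definition~\ref{definition:complementary:elements} that complementary operators not lie in $\hat{A}_n^0\oplus\hat{A}_n^=$. I would dispatch this by recalling that every element of $\hat{A}_n^0\oplus\hat{A}_n^=$ commutes with all $ia^\dag_k a_k$ (Lemma~\ref{commutator:no:change:vector:space}); hence, for the index $k$ with $\alpha_k\neq\beta_k$, the nonvanishing of $z^{(\ell)}_{-\sigma}$ forces $[ia^\dag_k a_k, z^{(\ell)}_\sigma]=\sigma(\alpha_k{-}\beta_k)z^{(\ell)}_{-\sigma}\neq 0$ and therefore $z^{(\ell)}_\sigma\notin\hat{A}_n^0\oplus\hat{A}_n^=$. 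A brief argument using the double commutator $[ia^\dag_k a_k,[ia^\dag_k a_k, z^{(\ell)}_\sigma]]=-(\alpha_k{-}\beta_k)^2 z^{(\ell)}_\sigma$ from Theorem~\ref{important:one:element:and:diagonal:implies:other:two:mode:new}(a2) shows that $z^{(\ell)}_\sigma$ and $z^{(\ell)}_{-\sigma}$ vanish simultaneously, so that the complementarity assertion is precisely the statement for the nonvanishing chain elements, which is the regime of interest.
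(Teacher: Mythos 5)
Your proposal is correct and takes essentially the same route as the paper's own proof: induction on $\ell$ via the Jacobi identity, with the inductive step hinging on $[ia_k^\dag a_k, g_+^{\tilde{\gamma}}]=0$ (because $\tilde{\gamma}=(\tilde{\alpha},\tilde{\alpha})$), and part (b) obtained by reading off Definition~\ref{definition:complementary:elements} with reciprocity constants $\mu_k=\alpha_k-\beta_k$. Your extra verification that the chain elements lie outside $\hat{A}_n^0\oplus\hat{A}_n^=$ and that $z^{(\ell)}_+$ and $z^{(\ell)}_-$ vanish simultaneously is a detail the paper's proof leaves implicit; it is a refinement of the same argument, not a different one.
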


\begin{proof}
The proof follows by induction from a straightforward application of Jacobi's identity. 
For the case of $\ell=1$,
\begin{align*}
[i a^\dag_k a_k,z^{(1)}_\sigma] &=[i a^\dag_k a_k,[g_+^{\tilde{\gamma}},g_\sigma^{\gamma}]]
=-[g_+^{\tilde{\gamma}},[g_\sigma^{\gamma},i a^\dag_k a_k]]-[g_\sigma^{\gamma},[i a^\dag_k a_k,g_+^{\tilde{\gamma}}]]\\
&=[g_+^{\tilde{\gamma}},[i a^\dag_k a_k,g_\sigma^{\gamma}]]
=[g_+^{\tilde{\gamma}},\sigma(\alpha_k{-}\beta_k)g_{-\sigma}^{\gamma}]
=\sigma(\alpha_k{-}\beta_k)z^{(1)}_{-\sigma}
\end{align*}
 due to Corollary~\ref{theorem:commutator:monomials:one:diagonal}.
We now assume $[i a^\dag_k a_k,z^{(\ell)}_\sigma]=\sigma(\alpha_k{-}\beta_k)z^{(\ell)}_{-\sigma}$
for $\ell$ and set to prove it for $\ell{+}1$. We have 
\begin{align*}
[i a^\dag_k a_k,z^{(\ell+1)}_\sigma]&=[i a^\dag_k a_k,[g_+^{\tilde{\gamma}},z^{(\ell)}_\sigma]]
=-[g_+^{\tilde{\gamma}},[z^{(\ell)}_\sigma,i a^\dag_k a_k]]-[z^{(\ell)}_\sigma,[i a^\dag_k a_k,g_+^{\tilde{\gamma}}]]
=[g_+^{\tilde{\gamma}},[i a^\dag_k a_k,z^{(\ell)}_\sigma]]\\
&=[g_+^{\tilde{\gamma}},\sigma(\alpha_k{-}\beta_k)z^{(\ell)}_{-\sigma}]
=\sigma(\alpha_k{-}\beta_k)[g_\sigma^{\gamma},z^{(\ell)}_{-\sigma}]
=\sigma(\alpha_k{-}\beta_k)z^{(\ell+1)}_{-\sigma},
\end{align*}
which proves (a).
Finally, for all $k$ such that $\alpha_k\neq\beta_k$, we then define $\mu_k:=\alpha_k-\beta_k\neq0$ and the requirements in Definition~\ref{definition:complementary:elements} are fulfilled. Therefore $z^{(\ell)}_+$ and $z^{(\ell)}_-$ are complementary operators for all $\ell$, which proves (b).
\end{proof}

We here provide the most important property of the Commutator Chain of Type I, namely that all of the
elements $y^{(\ell)}_\sigma$ have highest possible degree when there exists at least one index $k$ such that
$\alpha_k\neq\beta_k$ and $\tilde{\gamma}_k\neq0$ (see Theorem~\ref{Second:Chant:Type:II:lemma:MIAO} below).
We start with the following examples which
highlight the structure of the elements in the chain:

\begin{tcolorbox}[colback=orange!3!white,colframe=orange!85!black,title=EXAMPLES: Commutator Chains of Type I and their structure]
\label{Example:Simple:Commutator:Chain}
\begin{example}
Let us consider $z^{(0)}_+:=g_+^\gamma=i[a b^2+a^\dag(b^\dag)^2]$ and $z^{(0)}_-:=g_-^\gamma=a b^2-a^\dag(b^\dag)^2$, and
$g_+^{\tilde{\gamma}}=2 i a^\dag a b^\dag b$. Note that we have 
$\gamma=(\alpha,\beta)=(1,2,0,0)$ 
and $\tilde{\gamma}=(\tilde{\alpha},\tilde{\alpha})=(1,1,1,1)$.
We compute 
\begin{equation*}
z^{(1)}_+=[g_+^{\tilde{\gamma}},z^{(0)}_+]\simeq 2[a b^\dag b^3-a^\dag (b^\dag)^3 b]+4[a^\dag a^2 b^2-(a^\dag)^2 a (b^\dag)^2],
\end{equation*}
where the lower-degree term $[a b^2- a^\dag (b^\dag)^2]$ has been removed. It is clear here that
$\deg(z^{(1)}_+)=5=\deg(g_+^{\tilde{\gamma}})+\deg(z^{(0)}_+)-2$, and this is the highest possible degree. We proceed to compute
$$z^{(2)}_+=[g_+^{\tilde{\gamma}},z^{(1)}_+]\simeq
-4i [a \bd{2}b^4+a^\dag \bd{4} b^2]-16i[a^\dag a^2 b^\dag b^3+\ad{2} a \bd{3} b]-16i[\ad{2} a^3 b^2+\ad{3} a^2 \bd{2}].$$
Also, in this case, the degree is maximal as $\deg(z^{(2)}_+)=7=\deg(g_+^{\tilde{\gamma}})+\deg(z^{(1)}_+)-2$,.
Furthermore we see that, in the expansion of $z^{(2)}_+$, the highest-degree terms have the generic form
$g_+^{(2\tilde{\alpha}+\alpha-2\oneshalf_a,2\tilde{\alpha}+\beta-2\oneshalf_a)}$,
$g_+^{(2\tilde{\alpha}+\alpha-\oneshalf_a-\oneshalf_b,2\tilde{\alpha}+\beta-\oneshalf_a-\oneshalf_b)}$ and
$g_+^{(2\tilde{\alpha}+\alpha-2\oneshalf_b,2\tilde{\alpha}+\beta-2\oneshalf_b)}$, and they are linearly independent.
They are the only possible independent generators in $\hat{A}_2$ that can be obtained starting from a generator of the form
$g_\sigma^{(\alpha,\beta)}$ where two arbitrary operators $a_k$ and two arbitrary
operators $a^\dag_k$ are removed for indices $k$
such that $\alpha_k\neq\beta_k$.
\end{example}
\begin{example}
We now consider $z^{(0)}_+:=g_+^\gamma=ia^\dag a( b{+}b^\dag)$ and $z^{(0)}_-:=g_-^\gamma=a^\dag a( b{-}b^\dag)$, and
$g_+^{\tilde{\gamma}}=2 i a^\dag a b^\dag b$. Note that we have 
$\gamma=(\alpha,\beta)=(1,1,1,0)$ 
and $\tilde{\gamma}=(\tilde{\alpha},\tilde{\alpha})=(1,1,1,1)$. It is then immediate to verify that 
$z^{(1)}_+\simeq
-2\ad{2} a^2( b^\dag{-}b)$ and
$z^{(2)}_+\simeq
-4i \ad{3} a^3(b^\dag{+}b)$. As above, the highest-degree elements (which is a single element in this case) are the only possible
independent generators in $\hat{A}_2$ that can be obtained starting from a generator of the form
$g_\sigma^{(\alpha,\beta)}$ where
two arbitrary operators $a_k$ and two arbitrary operators $a^\dag_k$ are removed for indices $k$ such that
$\alpha_k\neq\beta_k$. In fact, we see that not a single operator $a,a^\dag$ has been removed as $\alpha_1=\beta_1$ for this case.
\end{example}
\end{tcolorbox}

To streamline the presentation, we employ the notation of Definition~\ref{definition:useful:notation}. 
We start by detailing the induction step in the Commutator Chain of of Type I:

\begin{proposition}\label{max:length:simple:chain:theorem}
Let $\gamma = (\alpha,\beta),\lambda^{\ell}=(\mu^{\ell},\nu^{\ell}) \in \Np{2n}$ with $\alpha \geq \beta$
and $\mu^{\ell} \geq \nu^{\ell}$; let $\tilde{\gamma} = (\tilde{\alpha},\tilde{\alpha})$;
$\sigma \in \{+,-\}$ and $\ell \in \Np{}$. Below $u \in \{1,\ldots,n\}$, $S$ are multisets,
$\mathcal{I}^{\ell}$ is a set of multisets such that $S \in \mathcal{I}^{\ell}$ implies $\abs{S}=k^{(\ell)}$,
and $d^{(\ell)}=\abs{\lambda^{\ell}}-2k^{(\ell)}$.
Consider the expansion
\begin{align}\label{eq:chain:two:induction:start:good}
z_{\sigma}^{(\ell)} \upto{d^{(\ell)}}\;
\sigma^{\ell}\, (-1)^{\ell(\ell+1)/2}\, 2^\ell
\sum_{S \in \mathcal{I}^{\ell}} c_S^{(\ell)} g_{(-1)^{\ell} \sigma}^{\lambda^{\ell}-\ones(S)}
\end{align}
with $c_S^{(\ell)}\in \R$ and $\lambda^{\ell}-\ones(S) \in \Np{2n}$ for $S \in  \mathcal{I}^{\ell}$.
The base case is given by $\lambda^{0}=\gamma$, $z_{\pm}^{(0)} = g_{\pm}^{\gamma}$,
$\mathcal{I}^{0}=\{ \emptymultiset \}$, and $c_{\emptymultiset}^{(0)}=1$. We introduce the coefficient
$c(\ell,u)=\tilde{\alpha}_u(\nu_u^{\ell}{-}\mu_u^{\ell})$ and obtain
\begin{align}\label{eq:chain:two:induction:finish:good}
z_{\sigma}^{(\ell+1)} & = [ g_+^{\tilde{\gamma}},z_{\sigma}^{(\ell)}]
\upto{d^{(\ell+1)}}\, \sigma^{\ell+1}\, (-1)^{(\ell+1)(\ell+2)/2}\, 2^{\ell+1}
\sum_{S,u} c_{S}^{(\ell)}c(\ell,u)\, g_{(-1)^{\ell+1} \sigma}^{\lambda^{\ell}+\tilde{\gamma}-\ones(S\cup\multisetnarrow{u})}
\nonumber
\\
&\upto{d^{(\ell+1)}}\;
\sigma^{\ell+1}\, (-1)^{(\ell+1)(\ell+2)/2}\, 2^{\ell+1}\sum_{U\in\mathcal{I}^{\ell+1}}
c_U^{(\ell+1)}\, g_{(-1)^{\ell+1}\sigma}^{\lambda^{\ell+1}-\ones(U)} \;\text{ with }\;
c_U^{(\ell+1)}
=
\hspace{-2mm} \sum_{\genfrac{}{}{0pt}{}{S,u}{U = S \cup\multisetnarrow{u}}} \hspace{-2mm}
c_{S}^{(\ell)} c(\ell,u) \in \R,
\end{align}
where $\lambda^{\ell+1}=\lambda^{\ell}+\tilde{\gamma}$,
$\mathcal{I}^{\ell+1}$ contains all multisets $U = S \cup\multiset{u}$ such that $\lambda^{\ell+1}-\ones(U) \in \Np{2n}$,
$U \in \mathcal{I}^{\ell+1}$ implies $\abs{U}=k^{(\ell+1)}$, $d^{(\ell+1)}=\abs{\lambda^{\ell}}+\abs{\tilde{\gamma}}-2k^{(\ell)}-2=\abs{\lambda^{\ell+1}}-2k^{(\ell+1)}$, and $k^{(\ell+1)}=k^{(\ell)}+1$.
\end{proposition}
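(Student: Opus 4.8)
The plan is to prove the stated identity \eqref{eq:chain:two:induction:finish:good} as the induction step, computing $z_\sigma^{(\ell+1)}=[g_+^{\tilde\gamma},z_\sigma^{(\ell)}]$ by first discarding the sub-leading part of $z_\sigma^{(\ell)}$ and then applying the omnibus Lemma~\ref{lem:omnibus} to each surviving term. First I would record the degree bookkeeping: from $\abs{\lambda^{\ell+1}}=\abs{\lambda^\ell}+2\abs{\tilde\alpha}$ and $k^{(\ell+1)}=k^{(\ell)}+1$ one gets $d^{(\ell+1)}=d^{(\ell)}+2\abs{\tilde\alpha}-2$. Writing $z_\sigma^{(\ell)}=w+r$, where $w$ is the leading part displayed in \eqref{eq:chain:two:induction:start:good} and $\deg(r)<d^{(\ell)}$, Proposition~\ref{app:polynomials}(a4) gives $\deg([g_+^{\tilde\gamma},r])\le 2\abs{\tilde\alpha}+\deg(r)-2<d^{(\ell+1)}$. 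Hence $z_\sigma^{(\ell+1)}\upto{d^{(\ell+1)}}[g_+^{\tilde\gamma},w]$, and I may work term by term on the finitely many $g_{(-1)^\ell\sigma}^{\lambda^\ell-\ones(S)}$ with $S\in\mathcal{I}^\ell$.

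Next I would evaluate a single term $[g_+^{\tilde\gamma},g_{(-1)^\ell\sigma}^{\lambda^\ell-\ones(S)}]$. The decisive simplification is that $\tilde\gamma=(\tilde\alpha,\tilde\alpha)$ is diagonal: setting the first-argument multi-index to $(\tilde\alpha,\tilde\alpha)$ in Lemma~\ref{lem:omnibus}, both coefficient expressions $\alpha_v\nu_v-\beta_v\mu_v$ and $\beta_v\nu_v-\alpha_v\mu_v$ collapse to the single value $\tilde\alpha_v(\nu_v-\mu_v)$, and since $\ones(S)$ is subtracted from both halves of $\lambda^\ell=(\mu^\ell,\nu^\ell)$ the shift cancels, leaving exactly $c(\ell,v)=\tilde\alpha_v(\nu^\ell_v-\mu^\ell_v)$, independent of $S$. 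Because the difference $\mu^\ell-\nu^\ell=\alpha-\beta$ is preserved along the chain and is strictly positive (in the seed) hence preserved under the diagonal shift by $\ones(S)$, I would check that $\tilde\gamma+(\lambda^\ell-\ones(S))^\dagger<\bigl(\tilde\gamma+(\lambda^\ell-\ones(S))^\dagger\bigr)^\dagger$, so that $\Theta(\tilde\gamma+\lambda^\dagger)=\tilde\gamma+\lambda$ and $E(\tilde\gamma+\lambda^\dagger)=-1$. Consequently \emph{both} summands in the omnibus formula carry the identical multi-index $\tilde\gamma+\lambda^\ell-\ones(S)-\ones_v=\lambda^{\ell+1}-\ones(S\cup\multiset{v})$ and the identical output type $g_{(-1)^{\ell+1}\sigma}$, so they reinforce and produce a factor $2$ (promoting $2^\ell$ to $2^{\ell+1}$).

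I would then assemble the sign and regroup. In the branch $(-1)^\ell\sigma=+1$ the term is a $g_+$ and part (b) applies, yielding $-2\sum_v c(\ell,v)\,g_-^{\lambda^{\ell+1}-\ones(S\cup\multiset{v})}$; in the branch $(-1)^\ell\sigma=-1$ part (a) applies (which has no $E$-factor), yielding $+2\sum_v c(\ell,v)\,g_+^{\lambda^{\ell+1}-\ones(S\cup\multiset{v})}$. Multiplying by the stored prefactor $\sigma^\ell(-1)^{\ell(\ell+1)/2}2^\ell$, the extra sign in each branch merges into $\sigma^{\ell+1}(-1)^{(\ell+1)(\ell+2)/2}2^{\ell+1}$; this rests only on the arithmetic identity $(\ell+1)(\ell+2)/2-\ell(\ell+1)/2=\ell+1$ together with the defining branch relation $(-1)^\ell\sigma=\pm1$. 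This establishes the first equivalence in \eqref{eq:chain:two:induction:finish:good}; the second is pure bookkeeping, regrouping the double sum over pairs $(S,u)$ by the value $U=S\cup\multiset{u}$ (which automatically has $\abs{U}=k^{(\ell)}+1=k^{(\ell+1)}$), setting $c_U^{(\ell+1)}=\sum_{S\cup\multiset{u}=U}c_S^{(\ell)}c(\ell,u)$, and defining $\mathcal{I}^{\ell+1}$ by keeping only those $U$ with $\lambda^{\ell+1}-\ones(U)\in\Np{2n}$, i.e.\ with all exponents nonnegative.

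I expect the main obstacle to be Step~2: confirming that the ``direct'' summand and the ``$\Theta$-reflected'' summand of the omnibus formula genuinely coincide in multi-index \emph{and} in sign (after accounting for $E$), so that they add rather than partially cancel. This is exactly where the diagonality of $\tilde\gamma$ and the preservation of the strict order $\mu^\ell>\nu^\ell$ under the shift $-\ones(S)$ are used, and it is what converts $2^\ell$ into $2^{\ell+1}$. The remaining prefactor/sign algebra is the secondary hazard but is routine once the identity $(\ell+1)(\ell+2)/2-\ell(\ell+1)/2=\ell+1$ is in hand; the degenerate case $\mu^\ell=\nu^\ell$ is harmless since then $c(\ell,u)=0$ and both sides vanish.
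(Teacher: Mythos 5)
Your proposal is correct and follows essentially the same route as the paper's proof: both discard the sub-leading remainder, apply the omnibus Lemma~\ref{lem:omnibus} term by term (you invoke parts (a)/(b) on the shifted multi-index, while the paper invokes the equivalent shifted versions (e)/(f)), exploit the diagonality of $\tilde{\gamma}$ so that the direct and $\Theta$-reflected summands carry identical multi-index and type and hence double (promoting $2^\ell$ to $2^{\ell+1}$), and assemble the prefactor via $(\ell+1)(\ell+2)/2=\ell(\ell+1)/2+(\ell+1)$. Your explicit degree bookkeeping (via Proposition~\ref{app:polynomials}(a4)) and your handling of the degenerate case $\mu^{\ell}=\nu^{\ell}$ merely make explicit what the paper leaves implicit.
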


\begin{proof}
We assume that $z^{(\ell)}_{\sigma}$ is given by \eqref{eq:chain:two:induction:start:good} and we compute
	$z^{(\ell+1)}_{\sigma}$.
We apply Lemma~\ref{lem:omnibus}(e) and (f), and we obtain
\begin{align*}
	z^{(\ell+1)}_{\sigma} & =[g_+^{\tilde{\gamma}},z^{(\ell)}_{\sigma}]
\upto{d^{(\ell+1)}} \sigma^{\ell}\, (-1)^{\ell(\ell+1)/2}\, 2^\ell
	\sum_{S \in \mathcal{I}^{\ell}} c_S^{(\ell)}\bigl[g_+^{\tilde{\gamma}}, g_{(-1)^{\ell} \sigma}^{\lambda^{\ell}-\ones(S)}\bigr]\\
	&\upto{d^{(\ell+1)}}\, \sigma^{\ell} (-1)^{\ell(\ell+1)/2}\, 2^{\ell}
	\sum_{S,u} c_S^{(\ell)}
	(-1)^{\ell+1}\sigma\, 2
	\tilde{\alpha}_u(\nu_u^\ell{-}\mu_u^\ell)\,g_{(-1)^{\ell+1} \sigma}^{\lambda^{\ell}+\tilde{\gamma}-\ones(S\cup\multiset{u})}\\	
	&\upto{d^{(\ell+1)}} \sigma^{\ell+1} (-1)^{(\ell+1)(\ell+2)/2}\, 2^{\ell+1}
	\sum_{S,u} c_S^{(\ell)}\tilde{\alpha}_u(\nu_u^\ell{-}\mu_u^\ell)\,g_{(-1)^{\ell+1} 
	\sigma}^{\lambda^{\ell}+\tilde{\gamma}-\ones(S\cup\multiset{u})}.
\end{align*}
The expressions for $\lambda^{\ell+1}$, $U$, $k^{(\ell+1)}$ and $d^{(\ell+1)}$ can be read off directly from the equations above.
\end{proof}

We summarize a few immediate consequences relevant for the proof of our main results
in Theorem~\ref{main:theorem_final} below:

\begin{proposition}\label{small:helping:lemma:coefficientsMIAO}
We employ the notation of Proposition~\ref{max:length:simple:chain:theorem}. We obtain
that (a)~$k^{(\ell)}=\ell$; (b)~$\mu^\ell=k^{(\ell)}\tilde{\alpha} + \alpha$; (c)~$\nu^\ell=k^{(\ell)}\tilde{\alpha} + \beta$.
(d) The expansion of $z_{\sigma}^{(\ell)}$
is given by  Eq.~\eqref{eq:chain:two:induction:start:good} for suitable sets of multisets $\mathcal{I}^{\ell}$.
\end{proposition}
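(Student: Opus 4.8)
The plan is to prove all four claims by a single induction on $\ell$, relying entirely on the recursions established in Proposition~\ref{max:length:simple:chain:theorem}; no further commutator computations are required, since all the real work already resides in that proposition's omnibus step.

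First I would establish (a). Reading off the base case, the set $\mathcal{I}^0=\{\emptymultiset\}$ contains only the empty multiset, and since every $S\in\mathcal{I}^0$ satisfies $\abs{S}=k^{(0)}$, we must have $k^{(0)}=\abs{\emptymultiset}=0$. The recursion $k^{(\ell+1)}=k^{(\ell)}+1$ supplied by Proposition~\ref{max:length:simple:chain:theorem} then yields $k^{(\ell)}=\ell$ immediately by induction. Next I would handle (b) and (c) together. The base case $\lambda^{0}=\gamma=(\alpha,\beta)$ gives $\mu^{0}=\alpha$ and $\nu^{0}=\beta$. The recursion $\lambda^{\ell+1}=\lambda^{\ell}+\tilde{\gamma}$ with $\tilde{\gamma}=(\tilde{\alpha},\tilde{\alpha})$ splits componentwise into $\mu^{\ell+1}=\mu^{\ell}+\tilde{\alpha}$ and $\nu^{\ell+1}=\nu^{\ell}+\tilde{\alpha}$. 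Solving these and substituting $k^{(\ell)}=\ell$ from part~(a) produces $\mu^{\ell}=\alpha+\ell\tilde{\alpha}=k^{(\ell)}\tilde{\alpha}+\alpha$ and $\nu^{\ell}=\beta+\ell\tilde{\alpha}=k^{(\ell)}\tilde{\alpha}+\beta$, which are exactly (b) and (c).

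For (d) I would argue that the expansion \eqref{eq:chain:two:induction:start:good} propagates through the induction. The base case $z_{\pm}^{(0)}=g_{\pm}^{\gamma}$ matches \eqref{eq:chain:two:induction:start:good} with $\mathcal{I}^{0}=\{\emptymultiset\}$ and $c_{\emptymultiset}^{(0)}=1$, precisely as stipulated there. The induction step is then the content of Proposition~\ref{max:length:simple:chain:theorem}: assuming $z_{\sigma}^{(\ell)}$ admits the expansion \eqref{eq:chain:two:induction:start:good}, the proposition shows $z_{\sigma}^{(\ell+1)}=[g_{+}^{\tilde{\gamma}},z_{\sigma}^{(\ell)}]$ admits the same form with the updated data $\lambda^{\ell+1}$, $\mathcal{I}^{\ell+1}$, and coefficients $c_{U}^{(\ell+1)}$. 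Hence (d) holds for all $\ell$.

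The only point requiring a moment of care—and the closest thing to an obstacle—is the well-definedness of $k^{(\ell)}$, namely that every multiset in $\mathcal{I}^{\ell}$ shares the same cardinality, so that the invariant $S\in\mathcal{I}^{\ell}\Rightarrow\abs{S}=k^{(\ell)}$ even makes sense. This is guaranteed by the construction: $\mathcal{I}^{\ell+1}$ is obtained by adjoining a single element $\multiset{u}$ to each $S\in\mathcal{I}^{\ell}$, so if all $S\in\mathcal{I}^{\ell}$ have cardinality $k^{(\ell)}$ then all $U\in\mathcal{I}^{\ell+1}$ have cardinality $k^{(\ell)}+1$, and the invariant is preserved throughout the induction. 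With this observation in place, (a)--(d) are all routine consequences of the recursions.
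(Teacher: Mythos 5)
Your proposal is correct and follows essentially the same route as the paper's own proof: induction on $\ell$ using the recursions $k^{(\ell+1)}=k^{(\ell)}+1$ and $\lambda^{\ell+1}=\lambda^{\ell}+\tilde{\gamma}$ with their base cases, with (d) obtained by induction from Proposition~\ref{max:length:simple:chain:theorem}. Your extra remark on the well-definedness of $k^{(\ell)}$ (all multisets in $\mathcal{I}^{\ell}$ sharing one cardinality) is a nice touch of rigor that the paper leaves implicit, but it does not change the argument.
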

\begin{proof}
Given the recursive relation $k^{(\ell+1)}=k^{(\ell)}+1$ and the initial condition $k^{(0)}=0$ for the quantities $k^{(\ell)}$, 
we verify (a) by induction. 
Similarly, the relation $\lambda^{\ell+1}=\lambda^{\ell}+\tilde{\gamma}$
and the initial condition $\lambda^0=\gamma$ immediately imply (b) and (c).
The claim (d) follows by induction from Proposition~\ref{max:length:simple:chain:theorem}.
\end{proof}

We identify a condition under which the elements 
$z_{\sigma}^{(\ell)}$ in the chain are always nonzero, and $z_{\sigma}^{(\ell)}$ has highest degree:

\begin{theorem}\label{Second:Chant:Type:II:lemma:MIAO} 
We employ the notation of Proposition~\ref{max:length:simple:chain:theorem}. We assume that there is at least one index
$k$ such that $\alpha_k\neq\beta_k$ and $\tilde{\alpha}_k\neq0$. 
Then: (a) the coefficient
$c_{\multiset{k^{\times \ell}}}^{(\ell)}=(\tilde{\alpha}_k(\alpha_k{-}\beta_k))^\ell\neq 0$ and the elements
$z_{\sigma}^{(\ell)}$ contain a unique term of the form $g_{(-1)^\ell\sigma}^{\lambda^{\ell}-\ell \tau_k}$ in their expansion
(where $S_k=\abs{S}=k^{(\ell)}=\ell \geq1$). (b)~We also have
$\deg(z_{\sigma}^{(\ell)})=2\ell(\abs{\tilde{\alpha}}{-}1)+\abs{\alpha}+\abs{\beta}$.
\end{theorem}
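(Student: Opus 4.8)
The plan is to prove both claims by tracking, inside the leading‑degree expansion \eqref{eq:chain:two:induction:start:good} of $z_\sigma^{(\ell)}$, the single ``extremal'' basis element obtained by repeatedly removing operators from the one distinguished mode. First I would fix, using the hypothesis, an index $k$ with $\alpha_k \neq \beta_k$ and $\tilde{\alpha}_k \neq 0$, and single out the multiset $S = \multiset{k^{\times \ell}}$, for which $\ones(S) = \ell\,\ones_k$ and $\abs{S} = k^{(\ell)} = \ell$ by Proposition~\ref{small:helping:lemma:coefficientsMIAO}(a).

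For part~(a) I would induct on $\ell$ using the coefficient recursion $c_U^{(\ell+1)} = \sum_{S,u:\,U = S\cup\multisetnarrow{u}} c_S^{(\ell)}\, c(\ell,u)$ from Proposition~\ref{max:length:simple:chain:theorem}. The decisive combinatorial point is that $U = \multiset{k^{\times(\ell+1)}}$ admits exactly one decomposition $U = S\cup\multiset{u}$, namely $u = k$ and $S = \multiset{k^{\times\ell}}$, so the recursion collapses to $c_{\multiset{k^{\times(\ell+1)}}}^{(\ell+1)} = c_{\multiset{k^{\times\ell}}}^{(\ell)}\, c(\ell,k)$. By Proposition~\ref{small:helping:lemma:coefficientsMIAO}(b),(c) we have $\mu^\ell = \ell\tilde{\alpha}+\alpha$ and $\nu^\ell = \ell\tilde{\alpha}+\beta$, whence $c(\ell,k) = \tilde{\alpha}_k(\nu_k^\ell - \mu_k^\ell) = \tilde{\alpha}_k(\beta_k - \alpha_k)$, a constant independent of $\ell$ that is nonzero exactly because $\tilde{\alpha}_k \neq 0$ and $\alpha_k \neq \beta_k$. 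Starting from $c_{\emptymultiset}^{(0)}=1$, the telescoping product gives $c_{\multiset{k^{\times\ell}}}^{(\ell)} = \bigl(\tilde{\alpha}_k(\alpha_k-\beta_k)\bigr)^\ell$ up to the overall sign $(-1)^\ell$, and in particular it is nonzero. Uniqueness of the associated term then follows because $S \mapsto \ones(S) = \sum_j S_j\,\ones_j$ is injective (the $\ones_j$ have pairwise disjoint supports on the mode positions), so distinct multisets in $\mathcal{I}^\ell$ index distinct, linearly independent basis elements $g_{(-1)^\ell\sigma}^{\lambda^\ell - \ones(S)}$; no multiset other than $\multiset{k^{\times\ell}}$ yields the multidegree $\lambda^\ell - \ell\ones_k$, so no cancellation can affect its coefficient.

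For part~(b) I would combine the a~priori upper bound with the lower bound delivered by~(a). Lemma~\ref{max:length:chain:lemma} already gives $\deg(z_\sigma^{(\ell)}) \le 2\ell(\abs{\tilde{\alpha}}-1)+\abs{\alpha}+\abs{\beta}$. For the reverse inequality, every term of the degree‑$d^{(\ell)}$ part in \eqref{eq:chain:two:induction:start:good} has multidegree $\lambda^\ell - \ones(S)$ with $\abs{S}=\ell$, hence degree $\abs{\lambda^\ell}-2\ell$; since $\abs{\lambda^\ell}=\abs{\mu^\ell}+\abs{\nu^\ell}=2\ell\abs{\tilde{\alpha}}+\abs{\alpha}+\abs{\beta}$, this equals $2\ell(\abs{\tilde{\alpha}}-1)+\abs{\alpha}+\abs{\beta}$. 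By~(a) the term with $S=\multiset{k^{\times\ell}}$ survives with nonzero coefficient and is linearly independent from the remaining terms of the same degree, so the degree‑$d^{(\ell)}$ component of $z_\sigma^{(\ell)}$ is nonzero. Therefore $\deg(z_\sigma^{(\ell)}) = d^{(\ell)} = 2\ell(\abs{\tilde{\alpha}}-1)+\abs{\alpha}+\abs{\beta}$, which also shows $z_\sigma^{(\ell)}\neq 0$ for every $\ell$.

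The main obstacle is the non‑cancellation argument in~(a): one must exclude the possibility that other multisets $S$ contribute to, and perhaps cancel, the coefficient of the basis element of multidegree $\lambda^\ell - \ell\ones_k$. This is precisely where the injectivity of $S\mapsto\ones(S)$ together with the uniqueness of the decomposition $\multiset{k^{\times(\ell+1)}}=S\cup\multiset{u}$ do the real work; everything else is bookkeeping induction resting on Propositions~\ref{max:length:simple:chain:theorem} and~\ref{small:helping:lemma:coefficientsMIAO} and the degree bound of Lemma~\ref{max:length:chain:lemma}.
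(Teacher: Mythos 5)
Your proof is correct and follows essentially the same route as the paper's: induction via the coefficient recursion of Proposition~\ref{max:length:simple:chain:theorem}, the unique decomposition $\multiset{k^{\times(\ell+1)}}=\multiset{k^{\times\ell}}\cup\multiset{k}$ together with linear independence of the basis elements indexed by distinct multisets, and then reading the degree off the surviving term. Your explicit handling of the sign (the factor $(-1)^\ell$ coming from $\nu_k^\ell-\mu_k^\ell=\beta_k-\alpha_k$) is in fact slightly more careful than the paper's own bookkeeping, and it does not affect the nonvanishing claim or the degree formula.
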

\begin{proof}
We prove (a) by induction.
If $k$ is an index such that such that $\alpha_k\neq\beta_k$ and $\tilde{\alpha}_k\neq0$, then
$c_{\multiset{k^{\times 1}}}^{(1)}=\tilde{\alpha}_k(\nu_k^0{-}\mu_k^0)=\tilde{\alpha}_k(\alpha_k{-}\beta_k)\neq 0$.
Let us assume that $c_{\multiset{k^{\times \ell}}}^{(\ell)}=(\tilde{\alpha}_k(\alpha_k{-}\beta_k))^\ell$.
In general, $c_U^{(\ell+1)}=\sum_{S,u} c_{S}^{(\ell)} c(\ell,u)$, with $U = S \cup\multisetnarrow{u}$.
We focus on the
coefficient $c_{\multiset{k^{\times (\ell+1)}}}^{(\ell+1)}$, with $|U|=U_k=S_k{+}1=\ell{+1}$.
Then, we have 
$c_{\multiset{k^{\times (\ell+1)}}}^{(\ell+1)}= c_{\multiset{k^{\times \ell}}}^{(\ell)} c(\ell,u_k)=
(\tilde{\alpha}_k(\alpha_k{-}\beta_k))^\ell c(\ell,u_k)=(\tilde{\alpha}_k(\alpha_k{-}\beta_k))^{\ell+1}\neq 0$ as claimed.
This settles (a) as there is a unique nonzero coefficient $c_{\multiset{k^{\times \ell}}}^{(\ell)}$
and term $g_{(-1)^\ell\sigma}^{\lambda^{\ell}-\ell\tau_k}$
for the multiset $S=\multiset{k^{\times \ell}}$. 
Proposition~\ref{small:helping:lemma:coefficientsMIAO} implies
$\deg(z_{\sigma}^{(\ell)})=\deg(g_{(-1)^\ell\sigma}^{\lambda^{\ell}-\ell\tau_k})=\abs{\lambda^{\ell}}-2\ell
=\abs{\mu^\ell}+
\abs{\nu^\ell}-2\ell=
\ell \abs{\tilde{\alpha}}+
\abs{\alpha}+\ell \abs{\tilde{\alpha}}+\abs{\beta}-2\ell=
2\ell(\abs{\tilde{\alpha}}{-}1)+\abs{\alpha}+\abs{\beta}$, which settles (b).
\end{proof}

\subsection{Commutator Chains of Type II}
Similarly as in the previous subsection, we now introduce
\emph{Commutator Chains of Type II} which also consists of elements of the skew-hermitian Weyl algebra.

\begin{definition}\label{Commutator:Chain:Type:II}
Consider the (seed) elements
$g_+^{\gamma}, g_-^{\gamma} \in \hat{A}_n$ 
with $\gamma=(\alpha,\beta)$. The elements
$$C_\sigma^{II}(\gamma):=\{y^{(\ell)}_\sigma\in\hat{A}_n:\,\,y^{(\ell+1)}_\sigma
:=[y^{(\ell)}_\sigma,[y^{(\ell)}_+,y^{(\ell)}_-]]\,\,\text{for all integers}\,\,\ell\geq0,\,\,\text{with}\,\,y^{(0)}_\sigma:=g_\sigma^{\gamma}\}$$
form a \emph{Commutator Chain of Type II} generated from $g_{+}^{\gamma}$ and $g_{-}^{\gamma}$.
\end{definition}
The commutator structure of the chain $C_\sigma^{II}(\gamma)$ is visualized
in Figure~\ref{Figure:Commutator_Chain}. We now collect useful properties. As before,
we determine the degree of its elements.

\begin{figure}[t]
\includegraphics[width=0.8\linewidth]{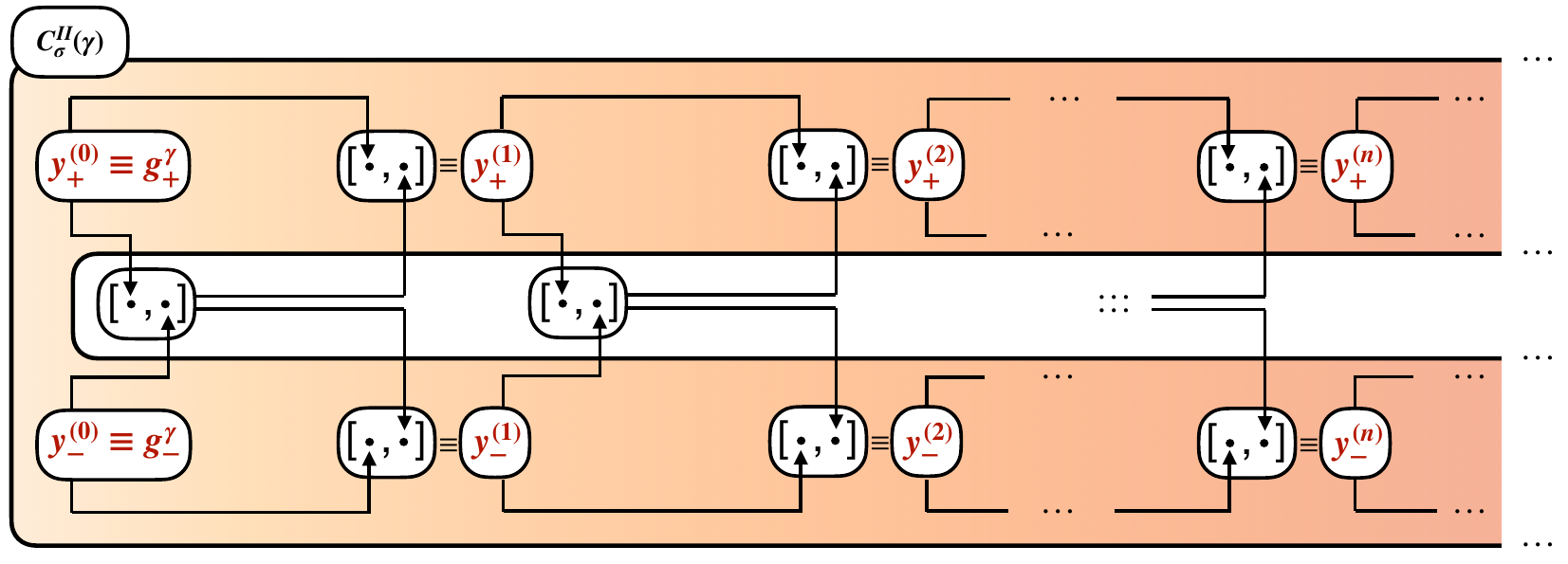}
\caption{\textbf{Commutator Chains of Type II}: Visualization of a Commutator Chain of
Type II which is denoted by $C_\sigma^{II}(\gamma)$ and has elements $y^{(\ell)}_\sigma$ (highlighted in red).
It is constructed starting from $y^{(0)}_\sigma:=g_\sigma^{\gamma} \in \hat{A}_n$
with $\gamma=(\alpha,\beta)$ and $\sigma \in\{+,-\}$.
\label{Figure:Commutator_Chain}}
\end{figure}

\begin{lemma}\label{chain:lemma}
Let $C_\sigma^{II}(\gamma)$ be a
Commutator Chain of Type II with elements $y^{(\ell)}_\sigma$ and generated by $g_+^{\gamma}, g_-^\gamma \in \hat{A}_n$, where $\gamma=(\alpha,\beta)$ and $\sigma \in\{+,-\}$. Then: (a) $[i a^\dag_k a_k,y^{(\ell)}_\sigma]=\sigma(\alpha_k{-}\beta_k)y^{(\ell)}_{-\sigma}$ for all $k$;  (b) If there exists at least one $k$ such that $\alpha_k\neq\beta_k$ then  $y^{(\ell)}_+$ and $y^{(\ell)}_-$ are complementary operators for all $\ell$.
\end{lemma}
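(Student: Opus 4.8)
The plan is to prove part (a) by induction on $\ell$ and then deduce part (b) exactly as in the proof of Lemma~\ref{small:algebra:property} for the Type~I chain. The base case $\ell=0$ is immediate: since $y^{(0)}_\sigma = g_\sigma^\gamma = g_\sigma(a^\gamma)$ with $\mdeg(a^\gamma)=\gamma=(\alpha,\beta)$, Theorem~\ref{important:one:element:and:diagonal:implies:other:two:mode:new}(a1) gives precisely $[ia^\dag_k a_k, y^{(0)}_\sigma]=\sigma(\alpha_k{-}\beta_k)\,y^{(0)}_{-\sigma}$ for every $k$ and both signs $\sigma$ simultaneously.

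For the inductive step I would first record an auxiliary fact about the \emph{inner} commutator $w^{(\ell)}:=[y^{(\ell)}_+,y^{(\ell)}_-]$, namely that it commutes with every $ia^\dag_k a_k$. This is the one genuinely new ingredient compared to the Type~I chain, where one always commutes with the fixed diagonal operator $g_+^{\tilde{\gamma}}$, whereas here the outer bracket is taken against $w^{(\ell)}$ itself. Assuming (a) at level $\ell$, a single application of the Jacobi identity gives $[ia^\dag_k a_k, w^{(\ell)}] = [y^{(\ell)}_+,[ia^\dag_k a_k,y^{(\ell)}_-]] - [y^{(\ell)}_-,[ia^\dag_k a_k,y^{(\ell)}_+]]$; substituting the inductive hypothesis turns the right-hand side into $-(\alpha_k{-}\beta_k)\big([y^{(\ell)}_+,y^{(\ell)}_+]+[y^{(\ell)}_-,y^{(\ell)}_-]\big)=0$. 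Alternatively, once (b) is known at level $\ell$ this also follows from Lemma~\ref{pre:chain:lemma}(a), but I prefer the direct Jacobi computation so that the induction for (a) stays self-contained and independent of (b).

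With this in hand I would compute $[ia^\dag_k a_k, y^{(\ell+1)}_\sigma]=[ia^\dag_k a_k,[y^{(\ell)}_\sigma, w^{(\ell)}]]$ by Jacobi again. The term involving $[ia^\dag_k a_k, w^{(\ell)}]$ vanishes by the auxiliary fact, leaving $-[w^{(\ell)},[ia^\dag_k a_k, y^{(\ell)}_\sigma]] = \sigma(\alpha_k{-}\beta_k)\,[y^{(\ell)}_{-\sigma}, w^{(\ell)}]$, which equals $\sigma(\alpha_k{-}\beta_k)\,y^{(\ell+1)}_{-\sigma}$ by the defining recursion $y^{(\ell+1)}_{-\sigma}=[y^{(\ell)}_{-\sigma}, w^{(\ell)}]$. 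This closes the induction and establishes (a); note that, just as for Type~I, the relevant weight $\alpha_k{-}\beta_k$ stays constant along the chain because $w^{(\ell)}$ is a sum of diagonal operators (Lemma~\ref{pre:chain:lemma}(b)), so bracketing against it preserves $\alpha_k{-}\beta_k$ in each basis element.

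Finally, part (b) follows verbatim as for Type~I: if some index $k$ has $\alpha_k\neq\beta_k$, set $\mu_k:=\alpha_k{-}\beta_k\neq0$; then (a) shows that for each $k$ the bracket $[ia^\dag_k a_k, y^{(\ell)}_\sigma]$ is either $0$ (when $\alpha_k=\beta_k$) or $\sigma\mu_k y^{(\ell)}_{-\sigma}$ (when $\alpha_k\neq\beta_k$), with at least one index of the latter type, so the requirements of Definition~\ref{definition:complementary:elements} are satisfied and $y^{(\ell)}_+, y^{(\ell)}_-$ are complementary for every $\ell$. I expect the only real obstacle to be the auxiliary vanishing $[ia^\dag_k a_k, w^{(\ell)}]=0$; everything else is a faithful transcription of the Type~I argument with the fixed generator $g_+^{\tilde{\gamma}}$ replaced by the level-dependent diagonal element $w^{(\ell)}$.
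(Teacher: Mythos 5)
Your proof is correct and follows essentially the same route as the paper's: induction on $\ell$ with the Jacobi identity, where the key step is that $[ia_k^\dagger a_k,[y^{(\ell)}_+,y^{(\ell)}_-]]$ vanishes because, under the inductive hypothesis, it reduces to the self-commutators $[y^{(\ell)}_+,y^{(\ell)}_+]$ and $[y^{(\ell)}_-,y^{(\ell)}_-]$, which are zero. The only cosmetic difference is that you isolate this vanishing as a separate auxiliary fact and anchor the induction at $\ell=0$ via Theorem~\ref{important:one:element:and:diagonal:implies:other:two:mode:new}(a1), whereas the paper performs the same Jacobi computation inline and starts the base case at $\ell=1$ by invoking Lemma~\ref{pre:chain:lemma}.
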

\begin{proof}
We know from Theorem~\ref{important:one:element:and:diagonal:implies:other:two:mode:new} that
$[i a^\dag_k a_k,g_\sigma^{\gamma}]=\sigma(\alpha_k{-}\beta_k)g_{-\sigma}^{\gamma}$.
We then check that the claim is true for $y^{(1)}_\sigma:=[y^{(0)}_\sigma,[y^{(0)}_+,y^{(0)}_-]]$.
In fact, we use Jacobi's identity to show that 
\begin{align*}
[i a^\dag_k a_k, y^{(1)}_\sigma]&=[i a^\dag_k a_k, [y^{(0)}_\sigma,[y^{(0)}_+,y^{(0)}_-]]]
=[i a^\dag_k a_k, [g_{\sigma}^{\gamma},[g_{+}^{\gamma},g_{-}^{\gamma}]]]
=-[g_{\sigma}^{\gamma}, [[g_{+}^{\gamma},g_{-}^{\gamma}],i a^\dag_k a_k]]-[[g_{+}^{\gamma},g_{-}^{\gamma}],
[i a^\dag_k a_k,g_{\sigma}^{\gamma}]]\\
&=-\sigma(\alpha_k{-}\beta_k)[[g_{+}^{\gamma},g_{-}^{\gamma}], g_{-\sigma}^{\gamma}]
=\sigma(\alpha_k{-}\beta_k)y^{(1)}_{-\sigma}, 
\end{align*}
where the term $[g_{\sigma}^{\gamma}, [[g_{+}^{\gamma},g_{-}^{\gamma}],i a^\dag_k a_k]]$ vanishes due to Lemma~\ref{pre:chain:lemma}. 
We have shown the case of $\ell=1$. Now, we assume that
$[i a^\dag_k a_k, y^{(\ell)}_\sigma]=\sigma(\alpha_k-\beta_k)y^{(\ell)}_{-\sigma}$ is true for $\ell$. We then use Jacobi's identity and get
\begin{align*}
[i a^\dag_k a_k, y^{(n+1)}_\sigma]&=[i a^\dag_k a_k, [y^{(\ell)}_\sigma,[y^{(\ell)}_+,y^{(\ell)}_-]]]
=-[y^{(\ell)}_\sigma, [[y^{(\ell)}_+,y^{(\ell)}_-],i a^\dag_k a_k]]-[[y^{(\ell)}_+,y^{(\ell)}_-], [i a^\dag_k a_k,y^{(\ell)}_\sigma]]\\
&=[y^{(\ell)}_\sigma, [[y^{(\ell)}_-,i a^\dag_k a_k],y^{(\ell)}_+]]+[y^{(\ell)}_\sigma, [[i a^\dag_k a_k,y^{(\ell)}_+],y^{(\ell)}_-]]-
\sigma(\alpha_k{-}\beta_k)[[y^{(\ell)}_+,y^{(\ell)}_-], y^{(\ell)}_{-\sigma}]\\
&=(\alpha_k{-}\beta_k)[y^{(\ell)}_\sigma, [y^{(\ell)}_+,y^{(\ell)}_+]]+(\alpha_k{-}\beta_k)[y^{(\ell)}_\sigma, [y^{(\ell)}_-,y^{(\ell)}_-]]-
\sigma(\alpha_k{-}\beta_k)[[y^{(\ell)}_+,y^{(\ell)}_-], y^{(\ell)}_{-\sigma}]\\
&=-\sigma(\alpha_k{-}\beta_k)[[y^{(\ell)}_+,y^{(\ell)}_-], y^{(\ell)}_{-\sigma}]
=\sigma(\alpha_k{-}\beta_k)y^{(\ell+1)}_{-\sigma}. 
\end{align*}
This settles (a).
Finally, for all $k$ such that $\alpha_k\neq\beta_k$, we then define $\mu_k:=\alpha_k-\beta_k\neq0$ and the requirements in Definition~\ref{definition:complementary:elements} are fulfilled. Therefore $y^{(\ell)}_+$ and $y^{(\ell)}_-$ are complementary operators for all $\ell$, which proves (b).
\end{proof}

\begin{lemma}\label{max:length:chain:lemma:II}
Let $C_\sigma^{II}(\gamma)$ be a
Commutator Chain of Type II with elements $y^{(\ell)}_\sigma$ and generated by $g_+^{\gamma}, g_-^\gamma \in \hat{A}_n$, where $\gamma=(\alpha,\beta)$ and $\sigma \in\{+,-\}$.
Then, $\deg(y^{(\ell)}_\sigma)\leq3^\ell(\abs{\gamma}{-}2)+2$ for all integers $\ell\geq0$.
\end{lemma}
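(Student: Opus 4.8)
The plan is to proceed by induction on $\ell$, bounding $\deg(y_\sigma^{(\ell)})$ simultaneously for both signs $\sigma\in\{+,-\}$, and relying solely on the subadditivity of the degree under commutators, namely $\deg([g,\hat g])\leq\deg(g)+\deg(\hat g)-2$ from Proposition~\ref{app:polynomials}(a4). Writing $d_\ell:=3^\ell(\abs{\gamma}{-}2)+2$ for the claimed bound, the base case $\ell=0$ is immediate: since $y_\sigma^{(0)}=g_\sigma^\gamma$ is a linear combination of the normal-ordered monomials $a^\gamma$ and $a^{\gamma^\dagger}$, both of degree $\abs{\gamma}$, we have $\deg(y_\sigma^{(0)})=\abs{\gamma}=3^0(\abs{\gamma}{-}2)+2=d_0$.

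For the inductive step I would assume $\deg(y_\pm^{(\ell)})\leq d_\ell$ and estimate $y_\sigma^{(\ell+1)}=[y_\sigma^{(\ell)},[y_+^{(\ell)},y_-^{(\ell)}]]$ by applying the degree bound twice. First, the inner commutator obeys $\deg([y_+^{(\ell)},y_-^{(\ell)}])\leq\deg(y_+^{(\ell)})+\deg(y_-^{(\ell)})-2\leq 2d_\ell-2$; then the outer commutator yields $\deg(y_\sigma^{(\ell+1)})\leq\deg(y_\sigma^{(\ell)})+\deg([y_+^{(\ell)},y_-^{(\ell)}])-2\leq d_\ell+(2d_\ell-2)-2=3d_\ell-4$. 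The observation that closes the induction is that the proposed closed form solves this recursion exactly: $3d_\ell-4=3(3^\ell(\abs{\gamma}{-}2)+2)-4=3^{\ell+1}(\abs{\gamma}{-}2)+2=d_{\ell+1}$. Hence $\deg(y_\sigma^{(\ell+1)})\leq d_{\ell+1}$ for both $\sigma$, which completes the induction and the proof.

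The argument is essentially mechanical, so there is no substantial obstacle; the only points requiring care are bookkeeping ones. One must carry both signs together in the induction hypothesis, because the recursion defining $y_\sigma^{(\ell+1)}$ couples $y_+^{(\ell)}$ and $y_-^{(\ell)}$ through the inner commutator, so a bound on a single sign would not suffice. One should also note that the estimates remain valid when a chain element vanishes: under the convention $\deg(0)=-\infty$ the inequalities hold trivially in the extended reals. Unlike the nonvanishing statements for chains, here only an \emph{upper} bound is sought, so the bare degree estimate from Proposition~\ref{app:polynomials}(a4) is enough and none of the complementarity or leading-coefficient machinery needs to be invoked.
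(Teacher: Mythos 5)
Your proof is correct and takes essentially the same route as the paper's own proof: induction on $\ell$, applying the commutator degree bound $\deg([g,\hat{g}])\leq\deg(g)+\deg(\hat{g})-2$ once to the inner and once to the outer commutator, and observing that the closed form $3^{\ell}(|\gamma|-2)+2$ solves the resulting recursion $d_{\ell+1}=3d_{\ell}-4$. If anything, your version is slightly more careful than the paper's, since you make explicit that the induction hypothesis must be carried for both signs $\sigma=\pm$ simultaneously and that the bound survives vanishing chain elements under the convention $\deg(0)=-\infty$.
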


\begin{proof}
The case of $\ell=0$ is immediate.
Following Theorem~\ref{theorem:commutator:monomials}, the commutator of two
skew-hermitian operators $x,x'\in\hat{A}_n$ always observes $\deg([x,x'])\leq\deg(x)+\deg(x')-2$.
Thus $\deg(y^{(1)}_\sigma)\leq3\abs{\gamma}-4$, since $y^{(1)}_\sigma$ is obtained via two commutators. However,
$3\abs{\gamma}-4=3^1(\abs{\gamma}{-}2)+2$, which settles $\ell=1$.
We assume that the statement holds for $\ell$, i.e., $\deg(y^{(\ell)}_\sigma)\leq3^\ell(\abs{\gamma}{-}2)+2$.
Therefore, $\deg(y^{(\ell+1)}_\sigma)=\deg([y^{(\ell)}_\sigma,[y^{(\ell)}_+,y^{(\ell)}_-]])\leq
3\deg(y^{(\ell)}_\sigma)-4\leq3(3^\ell(\abs{\gamma}{-}2)+2)-4=
3^{\ell+1}(\abs{\gamma}{-}2)+6-4=3^{\ell+1}(\abs{\gamma}{-}2)+2$, which proves the statement.
\end{proof}

It is instructive how the bound $\deg(y^{(\ell)}_\sigma)\leq3^\ell(\abs{\gamma}-2)+2$ can be found. 
We directly see that $\deg(y^{(1)}_\sigma)\leq3\abs{\gamma}-4$, $\deg(y^{(2)}_\sigma)\leq9\abs{\gamma}-16$,
and $\deg(y^{(3)}_\sigma)\leq27\abs{\gamma}-52$.
Thus we conjecture that $\deg(y^{(\ell)}_\sigma)\leq3^\ell|\gamma|-b_\ell$, where the coefficients $b_\ell$ satisfy the
recurrence relation $b_\ell=3b_{\ell-1}+4$ with $b_0=0$. This implies that
$b_\ell=4\sum_{k=0}^{\ell-1}3^k=2(3^\ell{-}1)$.
We put all of this together and we obtain
$\deg(y^{(\ell)}_\sigma)\leq3^\ell \abs{\gamma}-b_\ell=3^\ell \abs{\gamma}-2(3^\ell{-}1)=3^\ell(\abs{\gamma}{-}2)+2$, as claimed.

Lemma~\ref{max:length:chain:lemma:II} 
shows that $\abs{\gamma}>2$ is necessary 
for a Commutator Chain of Type II to contain elements of ever-increasing degree.
This is consistent with the fact that the subalgebra $\hat{A}_n^2\subseteq\hat{A}_n$ generated by
all quadratic operators coincides with the Lie algebra $\mathfrak{sp}(2n,\mathbb{R})$ of the (finite-dimensional) symplectic group
$\mathrm{Sp}(2n,\mathbb{R})$ of dimension $n(2n{+}1)$, see
\cite{Adesso:Ragy:2014}. 
We now analyze the following recursive expressions, which forms the basis to build steps from 
$\ell$ to $\ell{+}1$ in a Commutator Chain of Type II.

\begin{proposition}\label{eq:chain:two:induction:iteration}
Let $\gamma = (\alpha,\beta),\lambda^{\ell}=(\mu^{\ell},\nu^{\ell}) \in \Np{2n}$ with $\alpha \geq \beta$,
$\mu^{\ell} \geq \nu^{\ell}$,
$\sigma \in \{+,-\}$, and $\ell \in \Np{}$. Below $u,v \in \{1,\ldots,n\}$, $R,S,T,U$ are multisets,
$\mathcal{I}^{\ell}$ is a set of multisets such that $R \in \mathcal{I}^{\ell}$ implies $\abs{R}=k^{(\ell)}$,
and $d^{(\ell)}=\abs{\lambda^{\ell}}-2k^{(\ell)}$.
Consider 
\begin{align}\label{eq:chain:two:induction:start}
y_{\sigma}^{(\ell)} \upto{d^{(\ell)}}\;
\sigma^{\ell}
\sum_{S \in \mathcal{I}^{\ell}} c_S^{(\ell)} g_{(-1)^{\ell} \sigma}^{\lambda^{\ell}-\ones(S)}
\end{align}
with $c_S^{(\ell)}\in \R$ and $\lambda^{\ell}-\ones(S) \in \Np{2n}$ for $S \in  \mathcal{I}^{\ell}$.
The base case is given by $\lambda^{0}=\gamma$, $y_{\pm}^{(0)} = g_{\pm}^{\gamma}$,
$\mathcal{I}^{0}=\{ \emptymultiset \}$, and $c_{\emptymultiset}^{(0)}=1$. We introduce the coefficient
$c(\ell,R,S,T,u,v)=2 (\nu_u^{\ell}{-}\mu_u^{\ell})[\mu_u^{\ell}{+}\nu_u^{\ell}{-}(R{\cup}S)_u]
(\nu_v^{\ell}{-}\mu_v^{\ell})[\mu_v^{\ell}{+}\nu_v^{\ell}{-}(R{\cup}S)_v{-}\delta_{u,v}]$ and obtain
\begin{align}\label{eq:chain:two:induction:finish}
y_{\sigma}^{(\ell+1)} & = [y_{\sigma}^{(\ell)}, [ y_{+}^{(\ell)}, y_{-}^{(\ell)}]]
\upto{d^{(\ell+1)}}\; \sigma^{\ell+1}\sum_{R,S,T \in \mathcal{I}^{\ell}} \sum_{u,v} (-1)^{\ell+1}
c_{T}^{(\ell)}c_{R}^{(\ell)}c_{S}^{(\ell)} c(\ell,R,S,T,u,v)\,
g_{(-1)^{\ell+1}\sigma}^{2\lambda^{\ell}+[\lambda^{\ell}]^{\dagger}-\ones(T\cup R\cup S \cup \multisetnarrow{u,v})}
\nonumber
\\
&\upto{d^{(\ell+1)}}\;
\sigma^{\ell+1}\sum_{U\in\mathcal{I}^{\ell+1}}
c_U^{(\ell+1)}\, g_{(-1)^{\ell+1}\sigma}^{\lambda^{\ell+1}-\ones(U)} \;\text{ with }\;
c_U^{(\ell+1)}
=
(-1)^{\ell+1} \hspace{-4mm}
\sum_{\genfrac{}{}{0pt}{}{R,S,T,u,v}{U = T\cup R \cup S \cup\multisetnarrow{u,v}}} \hspace{-4mm}
c_{T}^{(\ell)}c_{R}^{(\ell)}c_{S}^{(\ell)} c(\ell,R,S,T,u,v) \in \R,
\end{align}
where $\lambda^{\ell+1}=2\lambda^{\ell}+[\lambda^{\ell}]^{\dagger}$,
$\mathcal{I}^{\ell+1}$ contains all multisets $U = T\cup R \cup S \cup\multiset{u,v}$ such that $\lambda^{\ell+1}-\ones(U) \in \Np{2n}$,
$U \in \mathcal{I}^{\ell+1}$ implies $\abs{U}=k^{(\ell+1)}$, $d^{(\ell+1)}=3\abs{\lambda^{\ell}}-6k^{(\ell)}-4=\abs{\lambda^{\ell+1}}-2k^{(\ell+1)}$, and $k^{(\ell+1)}=3k^{(\ell)}+2$.
\end{proposition}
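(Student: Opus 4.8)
The plan is to prove the recursion by expanding both factors of $y_{\sigma}^{(\ell+1)} = [y_{\sigma}^{(\ell)}, [y_{+}^{(\ell)}, y_{-}^{(\ell)}]]$ using the inductive hypothesis \eqref{eq:chain:two:induction:start} together with the omnibus commutator formulas of Lemma~\ref{lem:omnibus}, and then reading off $\lambda^{\ell+1}$, $k^{(\ell+1)}$, $d^{(\ell+1)}$, and the coefficients. Throughout I would use the stability property recorded after Definition~\ref{equivalence:relation}, namely that $x \upto{d} x_1$ implies $[x,x'] \upto{d} [x_1,x']$, to justify working only with highest-degree parts at each stage and to control the degree of the dropped remainders.

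First I would treat the inner commutator $[y_{+}^{(\ell)}, y_{-}^{(\ell)}]$. Substituting \eqref{eq:chain:two:induction:start} for $\sigma = \pm$ and using bilinearity produces a double sum over $R,S \in \mathcal{I}^{\ell}$ of commutators $[g_{(-1)^{\ell}}^{\lambda^{\ell}-\ones(R)}, g_{-(-1)^{\ell}}^{\lambda^{\ell}-\ones(S)}]$, which are exactly of the shape handled by Lemma~\ref{lem:omnibus}(h) for even $\ell$ and Lemma~\ref{lem:omnibus}(i) for odd $\ell$. These generate two families of highest-degree terms, of multidegrees $2\lambda^{\ell}-\ones(R\cup S\cup\multisetnarrow{v})$ and $\lambda^{\ell}+(\lambda^{\ell})^{\dagger}-\ones(R\cup S\cup\multisetnarrow{v})$, both at the common degree $f=2\abs{\lambda^{\ell}}-4k^{(\ell)}-2$. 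The crucial simplification is that $y_{+}^{(\ell)}$ and $y_{-}^{(\ell)}$ are complementary by Lemma~\ref{chain:lemma}(b) (assuming some $\alpha_k \neq \beta_k$; the case $\alpha = \beta$ is degenerate since then $g_-^{\gamma}=0$), so Lemma~\ref{pre:chain:lemma}(b) forces $[y_{+}^{(\ell)}, y_{-}^{(\ell)}]$ to be a linear combination of diagonal operators $g_{+}^{(\tilde\alpha,\tilde\alpha)}$. Hence the non-diagonal $2\lambda^{\ell}$ family must cancel identically, and only the diagonal $\lambda^{\ell}+(\lambda^{\ell})^{\dagger}$ family survives; simplifying the coefficient $(\beta_v{-}R_v)(\beta_v{-}S_v)-(\alpha_v{-}R_v)(\alpha_v{-}S_v)$ from Lemma~\ref{lem:omnibus}(h) under $\alpha\to\mu^{\ell}$, $\beta\to\nu^{\ell}$ yields $(\nu_v^{\ell}{-}\mu_v^{\ell})[\mu_v^{\ell}{+}\nu_v^{\ell}{-}(R\cup S)_v]$.

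Next I would commute $y_{\sigma}^{(\ell)}$, which by \eqref{eq:chain:two:induction:start} is a sum over $T\in\mathcal{I}^{\ell}$ of terms $g_{(-1)^{\ell}\sigma}^{\lambda^{\ell}-\ones(T)}$, with the diagonal operator just obtained. Because the diagonal multidegree $\delta := \lambda^{\ell}+(\lambda^{\ell})^{\dagger}-\ones(R\cup S\cup\multisetnarrow{v})$ satisfies $\delta=\delta^{\dagger}$, the two sums in Lemma~\ref{lem:omnibus}(b)/(c) (equivalently the shifted forms (f)/(g)) have identical target multidegree $(\lambda^{\ell}-\ones(T))+\delta-\ones_u$ — here one uses $\mu^{\ell}\geq\nu^{\ell}$ to evaluate $\Theta$ as the identity — and therefore merge into a single term of multidegree $2\lambda^{\ell}+(\lambda^{\ell})^{\dagger}-\ones(T\cup R\cup S\cup\multisetnarrow{u,v})$ and coefficient proportional to $(\nu_u^{\ell}{-}\mu_u^{\ell})[\mu_u^{\ell}{+}\nu_u^{\ell}{-}(R\cup S)_u{-}\delta_{u,v}]$. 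This identifies $\lambda^{\ell+1}=2\lambda^{\ell}+(\lambda^{\ell})^{\dagger}$ and the multiset $U=T\cup R\cup S\cup\multisetnarrow{u,v}$, whence $k^{(\ell+1)}=\abs{U}=3k^{(\ell)}+2$ and, since $\abs{\lambda^{\ell+1}}=3\abs{\lambda^{\ell}}$, also $d^{(\ell+1)}=\abs{\lambda^{\ell+1}}-2k^{(\ell+1)}=3\abs{\lambda^{\ell}}-6k^{(\ell)}-4$.

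Finally I would assemble signs and constants. The two commutator applications each contribute a factor from Lemma~\ref{lem:omnibus}, the overall $(-1)^{\ell}$ from expanding $y_{-}^{(\ell)}$, and the bookkeeping $\sigma^{\ell}\cdot\sigma=\sigma^{\ell+1}$; collecting these reproduces the global factor $(-1)^{\ell+1}$ and the product coefficient $c(\ell,R,S,T,u,v)$ as displayed, after which grouping all contributions with a common $U$ defines $c_U^{(\ell+1)}=(-1)^{\ell+1}\sum_{U=T\cup R\cup S\cup\multisetnarrow{u,v}} c_T^{(\ell)}c_R^{(\ell)}c_S^{(\ell)}\,c(\ell,R,S,T,u,v)$. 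The main obstacle I anticipate is precisely the inner-commutator cancellation of the non-diagonal $2\lambda^{\ell}$ terms: carrying this out by a direct combinatorial identity would be unpleasant, so the essential idea is to bypass it through the complementarity of $y_{\pm}^{(\ell)}$ and Lemma~\ref{pre:chain:lemma}(b). The remaining difficulty is careful degree bookkeeping, ensuring that no term of degree $\geq d^{(\ell+1)}$ is lost when passing through the nested $\upto{}$ relations; this is controlled by the degree bounds built into Lemma~\ref{lem:omnibus} (the truncation errors of degree $<f$ and $<d^{(\ell)}$ in the two slots combine to degree $<d^{(\ell+1)}$) together with the stability of $\upto{}$ under commutators.
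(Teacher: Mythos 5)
Your proposal is correct, and its architecture coincides with the paper's proof: both expand the inner commutator $[y_+^{(\ell)},y_-^{(\ell)}]$ through the inductive hypothesis and Lemma~\ref{lem:omnibus}(h)/(i), discard the non-diagonal $2\lambda^{\ell}$-family, commute the result against the expansion of $y_\sigma^{(\ell)}$ via Lemma~\ref{lem:omnibus}(f)/(g), merge the two output families using that the inner commutator is diagonal (so $\Theta$ acts as the identity and $E=+1$ thanks to $\mu^{\ell}\geq\nu^{\ell}$), and read off $\lambda^{\ell+1}=2\lambda^{\ell}+(\lambda^{\ell})^{\dagger}$, $k^{(\ell+1)}=3k^{(\ell)}+2$, $d^{(\ell+1)}$, and $c_U^{(\ell+1)}$, with the same degree control via Proposition~\ref{app:polynomials}(a4). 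The one step you handle genuinely differently is the cancellation of the non-diagonal family. The paper kills it in one line: the coefficient $(\mu^{\ell}_v{-}R_v)(\nu^{\ell}_v{-}S_v)-(\nu^{\ell}_v{-}R_v)(\mu^{\ell}_v{-}S_v)$ is antisymmetric under exchanging $R$ and $S$, while $c_R^{(\ell)}c_S^{(\ell)}$ and the exponent $2\lambda^{\ell}-\ones(R\cup S\cup\multisetnarrow{v})$ are symmetric, so the double sum vanishes identically; the ``unpleasant combinatorial identity'' you hoped to avoid is in fact a one-line symmetry observation, valid with no case distinction. Your alternative---complementarity of $y_{\pm}^{(\ell)}$ from Lemma~\ref{chain:lemma}(b) (proved before this proposition, so there is no circularity) plus Lemma~\ref{pre:chain:lemma}(b) forcing the exact commutator $[y_+^{(\ell)},y_-^{(\ell)}]$ to be a combination of diagonal elements---is also valid and more structural, but to be airtight it needs three pieces of bookkeeping the paper's argument does not: (i) the degenerate case $\alpha=\beta$, which you note; (ii) that the $2\lambda^{\ell}$-family really is non-diagonal, i.e.\ $\mu^{\ell}\neq\nu^{\ell}$, which holds because $\mu^{\ell}-\nu^{\ell}=\alpha-\beta$ is preserved by the recursion $\lambda^{\ell+1}=2\lambda^{\ell}+(\lambda^{\ell})^{\dagger}$ but is not automatic mid-induction; and (iii) linear independence of basis elements of distinct multidegree at degree exactly $f$, so that diagonality of the exact commutator forces the non-diagonal degree-$f$ contribution to sum to zero rather than be absorbed elsewhere. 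Each route buys something: the paper's is shorter and self-contained within the computation, while yours explains why the cancellation is structurally inevitable and reuses machinery already established for the chains.
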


\begin{proof}
We apply Lemma~\ref{lem:omnibus}(a) to $[g_{+}^{\gamma}, g_{-}^{\gamma}]$ and obtain $[g_{+}^{\gamma}, g_{-}^{\gamma}] \upto{d}\; 
\sum\nolimits_{v} (\beta_v^2 {-} \alpha_v^2)\,
g_{+}^{\gamma+\gamma^{\dagger}-\ones_v}$ where $d:=2\abs{\gamma}-2$.
We consider $[y_{+}^{(\ell)},y_{-}^{(\ell)}]$ and 
set $f:= 2 d^{(\ell)} -2 = 2 \abs{\lambda^{\ell}}-4k^{(\ell)}-2$ to apply first 
Prop.~\ref{app:polynomials}(a4) and then
Lemma~\ref{lem:omnibus}(h) or (i) to find
\begin{subequations}
\begin{alignat}{10}
&[y_{+}^{(\ell)},y_{-}^{(\ell)}] && \upto{f}\; &&
(-1)^\ell\sum_{R \in \mathcal{I}^{\ell}}\sum_{S \in \mathcal{I}^{\ell}}
c_R^{(\ell)}c_S^{(\ell)} \left[g_{(-1)^{\ell}}^{\lambda^{\ell}-\ones(R)}, g_{(-1)^{\ell+1}}^{\lambda^{\ell}-\ones(S)}\right]\\
\label{specialline}
&&& \upto{f}\; &&
(-1)^\ell\sum_{R \in \mathcal{I}^{\ell}}\sum_{S \in \mathcal{I}^{\ell}} c_R^{(\ell)}c_S^{(\ell)}
\left\{
\sum\nolimits_{v} [(\mu^\ell_v{-}R_v) (\nu^\ell_v{-}S_v) {-} (\nu^\ell_v{-}R_v) (\mu^\ell_v{-}S_v)]\,
g_{+}^{2\lambda^\ell-\ones(R\cup S \cup \multisetnarrow{v})}
\right.\\
&&&&&+(-1)^\ell
\left.
\sum\nolimits_{v} [(\nu^\ell_v{-}R_v) (\nu^\ell_v{-}S_v) {-} (\mu^\ell_v{-}R_v) (\mu^\ell_v{-}S_v)]\,
g_{+}^{\lambda^\ell+(\lambda^\ell)^{\dagger}-\ones(R\cup S \cup \multisetnarrow{v})}
\right\}\\
&&& \upto{f}\; &&
\sum_{R \in \mathcal{I}^{\ell}}\sum_{S \in \mathcal{I}^{\ell}} c_R^{(\ell)}c_S^{(\ell)}
\sum\nolimits_{v} (\nu^\ell_v{-}\mu^\ell_v)(\nu^\ell_v{+}\mu^\ell_v{-}(R{\cup} S)_v)\,
g_{+}^{\lambda^\ell+(\lambda^\ell)^{\dagger}-\ones(R\cup S \cup \multisetnarrow{v})}.
\end{alignat}
\end{subequations}
Here the contribution in \eqref{specialline} vanishes identically due to the anti-symmetry in the coefficients
and the symmetry in (the exponent of) the basis elements when the sets $R$ and $S$ are exchanged.
Next we set $e:=\abs{\lambda^{\ell}} +\abs{\lambda^{\ell} {+} (\lambda^{\ell})^{\dagger}} -2 k^{(\ell)}
-2 (2 k^{(\ell)}{+}1) -2 = 3\abs{\lambda^{\ell}}-6k^{(\ell)}-4$ and apply first 
Prop.~\ref{app:polynomials}(a4) and then
Lemma~\ref{lem:omnibus}(f) or (g) to finally obtain our result
\begin{alignat*}{10}
& y_{\sigma}^{(\ell+1)} && \upto{e}\; && [y_{\sigma}^{(\ell)}, [ y_{+}^{(\ell)}, y_{-}^{(\ell)}]]\\
&&& \upto{e}\; && \sum_{R \in \mathcal{I}^{\ell}}\sum_{S \in \mathcal{I}^{\ell}} c_R^{(\ell)}c_S^{(\ell)}
\sum\nolimits_{v} (\nu^\ell_v{-}\mu^\ell_v)(\nu^\ell_v{+}\mu^\ell_v{-}(R{\cup} S)_v)\,\left[y_{\sigma}^{(\ell)}, 
g_{+}^{\lambda^\ell+(\lambda^\ell)^{\dagger}-\ones(R\cup S \cup \multisetnarrow{v})}\right]\\
&&& \upto{e}\; && \sigma^\ell\sum_{R \in \mathcal{I}^{\ell}}\sum_{S \in \mathcal{I}^{\ell}}
\sum_{T \in \mathcal{I}^{\ell}} c_R^{(\ell)}c_S^{(\ell)}c_T^{(\ell)}
\sum\nolimits_{v} (\nu^\ell_v{-}\mu^\ell_v)(\nu^\ell_v{+}\mu^\ell_v{-}(R{\cup} S)_v)\,\left[g_{(-1)^{\ell} \sigma}^{\lambda^{\ell}
-\ones(T)}, g_{+}^{\lambda^\ell+(\lambda^\ell)^{\dagger}-\ones(R\cup S \cup \multisetnarrow{v})}\right]\\
&&&\upto{e}\; && (-1)^\ell\sigma^\ell\sum_{R \in \mathcal{I}^{\ell}}
\sum_{S \in \mathcal{I}^{\ell}}\sum_{T \in \mathcal{I}^{\ell}} c_R^{(\ell)}c_S^{(\ell)}c_T^{(\ell)}
\sum\nolimits_{v} (\nu^\ell_v{-}\mu^\ell_v)(\nu^\ell_v{+}\mu^\ell_v{-}(R{\cup} S)_v)
\\
&&&&&
\times\left(
-\sigma
\{
\sum\nolimits_{u} [(\mu^\ell_u{-}T_u) 
(\mu^\ell_u{+}\nu^\ell_u{-}(R{\cup} S {\cup} \multisetnarrow{v})_u) {-} 
(\nu^\ell_u{-}T_u) (\mu^\ell_u{+}\nu^\ell_u{-}(R{\cup} S {\cup} \multisetnarrow{v})_u)]\,
g_{(-1)^{\ell+1} \sigma}^{2\lambda^\ell+(\lambda^\ell)^{\dagger}-\ones(R{\cup} S {\cup} T\cup\multisetnarrow{v,u})}
\} 
\right.
\\
&&&&&
\left.
+\sigma
\{
\sum\nolimits_{u} [(\nu^\ell_u{-}T_u) 
(\mu^\ell_u{+}\nu^\ell_u{-}(R{\cup} S {\cup} \multisetnarrow{v})_u) {-} 
(\mu^\ell_u{-}T_u) (\mu^\ell_u{+}\nu^\ell_u{-}(R{\cup} S {\cup} \multisetnarrow{v})_u)] 
\right.\\
&&&&&
\left. 
\times
E(2\lambda^\ell{+}(\lambda^\ell)^{\dagger})\,
g_{(-1)^{\ell+1} \sigma}^{\Theta(2\lambda^\ell+(\lambda^\ell)^{\dagger})-\ones(R{\cup} S {\cup} T\cup\multisetnarrow{v,u})}
\}
\right)
\\
&&& \upto{e}\; && \sigma^{\ell+1}\sum_{R \in \mathcal{I}^{\ell}}
\sum_{S \in \mathcal{I}^{\ell}}\sum_{T \in \mathcal{I}^{\ell}} c_R^{(\ell)}c_S^{(\ell)}c_T^{(\ell)}
\sum\nolimits_{v,u} 2(-1)^{\ell+1}(\nu^\ell_v{-}\mu^\ell_v)(\nu^\ell_v{+}\mu^\ell_v{-}(R{\cup} S)_v)
\\
&&&&&
\times 
(\mu^\ell_u{-}\nu^\ell_u)(\mu^\ell_u{+}\nu^\ell_u{-}(R{\cup} S)_u {-} \delta_{uv})\,
g_{(-1)^{\ell+1} \sigma}^{2\lambda^\ell+(\lambda^\ell)^{\dagger}-\ones(R{\cup} S {\cup} T\cup\multisetnarrow{v,u})}.
\end{alignat*}
We note that $e = d^{(\ell+1)}$ which completes the proof.
\end{proof}

\begin{proposition}\label{small:helping:lemma:coefficients}
We employ the notation of Proposition~\ref{eq:chain:two:induction:iteration}. 
It follows that: (a)~$k^{(\ell)}=3^\ell-1$; (b)~$\mu^\ell=\frac{1}{2}[ (k^{(\ell)}{+}2)\alpha + k^{(\ell)}\beta]$;
(c)~$\nu^\ell=\frac{1}{2}[k^{(\ell)}\alpha + (k^{(\ell)}{+}2)\beta]$; (d) the expansion of $y_{\sigma}^{(\ell)}$
is given by  Eq.~\eqref{eq:chain:two:induction:start} for suitable sets of multisets $\mathcal{I}^{\ell}$.
\end{proposition}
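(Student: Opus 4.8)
The plan is to establish all four parts by induction on $\ell$, feeding the induction entirely with the recursive data already recorded in Proposition~\ref{eq:chain:two:induction:iteration}. The two pieces of data I would extract are the scalar recurrence $k^{(\ell+1)}=3k^{(\ell)}+2$ together with the vector recurrence $\lambda^{\ell+1}=2\lambda^{\ell}+[\lambda^{\ell}]^{\dagger}$, each carrying the base values $k^{(0)}=0$ (since $\mathcal{I}^{0}=\{\emptymultiset\}$ forces $\abs{\emptymultiset}=k^{(0)}=0$) and $\lambda^{0}=\gamma$. Writing $\lambda^{\ell}=(\mu^{\ell},\nu^{\ell})$ and recalling $[\lambda^{\ell}]^{\dagger}=(\nu^{\ell},\mu^{\ell})$, the vector recurrence decouples into the coupled scalar recurrences $\mu^{\ell+1}=2\mu^{\ell}+\nu^{\ell}$ and $\nu^{\ell+1}=2\nu^{\ell}+\mu^{\ell}$ with $\mu^{0}=\alpha$ and $\nu^{0}=\beta$, which is all the machinery the proof needs.

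For part (a) I would check the base case $k^{(0)}=3^{0}-1=0$ and close the induction via $k^{(\ell+1)}=3(3^{\ell}-1)+2=3^{\ell+1}-1$. For parts (b) and (c) the base case $\ell=0$ gives $\tfrac{1}{2}[(0{+}2)\alpha+0\cdot\beta]=\alpha=\mu^{0}$ and likewise $\nu^{0}=\beta$. For the inductive step I would substitute the assumed closed forms into $\mu^{\ell+1}=2\mu^{\ell}+\nu^{\ell}$ and collect coefficients of $\alpha$ and $\beta$ to obtain $(\tfrac{3}{2}k^{(\ell)}{+}2)\alpha+(\tfrac{3}{2}k^{(\ell)}{+}1)\beta$, which I then recognize as $\tfrac{1}{2}[(k^{(\ell+1)}{+}2)\alpha+k^{(\ell+1)}\beta]$ after inserting $k^{(\ell+1)}=3k^{(\ell)}+2$; the computation for $\nu^{\ell+1}$ is the mirror image with $\alpha$ and $\beta$ interchanged. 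Finally, part (d) is immediate: Proposition~\ref{eq:chain:two:induction:iteration} already proves precisely the inductive step, namely that if $y_{\sigma}^{(\ell)}$ has the form of Eq.~\eqref{eq:chain:two:induction:start} for some family $\mathcal{I}^{\ell}$, then $y_{\sigma}^{(\ell+1)}$ has the same form with the explicitly described $\mathcal{I}^{\ell+1}$ and coefficients $c_{U}^{(\ell+1)}$; combined with the base case $y_{\pm}^{(0)}=g_{\pm}^{\gamma}$, $\mathcal{I}^{0}=\{\emptymultiset\}$, this gives (d) for all $\ell$.

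I do not anticipate a genuine obstacle, since every recurrence is linear and its propagation is supplied by the preceding proposition; the task is really a verification. The only point deserving care is the bookkeeping in the inductive step for (b) and (c): one must keep the $\alpha$- and $\beta$-coefficients separate throughout, match against the target using the recurrence $k^{(\ell+1)}=3k^{(\ell)}+2$ rather than the closed form, and confirm that the half-integer coefficients recombine so that $\mu^{\ell}$ and $\nu^{\ell}$ stay integer vectors, as required for $\lambda^{\ell}\in\Np{2n}$. A useful consistency check to record alongside is that the closed forms yield $\mu^{\ell}+\nu^{\ell}=(k^{(\ell)}{+}1)(\alpha{+}\beta)$ and $\mu^{\ell}-\nu^{\ell}=\alpha-\beta$, which pin down both formulas at once and give $\abs{\lambda^{\ell}}=(k^{(\ell)}{+}1)\abs{\gamma}=3^{\ell}\abs{\gamma}$; this in turn reproduces $\deg(y_{\sigma}^{(\ell)})=\abs{\lambda^{\ell}}-2k^{(\ell)}=3^{\ell}(\abs{\gamma}{-}2)+2$, consistent with the bound in Lemma~\ref{max:length:chain:lemma:II}.
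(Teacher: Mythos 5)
Your proposal is correct and follows essentially the same route as the paper's proof: induction on $\ell$ using the recurrences $k^{(\ell+1)}=3k^{(\ell)}+2$ and $\lambda^{\ell+1}=2\lambda^{\ell}+[\lambda^{\ell}]^{\dagger}$ with base data $k^{(0)}=0$, $\lambda^{0}=\gamma$, and part (d) delegated to Proposition~\ref{eq:chain:two:induction:iteration}. Writing the vector recurrence componentwise as $\mu^{\ell+1}=2\mu^{\ell}+\nu^{\ell}$, $\nu^{\ell+1}=2\nu^{\ell}+\mu^{\ell}$ is only a presentational variant of the paper's computation, and your added consistency checks ($\mu^{\ell}-\nu^{\ell}=\alpha-\beta$, $\abs{\lambda^{\ell}}=3^{\ell}\abs{\gamma}$) are a nice but inessential bonus.
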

\begin{proof}
We know from Proposition~\ref{eq:chain:two:induction:iteration} that $k^{(\ell+1)}=3k^{(\ell)}+2$ and that $k^{(0)}=0$.
For $\ell=1$, we have $k^{(1)}=3\times0+2=2=3^1-1$.
We assume that $k^{(\ell)}=3^\ell{-}1$ and determine $k^{(\ell+1)}$ as
$k^{(\ell+1)}=3k^{(\ell)}+2=3(3^\ell{-}1)+2=3^{\ell+1}-1$. This proves (a).

We then address (b) and (c). Proposition~\ref{eq:chain:two:induction:iteration} implies
$\lambda^{\ell+1}=2\lambda^{\ell}+[\lambda^{\ell}]^{\dagger}$. For $\ell=1$,
$\lambda^{1}=2\lambda^{0}+[\lambda^{0}]^{\dagger}=2\gamma+\gamma^\dag=(2\alpha+\beta, \alpha+2\beta)$.
Thus, $\mu^1=2\alpha+\beta=[ (k^{(1)}{+}2)\alpha + k^{(1)}\beta]/2$ and
$\nu^1=[k^{(1)}\alpha + (k^{(1)}{+}2)\beta]/2$ as claimed.
We assume (b) and (c) for $\ell$ and show that they hold for $\ell+1$:
\begin{align*}
\lambda^{\ell+1}&=2\lambda^{\ell}+[\lambda^{\ell}]^{\dagger}
=2(\tfrac{1}{2}[ (k^{(\ell)}{+}2)\alpha + k^{(\ell)}\beta],\tfrac{1}{2}[k^{(\ell)}\alpha + 
(k^{(\ell)}{+}2)\beta])+(\tfrac{1}{2}[k^{(\ell)}\alpha + (k^{(\ell)}{+}2)\beta],\tfrac{1}{2}[ (k^{(\ell)}{+}2)\alpha + k^{(\ell)}\beta])\\
&=(\tfrac{1}{2} [(2k^{(\ell)}{+}4)\alpha+2k^{(\ell)}\beta+k^{(\ell)}\alpha+(k^{(\ell)}{+}2)\beta],\tfrac{1}{2}[ 2k^{(\ell)}\alpha +
(2k^{(\ell)}{+}4)\beta+(k^{(\ell)}{+}2)\alpha + k^{(\ell)}\beta])\\
&=(\tfrac{1}{2} [(k^{(\ell+1)}{+}2)\alpha+ k^{(\ell+1)}\beta],\tfrac{1}{2} [k^{(\ell+1)}\alpha+ (k^{(\ell+1)}{+}2)\beta]),
\end{align*}
where we have used the algebraic relations $3k^{(\ell)}+2=k^{(\ell+1)}$ and $3k^{(\ell)}+4=k^{(\ell+1)}+2$.
We have therefore shown that $\mu^{\ell+1}=[(k^{(\ell+1)}{+}2)\alpha+ k^{(\ell+1)}\beta]/2$ and
$\nu^{\ell+1}= [k^{(\ell+1)}\alpha+ (k^{(\ell+1)}{+}2)\beta]/2$. By induction, the claims (b) and (c) follow.
The claim (d) follows by induction from Proposition~\ref{eq:chain:two:induction:iteration}.
\end{proof}

\begin{proposition}\label{Second:Chant:Type:II:lemma} 
We employ the notation of Proposition~\ref{eq:chain:two:induction:iteration}. Let us consider the two conditions
\begin{align*}
&\text{(a) there is at least one index }
\hat{k} \text{ such that } \alpha_{\hat{k}}+\beta_{\hat{k}}=p\geq2 \text{ and }  \alpha_{\hat{k}}\neq\beta_{\hat{k}}
\text{ and}\hspace{2mm}
&
\text{(cf.\ condition (D) in Lemma~\ref{lemma:A_purp})}\\
&\text{(b) there
are at least two indices } \hat{k},\hat{k}'  \text{ such that } \alpha_{\hat{k}}+\beta_{\hat{k}}=\alpha_{\hat{k}'}+\beta_{\hat{k}'}=1.
&
\text{(cf.\ condition (E) in Lemma~\ref{lemma:A_purp})}
\end{align*}
Then we have that:\\
(1a) If (a) holds,
$y_{\sigma}^{(\ell)}$ contains a term for a multiset $S$ such that $\abs{S}=S_{\hat{k}}$ for an index
$\hat{k}$ and $c_S^{(\ell)}\neq0$.\\
(1b) If (b) holds,
$y_{\sigma}^{(\ell)}$ contains a term for
a multiset $S$ such that $S_{\hat{k}}=S_{\hat{k}'}$, $|S|=2S_{\hat{k}}$, and $c_S^{(\ell)}\neq0$
for two indices $k\neq \hat{k}$.\\
(2) If (a) or (b) holds, then $\deg(y_{\sigma}^{(\ell)})=3^\ell(\abs{\gamma}{-}2)+2$.
\end{proposition}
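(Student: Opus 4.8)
The plan is to deduce (2) from (1a)/(1b), and to prove (1a) and (1b) by a single induction on $\ell$ that tracks one carefully chosen top-degree coefficient $c_S^{(\ell)}$ in the expansion \eqref{eq:chain:two:induction:start}. The key preliminary observation is that every basis element $g_{(-1)^{\ell}\sigma}^{\lambda^{\ell}-\ones(S)}$ occurring there with $\abs{S}=k^{(\ell)}$ has degree exactly $d^{(\ell)}=\abs{\lambda^{\ell}}-2k^{(\ell)}=3^\ell\abs{\gamma}-2(3^\ell-1)=3^\ell(\abs{\gamma}-2)+2$, where I use $\abs{\lambda^{\ell}}=(k^{(\ell)}{+}1)\abs{\gamma}$ and $k^{(\ell)}=3^\ell-1$ from Proposition~\ref{small:helping:lemma:coefficients}. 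Since distinct multisets $S$ yield linearly independent basis elements, producing a single nonzero $c_S^{(\ell)}$ with $\abs{S}=k^{(\ell)}$ already forces $\deg(y_\sigma^{(\ell)})=d^{(\ell)}$; combined with the upper bound of Lemma~\ref{max:length:chain:lemma:II} this gives (2). So it suffices to exhibit one surviving top-degree coefficient in each case.

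For (1a) I would track $S=\multiset{\multy{\hat k}{k^{(\ell)}}}$, which removes all $k^{(\ell)}$ reductions from the single mode $\hat k$. Writing $p=\alpha_{\hat k}+\beta_{\hat k}\ge 2$, Proposition~\ref{small:helping:lemma:coefficients} gives $\mu_{\hat k}^\ell=\tfrac12 k^{(\ell)}p+\alpha_{\hat k}$ and $\nu_{\hat k}^\ell=\tfrac12 k^{(\ell)}p+\beta_{\hat k}$, so $\mu_{\hat k}^\ell-k^{(\ell)},\nu_{\hat k}^\ell-k^{(\ell)}\ge 0$ and $\lambda^{\ell}-\ones(S)\in\Np{2n}$, i.e.\ $S$ is admissible. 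In the recursion \eqref{eq:chain:two:induction:finish} for $U=\multiset{\multy{\hat k}{k^{(\ell+1)}}}$, support on a single mode forces $T=R=S=\multiset{\multy{\hat k}{k^{(\ell)}}}$ and $u=v=\hat k$, a unique decomposition. Evaluating the coefficient with $(R\cup S)_{\hat k}=2k^{(\ell)}$, $\mu_{\hat k}^\ell-\nu_{\hat k}^\ell=\alpha_{\hat k}-\beta_{\hat k}$ and $\mu_{\hat k}^\ell+\nu_{\hat k}^\ell=(k^{(\ell)}{+}1)p$ gives $c(\ell,\ldots,\hat k,\hat k)=2(\alpha_{\hat k}-\beta_{\hat k})^2[k^{(\ell)}(p{-}2)+p][k^{(\ell)}(p{-}2)+p-1]$, whose three factors are strictly positive because $p\ge 2$ and $\alpha_{\hat k}\neq\beta_{\hat k}$. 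Hence $c_{\multiset{\multy{\hat k}{k^{(\ell+1)}}}}^{(\ell+1)}=(-1)^{\ell+1}(c_{\multiset{\multy{\hat k}{k^{(\ell)}}}}^{(\ell)})^3\,c(\ell,\ldots)\neq 0$, closing the induction from the base case $c_{\emptymultiset}^{(0)}=1$.

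For (1b) the decisive point is that admissibility pins the decomposition down uniquely, so no cancellation can occur. I would track the balanced multiset $S=\multiset{\multy{\hat k}{m},\multy{\hat k'}{m}}$ with $m=k^{(\ell)}/2=(3^\ell-1)/2$ (an integer since $3^\ell$ is odd). Because $\alpha_{\hat k}+\beta_{\hat k}=1$ forces $\min(\alpha_{\hat k},\beta_{\hat k})=0$, Proposition~\ref{small:helping:lemma:coefficients} gives $\min(\mu_{\hat k}^\ell,\nu_{\hat k}^\ell)=m$, and likewise for $\hat k'$; therefore any admissible multiset supported on $\{\hat k,\hat k'\}$ of size $k^{(\ell)}$ must have $S_{\hat k}\le m$ and $S_{\hat k'}\le m$, and since $S_{\hat k}+S_{\hat k'}=2m$ it must be exactly the balanced one. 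Applied inside \eqref{eq:chain:two:induction:finish} for $U=\multiset{\multy{\hat k}{m'},\multy{\hat k'}{m'}}$ with $m'=3m+1$, this forces $T,R,S$ all balanced and $\{u,v\}=\{\hat k,\hat k'\}$, leaving only the two orderings $(u,v)=(\hat k,\hat k')$ and $(\hat k',\hat k)$. Using $\mu_{\hat k}^\ell+\nu_{\hat k}^\ell-2m=1$ (so both bracket factors equal $1$) and $\nu_{\hat k}^\ell-\mu_{\hat k}^\ell=\beta_{\hat k}-\alpha_{\hat k}=\mp1$, both orderings contribute $2(\alpha_{\hat k}-\beta_{\hat k})(\alpha_{\hat k'}-\beta_{\hat k'})$ with the same sign, giving $c_S^{(\ell+1)}=(-1)^{\ell+1}4(\alpha_{\hat k}-\beta_{\hat k})(\alpha_{\hat k'}-\beta_{\hat k'})(c_S^{(\ell)})^3\neq 0$; the base case is again $c_{\emptymultiset}^{(0)}=1$.

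The main obstacle is precisely the cancellation question in (1b): the recursion \eqref{eq:chain:two:induction:finish} is an \emph{a priori} large sum over decompositions $U=T\cup R\cup S\cup\multiset{u,v}$, and one must rule out that the surviving contributions cancel. The resolution is the admissibility argument above, which uses both the explicit formulas for $\mu^\ell,\nu^\ell$ and the hypothesis $\alpha_{\hat k}+\beta_{\hat k}=\alpha_{\hat k'}+\beta_{\hat k'}=1$ (making $\min(\mu^\ell,\nu^\ell)$ on these modes as small as possible) to collapse the sum to the two equal-sign orderings. The one point I would check explicitly is this sign agreement, since a sign flip between the two orderings would annihilate the coefficient; the factorized form of $c(\ell,R,S,T,u,v)$ makes that check routine.
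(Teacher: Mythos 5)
Your proposal is correct and follows essentially the same route as the paper's proof: an induction on $\ell$ through the recursion of Proposition~\ref{eq:chain:two:induction:iteration}, tracking the all-on-one-mode multiset $\multiset{\multy{\hat{k}}{k^{(\ell)}}}$ for condition (a) and the balanced two-mode multiset for condition (b), and arriving at the same nonvanishing coefficients $2(\alpha_{\hat{k}}{-}\beta_{\hat{k}})^2[k^{(\ell)}(p{-}2){+}p][k^{(\ell)}(p{-}2){+}p{-}1]$ and $2(\beta_{\hat{k}}{-}\alpha_{\hat{k}})(\beta_{\hat{k}'}{-}\alpha_{\hat{k}'})$. If anything, your explicit admissibility argument (that $S_k \leq \min(\mu_k^{\ell},\nu_k^{\ell})$ forces the decomposition of $U$ to be unique up to the two orderings of $(u,v)$) makes the no-cancellation step in (1b) more airtight than the paper's brief assertion that the corresponding terms ``are identical \ldots and do not cancel.''
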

\begin{proof}
We start with the proof of (1). Proposition~\ref{small:helping:lemma:coefficients}(d) implies that 
the expansion of $y_{\sigma}^{(\ell)}$ is given by Eq.~\eqref{eq:chain:two:induction:start}.
For $\ell=0$, we apply Eq.~\eqref{eq:chain:two:induction:finish}
(where $c_{T}^{(0)}=c_{R}^{(0)}=c_{S}^{(0)}=1$ and $\abs{R}=\abs{S}=\abs{T}=1$) and obtain
\begin{align*}
y_{\sigma}^{(1)} :=  [y_{\sigma}^{(0)}, [ y_{+}^{(0)}, y_{-}^{(0)}]]
\upto{d^{(1)}}\; \sigma \sum_{u,v}
 c(1,u,v)\,
g_{-\sigma}^{2\gamma+\gamma^{\dagger}-\ones(\multisetnarrow{u,v})},
\end{align*}
where $ c(1,u,v)=2 (\beta_u{-}\alpha_u)[\alpha_u{+}\beta_u](\beta_v{-}\alpha_v)[\alpha_v{+}\beta_v{-}\delta_{u,v}]$.
If condition (a) holds, then
\begin{align*}
c(1,\hat{k},\hat{k})=2 (\beta_{\hat{k}}{-}\alpha_{\hat{k}})^2(\alpha_{\hat{k}}{+}\beta_{\hat{k}})(\alpha_{\hat{k}}{+}\beta_{\hat{k}}{-}1)
=2 (\beta_{\hat{k}}{-}\alpha_{\hat{k}})^2p(p{-}1) \neq 0.
\end{align*}
The corresponding term $g_{-\sigma}^{2\gamma+\gamma^{\dagger}-\ones(\multisetnarrow{\hat{k},\hat{k}})}$
in the expansion of $y_{\sigma}^{(1)} $ appears only once and is linearly independent from all other terms.
If condition (b) holds, then
\begin{align*}
c(1,\hat{k},\hat{k}')=c(1,\hat{k}',\hat{k})=&2 (\beta^2_{\hat{k}}{-}\alpha^2_{\hat{k}})(\beta^2_{\hat{k}'}{-}\alpha^2_{\hat{k}'}) \neq 0.
\end{align*}
Moreover, the terms
$g_{-\sigma}^{2\gamma+\gamma^{\dagger}-\ones(\multisetnarrow{\hat{k},\hat{k}'})}$ and $g_{-\sigma}^{2\gamma+\gamma^{\dagger}-\ones(\multisetnarrow{\hat{k}',\hat{k}})}$
in the expansion of $y_{\sigma}^{(1)} $ are identical and do not cancel.

For $\ell >1$, we now assume that (1a) and (1b) hold for $\ell$ in order to prove
(1a) and (1b) for $\ell{+}1$.
We recall the coefficient $c_U^{(\ell+1)}$ 
from Eq.~\eqref{eq:chain:two:induction:finish} in Proposition~\ref{eq:chain:two:induction:iteration}.
Due to (a) and (b) being valid for $\ell$, the coefficients $c_{T}^{(\ell)}$, $c_{R}^{(\ell)}$, and $c_{S}^{(\ell)}$
can be assumed to be nonzero for suitable chosen sets $T$, $R$, and $S$.
If (a) holds for $\ell$,
we see that $c_{T}^{(\ell)}\neq0$ for $|T|=T_{\hat{k}}=k^{(\ell)}$, and analogously for $c_{R}^{(\ell)},c_{S}^{(\ell)}$.
We focus on elements in expansion of $c_U^{(\ell+1)}$
for which $\abs{U}=3S_{\hat{k}}+2=U_{\hat{k}}=k^{(\ell+1)}$ and  $T=R=S$.
For this element,
$c_{T}^{(\ell)}c_{R}^{(\ell)}c_{S}^{(\ell)} c(\ell,R,S,T,\hat{k},\hat{k})$ vanishes if and only if
$c(\ell,R,S,T,\hat{k},\hat{k})$ vanishes. We have
\begin{align*}
c(\ell,R,S,T,\hat{k},\hat{k})&=2 (\nu_{\hat{k}}^{\ell}{-}\mu_{\hat{k}}^{\ell})[\mu_{\hat{k}}^{\ell}{+}\nu_{\hat{k}}^{\ell}{-}(R{\cup}S)_{\hat{k}}]
(\nu_{\hat{k}}^{\ell}{-}\mu_{\hat{k}}^{\ell})[\mu_{\hat{k}}^{\ell}{+}\nu_{\hat{k}}^{\ell}{-}(R{\cup}S)_{\hat{k}}{-}\delta_{\hat{k},\hat{k}}]\\
&=2 (\nu_{\hat{k}}^{\ell}{-}\mu_{\hat{k}}^{\ell})^2[\mu_{\hat{k}}^{\ell}{+}\nu_{\hat{k}}^{\ell}{-}2
k^{(\ell)}][\mu_{\hat{k}}^{\ell}{+}\nu_{\hat{k}}^{\ell}{-}2 k^{(\ell)}{-}1] \neq 0
\end{align*}
as $\alpha_{\hat{k}}+\beta_{\hat{k}}=p\geq2$ and Proposition~\ref{small:helping:lemma:coefficients} imply
$\mu_{\hat{k}}^{\ell}{+}\nu_{\hat{k}}^{\ell}{-}2 k^{(\ell)}{-}1=k^{(\ell)}(p{-}2)+ p-1\geq1>0$.
The corresponding term
in the expansion of $y_{\sigma}^{(\ell+1)}$ 
for $U = R{\cup} S{\cup} T {\cup}\multisetnarrow{\hat{k},\hat{k}}$
appears only once and is linearly independent from all other terms.
If (b) holds, then
$c_{T}^{(\ell)}\neq0$ for
$|T|=T_{\hat{k}}+T_{\hat{k}'}=2T_{\hat{k}}=2T_{\hat{k}'}=k^{(\ell)}$ and analogously  for $R$ and $S$.
The element
$c_{T}^{(\ell)}c_{R}^{(\ell)}c_{S}^{(\ell)} c(\ell,R,S,T,\hat{k},\hat{k}')$ 
in the expansion of  $c_U^{(\ell+1)}$ vanishes if and only if
$c(\ell,R,S,T,\hat{k},\hat{k}')$ vanishes (and similarly for 
$c_{T}^{(\ell)}c_{R}^{(\ell)}c_{S}^{(\ell)} c(\ell,R,S,T,\hat{k}',\hat{k})$
and $c(\ell,R,S,T,\hat{k}',\hat{k})$).
However,
\begin{align*}
c(\ell,R,S,T,\hat{k},\hat{k}')=c(\ell,R,S,T,\hat{k}',\hat{k})&=
2 (\nu_{\hat{k}}^{\ell}{-}\mu_{\hat{k}}^{\ell})[\mu_{\hat{k}}^{\ell}{+}\nu_{\hat{k}}^{\ell}{-}(R{\cup}S)_{\hat{k}}]
(\nu_{\hat{k}'}^{\ell}{-}\mu_{\hat{k}'}^{\ell})[\mu_{\hat{k}'}^{\ell}{+}\nu_{\hat{k}'}^{\ell}{-}(R{\cup}S)_{\hat{k}'}]\\
&=2 (\nu_{\hat{k}}^{\ell}{-}\mu_{\hat{k}}^{\ell})(\nu_{\hat{k}'}^{\ell}{-}\mu_{\hat{k}'}^{\ell}) 
=2 ( \beta_{\hat{k}}{-}\alpha_{\hat{k}}) (\beta_{\hat{k}'}{-}\alpha_{\hat{k}'})
\neq 0,
\end{align*}
where we applied Proposition~\ref{small:helping:lemma:coefficients}; note that
$\mu_{\hat{k}}^{\ell}{+}\nu_{\hat{k}}^{\ell}{-}(R{\cup}S)_{\hat{k}}=
\mu_{\hat{k}'}^{\ell}{+}\nu_{\hat{k}'}^{\ell}{-}(R{\cup}S)_{\hat{k}'}=1$.
The terms
in the expansion
of $y_{\sigma}^{(\ell+1)}$ for $U = T{\cup}R{\cup}S {\cup}\multisetnarrow{\hat{k},\hat{k}'}$ or
$U = T{\cup}R{\cup}S {\cup}\multisetnarrow{\hat{k}',\hat{k}}$ are identical under permutations of
$T$, $R$, and $S$, and they do not cancel.

We now proceed to prove (2). The claims (1a) and (1b) verify that
$y_{\sigma}^{(\ell)}$ has always the highest possible degree.
Thus, $\deg(y_{\sigma}^{(\ell)})=\abs{\lambda^\ell}-2\abs{S}=
(k^{(\ell)}{+}1)\abs{\gamma}-2k^{(\ell)}=3^\ell\abs{\gamma}-2(3^\ell{-}1)
=3^\ell(\abs{\gamma}{-}2)+2$
as claimed.
\end{proof}

\begin{theorem}\label{commutator:chain:final}
We employ the notation of Proposition~\ref{eq:chain:two:induction:iteration}. Let us assume also that
$y_{\sigma}^{(0)} = g_{\sigma}^{\gamma}\in\hat{A}_n^\perp$. Then, $\deg(y_{\sigma}^{(\ell)})=3^\ell(\abs{\gamma}{-}2)+2\geq3^\ell+2$.
\end{theorem}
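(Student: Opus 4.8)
The plan is to recognize this theorem as an almost immediate corollary of the two preceding structural results: Lemma~\ref{lemma:A_purp}, which characterizes membership in $\hat{A}_n^\perp$, and Proposition~\ref{Second:Chant:Type:II:lemma}, which computes the exact degree of the chain elements under its conditions (a)/(b). The strategy is to show that the hypothesis $g_\sigma^\gamma \in \hat{A}_n^\perp$ triggers exactly the conditions needed to invoke the degree formula, and then to read off the lower bound from the degree constraint that $\hat{A}_n^\perp$-membership already forces.

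First I would apply Lemma~\ref{lemma:A_purp} to the seed element. Since $g_\sigma^\gamma = y_\sigma^{(0)} \in \hat{A}_n^\perp$ is precisely condition (A) of that lemma, we immediately obtain $\abs{\gamma} \geq 3$, and furthermore the equivalence chain established there yields that (A) holds if and only if ``(D) or (E)'' holds. Thus the hypothesis guarantees that at least one of conditions (D) and (E) is satisfied by $\gamma = (\alpha,\beta)$.

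Second, I would observe that conditions (D) and (E) of Lemma~\ref{lemma:A_purp} are verbatim the conditions (a) and (b) appearing in the hypothesis of Proposition~\ref{Second:Chant:Type:II:lemma} (this correspondence is flagged explicitly in that proposition via the parentheticals ``cf.\ condition (D)'' and ``cf.\ condition (E)''). Hence at least one of (a), (b) holds, and Proposition~\ref{Second:Chant:Type:II:lemma}(2) applies directly to give $\deg(y_\sigma^{(\ell)}) = 3^\ell(\abs{\gamma}-2)+2$.

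Finally, the lower bound follows from the degree constraint already secured: since $\abs{\gamma} \geq 3$ we have $\abs{\gamma} - 2 \geq 1$, and as $3^\ell > 0$ this gives $3^\ell(\abs{\gamma}-2) \geq 3^\ell$, whence $\deg(y_\sigma^{(\ell)}) \geq 3^\ell + 2$. The only genuine point to verify is the matching of the condition labels between the two lemmas; once (D)$\leftrightarrow$(a) and (E)$\leftrightarrow$(b) are confirmed, no computation remains, since all the heavy combinatorial work---establishing non-vanishing of the leading coefficients $c_S^{(\ell)}$ and thereby the maximality of the degree---was already carried out in Proposition~\ref{Second:Chant:Type:II:lemma}. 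I therefore anticipate no substantial obstacle: the theorem is essentially a bookkeeping corollary that packages the preceding machinery into a clean statement about $\hat{A}_n^\perp$-seeds.
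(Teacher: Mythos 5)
Your proof is correct and follows essentially the same route as the paper's own: invoke Lemma~\ref{lemma:A_purp} to get $\abs{\gamma}\geq 3$ and the validity of condition (a) or (b) of Proposition~\ref{Second:Chant:Type:II:lemma}, then read off the degree formula from part (2) of that proposition and deduce the bound $3^\ell(\abs{\gamma}-2)+2\geq 3^\ell+2$. Your explicit verification of the (D)$\leftrightarrow$(a), (E)$\leftrightarrow$(b) correspondence is exactly the bookkeeping the paper leaves implicit.
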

\begin{proof}
Assuming $y_{\sigma}^{(0)} = g_{\sigma}^{\gamma}\in\hat{A}_n^\perp$, Lemma~\ref{lemma:A_purp}
guarantees that one of the two conditions (a) or (b) of Proposition~\ref{Second:Chant:Type:II:lemma} holds.
Thus, Proposition~\ref{Second:Chant:Type:II:lemma} implies that
the expansion of $y_{\sigma}^{(\ell)}$ always contains  an element of maximal degree. 
We obtain $\abs{\gamma}\geq3$ due to Lemma~\ref{lemma:A_purp}
and $\deg(y_{\sigma}^{(\ell)})=3^\ell(\abs{\gamma}{-}2)+2$  from Proposition~\ref{Second:Chant:Type:II:lemma}.
Consequently, $\deg(y_{\sigma}^{(\ell)})=3^\ell(\abs{\gamma}{-}2)+2\geq3^\ell+2$.
\end{proof}

\section{Finiteness of skew-hermitian bosonic Lie algebras with independent free Hamiltonian generators}\label{main:results:section}
We are now ready to state and prove our main result. We are assuming here that the basis elements
$i a^\dag_k a_k$ belong to the set $\mathcal{G}$ that generates the Lie algebra $\g$ and, therefore, appear 
as elements of the generated Lie algebra.
Furthermore, we will exclude the presence of elements of $\hat{A}_n^1$ in $\mathcal{G}$.
All together, this means that the following theorem is the first step in the direction of the full classification of 
bosonic Lie algebras.

\begin{theorem}[Main Result]\label{main:theorem_final}
Consider the Lie subalgebra $\mathfrak{g}=\lie{ \mathcal{G}}\subseteq \hat{A}_n$ of the skew-hermitian Weyl algebra
$\hat{A}_n$ that is generated by a set of generators
\begin{equation*}
\mathcal{G}:=\{ i{a}^\dag_k{a}^{\phantom{\dagger}}_k   \text{ for } k\in \{1,\ldots,n\} \text{ and }
g_{\sigma_p}^{\gamma_p} \text{ for } p \in \mathcal{R} \text{ with } \abs{\mathcal{R}} < \infty,\;
g_{\sigma_p}^{\gamma_p}\notin\hat{A}_n^1, \text{ and } \gamma_p=(\alpha^{(p)},\beta^{(p)})\in\Np{2n}\}.
\end{equation*}
Then, $\mathfrak{g}$ is finite-dimensional if and only if all of the conditions (i)~$\mathcal{G}^\perp=\emptyset$,
(ii)~$\mathcal{G}^{=}\subseteq\gc_\mathcal{G}$,
(iii)~$[\mathcal{G}^2,\mathcal{G}^{\text{om}}]\subseteq\hat{A}_n^{\text{om}}$,
and (iv)~$\PP{[\mathcal{G}^{\text{om}},\mathcal{G}^{\text{om}}]}{\hat{A}_n^\perp}=\emptyset$
are observed.
\end{theorem}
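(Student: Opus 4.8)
The statement is an equivalence, so the plan is to prove the two implications separately: \emph{necessity} (if $\g$ is finite-dimensional then (i)--(iv) hold), which I would establish in contrapositive form by showing that the violation of any single condition forces an infinite commutator chain inside $\g$; and \emph{sufficiency} (if (i)--(iv) hold then $\g$ is finite-dimensional), which I would establish by exhibiting an explicit finite-dimensional subspace of $\hat{A}_n$ that contains $\mathcal{G}$ and is closed under the commutator.

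For necessity I would treat the four conditions one at a time. If (i) fails there is a generator $g_{\sigma_p}^{\gamma_p}\in\hat{A}_n^\perp$; since $\alpha^{(p)}\neq\beta^{(p)}$ by Lemma~\ref{lemma:A_purp} (condition (B)) and every $ia_k^\dagger a_k\in\mathcal{G}$, Theorem~\ref{important:one:element:and:diagonal:implies:other:two:mode:new}(b) puts both $g_+^{\gamma_p}$ and $g_-^{\gamma_p}$ in $\g$, so the whole Type~II chain $C_\sigma^{II}(\gamma_p)$ lies in $\g$ and has $\deg(y_\sigma^{(\ell)})=3^\ell(\abs{\gamma_p}-2)+2\to\infty$ by Theorem~\ref{commutator:chain:final}, giving infinitely many linearly independent elements of strictly increasing degree. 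If (ii) fails, some $g\in\mathcal{G}^=$ fails to commute with another generator $h$; because diagonal elements commute with $\hat{A}_n^0$ and among themselves (Lemma~\ref{A:one:commutative} and Corollary~\ref{theorem:commutator:monomials:one:diagonal}(c)), necessarily $h\in\mathcal{G}^2\cup\mathcal{G}^{\text{om}}$, and Corollary~\ref{theorem:commutator:monomials:one:diagonal}(c) supplies an index $k$ with $\alpha_k\neq\beta_k$ for the multidegree of $h$ and with $\tilde\alpha_k\neq0$; this is exactly the hypothesis of Theorem~\ref{Second:Chant:Type:II:lemma:MIAO}, so the Type~I chain seeded by $h$ with diagonal $g$ has $\deg(z_\sigma^{(\ell)})=2\ell(\abs{\tilde\alpha}-1)+\abs{\alpha}+\abs{\beta}\to\infty$ since $\abs{\tilde\alpha}\geq2$. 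If (iv) fails there is $v=[g,g']\in\g$ with $g,g'\in\mathcal{G}^{\text{om}}$ and $\PP{v}{\hat{A}_n^\perp}\neq0$; here I would first use the adjoint maps $\mathrm{ad}_{ia_k^\dagger a_k}$ (whose square acts on each basis element by $-(\alpha_k-\beta_k)^2$, Theorem~\ref{important:one:element:and:diagonal:implies:other:two:mode:new}(a2)) to project $v$ onto a definite frequency sector, isolate a leading basis element lying in $\hat{A}_n^\perp$, and then invoke Theorem~\ref{commutator:chain:final}. Finally, for (iii): the commutator of a degree-$2$ element with an element of $\hat{A}_n^{\text{om}}$ has odd degree and so is supported on $\hat{A}_n^1\oplus\hat{A}_n^{\text{om}}\oplus\hat{A}_n^\perp$; a leading-multidegree computation via Lemma~\ref{lem:omnibus} would show that a violation of (iii) forces nonzero $\hat{A}_n^\perp$ support, reducing it to the argument for (iv)/(i).

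For sufficiency I would set $V:=\hat{A}_n^0\oplus\hat{A}_n^2\oplus V_=\oplus V_{\text{om}}$, where $V_{\text{om}}$ is the subspace of $\hat{A}_n^{\text{om}}$ spanned by $\mathcal{G}^{\text{om}}$ together with all its images under repeated $\mathrm{ad}$ by $\hat{A}_n^0$ and $\hat{A}_n^2$, and $V_=$ collects $\mathcal{G}^=$ together with the diagonal operators produced by commutators within $V_{\text{om}}$. Each summand is finite-dimensional: $\hat{A}_n^0$ and $\hat{A}_n^2=\mathfrak{sp}(2n,\R)$ are finite; commutators with $\hat{A}_n^0$ preserve multidegree (Lemma~\ref{commutator:no:change:vector:space}) and commutators with $\hat{A}_n^2$ preserve degree, so $V_{\text{om}}$ is confined to degree $\le\max_p\deg g_{\sigma_p}^{\gamma_p}$, and $V_=$ is confined to degree $\le 2\max\deg\mathcal{G}^{\text{om}}-2$. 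I would then verify closure of $[V,V]$ block by block, using $[\hat{A}_n^2,\hat{A}_n^2]=\hat{A}_n^2$, $[\hat{A}_n^0,\hat{A}_n^K]\subseteq\hat{A}_n^K$, Lemma~\ref{A:one:commutative} for $[\hat{A}_n^=,\hat{A}_n^=]=0$, condition (ii) with Lemma~\ref{center:basis:elements} to place $\mathcal{G}^=$ in the centre (so that all its commutators vanish), condition (iii) to keep $[\hat{A}_n^2,V_{\text{om}}]\subseteq\hat{A}_n^{\text{om}}$, and condition (iv) to confine $[V_{\text{om}},V_{\text{om}}]$ to $\hat{A}_n^0\oplus\hat{A}_n^2\oplus\hat{A}_n^=$. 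The delicate block is $[V_=,V_{\text{om}}]$: here I would show that (iv) forces the diagonal operators in $V_=$ to carry zero exponent on every optomechanical defect index of the om-generators (because a nonzero exponent there would, via Lemma~\ref{lem:omnibus} and Lemma~\ref{lemma:A_purp}, appear as $\hat{A}_n^\perp$ support in some om--om commutator), whence Corollary~\ref{theorem:commutator:monomials:one:diagonal}(c) gives $[V_=,V_{\text{om}}]=0$ and no Type~I growth can begin.

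The main obstacle is exactly this propagation: conditions (iii) and (iv) are imposed only on the generating set $\mathcal{G}$, whereas closure requires their consequences to hold for \emph{all} elements of $V_{\text{om}}$ and $V_=$. The heart of the argument is therefore a structural lemma showing that, under (iii) and (iv), the $\hat{A}_n^2$-module $V_{\text{om}}$ remains inside $\hat{A}_n^{\text{om}}$ and the diagonal operators it generates are mode-compatible with every om-generator, so that the dangerous Type~I and Type~II chains all terminate. I expect the complementary-operator formalism (Definition~\ref{definition:complementary:elements} and Lemma~\ref{pre:chain:lemma}) to be the right bookkeeping device, since every om-element and its partner form a complementary pair and Lemma~\ref{pre:chain:lemma}(b) already forces their bracket to be purely diagonal; the remaining task is to control which diagonal modes can occur. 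The analogous subtlety on the necessity side---extracting a genuine $\hat{A}_n^\perp$ basis-element seed from an element that only has $\hat{A}_n^\perp$ support---I would resolve with the same frequency-projection idea.
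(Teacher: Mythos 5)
Your treatment of conditions (i) and (ii) and your sufficiency plan track the paper's own proof closely: the paper also proves necessity by negating each condition and exhibiting a chain of unbounded degree, and its sufficiency argument is essentially your mode-compatibility idea, implemented via a partition of the mode indices into a set $S^1$ carrying the optomechanical defect indices and a set $S^=$ carrying the diagonal support, from which $[\mathcal{G}^2,[\mathcal{G}^{\text{om}},\mathcal{G}^{\text{om}}]]=0$ and a finite count of independent commutators follow. (Minor slip in your case (ii): the non-commuting partner $h$ may also lie in $\mathcal{G}^\perp$, not only in $\mathcal{G}^2\cup\mathcal{G}^{\text{om}}$; the paper handles this by deferring to the case-(i) argument.)

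The genuine gap is in your necessity arguments for (iii) and (iv). You propose to reduce both to ``some $v\in\g$ has nonzero support on $\hat{A}_n^\perp$,'' then to ``isolate a leading basis element lying in $\hat{A}_n^\perp$'' by frequency projection with the maps $\mathrm{ad}_{ia_k^\dagger a_k}$, and finally to invoke Theorem~\ref{commutator:chain:final}. The extraction step fails in general. The operators $\mathrm{ad}_{ia_k^\dagger a_k}$ see a basis element only through its signature $(\alpha_k{-}\beta_k)_k$, and only up to an overall sign; so they can separate the two signature classes $\gamma+\hat{\gamma}-\ones(S)$ and $\Theta(\gamma+\hat{\gamma}^{\dagger})-\ones(S)$ appearing in a commutator, but nothing finer. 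Within one class, all leading terms $g_{+}^{\gamma+\hat{\gamma}-\ones_v}$ for different $v$ share the same signature \emph{and} the same degree, and basis elements of equal signature but different multidegree (such as $\gamma_*$ and $\gamma_*-\ones(S)$) are likewise inseparable: the span of a symmetric combination is invariant under every $\mathrm{ad}_{ia_k^\dagger a_k}$. Hence after projection you hold an element of $\g$ whose leading part is a linear combination of basis elements, not a single $g_{\sigma}^{\gamma}\in\hat{A}_n^\perp$, whereas Theorem~\ref{commutator:chain:final} and the Type~II machinery behind it (Propositions~\ref{eq:chain:two:induction:iteration} and~\ref{Second:Chant:Type:II:lemma}) are proved only for basis-element seeds; for combination seeds one must rule out cancellations among cross terms, which is precisely the hard combinatorial content and is not supplied in your plan. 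Your claim that a violation of (iii) forces $\hat{A}_n^\perp$ support is also asserted rather than proved.

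The paper sidesteps this obstacle entirely, and that is the missing idea. For (iii) and (iv) it never attempts to place an $\hat{A}_n^\perp$ basis element inside $\g$. Instead it uses the fact that an om-generator has a \emph{unique} defect index $k$, so Lemma~\ref{lem:omnibus}(a) collapses to a single term, $[g_{+}^{\gamma},g_{-}^{\gamma}]\upto{d}(\beta_k^2{-}\alpha_k^2)\,g_{+}^{\gamma+\gamma^{\dagger}-\ones_k}$: this commutator is an honest element of $\g$ equal, up to lower degree, to one diagonal basis element $g_{+}^{\tilde{\gamma}}$ with $|\tilde{\gamma}|\geq 4$. It then runs a Commutator Chain of Type~I seeded by this element together with a suitable generator (the $\mathcal{G}^2$ generator in case (iii), the other om-generator in case (iv)), verifies through the case analysis that some index $k'$ satisfies $\tilde{\alpha}_{k'}\neq 0$ and $\hat{\alpha}_{k'}\neq\hat{\beta}_{k'}$, and concludes unbounded degree growth from Theorem~\ref{Second:Chant:Type:II:lemma:MIAO}. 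As written, your step from ``$\PP{v}{\hat{A}_n^\perp}\neq 0$ for some $v\in\g$'' to ``$\g$ is infinite-dimensional'' is unsupported; you would need either the paper's Type~I construction or a genuine extension of the Type~II analysis to non-basis seeds.
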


Let us briefly discuss the cases covered by the conditions (i)-(iv) posed above. We divided the set $\mathcal{G}$ of generators
into the five subsets $\mathcal{G}^0$, $\mathcal{G}^{=}$, $\mathcal{G}^2$, $\mathcal{G}^{\text{om}}$, and $\mathcal{G}^\perp$
(refer to Section~\ref{sec:decomp:weyl}).
We could expect $15$ possible conditions on these sets due to the fact that a Lie algebra is obtained from its generating set
$\mathcal{G}$ by commutators, and there are $15$ possible commutators involving the subsets of $\mathcal{G}$ (recall that the
commutator of one subset with itself might \emph{not} vanish).
Nevertheless, we see that if we require that
$\mathcal{G}^{=}\subseteq\gc_\mathcal{G}$ it follows that also $\mathcal{G}^0\subseteq\gc_\mathcal{G}$,
and this removes $5+4=9$ commutators. Furthermore, the condition $\mathcal{G}^\perp=\emptyset$ removes another
$3$ commutators.
Therefore we are left with only $15-9-3=3$ commutators to be fixed: namely, $[\mathcal{G}^2,\mathcal{G}^2]$,
$[\mathcal{G}^2,\mathcal{G}^{\text{om}}]$, $[\mathcal{G}^{\text{om}},\mathcal{G}^{\text{om}}]$. Finally,
$[\mathcal{G}^2,\mathcal{G}^2]\subseteq \mathfrak{sp}(2n,\mathbb{R})$
which has a finite dimension.
Thus, this commutator does not add any information on the overall dimensionality of the algebra and we are left with $2$
commutators, namely $[\mathcal{G}^2,\mathcal{G}^{\text{om}}]$, $[\mathcal{G}^{\text{om}},\mathcal{G}^{\text{om}}]$.
These are precisely the remaining conditions required in the Theorem~\ref{main:theorem_final},
and therefore all possible cases are covered.
This provides insights into the general commutator structure.

We proceed with the proof. One direction of the proof is divided into
four steps. In each step we show that, by negating a specific condition,
the proof is reduced to building a Commutator Chain of Type I or II with the property that 
its elements $z^{(\ell)}_\sigma$ or $y^{(\ell)}_\sigma$ all belong to the generated Lie algebra $\g$ and have 
a monotonically increasing degree as a function of $\ell$.
This, in turn, proves that the Lie algebra $\g$ is not finite as each chain is an unbounded sequence of linearly independent elements.
Figure~\ref{Figure:One} provides a visual representation of the structure of one direction in the proof.

\begin{figure}[t]
\includegraphics[width=0.9\linewidth]{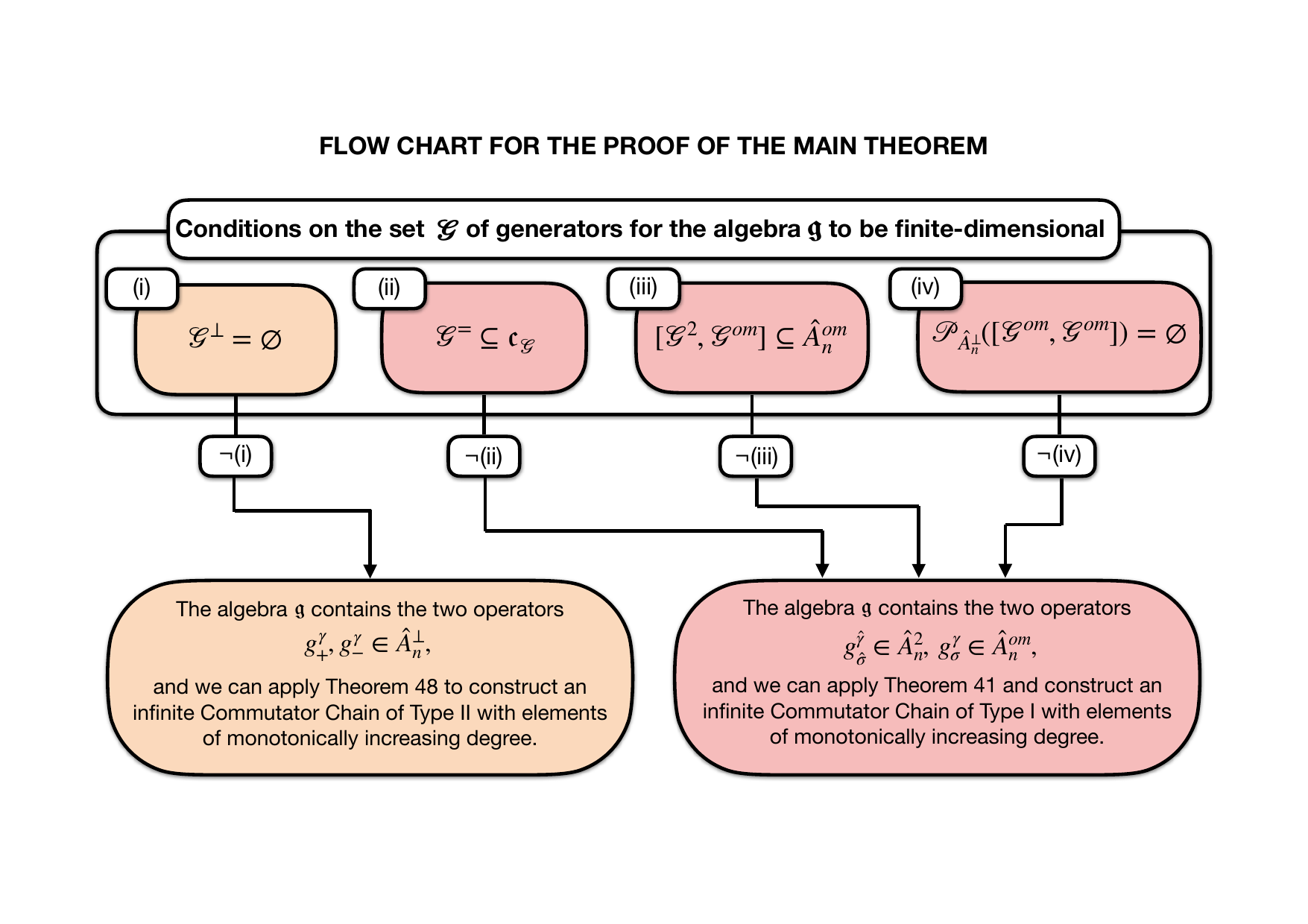}
\caption{\textbf{Flow chart for one direction in the proof of our main result as given by Theorem~\ref{main:theorem_final}}.
We proceed by showing that negating any condition as outlined in the claim results
in a infinite-dimensional  Lie algebra $\g$. The reverse direction is not depicted as a more direct proof is possible.}\label{Figure:One}
\end{figure}

\begin{proof}[Proof of Theorem~\ref{main:theorem_final}]
Below we assume in turn that one of the conditions (i) to (iv) is false and we then show that, in each of these cases, the 
generated Lie algebra $\g$ would be infinite-dimensional. The possible cases are:

First, assume that (i) does not hold, i.e., $\mathcal{G}^\perp\neq\emptyset$.
Thus, $g_{\sigma}^\gamma\in \mathcal{G}^\perp \subseteq \mathcal{G}$ with $\gamma=(\alpha,\beta)$ and $\abs{\gamma}=\abs{\alpha}+\abs{\beta}\geq 3$.
Clearly, $\alpha\neq\beta$ and $g_{\sigma}^\gamma \in \g$.
Theorem~\ref{important:one:element:and:diagonal:implies:other:two:mode:new} 
together with $\alpha\neq\beta$ implies $g_{-\sigma}^\gamma\in\mathfrak{g}$.
Theorem~\ref{commutator:chain:final} then shows that $\g$ is infinite-dimensional.

Second, assume that (ii) does not hold. This implies that $\mathcal{G}^{=}\neq\emptyset$ and that there exist two basis elements
$g_+^{\tilde{\gamma}}=2i a^{\tilde{\gamma}} \in \mathcal{G}^{=} \subseteq \mathcal{G}$
(with $\tilde{\gamma}=(\tilde{\alpha},\tilde{\alpha})$ and $\abs{\tilde{\alpha}}\geq 2$)
and $g_{\sigma}^\gamma\in\mathcal{G}$ such that
$[g_+^{\tilde{\gamma}}, g_{\sigma}^\gamma] \neq 0$.
Corollary~\ref{theorem:commutator:monomials:one:diagonal}(c) implies that 
$g_{\sigma}^\gamma\notin(\mathcal{G}^0{\cup}\mathcal{G}^{=})$, with $\gamma=(\alpha,\beta)$ and
$\alpha\neq\beta$. We assume that $g_{\sigma}^\gamma\notin\mathcal{G}^\perp$
as otherwise we could apply the argument from the first case. 
The conditions for Proposition~\ref{max:length:simple:chain:theorem} apply and we construct a
Commutator Chain of Type~I containing the elements $z^{(\ell)}_\sigma$ generated by
$g_\sigma^\gamma$  and $g_+^{\tilde{\gamma}}$.
Moreover, Corollary~\ref{theorem:commutator:monomials:one:diagonal}(c) also verifies that
there is at least one index $k$ such that $\alpha_k\neq\beta_k$ and $\tilde{\alpha}_k\neq0$.
Theorem~\ref{Second:Chant:Type:II:lemma:MIAO} thus guarantees that 
$\deg(z^{(\ell)}_\sigma)=2\ell(\abs{\tilde{\alpha}}{-}1)+\abs{\alpha}+\abs{\beta}> 2\ell +\abs{\alpha}+\abs{\beta}$.
Thus $\g$ is not finite-dimensional.

In the third part of the proof, we refer to
the notation of Section~\ref{sec:decomp:weyl}.
Here we consider two elements $g_{\hat{\sigma}}^{\hat{\gamma}}\in\mathcal{G}^2$ and
$g_\sigma^{\gamma}\in\mathcal{G}^{\text{om}}$.
As $g_\sigma^{\gamma}\in\mathcal{G}^{\text{om}}$
implies $\alpha \neq \beta$, Theorem~\ref{important:one:element:and:diagonal:implies:other:two:mode:new}(b)
shows that $g_{-\sigma}^\gamma\in \g$.
We now note that since $g_\sigma^{\gamma}\in\mathcal{G}^{\text{om}}$ there exists a unique $k$ such that
$\alpha_k+\beta_k=1$, while $\alpha_p=\beta_p$ for all $p\neq k$ and $|\gamma|\geq3$.
Before proceeding, however, let us comment on the cases where (iii) \emph{does} indeed hold.
This will help us find the conditions for the remaining cases, which are those of interest for this part of the proof.
It is easy to see that the following cases occur
if (iii) \emph{does} hold: (a) If
$g_{\hat{\sigma}}^{\hat{\gamma}}= g_{\hat{\sigma}}^\text{S}(k)$ with
$\hat{\alpha}_k+\hat{\beta}_k=2$ and $\hat{\alpha}_k\neq\hat{\beta}_k$ as well as $\hat{\alpha}_p=\hat{\beta}_p=0$ for $p\neq k$,
then $0\neq [g_{\hat{\sigma}}^{\hat{\gamma}},g_\sigma^{\gamma}] \in \hat{A}_n^{\text{om}}$.
(b) If $g_{\hat{\sigma}}^{\hat{\gamma}}= g_{\hat{\sigma}}^\text{B}(k,k')$ or
$g_{\hat{\sigma}}^{\hat{\gamma}}= g_{\hat{\sigma}}^\text{T}(k,k')$
with $\hat{\alpha}_k+\hat{\beta}_k=1$ and $\hat{\alpha}_{k'}+\hat{\beta}_{k'}=1$ and
$\hat{\alpha}_p=\hat{\beta}_p=0$ for $p\notin \{k,k'\}$ (assuming $k\neq k'$), but also $\alpha_{k'}=\beta_{k'}=0$,
then $0\neq [g_{\hat{\sigma}}^{\hat{\gamma}},g_\sigma^{\gamma}]\in \hat{A}_n^{\text{om}}$.

We are now ready to proceed and assume that (iii) does not hold.
This implies that there exists $g_{\hat{\sigma}}^{\hat{\gamma}}\in\mathcal{G}^2$ and
$g_\sigma^{\gamma}\in\mathcal{G}^{\text{om}}$
such that $0\neq [g_{\hat{\sigma}}^{\hat{\gamma}},g_\sigma^{\gamma}]\notin\hat{A}_n^{\text{om}}$.
We have:
(a') $\hat{\alpha}_{k'}+\hat{\beta}_{k'}=2$ and $\hat{\alpha}_{k'}\neq \hat{\beta}_{k'}$
for $k' \neq k$
as well as $\hat{\alpha}_p=\hat{\beta}_p=0$
for $p\neq k'$
and $\alpha_{k'} = \beta_{k'} \neq 0$ (as otherwise they would commute); or (b')
there is at least one index $k'\neq k$ for which
$\hat{\alpha}_{k'}+\hat{\beta}_{k'}=1$, and 
we must have $\alpha_{k'}=\beta_{k'}\neq0$
in both cases $g_{\hat{\sigma}}^{\hat{\gamma}}= g_{\hat{\sigma}}^\text{B}(k,k')$ and
$g_{\hat{\sigma}}^{\hat{\gamma}}= g_{\hat{\sigma}}^\text{T}(k,k')$.
Regardless of the choice of scenario (a') or (b'), we now employ Lemma~\ref{lem:omnibus}(a) and start by computing
$[g_+^{\gamma},g_-^{\gamma}]\upto{d}
\sum\nolimits_{v} (\beta_v^2 {-} \alpha_v^2)\,g_{+}^{\gamma+\gamma^{\dagger}-\ones_v}
=(\beta_{k}^2 {-} \alpha_{k}^2)\,g_{+}^{\gamma+\gamma^{\dagger}-\ones_{k}}$, 
where the first term in Lemma~\ref{lem:omnibus}(a) vanishes because $\gamma=\lambda$,
the only remaining term in the sum over $v$ is obtained for $v=k$ since $\alpha_p=\beta_p$ for all $p\neq k$, where
$d=2\abs{\gamma}-2$. Crucially, $\abs{\beta_{k}^2 {-} \alpha_{k}^2}=1\neq0$ and we define
$g_+^{\tilde{\gamma}}:=g_{+}^{\gamma+\gamma^{\dagger}-\ones_{k}}$, which has
$\tilde{\gamma}:=\gamma+\gamma^{\dagger}-\ones_{k}$, and therefore $\tilde{\alpha}_p=\tilde{\beta}_p$ for all
$p$ and, in particular, $\tilde{\alpha}_{k}=\tilde{\beta}_{k}=0$. 
Considering either case (a') or (b'), the property $\alpha_{k'}=\beta_{k'}\neq0$
implies $\tilde{\alpha}_{k'}=\tilde{\beta}_{k'}\neq0$.
Note $|\tilde{\gamma}|\geq4$ and that 
$g_+^{\tilde{\gamma}}$ might not be contained in $\g$ itself.
Therefore, since we have that $\tilde{\alpha}_{k'}=\tilde{\beta}_{k'}\neq0$, and
either $\hat{\alpha}_{k'}+\hat{\beta}_{k'}=2$ and 
$\hat{\alpha}_{k'}\neq\hat{\beta}_{k'}$
or $\hat{\alpha}_{k'}+\hat{\beta}_{k'}=1$
with $k'\neq k$ (i.e.\ $\hat{\alpha}_{k'}\neq\hat{\beta}_{k'}$ for $k'\neq k$),
we see that $[[g_+^{\gamma},g_-^{\gamma}],g_\sigma^{\hat{\gamma}}]\upto{d'}[g_+^{\tilde{\gamma}},
g_\sigma^{\hat{\gamma}}]$ with $d'=|\gamma|+|\tilde{\gamma}|-2$. 
Thus, we can essentially construct a Commutator Chain of Type I with seed operators $g_+^{\tilde{\gamma}}\upto{d}[g_+^{\gamma},g_-^{\gamma}]$ and $g_\sigma^{\hat{\gamma}}$
(for which we have either $\hat{\alpha}_{k'}+\hat{\beta}_{k'}=2$ with $\hat{\alpha}_{k'}\neq\hat{\beta}_{k'}$ or $\hat{\alpha}_{k'}+\hat{\beta}_{k'}=1$), where
$z_\sigma^{\ell+1}:=[[g_+^{\gamma},g_-^{\gamma}],z_\sigma^\ell]\upto{d^{(\ell+1)}}[g_+^{\tilde{\gamma}},z_\sigma^\ell]$
for $d^{(\ell+1)}=2(\ell{+}1)(\abs{\tilde{\alpha}}{-}1)+\abs{\alpha}+\abs{\beta}$. 
We obtain that $g_+^{\tilde{\gamma}},g_\sigma^{\hat{\gamma}}$
satisfy the conditions of Theorem~\ref{Second:Chant:Type:II:lemma:MIAO}
since
$\tilde{\alpha}_{k'}=\tilde{\beta}_{k'}\neq0$, $\hat{\alpha}_{k'}+\hat{\beta}_{k'}=2$
or $\hat{\alpha}_{k'}+\hat{\beta}_{k'}=1$, and $|\tilde{\alpha}|\geq2$.  
We conclude that $\deg(z_\sigma^{\ell+1})=\deg([g_+^{\tilde{\gamma}},z_\sigma^\ell])$, which is maximal, and it follows that $\g$
contains all elements $z_\sigma^\ell$ of monotonically increasing order as a function of $\ell$. Therefore, it is 
infinite-dimensional.

Fourth, assume that (iv) does not hold.
This means that there exist two operators
$g_\sigma^{\gamma},g_{\hat{\sigma}}^{\hat{\gamma}}\in\mathcal{G}^{\text{om}}$
with $\gamma=(\alpha,\beta)$ and $\hat{\gamma}=(\hat{\alpha},\hat{\beta})$ such that $[g_{\hat{\sigma}}^{\hat{\gamma}},g_\sigma^{\gamma}]\neq0$ since
$\PP{[g_{\hat{\sigma}}^{\hat{\gamma}},g_\sigma^{\gamma}]}{\hat{A}_n^\perp}\neq\emptyset$. As
$g_{\hat{\sigma}}^{\hat{\gamma}},g_\sigma^{\gamma}\in\mathcal{G}^{\text{om}}$
implies $\alpha \neq \beta$ and $\hat{\alpha}\neq\hat{\beta}$,
Theorem~\ref{important:one:element:and:diagonal:implies:other:two:mode:new}(b)
shows that $g_{-\sigma}^{\gamma},g_{-\hat{\sigma}}^{\hat{\gamma}}\in \g$. Furthermore, we
know that there are unique $k,\hat{k}$ such that $\alpha_k+\beta_k=1$ and
$\hat{\alpha}_{\hat{k}}+\hat{\beta}_{\hat{k}}=1$, while $\alpha_j=\beta_j$ and $\hat{\alpha}_p=\hat{\beta}_p$,
for all $j\neq k$, $p\neq \hat{k}$ respectively. Crucially, it is immediate to verify using
Lemma~\ref{lem:omnibus} that if $k=\hat{k}$ then
$[g_{\hat{\sigma}}^{\hat{\gamma}},g_\sigma^{\gamma}]\in\hat{A}_n^=$ and
therefore $\PP{[g_{\hat{\sigma}}^{\hat{\gamma}},g_\sigma^{\gamma}]}{\hat{A}_n^\perp}=\emptyset$
against the hypothesis.
Thus, we also have $k\neq\hat{k}$.
We can use 
Lemma~\ref{lem:omnibus}(a) 
and obtain 
\begin{align*}[g^{\hat{\gamma}}_{+},g_{-}^{\gamma}]
&\upto{d}\sum\nolimits_{v}
(\alpha_v \hat{\beta}_v {-} \beta_v \hat{\alpha}_v)\, g_{+}^{\hat{\gamma}+\gamma-\ones_v}
+\sum\nolimits_{v} (\beta_v \hat{\beta}_v {-} \alpha_v \hat{\alpha}_v)\,g_{+}^{\Theta(\hat{\gamma}
+\gamma^{\dagger})-\ones_v}\\
&\upto{d}(\alpha_k \hat{\beta}_k {-} \beta_k \hat{\alpha}_k)\, g_{+}^{\hat{\gamma}+\gamma-\ones_k}
+(\beta_k \hat{\beta}_k {-} \alpha_k \hat{\alpha}_k)\,g_{+}^{\Theta(\hat{\gamma}+\gamma^{\dagger})
-\ones_k}+(\alpha_{\hat{k}} \hat{\beta}_{\hat{k}} {-} \beta_{\hat{k}} \hat{\alpha}_{\hat{k}})\, g_{+}^{\hat{\gamma}+\gamma-\ones_{\hat{k}}}
+(\beta_{\hat{k}} \hat{\beta}_{\hat{k}} {-} \alpha_{\hat{k}} \hat{\alpha}_{\hat{k}})\,g_{+}^{\Theta(\hat{\gamma}+\gamma^{\dagger})
-\ones_{\hat{k}}}\\
&\upto{d}(\alpha_k  {-} \beta_k )\hat{\alpha}_k(g_{+}^{\hat{\gamma}+
\gamma-\ones_k}-g_{+}^{\Theta(\hat{\gamma}+\gamma^{\dagger})-\ones_k})
+\alpha_{\hat{k}}( \hat{\beta}_{\hat{k}} {-} \hat{\alpha}_{\hat{k}})\,( g_{+}^{\hat{\gamma}+\gamma-\ones_{\hat{k}}}
+\,g_{+}^{\Theta(\hat{\gamma}+\gamma^{\dagger})-\ones_{\hat{k}}}).
\end{align*}
This expression vanishes when $\alpha_{\hat{k}}=\beta_{\hat{k}}=\hat{\alpha}_k=\hat{\beta}_k=0$, and 
it is immediate to see that this also
implies $[g_{+}^{\hat{\gamma}},g_{-}^{\gamma}]=0$ as well as $[g_{\hat{\sigma}}^{\hat{\gamma}},g_\sigma^{\gamma}]=0$ (in fact, in this case
$\alpha=\beta$ and $\hat{\alpha}=\hat{\beta}$), which is against the hypothesis.
Consequently, at least either
$\alpha_{\hat{k}}=\beta_{\hat{k}}\neq0$, $\hat{\alpha}_k=\hat{\beta}_k\neq0$, or both. We assume here that
$\hat{\alpha}_k=\hat{\beta}_k\neq0$.
We now employ Lemma~\ref{lem:omnibus}(a) and compute
$[g_+^{\hat{\gamma}},g_-^{\hat{\gamma}}]\upto{d}
\sum\nolimits_{v} (\hat{\beta}_v^2 {-} \hat{\alpha}_v^2)\,g_{+}^{\hat{\gamma}+\hat{\gamma}^{\dagger}-\ones_v}
=(\hat{\beta}_{\hat{k}}^2 {-} \hat{\alpha}_{\hat{k}}^2)\,g_{+}^{\hat{\gamma}+\hat{\gamma}^{\dagger}-\ones_{\hat{k}}}$, 
where the first term in Lemma~\ref{lem:omnibus}(a) vanishes because $\hat{\gamma}=\hat{\lambda}$,
the only remaining term in the sum over $v$ is obtained for $v=\hat{k}$ since $\hat{\alpha}_p=\hat{\beta}_p$ for all
$p\neq\hat{k}$ and $d=2|\hat{\gamma}|-2$. Crucially,
$|\hat{\beta}_{\hat{k}}^2 {-} \hat{\alpha}_{\hat{k}}^2|=1\neq0$ and we define
$g_+^{\tilde{\gamma}}:=g_{+}^{\hat{\gamma}+\hat{\gamma}^{\dagger}-\ones_{\hat{k}}}$ with
$\tilde{\gamma}:=\hat{\gamma}+\hat{\gamma}^{\dagger}-\ones_{\hat{k}}$, and therefore
$\tilde{\alpha}_p=\tilde{\beta}_p$ for all $p$ and, in particular,
$\tilde{\alpha}_{\hat{k}}=\tilde{\beta}_{\hat{k}}=0$. Moreover, $\hat{\alpha}_k=\hat{\beta}_k\neq0$
implies $\tilde{\alpha}_k=\tilde{\beta}_k\neq0$.
Finally, we see that $[[g_+^{\hat{\gamma}},g_-^{\hat{\gamma}}],g_\sigma^{\gamma}]\upto{d'}[g_+^{\tilde{\gamma}},
g_\sigma^{\gamma}]$ with $d'=|\gamma|+|\tilde{\gamma}|-2$.
Thus, we can essentially construct a Commutator Chain of Type I starting from operators $g_\sigma^{\gamma}$
(where $\alpha_k+\beta_k=1$ implies $\alpha_k\neq\beta_k$)
and $g_+^{\tilde{\gamma}}\upto{d}[g_+^{\hat{\gamma}},g_-^{\hat{\gamma}}]$, where
$z_\sigma^{\ell+1}:=[[g_+^{\hat{\gamma}},g_-^{\hat{\gamma}}],z_\sigma^\ell]\upto{d^{(\ell+1)}}[g_+^{\tilde{\gamma}},z_\sigma^\ell]$
for $d^{(\ell+1)}=2(\ell{+}1)(\abs{\tilde{\alpha}}{-}1)+\abs{\alpha}+\abs{\beta}$. It follows that, since
$\tilde{\alpha}_k=\tilde{\beta}_k\neq0$ and $\alpha_k+\beta_k=1$, we have that $g_+^{\tilde{\gamma}},g_\sigma^{\gamma}$
satisfy the conditions of Theorem~\ref{Second:Chant:Type:II:lemma:MIAO}. 
We conclude that $\deg(z_\sigma^{\ell+1})=\deg([g_+^{\tilde{\gamma}},z_\sigma^\ell])$, which is maximal, and it follows that $\g$
contains all elements $z_\sigma^\ell$ of monotonically increasing order as a function of $\ell$. Therefore it is 
infinite-dimensional.

We now prove the other direction, i.e., that conditions (i)-(iv), together with the finiteness of $\mathcal{G}$,
imply that $\g$ is finite-dimensional.
The idea behind this part of the proof is the following: given a set $\mathcal{G}$ of generators,
we construct the full Lie algebra by means of linear combinations and commutators. 
The key aspects to consider here are the commutators of the generators,
and how the dimensionality of the Lie algebra can be enlarged using commutators.
We assume the conditions (i)-(iv), and that $\mathcal{G}$ is finite-dimensional. 
We need to consider the five possible subsets $\mathcal{G}^{\text{K}}$ of $\mathcal{G}$.
By assumption $\mathcal{G}^\perp$ is empty and $\mathcal{G}^=\subseteq\gc_\mathcal{G}\subseteq\gc$. 
This immediately implies that the finite-dimensionality of $\g$ is independent of these sets.
We next note that $\mathcal{G}^0$ does not pose a concern because we know that for each $g_\sigma^\gamma\in\mathcal{G}$,
$[\mathcal{G}^0,g_\sigma^\gamma] = r g_{-\sigma}^\gamma$ for $r\in \R$, and thus $g_{-\sigma}^\gamma\in\g$. 
Thus commutators of elements of a finite-dimensional subset of $\g$ with elements of $\mathcal{G}^0$
can only add a finite set of elements. 
We therefore are left with two sets to consider, namely $\mathcal{G}^2$ and $\mathcal{G}^{\text{om}}$. 
We start by observing that $\mathcal{G}^0{\cup}\mathcal{G}^2$ generates a finite-dimensional
Lie algebra that is a subset of the finite-dimensional Lie algebra $\mathfrak{sp}(2n,\mathbb{R})$
of the symplectic group $\mathrm{Sp}(2n,\mathbb{R})$.
It is easy to verify using Lemma~\ref{lem:omnibus} that condition (iv) implies that
$[\mathcal{G}^{\text{om}},\mathcal{G}^{\text{om}}]\subseteq\hat{A}^=_n$. This observation, together
with condition (iii) and Lemma~\ref{lem:omnibus}, immediately implies that
$[\mathcal{G}^2,[\mathcal{G}^{\text{om}},\mathcal{G}^{\text{om}}]]=0$. 
To see this last statement, we recall the proof of the fourth part above, 
and assume that there exist at least two different operators $g_\sigma^{\gamma},g_{\hat{\sigma}}^{\hat{\gamma}}\in\mathcal{G}^{\text{om}}$
with $\gamma=(\alpha,\beta)$ and $\hat{\gamma}=(\hat{\alpha},\hat{\beta})$. 
The case where there is only one $g_\pm^{\gamma}\in\mathcal{G}^{\text{om}}$ is trivially true. 
For each pair $g_\sigma^{\gamma},g_{\hat{\sigma}}^{\hat{\gamma}}\in\mathcal{G}^{\text{om}}$ we have that $[g_\sigma^{\gamma},g_{\hat{\sigma}}^{\hat{\gamma}}]\in\hat{A}_n^=$ due to condition (iv). Using explicit expressions as discussed in the proof of part four above, it is immediate to see that we must have two index sets $S^=$ an $S^1$ such that $S^=\cup S^1=\{1,...,n\}$ and $S^=\cap S^1=\emptyset$, which implies the following: for each operator $g_\sigma^{\gamma}\in\mathcal{G}^{\text{om}}$, one has that there is a unique $k\in S^1$ such that $\alpha_k+\beta_k=1$, while $\alpha_p=\beta_p$ for all $p\in S^=$. Clearly $p\neq k$ for all $p$ since $S^=\cap S^1=\emptyset$. A key consequence at this point is that $[g_\sigma^{\gamma},g_{\hat{\sigma}}^{\hat{\gamma}}]=\sum_p g_{\sigma_p}^{\gamma_p}\in\hat{A}_n^=$ with $\gamma_p=(\alpha^{(p)},\beta^{(p)})$, and $\alpha_q^{(p)},\beta_q^{(p)}$ are non vanishing only for $p\in S^=$. Thus, 
using the logical steps outlined in the proof of part three, condition (iii) implies that for all elements $g_{\sigma'}^{\gamma'}\in\mathcal{G}_n^2$ one has $\alpha_p\neq0$ and $\beta_p\neq0$ only for $p\in S^1$. Consequently, the elements in $\mathcal{G}^2$ have non zero support only in $S^1$ while $[\mathcal{G}^{\text{om}},\mathcal{G}^{\text{om}}]$ has non zero support only in $S^=$, thus $[\mathcal{G}^2,[\mathcal{G}^{\text{om}},\mathcal{G}^{\text{om}}]]=0$.
All together this gives a finite-dimensional
Lie algebra $\g$, since there are only finitely many independent operators in the commutators
$[\mathcal{G}^2,\mathcal{G}^{\text{om}}]$, namely at most $(n{-}1)\times\abs{\mathcal{G}^{\text{om}}}$.
This is implied as each element $g_\sigma^\gamma\in \hat{A}_n^{\text{om}}$
has a unique $k$ such that $\alpha_k+\beta_k=1$ while $\alpha_p=\beta_p$ for all $p\neq k$
and $\abs{\gamma}\geq3$. Thus, we assume that there are $m\leq n-2$ indices $p_v$ such that
$\alpha_{p_v}=\beta_{p_v}=0$ for $v\in \{1,\ldots,m\}$. We construct $m+1$ operators
$g_\sigma^{\gamma^{(u)}}\in \hat{A}_n^{\text{om}}$ such that $\alpha^{(u)}_j+\beta^{(u)}_j=1$ for
$j\in\{k,p_1,\ldots,p_m\}$ and $\alpha^{(u)}_p=\beta^{(u)}_p=\alpha_p=\beta_p\neq0$ otherwise.
There are only finitely many independent operators in
$[\mathcal{G}^{\text{om}},\mathcal{G}^{\text{om}}]$, namely at most
$(n{-}1)^2\times\abs{\mathcal{G}^{\text{om}}}^2$. Including all possible commutators with
$\mathcal{G}^0$ completes the proof.
\end{proof}

\section{Discussion and outlook}\label{discussion:outlook:section}
We continue by discussing our results and exploring future directions as an outlook.
Many aspects of interest remain open and have not yet been resolved here in their full
generality.

\subsection{Discussion}
The first natural next step beyond the present work is to consider cases where the free  
terms $i a^\dag_k a_k$ appear among the generators of the Lie algebra of interest only as a  time-independent 
linear combination. This property is common in many concrete cases of physical interest, where a
Hamiltonian $H(t)$ reads $H(t)=H_0+H_{\text{I}}(t)$ and $H_0=\sum_k \omega_k a^\dag_k a_k$, with
$\omega_k$ time-independent real numbers (usually referred to as the \textit{bare frequencies} of the oscillators).
In order to understand the subtleties related to this issue, a simple example helps us to highlight the different possibilities:

\begin{tcolorbox}[colback=orange!3!white,colframe=orange!85!black,title=EXAMPLE]
\begin{example}\label{example:bad}
Let us consider the elements $x_1:=ia^\dag a+ib^\dag b+2ic^\dag c$ and $x_+:=i(a^\dag b^\dag c+a b c^\dag)$, as well as the two sets of generators: $\mathcal{G}_1=\{ia^\dag a, ib^\dag b, ic^\dag c, x_+\}$
and $\mathcal{G}_2=\{x_1, x_+\}$. It is easy to see that commuting the generators in $\mathcal{G}_1$ leads to the additional Lie-algebra elements
$x_-=a^\dag b^\dag c-a b c^\dag$ , $[x_+,x_-]=2i(a^\dag ac^\dag c+b^\dag bc^\dag c-a^\dag ab^\dag b)$, etc.
It turns out that the generators in $\mathcal{G}_1$ give rise to an infinite-dimensional Lie algebra $\g$ (see Main Result above). 
On the other hand, we can
immediately see from $\mathcal{G}_2$ that $[x_1,x_+]=0$, and thus the Lie algebra generated by the elements of $\mathcal{G}_2$ is finite.  
\end{example}
\end{tcolorbox}

It is clear that our main result Theorem~\ref{main:theorem_final} 
cannot be applied to the second set of generators $\mathcal{G}_2$
of Example~\ref{example:bad}, and we will need to develop a more general 
variant in order to address this and similar cases.
In fact, one key aspect in Theorem~\ref{main:theorem_final} is that the elements $i a^\dag_k a_k$
are contained in the set of generators and are consequently contained individually 
in the generated Lie algebra $\g$. For each  $g_\sigma^\gamma\in\g$ that is not an element of $\hat{A}^0_n$ or in$\hat{A}^=_n$,
this then implies that $g_{-\sigma}^\gamma\in\g$, which is a crucial step in our proof.
For $\mathcal{G}_2$, however, a generator
$x_+=i(a^\dag b^\dag c+a b c^\dag)\in\mathcal{G}_2$ with $x_+\in\hat{A}^\perp_3$
occurs, but the elements
$i a^\dag_k a_k$ do not appear individually in $\mathcal{G}_2$ and instead appear only as part of the linear combination $x_1:=ia^\dag a+ib^\dag b+2ic^\dag c$
with the key property that $[x_1,x_+]=0$. Thus, it is not true that $g_+^\gamma= x_2\in\g$ implies that
$g_-^\gamma\in\g$ as required in the proof of our main result. In the context of the commutator chains (see Section~\ref{commutator:chain:section}), 
the existence of complementary elements (see Section~\ref{subsection:complementary}) is not guaranteed
using only the fixed linear combination $x_1=ia^\dag a+ib^\dag b+2ic^\dag c$, and therefore the proof
does not immediately generalize.

Hybrid scenarios might also occur. For example, a subset of the operators $ia_k^{\dagger}a_k$
appears in the set of generators $\mathcal{G}$ only in a particular linear combination $x$, 
while the remaining ones appear individually. In other words, a subset of the operators $ia_k^{\dagger}a_k$ is part of the drift, while the remaining ones are part of the control.
Such situations are quite common whenever the frequency of some of the harmonic oscillators is externally driven.
An example is given by the Hamiltonian
$H(t)=\omega_1 a_1^\dag a_1+\omega_2 a^\dag_2 a_2+g(t) (a^\dag_1{+}a_1)^2
+((a^\dag_2)^2+(a_2)^2)$, which contains a term of the form $g(t) (a^\dag_1{+}a_1)^2$ where $g(t)$ is a
time dependent coupling. It is clear that $g(t) (a^\dag_1{+}a_1)^2=g(t) ((a^\dag_1)^2{+}(a_1)^2)+2g(t)a^\dag_1 a _1+g(t)$.
Thus the contribution $2g(t)a^\dag_1 a _1$ naturally adds to the free Hamiltonian part
$\omega_1 a^\dag_1 a_1$. This gives rise to an overall time-dependent contribution $(\omega_1+2g(t))a_1^\dag a_1$
affecting only the first mode.  

It is also natural to seek simple examples of Hamiltonians that satisfy the conditions outlined in our main result
in Theorem~\ref{main:theorem_final}.
For the one-mode case of $\hat{A}_1$, $\mathcal{G}_{\text{om}}=\emptyset$ always holds.
Thus finite bosonic Lie algebras of $\hat{A}_1$ contain: (i) the Lie algebra $\mathfrak{sp}(2,\R)$
of the symplectic group $\mathrm{Sp}(2,\mathbb{R})$; and
(ii) all finite abelian subalgebras of $\hat{A}^0_1\oplus\hat{A}_1^=$ (since $\hat{A}^0_1\oplus\hat{A}_1^=$ is itself abelian).
The corresponding Lie algebra $\mathfrak{sp}(2,\R)$ is one possible finite-dimensional
Lie subalgebra of the skew-hermitian Weyl algebra
$\hat{A}_1$ and this case is quite similar to classification of the finite-dimensional (complex) Lie subalgebras
of the one-mode Weyl algebra $A_1$ as discussed in the Introduction, or in the literature \cite{TST:2006}.

Which Hamiltonians satisfy the conditions of the main result and therefore result in a finite-dimensional Lie algebra?
We cannot provide a simple list of Hamiltonians with this property because a complete classification
eludes us at the current stage of understanding this topic. Nevertheless, we present a simple nontrivial example
of a Hamiltonian with two modes that satisfies the conditions of Theorem~\ref{main:theorem_final},
and thus leads to a finite-dimensional Lie algebra. It reads
\begin{align*}
H=\omega_a a^\dag a+\omega_b b^\dag b+ A_{a^\dag a}+\lambda_+((b^\dag)^2{+}b^2)
+i\,\lambda_-((b^\dag)^2{-}b^2)+g_+ A^+_{a^\dag a}(b^\dag{+}b)+i\,g_- A^-_{a^\dag a}(b^\dag{-}b),
\end{align*}
where $A_{a^\dag a},A^\pm_{a^\dag a}$ are (potentially time-dependent) polynomials in $a^\dag a$ and
$\lambda_\pm,g_\pm$ are (potentially time-dependent) real coefficients. The finiteness of this Lie algebra
can be verified directly by computing all commutators. With $A_{a^\dag a}=\lambda_\pm=g_-=0$ and
$A^+_{a^\dag a}=a^\dag a$, this is nothing more than the standard \emph{optomechanical Hamiltonian}
$H=\omega_a a^\dag a+\omega_b b^\dag b+ g_+ a^\dag a(b^\dag+b)$ modulo the sign in front of the coupling term \cite{Aspelmeyer:Kippenberg:2014}.
More complicated Hamiltonians can be constructed for higher-dimensional cases, but we leave their discovery to the reader.

We also want to comment on the nature of the two Commutator Chains of Type I and II 
that we have introduced in this work (see Section~\ref{commutator:chain:section}).
It is evident that these are not the only possible chains that can be defined, but only two of many.
It is not clear if there are any criteria to prefer particular commutator chains above others. We do note,
however, that the two chains introduced here seem to arise naturally from the simpler structure of the
skew-hermitian Weyl algebra $\hat{A}_n$. In fact, on the one hand a
Commutator Chain of Type I is the natural extension of
commutators of the form $[i a^\dag_k a_k, g_\sigma^\gamma]=\sigma(\alpha_k-\beta_k) g_{-\sigma}^\gamma$ 
(see Theorem~\ref{important:one:element:and:diagonal:implies:other:two:mode:new}).
As already noted above, Commutator Chains of Type I
contain this simpler case, which is obtained when $|\tilde{\alpha}|=1$ (or, equivalently, $|\tilde{\gamma}|=2$). In that case, the degree of the element
$z^{(\ell)}_\sigma$ of the chain is $\deg(z^{(\ell)}_\sigma)=2\ell(|\tilde{\alpha}|-1)+|\alpha|+|\beta|=|\alpha|+|\beta|$
which is constant as expected. Therefore, replacing $i a^\dag_k a_k$ with a generic element
$g_+^{\tilde{\gamma}}\in\hat{A}^=_n$ gives us a
Commutator Chain of Type I with elements of ever-increasing degree.
We could even say that the operators $i a^\dag_k a_k$ are those such that
Commutator Chains of Type I have only elements of bounded (and constant) degrees.
On the other hand, Commutator Chains of Type II are a natural extension of commutators
of the form $[g_+^\gamma,g_-^\gamma]$, which observe the constraint
$[ia^\dag_k a_k,[g_+^\gamma,g_-^\gamma]]=0$ for all $k$. In this case, as noted before, we have that this chain
contains a simpler case, in particular when $|\gamma|=2$.
Furthermore, in such case the degree of an element $y^{(\ell)}_\sigma$ in the chain is given by
$\deg(y^{(\ell)}_\sigma)=3^\ell(|\gamma|-2)+2=2$ which is constant as expected. 
This is not surprising since we already know that elements of degree two form the finite-dimensional Lie algebra
$\mathfrak{sp}(2n,\mathbb{R})$. There are also chains such that
$z_\sigma^{(\ell)}=0$ or $y_\sigma^{(\ell)}=0$ for all $\ell\geq\ell_*$, where $\ell_*$ is a fixed positive natural number.
This is obviously the case if there is an $\ell_*$ such that $z_\sigma^{(\ell_*)}=0$ or $y_\sigma^{(\ell_*)}=0$, since all
further elements are constructed recursively using commutators only. An example is given by
the Commutator Chains of Type II with $y_+^{(0)}=g_+^\gamma=i (a^\dag a b^\dag+ a^\dag a b)$ and
$y_-^{(0)}=g_-^\gamma=a^\dag a b-a^\dag a b^\dag$. It follows that $y_\sigma^{(\ell)}=0$ for $\ell\geq1$. 
For these reasons, we believe that the chosen chains are natural to the structure of the theory, but more
can be constructed depending on the specific need at hand. Figure~\ref{Figure:Sunny:Side:Up}
visualizes the different cases as just discussed.

\begin{figure}[t]
\includegraphics[width=\linewidth]{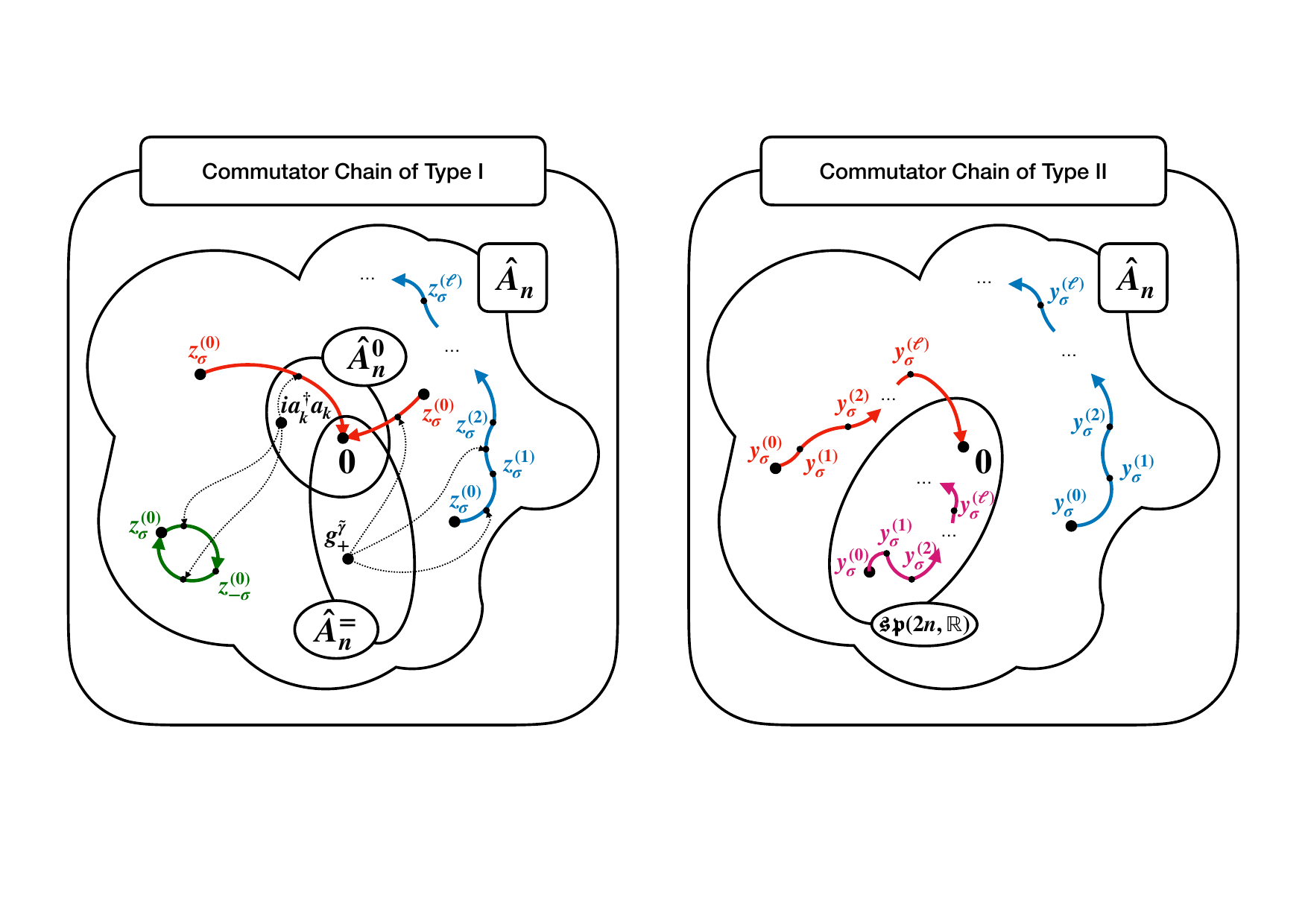}
\caption{\textbf{General properties of Commutator Chains of Type I and II}. Different classes occurring as
commutator chains are highlighted to provide  a qualitative understanding of these mathematical objects.
For both types, there are three major
behaviors: (i) red chains are those for which $z^{(\ell)}_\sigma=0$ or $y^{(\ell)}_\sigma=0$
for $\ell\geq l_*$. For type I, it always occurs that $\ell_*=1$, while 
$\ell_*$ can take other values for type II. (ii) Blue chains are those for which the degrees
$\deg(z^{(\ell)}_\sigma)$ and $\deg(y^{(\ell)}_\sigma)$ increase monotonically and without bound as a function of
$\ell$. (iii) Finally, the last type of chain differentiates between the green chains of
Type I for which $|\tilde{\alpha}|=1$ (or $|\tilde{\gamma}|=2$) and the purple chains of
Type II for which $|\gamma|=2$.
Note that here $\mathfrak{sp}(2n,\mathbb{R})$ is the Lie algebra of the symplectic group.}\label{Figure:Sunny:Side:Up}
\end{figure}

Another aspect to consider is captured by the following question: \emph{how precisely can
the dynamics be obtained or simulated if there is no exact factorization of the dynamics}?
This question does not have a straightforward answer. One reason is that,
already in the cases that admit a finite factorization, the ordering of the terms in the
factorization does play an important role and a particular choice of ordering might not lead to global solutions
(i.e, solutions valid for all times). This globality of solutions needs to be verified independently 
even in favorable finite-dimensional cases (see, e.g., Appendix B in \cite{YZPGK:2015}). Moreover,
finite-dimensional Lie subalgebras of the skew-hermitian Weyl algebra are often not compact Lie algebras
\cite{Joseph:1972}, which can lead to further complications
as the exponential mapping might no longer be surjective (see, e.g., Section~2.8 in \cite{Elliott:2009}).
This suggests that the question of approximating the
dynamics for infinite-dimensional system has a threefold perspective: First, the ordering of the terms
in the factorization $U(t)=\prod_k \exp[-i u_k(t) G_k]$ might influence the globality
of the solution. Second, given the factorization $U(t)=\prod_k \exp[-i u_k(t) G_k]$, the overall contribution to the
dynamics given by the terms $\prod_{k>k_*} \exp[-i u_k(t) G_k]$ might dramatically depend on the choice of
ordering of the first $k_*$ terms. This point is crucial when simulating or approximating
the dynamics and it is at the core of the quest for minimization the effects of unwanted coupling terms. 
Finally, it is unclear how to estimate the multiplicity of a particular
element in the factorization. It can be tempting to think that it should always be possible to write the
desired expression $U(t)=\prod_k \exp[-i u_k(t) G_k]$ with the minimal possible number
of terms, i.e., no repeated operators $G_k$. On the one hand, in the case of finite-dimensional Lie algebras $\g$,
the number of degrees of freedom in the
Hamiltonian $H(t)$ matches the dimension of the Lie algebra, and therefore it also matches
the number of terms in the factorized expression of the time-evolution operator \cite{Wei:Norman:1963}.
Thus, we can expect each operator $G_k$ to appear only once. On the other hand, however, in the case of infinite
factorization, each operator might appear multiple times in different positions. A concrete example would be
$U(t)=\exp[-i u^{(0)}_1(t)(t) G_1]\exp[-i u_2(t) G_2]\exp[-i u^{(1)}_1(t) G_1]\prod_{k\geq3} \exp[-i u_k(t) G_k]$,
where $u^{(1)}_1(t)\neq u^{(0)}_1(t)$ in general. 
A priori, such multiple occurrences in this latter case
are not forbidden. This is yet another aspect that can influence the precision
in a simulation when attempting to truncate a factorization. 

This work relies strongly on the relation developed and refined
in Lemma~\ref{app:lemma:powers}, Corollary~\ref{theorem:commutator:monomials:one:diagonal} and
Theorem~\ref{theorem:commutator:monomials}. This relation gives a simple condition to
verify if the commutator of two monomials or basis elements will have the maximally possible degree.
As discussed in the Introduction, this is similar in spirit as results on the commutator
for the Weyl algebra found in the literature \cite{Dixmier:1968,Igusa:1981a,Joseph:1974a}.
It plays a important
role in our work and allows us to focus on highest degree components of an arbitrary element
$x=\sum_p C_p g_{\sigma_p}^{\gamma_p}\in\hat{A}_n$. We no longer need to take
into account the full structure of such arbitrary elements, which would make tackling the questions posed
here much more difficult, or outright impossible.

\subsection{Outlook}
We now provide a brief overview of the outlook of this work.
As a first next step we mention the extension of our results to Hamiltonians
with  a drift containing free elements $ia^\dag_k a_k$. This implies that the drift will contain time-independent
linear combinations $x=\sum_k c_k ia^\dag_k a_k$ of the operators $i a^\dag_k a_k$, where the $c_k$ are real numbers that are not all zero.
We expect that the number of classes of Hamiltonians with such drift that enjoy a finite factorization of their dynamics will increase.
This can be understood from the fact that, in the case of the drift being absent, each free element $i a^\dag_k a_k$ is independently
an element of the Lie algebra $\g$. Therefore, $\g$ includes the commutators of these elements
with all others. However, when we consider that only a time-independent combination $x=\sum_k c_k ia^\dag_k a_k$
of the operators $i a^\dag_k a_k$ is an element of the Lie algebra, it follows that only
commutators between $x$ and all others must be considered. Let us call
$\mathfrak{g}=\lie{\mathcal{G}}\subseteq \hat{A}_n$ and $\mathfrak{g}'=\lie{ \mathcal{G}'}\subseteq \hat{A}_n$,
where $\mathcal{G}:=\{i a^\dag_k a_k \,\, \text{for all} \,\, k,\,\,
\text{and}\,\,g_{\sigma_p}^{\gamma_p}\,\,\text{where}\,\,p\in\mathcal{P}\,\,
\text{with}\,\,|\mathcal{P}|<\infty\}$ and $\mathcal{G}^{\prime}:=\{x=\sum_k c_k ia^\dag_k a_k,\,\,
\text{and}\,\,g_{\sigma_p}^{\gamma_p}\,\,\text{where}\,\,p\in\mathcal{P}\,\,
\text{with}\,\,|\mathcal{P}|<\infty\}$. It should be evident that $\g$ contains each $i a^\dag_k a_k$ as an element,
while $\g'$ contains the $i a^\dag_k a_k$ only via their
particular linear combination $x$ with constant coefficients $c_k$. Both algebras include the \emph{same}
finite set of generators $ g_{\sigma_p}^{\gamma_p}$. It is clear that, since a Lie algebra includes also all
possible linear combinations of its elements, as well as all commutators, then $\g{}^{\prime}\subseteq\g$.
The equality can be obtained in particular when $n=1$, i.e., there is only one bosonic degree of freedom.
Note that if $\g$ has a finite dimension, the same applies to $\g'$. However, since Lie algebras $\g'$ are contained
in the Lie algebras $\g$ as argued here, we can expect that there will be more
finite-dimensional Lie algebras $\g'$ than finite-dimensional Lie algebras $\g$.
This supports the claim made above and shows the importance of extending
our work to more general classes of generators.

Another important aspect that remains outstanding is how to optimally approximate a given time evolution
operator $U(t)$. We have already argued that the order of the factorized time-evolution operator can greatly
affect the faithfulness of an approximation. We therefore propose that the complementary question to the
ability to factorize is: \emph{how can we optimize an approximate factorized expression of the time-evolution operator}? Intuition leads us to two
observations: first, given a Hamiltonian that has an infinite-dimensional Lie algebra, it is clear
that very high-order interaction terms (i.e., those terms $g_\sigma^\gamma$ with large degree $|\gamma|$)
will in general be driven by a small coupling. The reason is that such terms appear as a consequence of
high-order commutators and therefore the overall strength that drives them comes as a product of many
couplings of lower order interaction terms. Usually such couplings are small and therefore their product
leads to negligible contributions. This can also be seen directly within approaches that employ the
first few orders of the Campbell-Baker-Hausdorff formula. Second, terms of high degree might require more energy to be implemented.
In fact, such terms might have a degree that involves creating dozens of excitations, an operation that will be much
more costly than, say, exciting the lower energy level twice. These two considerations are complementary
and would seem to suggest that interaction terms in a Lie algebra obtained by means of many
commutators of generators will in general have subdominant effects. Further work is necessary to address
this important aspect systematically, and to provide general yet useful bounds to be employed in concrete tasks.

The conditions provided in our main results allow for a fast and
efficient algorithm to verify the dimensionality of the Lie algebras for any set of generators. This, in turn, is an important component in advancing the development of quantum technologies. In fact, once the tasks mentioned here have been accomplished, this body of work will aid in tackling fundamental
questions of quantum computing \cite{Ladd:Jelezko:2010,Nielsen:Chuang:2000}, quantum
simulation \cite{Buluta:Nori:2009,Bloch:2012ty,Georgescu:Ashhab:2013}, quantum annealing \cite{Apolloni:Carvalho:1989,Kadowaki:Nishimori:1998,Ohzeki:Nishimori:2011,Hauke:Katzgraber:2020}, and quantum control \cite{Huang:Tarn:1983,Wu:Tarn:2006,Bagarello:2007,Dong:Petersen:2010}.

\section*{Conclusions}
In this work we have addressed the long-standing question of determining the finite factorization of quantum
dynamics of coupled harmonic oscillators with tunable interactions. As a concrete step towards answering this question we
have introduced tools to classify the elements of the skew-hermitian Weyl algebra in terms of meaningful
subvector spaces. The key result is a set of conditions that
the generators of the Lie algebra, associated to a Hamiltonian
without drift, must satisfy in order to obtain a finite-dimensional Lie algebra. These conditions
require only direct verification of properties of the generators in relation to the subvector spaces introduced
before, or at most the properties of one degree of commutators between them. In this sense, the conditions
determined here provide an economical algorithm for verification of a problem that might seem a priori extremely
complicated, if not impossible to solve. As a side consequence, the conditions also inform on which
physical systems without drift Hamiltonian
can potentially be simulated exactly, i.e., which systems have finite-dimenstional Lie algebras.

Our efforts required us to obtain a specific condition on the pairs of operators that
allows us to predict if the degree of their commutator is maximal.
Furthermore, our efforts also required us to define and study interesting structures called \textit{commutator chains}, 
which are meaningfully constructed sequences of operators
in the full skew-hermitian Weyl algebra that are uniquely associated to pairs of chosen
\textit{seed} operators. In the cases where these chains can be constructed, they can provide information regarding
the dimensionality of the Lie algebra that they belong to by means of the control over the degree of their elements.
Taken all together these tools provide interesting insights in the structure of Lie algebras.

Applications of the results presented can be found in the areas of quantum annealing, quantum computing
and quantum simulations to name a few. Furthermore, they can aid our understanding of the ultimate ability to
simulate physical processes, if simulation of a system is taken to require exact control over its dynamics. Additional work is necessary to provide a comprehensive answer to these important questions.

Our work opens up a programme to study Lie algebras in the context
of coupled bosonic systems with tunable interactions, with the ultimate goal of answering the question of
factorization of their quantum dynamics. We believe that the results presented here have far reaching consequences
for both quantum physics and the theory of Lie algebras.
		
\section*{Acknowledgments}
We thank Niklas Jung, Nicolas Wittler, Leila Khouri, Valente Pranubon, Daniel Klaus Burgarth, J\'ozsef Zsolt Bernad, and Johannes Niediek for useful comments and discussions. 
We thank Tim Heib for carefully reading an earlier version of this work and for diligently
collecting comments and corrections.

\section*{Author Contributions}
DEB and AX conceived the idea of studying finite-dimensional bosonic Lie algebras as a stepping stone
to factorizations of the corresponding bosonic quantum dynamics. DEB developed the general approach
for deciding finiteness and he outlined and detailed initial versions of claims, their proofs, and
most of their supporting results. RZ provided links to the mathematical literature as well as techniques
and supporting notation which were used by DEB and RZ to systematically refine, simplify, sharpen,
generalize, and polish all claims and proofs.
DEB produced the figures. All authors contributed towards the preparation of the manuscript.

\section*{Funding}
DEB
acknowledges support from the joint project No. 13N15685 ``German Quantum Computer based on Superconducting
Qubits (GeQCoS)'' sponsored by the German Federal Ministry of Education and Research (BMBF) under the
\href{https://www.quantentechnologien.de/fileadmin/public/Redaktion/Dokumente/PDF/Publikationen/Federal-Government-Framework-Programme-Quantum-technologies-2018-bf-C1.pdf}{framework programme “Quantum technologies -- from basic research to the market”}, 
as well as the German Federal Ministry of Education and Research via
the \href{https://www.quantentechnologien.de/fileadmin/public/Redaktion/Dokumente/PDF/Publikationen/Federal-Government-Framework-Programme-Quantum-technologies-2018-bf-C1.pdf}{framework programme “Quantum technologies -- from basic research to the market”}
under contract number 13N16210 ``Diamond spin-photon-based quantum computer (SPINNING)''. RZ acknowledges funding
under Horizon Europe programme HORIZON-CL4-2022-QUANTUM-02-SGA via the project 101113690 (PASQuanS2.1).

\section*{Data availability}
Data sharing is not applicable to this work as no datasets are used to support the presented results.

\section*{Conflict of interest}
The authors have no relevant financial or non-financial interests to disclose.

\bibliographystyle{apsrev4-2}
\bibliography{DecHamBiblio}

\end{document}